\documentclass[12pt]{article}
\pdfoutput=1
\usepackage{mheckII}

\usepackage{longtable}

\usepackage{blkarray}

\usepackage{tikz} 
\usetikzlibrary{matrix}

\theoremstyle{theorem}
\newtheorem{thm}{Theorem}[subsection]
\newtheorem*{thms}{Theorem}
\newtheorem{fact}{Fact}

\newtheorem{corl}{Corollary}[subsection]
\newtheorem{pro}{Proposition}[subsection]

\newtheorem{lem}{Lemma}[section]

\newtheorem*{cri}{Criterion}
\newtheorem*{cridfn}{Criterion/Definition}

\theoremstyle{definition}
\newtheorem{defn}{Definition}[subsection]
\newtheorem{rem}{Remark}
\newtheorem*{srem}{Remark}

\newtheorem{exe}{Example}[section]

\def\Dsl{\,\raise.15ex\hbox{/}\mkern-13.5mu D}

\def\sC{\mathscr{C}}
\def\sD{\mathscr{D}}
\def\sU{\mathscr{U}}

\def\sR{\mathscr{R}}
\def\bs{\mathbf{S}}
\def\bb{\mathbf{B}}
\def\bh{\mathbf{H}}


\date{January, 2019}



\title{Homological Classification of 4d $\cn=2$ QFT. Part I: Rank-1 revisited
}

%


\authors{Matteo Caorsi\footnote{e-mail: {\tt matteocao@gmail.com}} and Sergio Cecotti\footnote{e-mail: {\tt cecotti@sissa.it}}\vskip 9pt

\centerline{SISSA, via Bonomea 265, I-34100 Trieste, ITALY}}

\abstract{Argyres and co-workers started a program to classify all 4d $\cn=2$ QFT by classifying Special Geometries with  appropriate properties. They completed the program in rank-1. Rank-1 $\cn=2$ QFT are equivalently classified by the Mordell-Weil groups of certain rational elliptic surfaces.

The classification of 4d $\cn=2$ QFT   is also conjectured to be equivalent to the  representation theoretic (RT) classification of all 2-Calabi-Yau categories with suitable properties. Since the RT approach smells to be much simpler than the Special-Geometric one, it is worthwhile to check 
this expectation by reproducing the rank-1 result from the RT side. This is the main 
 purpose of the present paper. 
 Along the route we clarify several issues and learn new details about the rank-1 SCFT. In particular,
we relate the rank-1 classification to \emph{mirror symmetry} for Fano surfaces.  
 
In the follow-up paper we apply the RT methods to higher rank 4d $\cn=2$ SCFT.}

\begin{document}

\maketitle
\newpage

\tableofcontents

\newpage

\section{Introduction}

An intriguing problem in QFT is the  classification of \emph{all} 4d $\cn=2$ models with special emphasis on the ones which do not have a Lagrangian realization. One nice approach,
especially advocated and implemented by Argyres and coworkers \cite{Argyres1,Argyres2,Argyres3,Argyres4,Argyres5,Argyres6,Argyres7,Argyres8,Argyres9,Argyres10,Argyres11}\footnote{\ See also \cite{Caorsi:2018zsq}.} is based on the idea that the classification of 4d $\cn=2$ QFTs is equivalent to the classification of all special geometries having the right properties to be the Seiberg-Witten geometry of a QFT \cite{SW0,SW1,Donagi};
string theory provides strong motivations for this geometric viewpoint \cite{vafaswamp}. Argyres \emph{et al.}\! have completed their program in rank-1, listing all $\cn=2$ SCFTs with a one-dimensional Coulomb branch \cite{Argyres3,Argyres4,Argyres5,Argyres6,Argyres7}. Their rank-1 classification was subsequently  reinterpreted \cite{Caorsi:2018ahl} in terms of the Mordell-Weil groups \cite{bookMW} of rational elliptic surfaces (with section) \cite{miranda} having at least one additive Kodaira singular fiber $F_\infty$ and at most one fiber of the three types $\{II,III,IV\}$ not dual to affine $\widehat{A}\widehat{D}\widehat{E}$ Dynkin graphs. In this approach, the rank-1 classifications is  read from the Oguiso-Shioda table of Mordell-Weil  groups \cite{bookMW,oguiso}.

Physical considerations suggest that there is yet a third approach to $\cn=2$ classification, pursued in a series of unpublished papers by Michele Del Zotto and the second named author \cite{unpublished}.
This approach starts from the general  expectation that the BPS objects (states and operators) of a supersymmetric quantum system are described by suitable triangle categories: the prime example being the BPS branes of  Type II on a Calabi-Yau 3-fold $X$ which are classified by the derived category of coherent sheaves and the derived Fukaya category on $X$ and its mirror manifold $X^\vee$ \cite{Aspinwall:2009isa}. 
If this expectation turns out to be correct, we may replace the hard classification of $\cn=2$ QFTs with the classification of triangle categories of the appropriate kind. So reformulated, the $\cn=2$ classification would become a problem in Representation Theory (RT). This problem smells to be much easier than the geometric program in several ways. At the philosophical level it looks akin of replacing the detailed study of complex manifolds with the computation of a certain homological invariant (albeit a  sophisticated one). Second:  the RT approach turned out to be very efficient to classify special classes of $\cn=2$ QFT, such as complete theories with the BPS-quiver property \cite{Cecotti:2011rv,Alim:2011ae}. Third: the preliminary steps in \cite{unpublished} suggest that the complexity of the RT program  grows less dramatically with the rank $k$ of  QFT than in the other approaches. Fourth: in Part II of this study we shall get partial all-rank classification results using ideas inspired by the RT approach.
Fifth: the RT viewpoint carries the additional bonus that from the knowledge of the categories associated to a given QFT we easily compute all supersymmetry protected physical quantities, even when localization methods fail.

Thus we have four classification problems which are purportedly equivalent:
\begin{center}
\begin{tabular}{c|c}
\textbf{\underline{Physics}} &
\textbf{\underline{Holomorphic Integrable Systems}}\\
Consistent 4d $\cn=2$ QFTs & \begin{minipage}{190pt}
\vskip 10pt
\begin{scriptsize}HyperK\"ahler spaces and holomorphic fibrations with special Lagrangian fibers which are 
generically\\ Abelian varieties 
(plus `regularity' conditions)\\
\phantom{.}
\end{scriptsize}
\end{minipage}\\
\hline
\textbf{\underline{Algebraic Geometry}}
&
\textbf{\underline{Representation Theory}}\\
\begin{minipage}{190pt}
\vskip 10pt
\begin{scriptsize}Smooth projective log-symplectic varieties with Abelian fibration and `regularity' conditions.\\ \underline{Rank-1}:
rational elliptic surfaces with restrictions
\end{scriptsize}
\end{minipage} &
\begin{minipage}{190pt}
\vskip 10pt
\begin{scriptsize} Hom-finite 2-CY categories $\sC$ with rigid objects subject to appropriate `regularity' conditions
\end{scriptsize}
\end{minipage}
\end{tabular}
\end{center}
\medskip

What is the relation between these four different topics? 

A rank-$k$ special geometry is, in particular, a $k$-dimensional Abelian variety $\mathscr{A}_{/K(\mathcal{R})}$ defined over the field $K(\mathcal{R})$ of fractions  of the chiral ring $\mathcal{R}$. If the $\cn=2$ QFT does not contain free subsectors, the Lang-N\'eron trace \cite{LN} of 
$\mathscr{A}_{/K(\mathcal{R})}$ vanishes.
Then $\mathscr{A}_{/K(\mathcal{R})}$ comes with a God-given finitely-generated Abelian group, its \textit{Mordell-Weil group} $\mathsf{MW}(\mathscr{A}_{/K(\mathcal{R})})$; on its free part  
$\mathsf{MW}(\mathscr{A}_{/K(\mathcal{R})})_\text{free}=\mathsf{MW}(\mathscr{A}_{/K(\mathcal{R})})/\text{(torsion)}$ there is a positive-definite, integral, symmetric form, the \textit{N\'eron-Tate height} \cite{diop}. To a Hom-finite 2-CY category\footnote{\ For background on Calabi-Yau categories we refer to the nice review \cite{CYKeller}.} $\sC$ there is also naturally associated a finitely-generated Abelian group with a canonical symmetric integral pairing: its \textit{Grothendieck group} $K_0(\sC)$ with its \textit{``Euler''} form. In rank-1 the correspondence between the two geometric classifications and the representation-theoretical one turns out to be
\begin{equation*}
\boxed{\ \phantom{\Big|}\text{Mordell-Weil group with N\'eron-Tate height}\ \leftrightarrow\ \text{Grothendieck group with ``Euler'' form}\ }
\end{equation*} 
On the physical side, the group $K_0(\sC)/\text{(torsion)}$ is identified with the flavor weight lattice, and the isometry group of its ``Euler'' form is the Weyl group of the putative\footnote{\ The actual physical flavor group may be different (but with the same weight lattice). See the discussion in ref.\!\cite{Argyres5}.} flavor group $F$. The above statement makes sense for all $k$, not just in rank-1, but we shall resist the temptation of proposing a precise conjecture at this point.

\begin{srem} In general, both groups $\mathsf{MW}(\mathscr{A}_{/K(\mathcal{R})})$ and $K_0(\sC)$ have torsion. Matching torsions (and related structures) on the two sides is a subtle and physically crucial aspect of the correspondence. As in $F$-theory \cite{Morrison:2012ei} the Mordell-Weil torsion is related to the global geometry of the gauge group; in a sense, torsion ``reads between the lines of the gauge theory \cite{Aharony:2013hda}''. We discuss torsion in \textsc{appendix \ref{mwtorsion}}.
\end{srem}

In the unpublished notes \cite{unpublished} the RT program was pushed to some extend (for general rank $k$) getting correct formulae for the dimensions of the Coulomb branch operators,  flavor groups, central charges  $a$, $c$, $\kappa_F$, etc. Yet in its naive form the result  was short of a full classification. For instance, the rank-1 theories with exceptional flavor symmetry
$G_2$, $\text{Spin}(7)$, and $F_4$ were missing. \textit{A posteriori} one sees where the weak point was:
 the methods of \cite{unpublished} were returning not \emph{all} $\cn=2$ models but only  
a sub-class of them which we dub \textit{triangular} QFT. One has the inclusions\footnote{\ Triangular QFT are BPS-quiver $\cn=2$ models with some special property; in particular they have a unique Coulomb branch operator of maximal dimension. So class $\cs[A_1]$ theories are triangular iff their rank is 1. All interacting theories whose Coulomb dimensions are all $<2$ are triangular.}
$$
\Big[\,\text{triangular QFT}\,\Big] \subset \Big[\,\text{BPS-quiver $\cn=2$ theories \cite{Alim:2011kw}}\,\Big] \subset
\Big[\,\text{all $\cn=2$ QFT}\,\Big]
$$
In rank-1 the first inclusion is an equality.

The purpose of the present note is to fill the gap, and complete the RT side of the classification at least for $k=1$. Our immediate aim is to reproduce table 1 of \cite{Argyres3} using homological ideas and techniques.
In the process we shall understand better the procedure of gauging a discrete symmetry in $\cn=2$ QFT, the role of the RG flow, and find a detailed description of most  of the  relevant categories. The RT  approach allows \emph{inter alia} to explicitly compute the vev of generalized BPS Wilson-'t Hooft line operators in rank-1 theories (including the exceptional ones) as \emph{characters} of the corresponding 2-CY categories \cite{palu}. 

We hope that the abstractness of the categorical language will not hide the beauty and intrinsic simplicity of the homological approach. The math notation may look esoteric to some physicists, but the actual computations are (typically) quite simple. In an effort to make the paper readable, we have kept  mathematics at a minimum in the main text (at the cost of precision)
and confined details, computations (and precision) in the long appendices.  
\medskip

The rest of this paper is organized  as follows. In section 2 we present a cartoon of the categories associated to the BPS sector of a 4d $\cn=2$ theory. In section 3 we review the small part of the results of \cite{unpublished} relevant here. In section 4 we  consider the 15 missing SCFT and review base-change of elliptic surfaces and discrete gaugings of $\cn=2$ QFT. We also make some general observation. In section 5  we study in detail the discrete gaugings in our preferred theoretical laboratory: $SU(2)$ with $N_f=4$. This allows us to illustrate the main points of the paper in a simple context where everything can be done explicitly and the physics is well-understood from several viewpoints. 
In section 6 we pause a while to discuss what we have learned from the explicit examples, and to draw some general lesson.
The other discrete gauging of rank-1 SCFT are described in section 7.
Section 8 is devoted to the subtler situations we dub \emph{false}-gaugings. Conclusions are drawn in section 9. The appendices are long, self-contained, and contain all the details. Parts of them may be even readable.  

\section{BPS categories: a cartoon}\label{ccartoon}

In this section we present our physical motivations behind the representation-theoretical approach to $\cn=2$ supersymmetry.   
We shall be rather sketchy, using a rough language and avoiding all technicalities about categories. The interested reader may find proper definitions and a survey of basic theorems in \textsc{appendix \ref{adetails}}.
\medskip

To a given $\cn=2$ QFT there are attached several  triangle categories describing its BPS sector
\cite{Caorsi:2016ebt,Caorsi:2017bnp}.   These categories are fairly well understood when the QFT has the so-called \textit{BPS-quiver property} \cite{Cecotti:2011rv,Alim:2011ae,
Alim:2011kw}.
`Usual' theories (such as Lagrangian models) enjoy this property, but `most' $\cn=2$ QFT do not. For the purpose of
classifying \emph{all} $\cn=2$ QFT we have to extend the categorical framework  to the general case where the BPS-quiver property is no longer valid. This is the main  challenge of the present paper.
\medskip

We beging by sketching how triangle categories
arise from physical considerations in the special case of BPS-quiver theories (see \cite{Alim:2011ae,
Alim:2011kw,denef}  for details).
There are two points of view to consider: the UV perspective where the issue are the  BPS \emph{operators,} and the IR one focused  on the spectrum of BPS \emph{states} in a particular vacuum $|u\rangle$ belonging to a specific BPS chamber in the Coulomb branch.

\subsubsection*{IR picture} Let us start from the IR perspective: here 
the problem is to determine the spectrum of BPS particles of the given $\cn=2$ QFT.
There are two IR approaches: the Seiberg-Witten geometry and the supersymmetric Quantum Mechanics (SQM). We focus on the second one.

In the SQM approach we rephrase the question of the BPS particle spectrum 
in terms of the 1d theory along the particle world-line which is a \textsc{susy} Quantum Mechanics with 4 supercharges. To specify the 1d Lagrangian $\mathscr{L}_u$ we need the following data: a gauge group $\mathscr{G}$, the gauge representation content of the chiral fields, a gauge-invariant superpotential $\cw$, and the FI terms \cite{denef}.
4d BPS states, preserving 4 supercharges, are \textsc{susy} vacua of the world-line theory: by standard  arguments they coincide with the vacua  of the 1d $\sigma$-model with target $\cv_u$, the space  of classical vacua of $\mathscr{L}_u$.
Computing the BPS spectrum boils down to the quantization of this reduced 1d $\sigma$-model.

When the 4d $\cn=2$ model has the 
 BPS-quiver property, $\mathscr{L}_u$ is a unitary quiver gauge theory, based on  a 2-acyclic quiver\footnote{\ A quiver $Q$ is \emph{2-acyclic} iff no arrow starts and ends at the same node nor there are pairs of opposite arrows $\leftrightarrows$ between two nodes.} $Q$, and $\cv_u$ is the moduli space of stable representations of $Q$ subjected to the relations given by the $F$-term  equations $\partial\cw=0$ \cite{Alim:2011kw}. In an equivalent language, $\cv_u$ is the space of stable modules of the Jacobian algebra $\cp(Q,\cw)$. (An algebra is called \emph{Jacobian} iff its relations are given by the gradient of a single-trace superpotential: $\partial\cw=0$).  In the 2-acyclic case the nodes of $Q$ are in 1-to-1 correspondence with the conserved electric/magnetic/flavor charges, and the gauge group at the $i$-th node is $U(N_i)$ for the world-line theory of particles carrying $N_i$ units of the $i$-th conserved charge.
   The incidence matrix $B$ of $Q$
   \be\label{difromr}
B_{ij}\equiv\langle S_j,S_i\rangle_D=
\#\big\{\text{arrows $i\to j$ in $Q$}\}-\#\big\{\text{arrows $j\to i$ in $Q$}\}
\ee
 is the Dirac skew-symmetric pairing between the $i$--th and $j$-th charges.
The pair $(Q,\cw)$ is not unique for a given 1d theory: it depends on the chosen Seiberg-duality frame. The \textit{mutation class} of $(Q,\cw)$ is the set of pairs $(Q^\prime,\cw^\prime)$, with $Q^\prime$ a 2-acyclic quiver, which can be obtained from $(Q,\cw)$ by a chain of Seiberg-dualities. Different regions of the Coulomb branch are covered by different pairs $(Q,\cw)$ in the mutation class.
\medskip

  Even when the 4d $\cn=2$ theory does \underline{not} obey the BPS-quiver property, it still makes sense to consider the world-line theory of its BPS particles. It is again a 4-supercharge supersymmetric Quantum Mechanics. As in the BPS-quiver case, the 1d Lagrangian $\mathscr{L}_u$ should be well  defined for all values of the  additive conserved charges, and have a good  behavior for large charges, which is the large-$N$ limit from the 1d perspective.
Comparing with 't Hooft analysis \cite{tHooft:1973alw}, we see that a cheap way to ensure this property is to consider a unitary quiver theory with a gauge-invariant superpotential $\cw$ which is a single-trace operator. This is what happens in the BPS-quiver case, except that in general the quiver $Q$ may be allowed to have loops and 2-cycles. However this is not the most general  possibility: there are many variants of this construction.
For instance, a quiver 1d Lagrangian $\mathscr{L}_u$ with the above single-trace  form may have a discrete symmetry $\bG$ which commutes with \textsc{susy};
gauging it we get a different 1d
theory $\mathscr{L}_u/\bG$ which still satisfies all requirements to be the world-line theory of BPS particles in some 4d $\cn=2$ QFT. We can push this even further: mimicking recent constructions in 4d extended supersymmetry \cite{discretegaugings1,discretegaugings2}, we may gauge a discrete symmetry $\bG$ of the 1d theory which is not a symmetry of its Lagrangian $\mathscr{L}_u$. In 4d one  gauges a suitable finite group $\bG$ of $S$-dualities; correspondingly we may think of gauging a discrete group $\bG$ of 1d Seiberg dualities (which are related to 4d $S$-dualities).

All these constructions start from a covering ($\equiv$ ungauged)  
1d Lagrangian $\mathscr{L}_u$ which is a unitary quiver theory with a single-trace superpotential and then take a quotient by a discrete group $\bG$ of symmetries/dualities. Specifying a covering  Lagrangian $\mathscr{L}_u$ of this class is equivalent to giving an associative algebra $\ca(\mathscr{L}_u)$ with the special property that its relations can be written as the gradient of a holomorphic gauge-invariant  superpotential $\cw$, that is, the algebra $\ca(\mathscr{L}_u)$ must be Jacobian. To avoid divergences, $\ca(\mathscr{L}_u)$ should also be finite-dimensional. To complete the description of the parent Lagrangian $\mathscr{L}_u$, the algebra $\ca(\mathscr{L}_u)$ should be supplemented by a stability function $Z_u$ for its modules which encodes the 1d FI terms. Once such a covering theory is given, to get the full story one has to look  for all finite symmetry groups $\bG$ which may be gauged while preserving supersymmetry. 
\medskip

\subsubsection*{UV picture}
Let us turn to the UV side. For models with the BPS-quiver property, physical considerations \cite{CNV,gaio1,gaio2,YQ}, predict the UV category $\sC$ to be the \textit{cluster category} $\sC(Q,\cw)$ associated to the 2-acyclic quiver with superpotential $(Q,\cw)$ which specifies the 1d Lagrangian $\mathscr{L}_u$. $\sC(Q,\cw)$ is independent of the choice of $(Q,\cw)$ in its mutation class, i.e.\! the microscopic category
$\sC(Q,\cw)$ is a Seiberg-duality invariant.
A cluster category is a special instance of a  2-CY category $\sC$ having  cluster-tilting objects $T$\,\footnote{\ For a nice review of the relevant mathematics, see \cite{keller}.}. (A category $\sC$ is 2-CY iff it has a Serre functor $S$ such that\footnote{\ For  notations and definitions see  \textsc{appendix \ref{adetails}}.} $S\cong \Sigma^2$, with $\Sigma$ the shift functor).

\subparagraph{Triangular case: 4d/2d correspondence.} The situation simplifies if the $\cn=2$ QFT is triangular (in rank-1 all BPS-quiver theories are triangular). These $\cn=2$ models have a $F$-theory construction along the lines of \cite{CNV} and enjoy the \textit{4d/2d correspondence} \cite{CNV}: their UV category $\sC$ is a close cousin of the BPS brane category $\sR$ of a 2d (2,2) model
with $\hat c<2$. The condition $\hat c<2$ should be understood as an upper  dimensional bound\footnote{\ It arises as the condition for the existence of a crepant resolution in 3 dimensions \cite{Gukov:1999ya}.}: indeed the 2d superconformal central charge $\hat c$ has the physical interpretation of a `fractional Calabi-Yau dimension' (think, say, of the Gepner models \cite{gepmod}).

 To construct the relevant categories, in the triangular case one starts from an algebra $\ca$ satisfying certain restrictions.
 The physical requirement $\hat c<2$ becomes the condition\footnote{\ To make the story short we limit ourselves to the case in which the 4d theory is superconformal; the construction of \cite{CNV} extends to the asymptotically-free QFT.} that its derived category $\sD_\ca\equiv D^b\mathsf{mod}\,\ca$ is Calabi-Yau of fractional dimension $a/b<2$, namely in $\sD_\ca$ 
there is an isomorphism
 \be\label{jjjaz1294}
 S^b\cong\Sigma^a,\qquad \hat c\equiv a/b<2,\quad a,b\in\bN,
 \ee 
 where $S$ (resp.\! $\Sigma$) is the Serre (resp.\! shift) functor (see \textsc{appendix \ref{adetails}}).
The 2d and 4d BPS categories are just different orbit categories of $\sD_\ca$:
\be\label{kkaszqwe}
\sR_\ca=\Big(\sD_\ca\big/(\Sigma^2)^\Z\Big)_\text{tr.hull}\qquad
\sC_\ca=\Big(\sD_\ca\big/(S^{-1}\Sigma^2)^\Z\Big)_\text{tr.hull}.
\ee
Here $(\cdots)_\text{tr.hull}$ stands for  `triangular hull', a technicality we dispense with\footnote{\ The interested reader may give a look to the \textsc{appendices}.}.
By construction, in $\sC_\ca$ we have
$S\cong \Sigma^2$, which is the defining property of a 2-CY category, while in $\sR_\ca$ we have  
$\Sigma^2\cong\mathrm{Id}$, i.e.\! the category $\sR_\ca$ is 2-periodic.   
In the physical literature the image of $S$ (resp.\! $\Sigma$) in $\sC_\ca$ is called the \textit{4d quantum monodromy} $\bM$ (resp.\! the \textit{half-monodromy} $\bK$) \cite{CNV,YQ}, while the image of $S$ (resp.\! $\Sigma$)  in $\sR_\ca$ is called the \textit{2d quantum monodromy} $H$ (resp.\! \emph{half-monodromy}) \cite{CV92}. 

\subparagraph{Properties of the quantum monodromies.} All $\cn=2$ QFT, not just the BPS-quiver ones, have quantum monodromy operators $\bM$ and $\bK$ which act on the BPS operators $O$ as  
\be
O\mapsto \bM O\bM^{-1}\quad \text{and}\quad O\mapsto\bK O\bK^{-1}.
\ee
 $\bM$ commutes with flavor group action, $\bM g=g\bM$ for $g\in F$, while $\bK$ inverts\footnote{\ $F$ is compact so a torus times a semi-simple Lie group. In the torus $\theta$ is $-1$; in the semi-simple part $\theta$ is  the involution which acts as $X_\alpha\leftrightarrow X_{-\alpha}$ on the Chevalley generators (cfr.\! \cite{bourbaki} VIII\,\S.4 Proposition 4).} the flavor action, $\bK g= \theta(g)\bK$
  for $g\in F$.
 $\bM$ and $\bK$  have explicit expressions in terms of the BPS spectrum \cite{CNV}. The expressions involve choices, but their adjoint action on the BPS operators is intrinsically defined: the condition that $\bM$ computed from the spectra on the two sides of a wall of marginal stability yields the same action is equivalent to  the Kontsevitch-Soibelman wall-crossing formula \cite{wallcro}.  $\mathrm{Tr}\,\bM$ is the Schur index of the 4d SCFT \cite{Cordova:2015nma}, i.e.\! the vacuum character of the 4d infinite chiral algebra in the sense of \cite{Rastelli}. If our QFT is
a SCFT, the action of $\bM$ on the local Coulomb branch operators $O_i$ is just
\be\label{kqwaz}
\bM O_i\bM^{-1}=e^{2\pi i\Delta_i}\,O_i,
\ee
 where $\Delta_i$ is the conformal dimension of $O_i$. The action on the BPS \emph{line} operators is rather  complicated and interesting: in the BPS-quiver case this action fully reflects the cluster structure\footnote{\ For the notion of \textit{cluster structure} in a 2-CY category see the nice review \cite{reitenrev}.} of $\sC$. 
When the QFT is
  the mass-deformation of a SCFT, $\bM$ has a finite order $o(\bM)$ equal to the smallest integer such that $o(\bM)\,\Delta_i\in\bN$ \cite{CNV}\footnote{\ When the $\cn=2$ theory is Lagrangian (possibly asymptotically-free), $\bM$ acts as a shift of the Yang-Mills angles $\theta_i$ by $2\pi b_i$, where $b_i$ is the coefficient of the beta-function of the $i$-th Yang-Mills coupling.}. In particular, whenever $\Delta_i\in\bN$ for all $i$
we have $\Sigma^2\cong\mathrm{Id}$
in the UV category $\sC$, which then is \emph{symmetric} besides being 2-CY.
In the same fashion, we define the order $o(H)$ of the 2d quantum monodromy.
It is the smallest integer so that $o(H) d_i\in\bN$, where $d_i$ are the conformal dimensions of the 2d chiral primary operators \cite{CV92}.
For a rank-1, 2-acyclic, $\cn=2$ theory the dimension of the Coulomb operator is \cite{Caorsi:2017bnp}
\be
\Delta=\frac{o(H)}{o(\bM)},
\ee
and there are similar expressions for $\Delta_i$ in all ranks.

\subparagraph{Flavor symmetry.}
In full generality, whenever a class of BPS objects is described by a triangle category $\sU$, its conserved quantum numbers factor through the Grothendieck group $K_0(\sU)$. Hence the class $[X]\in K_0(\sU)$ should be seen as the universal conserved quantity, and the free group $K_0(\sU)/\text{(torsion)}$ as the set of all additive quantum numbers. In the BPS-quiver case, the UV group
$K_0(\sC)/\text{(torsion)}$ is identified with 
the flavor weight lattice $\Gamma_\text{flav}$ of the 4d theory
\cite{Caorsi:2017bnp}
\be\label{qqqqa}
\Gamma_\text{flav}\equiv K_0(\sC)/\text{(torsion)}.
\ee 
In particular,
\be\label{zzzaq6}
f\equiv\mathrm{rank }\,F=\mathrm{rank}\,K_0(\sC).
\ee  
The lattice $\Gamma_\text{flav}$ has a natural Weyl-invariant symmetric pairing
which should be part of the categorical description: this is required for the flavor symmetry $F$ to act on the BPS objects (as described by $\sC$) in  the proper way.
For a triangle category $\sC$, the only intrinsic bilinear form on the lattice 
$K_0(\sC)/\text{(torsion)}$ is its Euler pairing.\footnote{\ In the present context the definition of the Euler pairing is slightly subtle \cite{Caorsi:2017bnp}, requiring `cutting techniques' \cite{groK}.}
A triangle category $\sC$ has naturally a symmetric (resp.\! antisymmetric) Euler form iff it has the Calabi-Yau property in even (resp.\! odd) dimension $d$.
So the fact that $\sC$ is CY with $d=2$ looks very natural from the flavor point of view. Indeed, in the BPS-quiver case the Euler form is equal to the canonical inner product in the weight lattice (up to overall normalization, see \textsc{appendix \ref{catfal}}).

%

\subsubsection*{Connecting the UV picture to the IR one}
The IR physics should be determined by the microscopic UV dynamics together with the choice of a particular vacuum $|u\rangle$. The physical  
connection $UV\leadsto IR$ is the RG flow.
For the BPS sector the UV/IR connection has two main aspects: 1) the UV BPS line operators $L_\alpha$ have vev's
$\langle L_\alpha\rangle_u$ which are part of the physics of the particular state $|u\rangle$, and 2) 
the spectrum of BPS particles which are stable in vacuum $|u\rangle$ should be determined by the  dynamics of the microscopic degrees of freedom.

As mentioned above, when the $\cn=2$ QFTs has the BPS-quiver property, the UV category $\sC$ is a cluster category \cite{reitenrev}.  In this case the $UV\leadsto IR$ connection is provided by a cluster-tilting object $T\in\sC$.
$T$ is non-unique, a given choice of $T$ covering just a chamber in  the Coulomb branch.
The chosen cluster-tilting object $T\in \sC$ plays several roles in the connection $\text{UV}\leadsto\text{IR}$. 
First: $T$ determines the 1d Lagrangian/Jacobian algebra $\ca(\mathscr{L}_u)$ in the form  $\ca(\mathscr{L}_u)=\mathrm{End}_\sC(T)$.
Second: $T$ yields the Dirac skew-symmetric pairing on the charges, see eqn.\eqref{difromr}.
Third: $T$ defines the {cluster characters} $\langle-\rangle_T\colon \sC\to \C$ which correspond to $L_\alpha\mapsto \langle L_
\alpha\rangle_u$.
\medskip

Now consider the general case in which our $\cn=2$ theory does \underline{not} necessarily enjoy the BPS-quiver property. In the UV we still have some  triangle category $\sC$ describing BPS operators. To determine the low-energy physics in some chamber of the Coulomb branch we need to associate to the microscopic category $\sC$ (and choice of chamber) the supersymmetric theory on the BPS particle world-line. We argued above that this theory is obtained by gauging a discrete symmetry $\bG$ of some parent 1d supersymmetric model which in turn is  described by a Lagrangian $\mathscr{L}$ with a single-trace superpotential $\cw$. The ungauged
Lagrangian is specified by the finite-dimensional algebra $\ca(\mathscr{L})$. The standard way to produce an algebra out of a linear category $\sC$ is to choose an object $X\in \sC$ and consider the algebra of its endomorphisms $\mathrm{End}_\sC(X)$,
which is finite-dimensional since $\sC$ is Hom-finite. The choice of the object $X\in\sC$ reflects the choice of the chamber. However the pair
$(\sC,X)$ cannot be arbitrary: to preserve 1d supersymmetry we need the algebra $\mathrm{End}_\sC(X)$ to be Jacobian, that is, the 1d interactions must be described by a holomorphic superpotential $\cw$. This is a severe constraint which has a remarkably simple solution:

\begin{thms}[Amiot, Keller \footnote{\ See e.g.\!
\textbf{Theorem 5.6} in the survey  \cite{reitenrev}.}] \label{jacth}All finite-dimensional Jacobian algebras have the form $\mathrm{End}_\sC(T)$ with $\sC$ a 2-CY category and $T\in\sC$ a cluster-tilting object.
\end{thms}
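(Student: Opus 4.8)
The plan is to realize a given finite-dimensional Jacobian algebra $\ca$ as the endomorphism algebra of the canonical cluster-tilting object in \emph{Amiot's generalized cluster category} \cite{reitenrev}. By the definition of ``Jacobian'', $\ca\cong\cp(Q,\cw)$ for a finite quiver $Q$ and a (reduced) potential $\cw$, the relations being the cyclic derivatives $\partial_a\cw$, $a\in Q_1$. First I would attach to $(Q,\cw)$ its complete Ginzburg dg algebra $\Gamma\equiv\Gamma_{Q,\cw}$: it has the vertices of $Q$; the arrows $a$ of $Q$ in degree $0$; a reversed arrow $a^\ast$ in degree $-1$ for each $a$; a loop $t_i$ in degree $-2$ at each vertex $i$; and differential fixed by $da=0$, $da^\ast=\partial_a\cw$ and $dt_i=e_i\big(\sum_{a\in Q_1}[a,a^\ast]\big)e_i$. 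A direct computation shows that $\Gamma$ is concentrated in non-positive degrees with $H^0(\Gamma)\cong\cp(Q,\cw)=\ca$ and $H^p(\Gamma)=0$ for $p>0$.

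Next I would invoke the two structural facts about $\Gamma$ (Ginzburg; Keller's deformed Calabi--Yau completions): $\Gamma$ is \emph{homologically smooth} (perfect as a bimodule over itself) and \emph{bimodule $3$-Calabi--Yau}, i.e.\ there is a bimodule quasi-isomorphism $\mathrm{RHom}_{\Gamma^{e}}(\Gamma,\Gamma^{e})\cong\Gamma[-3]$. Set $\sC:=\mathrm{per}(\Gamma)\big/\sD_{\mathrm{fd}}(\Gamma)$, the Verdier quotient of the perfect derived category by the full subcategory of dg modules of finite total dimension, and let $T$ be the image of the rank-one free module $\Gamma$.

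The heart of the argument, following Amiot, is to exploit the hypothesis $\dim_k H^0(\Gamma)<\infty$. One first checks $\sD_{\mathrm{fd}}(\Gamma)\subset\mathrm{per}(\Gamma)$, so the quotient is sensible and compatible with duality. Using the canonical $t$-structure on $\sD(\Gamma)$ one then produces a \emph{fundamental domain} $\mathcal{F}\subset\mathrm{per}(\Gamma)$ --- essentially the objects with cohomology concentrated in degrees $0$ and $1$ --- mapping bijectively onto $\sC$, from which one reads off that $\sC$ is Hom-finite, that $\mathrm{End}_{\sC}(T)\cong H^0(\Gamma)=\ca$, and that $\mathrm{Hom}_{\sC}(T,\Sigma T)=0$. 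The bimodule $3$-CY property of $\Gamma$ furnishes a relative Serre duality between $\mathrm{per}(\Gamma)$ and $\sD_{\mathrm{fd}}(\Gamma)$ which, descended to the quotient, becomes the bifunctorial isomorphism $\mathrm{Hom}_{\sC}(Y,\Sigma^{2}X)\cong D\,\mathrm{Hom}_{\sC}(X,Y)$, with $D$ the $k$-linear dual --- that is, $\sC$ is $2$-Calabi--Yau. Finally, rigidity of $T$ together with the fundamental-domain description gives $\{\,X\in\sC:\mathrm{Hom}_{\sC}(T,\Sigma X)=0\,\}=\mathrm{add}\,T$, so $T$ is a cluster-tilting object; combined with $\mathrm{End}_{\sC}(T)\cong\ca$ this proves the claim.

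The step I expect to be the genuine obstacle is precisely the one that uses the finiteness of $H^0(\Gamma)$ in an essential way: showing that the Verdier quotient $\sC$ is Hom-finite (a delicate truncation/$t$-structure analysis) and $2$-Calabi--Yau (reducing the ambient $3$-CY duality to the quotient). By contrast, the construction of $\Gamma$, its homological smoothness and $3$-CY property, and the identification $\mathrm{End}_{\sC}(T)\cong\ca$ are comparatively formal. For the subcase in which $\ca$ has global dimension $\le 2$ there is a shortcut: replace $\Gamma$ by the $3$-Calabi--Yau completion $\Pi_3\big(\sD^{b}(\mathsf{mod}\,\ca)\big)$ and run the same quotient; the Ginzburg-algebra route above is what covers the general case.
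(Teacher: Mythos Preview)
Your proposal is correct and is precisely the standard Amiot--Keller argument. Note, however, that the paper does not give its own proof of this theorem: it is quoted as a known result with a footnote pointing to \textbf{Theorem 5.6} of the survey \cite{reitenrev} (and implicitly to Amiot's original paper \cite{amiot} and Keller's work on deformed Calabi--Yau completions \cite{comke}). Your write-up is essentially a compressed account of Amiot's proof as it appears in those references --- Ginzburg dg algebra, homological smoothness and bimodule $3$-CY (Keller), Verdier quotient $\mathrm{per}(\Gamma)/\sD_{\mathrm{fd}}(\Gamma)$, fundamental-domain argument for Hom-finiteness and cluster-tilting, descent of the $3$-CY duality to $2$-CY on the quotient --- so there is nothing to contrast: you have reconstructed the intended proof.
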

To our knowledge, the converse statement is still an open problem. It is however known to be true under various natural hypothesis on the 2-CY category $\sC$. In particular, for all cluster categories associated to 2-acyclic quivers, 
$\mathrm{End}_\sC(T)$ is Jacobian for all cluster-tilting object $T\in\sC$ \cite{reitenrev}.
\medskip

Note that both $\mathscr{L}$ and $\mathscr{L}/\bG$ have the right properties  to be the 1d theory of some (different)
4d $\cn=2$ model. When the finite group $\bG$ is non-trivial, it is natural and convenient to associate to the model \emph{two} different (Hom-finite) 2-CY categories, $\sC$ and $\sC_\bG$, related by a Galois cover
$\sC\to\sC_\bG$ with deck group $\bG$.
To produce the ungauged 1d Lagrangian/Jacobian algebra $\ca(\mathscr{L})$ the corresponding 2-CY category $\sC$ should have a cluster-tilting object $T$ so that
$\ca(\mathscr{L})\cong\mathrm{End}_\sC(T)$. No such condition is required for 
the covered 2-CY category $\sC_\bG$.
Thus we shall consider acceptable also (Hom-finite) 2-CY categories $\sC$ without cluster-tilting objects provided there is a $\bG$-covering of 2-CY categories $\tilde\sC\to \sC$ and the cover $\tilde\sC$ has a cluster-tilting object.
However, whenever our $\cn=2$ QFT has a non-trivial flavor group $F$, the corresponding 2-CY categories $\sC$ should satisfy the weaker property of having non-zero rigid objects.
Since all SCFT in table 1 of \cite{Argyres3}
have $F\neq1$ we shall add this condition
as an extra assumption. Dropping it, one finds a handful of extra possibilities with trivial flavor symmetry.

In conclusion: we propose that the UV category $\sC$ of a general 4d $\cn=2$ belongs to a `slight' generalization of the class known to describe QFT with the BPS-quiver property: cluster categories (which, in particular, means 2-CY with cluster-tilting) are replaced by the larger class of 2-CY categories with non-trivial rigid objects $T_\text{max}$ with the additional property of having a Galois cover
$\tilde\sC$ which is a cluster category.
\medskip

There is an important difference between the BPS-quiver and general cases.
In the first situation the rank of the Grothendieck group $K_0(\mathrm{End}_\sC(T_\text{max}))$ is known to be
\be
\mathrm{rank}\,K_0(\mathrm{End}_\sC(T_\text{max}))=\mathrm{rank}\,F+2 k,
\ee 
with $k$ the complex dimension of the Coulomb branch. In the general case the rank of $K_0(\mathrm{End}_\sC(T_\text{max}))$ may be smaller
\be
\mathrm{rank}\, F\leq \mathrm{rank}\,K_0(\mathrm{End}_\sC(T_\text{max}))\leq \mathrm{rank}\,F+2 k,
\ee
and we expect the upper bound to be saturated only in the BPS-quiver case. The lower bound follows from \eqref{zzzaq6}.
In the rank-1 case we remain with just two possibilities:
\be\label{pppqwer}
 \mathrm{rank}\,K_0(\mathrm{End}_\sC(T_\text{max}))= \begin{cases} f+2 & \text{BPS-quiver}\\
 f & \text{otherwise.}
 \end{cases}
\ee
Let us illustrate the physical origin of this different behavior in a simple example: the rank-1 SCFT with Coulomb branch dimension $\Delta=4$ and flavor group $Spin(7)$.
This theory is obtained by gauging a discrete symmetry $\Z_2\subset PSL(2,\Z)\times U(1)_R$ of $SU(2)$ SYM with $N_f=4$ \cite{Argyres7,discretegaugings1,discretegaugings2}. Here $PSL(2,\Z)$ is the $S$-duality group. The $\Z_2$ gauge group rotates electric charges into magnetic ones and viceversa. In this SCFT the distinction between electric and magnetic charges is not just conventional, it is \emph{gauge-dependent}. The intrinsic, gauge-invariant, Dirac pairing $\langle-,-\rangle_D$, averaged over the orbits of the $\Z_2$ gauge group, just \textit{vanishes.} To define a non-trivial  electro-magnetic pairing in the IR -- to measure the mutual non-locality of states -- we need to fix a $\Z_2$-gauge in the corresponding category $\mathsf{mod}\,\mathrm{End}_\sC(T_\text{max})$.
One of the reasons why we prefer to work (in the general case) with a Galois pair $(\sC,\sC_\bG)$ of 2-CY categories rather than with the single physical UV category $\sC_\bG$ is to make the gauge-fixing  construction systematic. Fixing the gauge means to choose a `local' lift from $\sC_\bG$ to $\sC$ and perform computations in the much simpler covering category $\bG$.   
 Irrespectively of these technicalities, since $\mathrm{rank}\,K_0(\mathsf{mod}\,\mathrm{End}_\sC(T_\text{max}))$ is always $f+\mathrm{rank}\,\langle-,-\rangle_D$, we get eqn.\eqref{pppqwer}.

\subsubsection*{UV completeness: Our final proposal}

The above considerations suggest the working hypothesis that the classification of  4d $\cn=2$ QFT, say in rank-1, may be traded for the classification of  2-CY categories with rigid objects and cluster covers. However, it is certainly \underline{not} true that all (Hom-finite) 2-CY categories with rigid objects and cluster covers describe some consistent 4d $\cn=2$ QFT. For instance, there exist perfectly good cluster categories associated to non-UV-complete field theories such as $SU(2)$ SYM coupled to 5 fundamentals, or to one adjoint and one fundamental hypermultiplet. Such theories make sense only as low-energy effective descriptions, obtained by integrating out the heavy degrees of freedom of some UV completion.
Triangle categories behave well under RG, and there is a categorical version of `integrating out the heavy stuff'\,\footnote{\ It is called Calabi-Yau reduction \cite{keller} or subfactor construction \cite{reitenrev}.}. Not surprisingly, we may obtain the cluster categories of $SU(2)$ with $N_f=5$,\footnote{\label{curious}\ $SU(2)$ with $N_f=5$ is an effective QFT which is `good' up to a certain cut-off scale $\Lambda$. Its IR category $\mathsf{mod}\,\mathrm{End}_{\sC_{N_f=5}}(T)$ should also be `good' only up to a certain scale, i.e.\! when restricted to modules of bounded Grothendieck class. Remarkably, this very point was noted as a `curious' fact by C.M. Ringel back in 1984 (cfr.\! \textbf{Remark 2.} on page 166 of his celebrated book \cite{ringelbook}).}
or $SU(2)$ with one $\boldsymbol{3}$ and one $\boldsymbol{2}$, by suitable decoupling limits of, say, the rank-1 SCFTs with $\Delta=3$ and flavor group $E_6$ and, respectively, $Sp(6)$. 
\medskip

For the purpose of classifying $\cn=2$ QFTs we have to keep only 2-CY categories which correspond to UV-complete theories.
We need a criterion to distinguish 2-CY categories which correspond to UV-complete QFT from the ones which do not.  In the BPS-quiver context the criterion proposed in \cite{CNV} reduces to the bound 
$\hat c<2$, see discussion\footnote{\ Again we limit to the SCFT case. The general criterion encompasses the asymptotically-free case too.} around eqn.\eqref{jjjaz1294}.
%
%
%
%
%
To classify \emph{all} rank-1 4d $\cn=2$
we need to extend this criterion to the non-BPS-quiver case. In line with the stringy arguments which motivated the original criterion \cite{CNV}, and in view of the 4d physics, it is natural to propose a criterion such that the set of ``UV-complete'' 2-CY categories is the \emph{smallest} one which contains the 2-acyclic categories consistent with the 4d/2d correspondence and is closed under the standard physical operations: gaugings which preserve $\cn=2$ \textsc{susy} and  RG-flows. 
Pragmatically, this amounts to requiring the 4d/2d correspondence to be satisfied by the covering cluster category $\tilde\sC$ of our UV 2-CY category $\sC$. 
To keep the statement as simple as possible, we shall not state the criterion in its most general form (the interested reader may look at \cite{Cecotti:2012va,Caorsi:2016ebt,Caorsi:2017bnp}) but only its  simplified version which applies to rank-1 SCFT.
In our tables we shall insert back the rank-1 asymptotically-free categories.    
So specialized, our `conservative' proposal takes in the RT language
a somewhat esoteric form;
we state it in a slightly cavalier fashion\footnote{\ The conditions in the text are slightly too restrictive. We have to ``close'' the class of categories by adding the categories which arise as ``de-singularizations of singular limits'' of the categories in the \textbf{Criterion/Definition}.
We shall comment on this technicality in section 4.} 

\begin{cridfn}\label{cride} A 2-CY category $\sC$ with non-zero rigid objects is said to be \emph{the UV category of a SCFT of rank-1} if it is (the hull of) an orbit category of the form
\be\label{criterion}
\sC_{G}=\Big(\sD_\ca\big/(G)^\Z\Big)_\text{\rm tr.hull},
\ee
where $\sD_\ca=D^b\mathsf{mod}\,\ca$ is the derived category of an algebra $\ca$ satisfying the conditions below, and $G\colon \sD_\ca\to\sD_\ca$ is a suitable  autoequivalence.
The algebra $\ca=\C \mathring{Q}/I$ has global dimension $\leq2$ and:
\begin{itemize}
\item[a)] \emph{(4d/2d criterion $\hat c<2$)} The derived category $\sD_\ca\equiv D^b\mathsf{mod}\,\ca$ is fractional Calabi-Yau of dimension $a/b<2$;
\item[b)] \emph{(Coulomb dimension 1)} The rank of the exchange matrix $B$ of the $2$-acyclic quiver $Q$ (obtained by adding to $\mathring{Q}$ an inverse arrow per  minimal relation of $I$, and then reducing it by deleting loops and conflicting pairs of arrows $\leftrightarrows$) is 2.
\end{itemize}
\end{cridfn}
\begin{rem} Note that the physical requirement of UV finiteness, item \textit{a)}, implies that
the functor $\mathsf{Tor}^A_2(?,D\ca)\colon\mathsf{mod}\,\ca\to\mathsf{mod}\,\ca$ is nilpotent, equivalently that the category $\sC_\ca$ in eqn.\eqref{kkaszqwe} is \textit{Hom-finite;}\footnote{\ See \textbf{Proposition 4.9}(4) of \cite{amiot}.} thus the absence of infinities in the sense of QFT implies the absence of infinities in the  sense
of categories.
Hence $\sC_\ca$ is the Amiot cluster category which is Hom-finite, 2-CY with $\ca$ as cluster-tilting object. The condition of being 2-CY implies that $G^s=S\Sigma^{-2}$ for some $s\in\bN$, and then $\sC_G$ is also Hom-finite and 2-CY but may or may not have a cluster-tilting object.
See \textsc{appendix B} for more details. 
\end{rem}

\subsection{Kodaira type of a 2-CY category}

Our basic claim is that the classification of 2-CY categories of the form \eqref{criterion} is the classification of rank-1 $\cn=2$ SCFTs.
The $\cn=2$ SCFTs (or, more generally, the $\cn=2$ QFTs) have two kinds of 
classifications: \emph{fine} and \emph{coarse-grained}. The rank-1 coarse-grained classes are in one-to-one correspondence \cite{Argyres3,Caorsi:2018zsq} with the Kodaira's singular elliptic fibers $\cf$ \cite{koda,kod1,kod2} having additive reduction (i.e.\!\! \emph{non}-semi-stable) and Euler characteristic $e(\cf)\leq 10$; the SCFT (i.e.\! semi-simple) ones are listed in table \ref{ffffhad}.
There is a forgetful map
$\text{(fine class)}\mapsto \text{(coarse-grained class)}$,
hence a map 
\be
\text{(2-CY category of the form \eqref{criterion})}\mapsto \text{(additive Kodaira fiber)}.
\ee
The Kodaira fiber $\cf$ associated to a 2-CY category will be called its
\emph{Kodaira type}. Kodaira type is a basic tool in the application of 2-CY categories to the study of $\cn=2$ QFT. In view of the extension of present work to rank $k>1$, we shall define the Kodaira type in general not just for the categories satisfying our \textbf{Criterion/Definition}. 

We recall that the Kodaira fibers \cite{koda,kod1,kod2}
\be\label{kkkaqwp}
I_b,\ I^*_b\ (b\geq0),\ II,\ III,\ IV,\ II^*, III^*,\ IV^*
\ee
corresponds to the conjugacy classes of elements $\rho\in SL(2,\Z)$ consistent with the strong monodromy theorem: \textit{i)} they are
quasi-unipotent i.e.\! $(\rho^m-1)^n=0$ for some $m,n\in\bN$, and \textit{ii)} the unipotent element $\rho^m$ satisfies the $SL_2$-orbit theorem \cite{sl2}. The fiber is additive iff $m>1$; all types \eqref{kkkaqwp} are additive but $I_b$. Physically, these conditions corresponds to UV completeness. 
 
\subparagraph{Kodaira type of an algebra of finite global dimension.} 
The Kodaira type of a (finite-dimensional, basic) algebra of finite global dimension is a far-reaching  refinement of its spectral invariants 
(for background see e.g.  \cite{spectral}).
For fixed spectral data, the Kodaira type takes value in a finite group $H_\bK$
whose structure depends on the arithmetics of the spectral field $\bK$ (see below for examples).

Let $\ca=\oplus_{i=1}^\nu P_i$ with $P_i$ indecomposable, $P_i\not\cong P_j$ for $i\neq j$, and $\sD_\ca\equiv D^b\mathsf{mod}\,\ca$. Consider the  integral $\nu\times \nu$ matrix
\be
\mathbf{S}^{-1}_{ij}=\dim\sD_\ca(P_i,P_j).
\ee
Since $\ca$ has finite global dimension, $\bs^{-1}$ is a unit in $GL(\nu,\Z)$ so its inverse $\bs$
exists and has integral entries.
In facts, $\bs$ is the transpose of the Euler form in the basis of the simple modules $S_i\in\mathsf{mod}\,\ca$
\be
\bs_{ij}=\langle S_j,S_i\rangle_E \equiv\sum_{k\in\Z}(-1)^k\dim \sD_\ca(S_j, \Sigma^k S_i).
\ee
If,  in addition, $\ca$ is triangular, i.e.\! $\ca\cong \C \mathring{Q}/I$ with $\mathring{Q}$ a quiver without oriented cycles, $\bs$ is an upper-triangular matrix with 1 on the main diagonal, and physicists call it the \emph{Stokes matrix} \cite{CV92} since they like to see it as
the monodromy datum of a Sato-Miwa-Jimbo
isomonodromic system of PDE's \cite{satomiwa,cvising}.
A triangular algebra $\ca$ is ``physically nice'' iff $\bs$ produces a regular solution of the PDE's.\footnote{\ To get a flavor of the PDE regularity conditions, see \cite{ttstar1,ttstar2,ttstar3} where the $\Z_\nu$ symmetric case is analyzed in detail for $\nu\leq 5$.}

 Returning to the general case, we define 
$\bb$ as the antisymmetric part of the Euler form
\be
\bb_{ij}= \langle S_i, S_j\rangle_\text{anti}\equiv\langle S_i, S_j\rangle_E-\langle S_j,S_i\rangle_E= (\bs^t-\bs)_{ij}.
\ee
$\sD_\ca$ has a Serre auto-equivalence $S$. The 2d quantum monodromy $\bh$ is the action of $S$ in the Grothendieck group $K_0(\sD_\ca)$ ($\equiv$ the free Abelian group on the classes $[S_i]$,
$i=1,\dots, \nu$)
\be\label{zzz0123}
[SS_i]= [S_j]\,\bh_{ji}\quad\Rightarrow\quad \bh=(\bs^t)^{-1}\bs\in SL(\nu,\Z)
\ee 
(mathematicians prefer to use the Coxeter element $-\bh$).

A triangular algebra $\ca$ is ``physically nice'' iff $\bh^{-1}$, seen as a putative monodromy, is consistent with  the strong monodromy theorem. For $\ca$ triangular, we say that $\sD_\ca$ is \emph{strong} if these conditions are fulfilled, and \emph{numerically CY} if $\bh$
has finite order. We have the inclusions
\be
\text{($\sD_\ca$ fractional CY)}\subset \text{($\sD_\ca$ numerical CY)}\subset \text{($\sD_\ca$ strong).}
\ee 
\begin{exe} A canonical (or squid)
algebra of type $\{p_1,\cdots, p_t\}$ is strong iff its Euler characteristic $\chi\equiv 2-\sum_{a=1}^t(1-1/p_a)\geq0$, and fractional CY iff $\chi=0$.
\end{exe}

We go back to the general case ($\ca$ is not necessarily triangular).
One has
\be
\bh^t\, \bb\, \bh=\bb.
\ee
Let $\Gamma_\text{fl}\subset K_0(\sD_\ca)$ be the sublattice of
elements $x$ such that $\langle x,-\rangle_\text{anti}=0$; $\Gamma_\text{fl}$ coincides with the sublattice of elements such that $\bh\, x=x$. Since $\bb$ is skew-symmetric, $\bh$ has no non-trivial Jordan block associated to the eigenvalue 1. Then we may find a $\Z$-equivalent basis
such that
\be\label{kkkaqwe1za}
\bb=\left(\begin{array}{c|ccc}
\mathbf{b} & & 0 &\\\hline
&0 &&\\
0 & &\ddots &\\
&&& 0\end{array}\right)\qquad 
\bh=\left(\begin{array}{c|ccc}
\mathbf{h} & & 0 &\\\hline
&1 &&\\
0 & &\ddots &\\
&&& 1\end{array}\right)
\ee
where: \textit{i)} $\mathbf{b}$ is a non-degenerate  integral $2k\times 2k$ skew-symmetric matrix, \textit{ii)}
1 is not an eigenvalue of the integral matrix $\mathbf{h}$, and
\be
\text{\it iii)}\qquad \mathbf{h}^t\,\mathbf{b}\,\mathbf{h}=\mathbf{b},
\ee
that is, $\mathbf{b}$ is an element of the arithmetic group
$Sp(\mathbf{b},\Z)$; e.g.\!\! if $\mathbf{b}$ is a multiple of the standard symplectic matrix, $\mathbf{h}$ belongs to the Siegel modular group $Sp(2k,\Z)$. The conjugacy class of $\mathbf{h}$ in $GL(2k,\C)$
yields the \emph{spectral invariants} of $\ca$ \cite{spectral}.  Its conjugacy class in the arithmetic group
$Sp(\mathbf{b},\Z)$ is a much finer derived 
invariant of $\ca$ which encodes precious physical information. We call this invariant the \emph{Kodaira type} of $\ca$.

\begin{exe} For generic $k$, the simplest case is when $Sp(\mathbf{b},\Z)\cong Sp(2k,\Z)$, and $\mathbf{h}$ is regular of finite order, i.e.\! its minimal equation is $\mathbf{h}^m=1$ with\footnote{\ $\phi(-)$ stands for Euler's totient function.} $\phi(m)=2k$. In this case
the Kodaira type takes value in a  group $H_\bK$ whose order is \cite{Caorsi:2018zsq}
\be
|H_\bK|=\frac{2^{\phi(m)/2}}{Q_\bK}\,h_\bK^-,
\ee
where $Q_\bK$ and $h^-_\bK$ are, respectively, the Hasse unit index and the relative class number of the cyclotomic field $\bK$ of $m$-th roots of unity.
\end{exe} 

For an algebra $\ca$ which satisfies conditions \textit{a)},\,\textit{b)} of our
\textbf{Criterion/Definition} the story simplifies dramatically:  $k=1$, so the arithmetic group is just $SL(2,\Z)$ and 
$\mathbf{h}\in SL(2,\Z)$ has finite-order $m=2,3,4,6$. The class number $h_\bK$ of the relevant cyclotomic fields is automatically $1$; then
\be
H_\bK=\begin{cases}
1 &\text{for }\bK=\bQ\ \text{i.e. }m=2,\\
GL(2,\Z)/SL(2,\Z)\cong\Z_2 &\text{for }\bK\neq\bQ\ \text{i.e. }m=3,4,6.
\end{cases}
\ee
Thus the algebras $\ca$ of \textbf{Criterion/Definition} have 7 possible Kodaira types\footnote{\ 11 types if we count the asymptotically-free cases, i.e.\! $\sD_\ca$ strong but not necessarily numerically CY.}
naturally identified with the additive Kodaira fibers with semi-simple monodromy. For $m=2$ we have a single type, $I_0^*$, while for each
$m=3,4,6$ we have two types
\be
\begin{tabular}{|cc|cc|cc|}\hline
$m=3$ & $IV$, $IV^*$ &
$m=4$ & $III$, $III^*$ & $m=6$ & $II$, $II^*$\\
\hline
\end{tabular}
\ee
We shall say that two fiber types in the same $H_\bK$ orbit (i.e.\! same $m$ for $k=1$) are each other conjugate ($I_0^*$ is self-conjugate). Passing from
a fiber type $\cf\neq I_0^*$ to its conjugate corresponds to a Kodaira quadratic transformation of the corresponding elliptic fibration. We write $\cf^*$ for the conjugate type of $\cf$
(with the convention $(I_0^*)^*=I_0^*$). See table \ref{ffffhad}.

\begin{table}
\centering
\begin{tabular}{l|ccccccc}\hline\hline
Kodaira type $\cf$ & $II^*$ & $III^*$ & $IV^*$ & $I_0^*$ & $IV$ &
$III$ & $II$\\
conjugate Kodaira fiber $\cf^*$ & $II$ & $III$ & $IV$ & $I_0^*$ & $IV^*$ & $III^*$ & $II^*$\\ 
conjugate Euler number $e(\cf^*)$ & 2 & 3 & 4 & 6 & 8 & 9 & 10\\
-(reduced conjugate intersection form) & - &$A_1$& $A_2$& $D_4$ & $E_6$ & $E_7$ & $E_8$
\\
Coulomb operator dimension $\Delta$ & $\tfrac{6}{5}$ &  
$\tfrac{4}{3}$ & $\tfrac{3}{2}$
& 2 & $3$ & $4$ & $6$\\
$b(\cf)$ & 1 & 1& 1 & 2 & 3 & 3 & 3\\
\hline\hline 
\end{tabular}
\caption{\ Coarse-grained classification of rank-1 SCFT by
Kodaira fibers of semi-simple type.
By the \emph{reduced conjugate intersection form} we mean the intersection matrix $C_i\cdot C_j$ of the irreducible curves $C_i\subset \cf^*$ which do not cross the zero-section of the elliptic fibration; the Cartan matrix of a simply-laced Lie algebra is denoted by the same symbol as the Lie algebra. Sometimes we use the Lie algebra in the third row to label the  Kodaira type of $\ca$.}\label{ffffhad}
\end{table}  

\begin{exe} The Kodaira type $\cf$ of the Dynkin algebras of type $A_2$, $A_3$ and $D_4$ is, respectively,
$II^*$, $III^*$ and $IV^*$. The Kodaira type of a tubular algebra of type $(2,2,2,2)$ is $I_0^*$. Note that in all cases $\nu+e(\cf)=12$, i.e.\! $\nu=e(\cf^*)$. 
Examples of algebras of type $IV$, $III$, $II$ are the \emph{del Pezzo algebras} defined in the next section.
\end{exe}

\subparagraph{Kodaira type of 
the category $\sC_G$.} We return to
2-CY of the form \eqref{criterion} which satisfy condition \textit{a)} but not necessarily \textit{b)}.  The auto-equivalence $G\colon \sD_\ca\to\sD_\ca$ defines a matrix $\mathbf{G}$
\be
[G S_a]=[S_b]\mathbf{G}_{ba}.
\ee
Since the category is 2-CY, we have
$G^s=S\Sigma^{-2}$ for some $s\in\bN$, so $\mathbf{g}^s=\mathbf{h}$, and $\mathbf{g}^t\mathbf{b}\mathbf{g}=\mathbf{b}$
so $\mathbf{g}\in Sp(\mathbf{b},\Z)$.
By the Kodaira type of $\sC_G$ we mean the conjugacy class of $\mathbf{g}$ in the arithmetic group $Sp(\mathbf{b},\Z)$.
Note that the Kodaira type of the Amiot cluster category $\sC_\ca$ coincides with that of the algebra $\ca$.
The Kodaira types of $\sC_G$ and $\sC_\ca$ are related by the local base change $z\to z^s$ formulae given e.g.\! in table (IV.4.1) of
\cite{miranda}. 

The Kodaira type of the rank-1 SCFT described the category $\sC_G$ coincides with Kodaira type of $\sC_G$ as described in this section.

\begin{rem} For $k=1$ we have another invariant $b\in\bN$ given by
$\det\mathbf{b}=b^2$. One gets
\be
b(\cf)+b(\cf^*)=4\quad \text{and also}\quad 1/\Delta(\cf)+1/\Delta(\cf^\prime)=1.
\ee 
\end{rem}

\medskip

The rest of this paper is dedicated to show that the 2-CY categories satisfying the above \textbf{Criterion/Definition} are in
natural bijection with the 28 interacting rank-1 SCFT listed in table 1 of \cite{Argyres3}. 

\section[Appetizer: derived category of index-1 Fano surfaces]{Appetizer: derived category of index-1 Fano surfaces\\
(a.k.a.\! Minahan-Nemeshanski SCFT)}

Our problem is to find all solutions to the conditions in the \textbf{Criterion/Definition} of the previous section. Some special  solutions can be found in a cheap way;
this holds in particular when the algebra $\ca$ is derived equivalent to a hereditary category $\ch$ (see \textsc{appendices B, D}). As we shall review in section 3 and 6, 
the hereditary case produces in rank-1\footnote{\ Together with the 4 asymptotically free $SU(2)$ SQCD with $N_f\leq3$; for simplicity in the text we limit to SCFTs.} the Argyres-Douglas models of types $A_2$, $A_3$, $D_4$ and $SU(2)$ SQCD with $N_f=4$ of respective Kodaira type $II$, $III$, $IV$ and $I^*_0$ (the rank of $K_0(\ch)$ is the Euler number of its Kodaira fiber, table
\ref{ffffhad}).
As an appetizer we present a less known class of cheap solutions where $\ca$ has wild representation-type
and Kodaira type $IV^*$, $III^*$ and $II^*$.

\subsection{Algebraic-Geometric viewpoint}

Suppose $X$ is a smooth projective surface which is Fano
(i.e.\! $-K_X$ is ample) of index-1
(i.e.\! $-K_X$ is primitive in the Picard lattice) and such that the
anticanonical model of $X$ is a
hypersurface of degree $d$ in some weighted projective space
$\bP(w_0,w_1,w_2,w_3)$. Since $X$ is an index-1 Fano we have the equality
\be
w\equiv\sum_{i=0}^3w_i=d+1.
\ee

We claim that each surface $X$ with the above properties yields a (derived equivalence class of) triangular algebra $\ca$ of global-dimension 2 such that their derived category $\sD_\ca\equiv D^b\mathsf{mod}\,\ca$ is fractional Calabi-Yau of dimension
\be\label{fffrqa}
\hat c=\frac{a}{b}=\frac{2(2d-w)}{d}\equiv \frac{2(d-1)}{d},\quad\text{i.e.}\quad S^d\cong \Sigma^{2(d-1)}\ \text{in }\sD_\ca.
\ee
Note that this implies
\be
\Sigma^2\cong (S^{-1}\Sigma^2)^d\quad\Rightarrow\quad \Sigma^2\cong\mathrm{Id}\ \text{in }\sC_\ca,
\ee 
so that the Amiot cluster category $\sC_\ca$ is symmetric. As discussed around eqn.\eqref{kqwaz}, this implies that the Coulomb dimensions $\Delta_i\in\bN$. 

\medskip

By classification, the surfaces $X$ satisfying the above conditions are del Pezzo, in fact $\bP^2$ blown-up in $k=6,7,8$ points in very general position (\!\!\cite{dolgachev} especially \S.\,8.3.2). See
table \ref{delpezzo}. 

\begin{table}
\begin{tabular}{c|ccccc}\hline\hline
$k\equiv\#$ blown-up points & $(w_0,w_1,w_2,w_3)$ & $d$ & $\hat c$ & root lattice & $\Delta\equiv(1-\hat c/2)^{-1}\equiv d$ \\\hline 
6 & $(1,1,1,1)$ & $3$ & $4/3$ & $E_6$ & 3\\
7 & $(1,1,1,2)$ & $4$ & $3/2$ & $E_7$ & 4\\
8 & $(1,1,2,3)$ & $6$ & $5/3$ & $E_8$ & 6\\\hline\hline
\end{tabular}
\caption{Index-1 Fano surfaces whose anti-canonical model is a weighted projective hypersurface.
$\hat c$ is the CY dimension of the associated algebra.
$\Delta$ is the Coulomb branch dimension of the corresponding rank-1 SCFT, while the root lattice described in \cite{dolgachev} \S.8.2 (see fifth column) dictates its flavor symmetry. The root lattice in the fifth column corresponds to the Kodaira type of the associated algebra $\ca$ (cfr.\! third row of table \ref{ffffhad}).}
\label{delpezzo}\end{table}

\medskip

The claim follows from a few well-known facts that we now recall. 
\begin{defn} Let $\sD$ a ($\C$-linear, Hom-finite) triangle category.
An ordered sequence of objects $\{E_1,E_2,\dots, E_r\}$ of $\sD$ is said to be \emph{strongly exceptional} iff
\be
\begin{aligned}
\sD(E_i,E_j[k])=0&&&\text{for }k\neq0\ \text{and all }i,j\\
\sD(E_i,E_i)\cong \C,&&& \sD(E_i,E_j)=0\ \text{for }i>j.
\end{aligned}
\ee
The sequence $\{E_1,E_2\dots,E_r\}$ is called \emph{full} iff  it generates the triangle category $\sD$, i.e.\! if the smallest full triangle sub-category of $\sD$ containing the $E_i$ is $\sD$:
\be
\sD\cong\big\langle E_1,E_2,\cdots, E_r\big\rangle.
\ee
\end{defn}

If $\{E_1,E_2,\cdots, E_r\}$ is a full strongly exceptional sequence, the object 
\be
\ce\equiv\bigoplus_{i=1}^r E_i\in \sD
\ee
 is a tilting object of $\sD$ with the special property  that its endo-algebra  $\cb\cong\mathrm{End}(\ce)$ is (finite-dimensional and) triangular. The basic statement of tilting theory \cite{tiltingX} is the  equivalence of triangle categories
\be
\sD\cong D^b\mathsf{mod}\,\cb. 
\ee

\begin{fact}[e.g.\! \cite{orlov1,orlov2}]\label{fact1} Let $X$ be a del Pezzo 
surface obtained by blowing up $k$ points in very general position in $\bP^2$; we write $\mathsf{coh}\,X$ for the category of the coherent sheaves on $X$, and
$\sD_X=D^b\mathsf{coh}\,X$ for its bounded derived category. It follows from the description of $X$ as the blow-up of the plane that $\sD_X$ contains several full strongly exceptional sequences {\rm\cite{orlov2}}. For instance a convenient one is
\begin{multline}\label{jjjqawe}
\big\{E_1,E_2\cdots, E_{k+3}\big\}=\\
=\Big\{\co_{\ell_1}(-1),\co_{\ell_2}(-1),\cdots,
\co_{\ell_k}(-1), \pi^*\co[1], \pi^*\ct(-1)[1],\pi^*\co(2)[1]\Big\}
\end{multline} 
where $\pi\colon X\to \bP^2$ is the obvious dominant morphism, $\ell_a$ ($a=1,2,\dots, k$) are the exceptional $(-1)$ lines, and $\ct$ is the tangent bundle of $\bP^2$.
\end{fact}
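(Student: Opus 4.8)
The plan is to assemble the sequence \eqref{jjjqawe} out of three standard ingredients — Beilinson's theorem on $\bP^2$, Orlov's blow-up decomposition, and a short list of $\mathrm{Ext}$ computations on the exceptional $(-1)$-curves — and then to verify the three defining properties (full, exceptional, strong/triangular) by inspection. First I would record the base case: on $\bP^2$ there are full strongly exceptional collections, namely Beilinson's $(\co,\co(1),\co(2))$ and (the one producing \eqref{jjjqawe}) $(\co,\ct(-1),\co(2))$, the defining vanishings — $\mathrm{Ext}^{>0}$ between the three objects and $\mathrm{Hom}$ from a later object to an earlier one — being Bott-formula computations, and fullness holding because the collection has the maximal length $3$. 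Since $\pi\colon X\to\bP^2$ is a composition of point blow-ups, $R\pi_*\co_X=\co_{\bP^2}$, so the projection formula gives $R\mathrm{Hom}_X(\pi^*A,\pi^*B)=R\mathrm{Hom}_{\bP^2}(A,B)$; hence $(\pi^*\co,\pi^*\ct(-1),\pi^*\co(2))$ is a full strongly exceptional sequence inside the admissible subcategory $\pi^*\sD_{\bP^2}\subset\sD_X$, and a simultaneous shift by $[1]$ leaves this property unchanged.

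Next I would apply Orlov's blow-up theorem: for the blow-up $\sigma\colon\widetilde S\to S$ of a smooth projective surface at a point, with exceptional curve $E$, one has the semiorthogonal decomposition $\sD_{\widetilde S}=\langle\,\co_E(-1),\ \sigma^*\sD_S\,\rangle$. Iterating over the $k$ distinct points — at each step the new center is a smooth point lying off all previously produced exceptional curves, so those curves survive with isomorphic neighborhoods and their structure sheaves pull back to themselves — yields
\[
\sD_X=\big\langle\,\co_{\ell_1}(-1),\dots,\co_{\ell_k}(-1),\ \pi^*\sD_{\bP^2}\,\big\rangle,
\]
the internal order of the $\ell_a$-blocks being immaterial since the $\ell_a$ have pairwise disjoint support, so $R\mathrm{Hom}_X(\co_{\ell_a}(-1),\co_{\ell_b}(-1))=0$ for $a\neq b$. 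Concatenating this with the shifted, pulled-back Beilinson collection produces exactly the list \eqref{jjjqawe}, which is \emph{full} because it generates a category carrying that iterated decomposition.

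It then remains to check exceptionality and strongness, i.e.\ the cross $\mathrm{Hom}$'s between the curve block and the pullback block. Each $\co_{\ell_a}(-1)$ is exceptional: the Koszul resolution of $\co_{\ell_a}$ over $\co_X$ (a $(-1)$-curve has normal bundle $\co_{\bP^1}(-1)$) gives $R\mathcal{H}om_X(\co_{\ell_a},\co_{\ell_a})=\co_{\bP^1}\oplus\co_{\bP^1}(-1)[-1]$, whence $\sD_X(\co_{\ell_a}(-1),\co_{\ell_a}(-1))=\C$ and this is the whole $R\mathrm{Hom}$. For a bundle $F$ on $\bP^2$, $\pi^*F|_{\ell_a}$ is the trivial bundle of rank $\mathrm{rk}\,F$ because $\ell_a$ is $\pi$-contracted; adjunction along $\ell_a\hookrightarrow X$ then gives $R\mathrm{Hom}_X(\pi^*F,\co_{\ell_a}(-1))=R\Gamma(\bP^1,\co(-1))^{\oplus\mathrm{rk}\,F}=0$, while Serre duality on $X$ together with $\omega_X|_{\ell_a}=\co_{\bP^1}(-1)$ gives $R\mathrm{Hom}_X(\co_{\ell_a}(-1),\pi^*F)=\C^{\,\mathrm{rk}\,F}[-1]$. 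The first vanishing says all maps from a pullback object to a curve object are zero (so the curves legitimately come first); the second shows the surviving maps from a curve to a pullback sit in cohomological degree $+1$, and the shift $[1]$ in \eqref{jjjqawe} pushes them down to degree $0$ — precisely what the Definition of a \emph{strongly} exceptional sequence requires. Finally $\mathrm{End}_{\sD_X}(\bigoplus_i E_i)$ is the path algebra of a finite quiver without oriented cycles — hence triangular — of global dimension $2$, either by inspection of the quiver with relations or because $X$ is a smooth del Pezzo surface.

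I do not expect a genuine obstacle: the statement packages well-known results. The one place that rewards care is the bookkeeping of the two cross-$\mathrm{Ext}$ groups above — it is exactly their degrees that force the $[1]$-shift on the three pulled-back objects and fix the order of the blocks in \eqref{jjjqawe} — so the only spot an error could creep in is getting the direction of the Serre-duality shift (equivalently, Orlov's conventions) right. I would also note that all of the above uses only that the $k$ blown-up points are distinct; the ``very general position'' hypothesis (which makes $X$ del Pezzo, forcing $k\le 8$) is not needed for the existence of the collection, entering only the surrounding computation of the fractional Calabi--Yau dimension \eqref{fffrqa} and of the $E_k$ root lattice.
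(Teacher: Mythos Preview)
Your proposal is correct, and the argument is complete. The route differs from the paper's, however. In \textsc{appendix C} the paper does not compute the cross $\mathrm{Ext}$'s directly; instead it invokes the Kuleshov--Orlov lemma (\textbf{Corollary 2.11} of \cite{orlov1}), valid specifically on del~Pezzo surfaces, which guarantees that for an exceptional sequence of \emph{sheaves} one has $\mathrm{Ext}^2(E_i,E_j)=0$ and at most one of $\mathrm{Hom}$, $\mathrm{Ext}^1$ is non-zero for $i<j$. The paper then reads off which one survives from the sign of the Euler form, computed via Riemann--Roch as $\chi(E_i,E_j)-\chi(E_j,E_i)=-r(E_i)\,c_1(E_j)\!\cdot\!K_X+r(E_j)\,c_1(E_i)\!\cdot\!K_X$; for $E_i=\co_{\ell_a}(-1)$ and $E_j=\pi^*\co(s)$ this is $c_1(\co_{\ell_a}(-1))\!\cdot\!K_X=-1$, forcing the surviving space into degree $1$ and hence dictating the shift $[1]$. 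Your approach instead computes $R\mathrm{Hom}$ directly by adjunction along $\ell_a\hookrightarrow X$ together with Serre duality and $\omega_X|_{\ell_a}\cong\co_{\bP^1}(-1)$. The trade-off: the paper's argument is shorter once the del~Pezzo lemma is quoted, and immediately yields the full Cartan matrix $S^{-1}_{ij}$; yours is self-contained, does not use the del~Pezzo hypothesis (as you correctly observe, distinctness of the blown-up points suffices for the collection to exist), and makes the role of Orlov's blow-up semiorthogonal decomposition explicit where the paper leaves fullness implicit in the citation.
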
 
\begin{proof} Computation of $\dim\sD(E_i,E_j[k])$, see \textsc{appendix C}.
\end{proof}

The triangular algebra $\cb\equiv \mathrm{End}(\ce)$ has ``Cartan matrix'' (in the 4d/2d language \cite{CV92}: the inverse of the Stokes matrix $S$)
\be
S^{-1}_{ij}=\dim \sD(E_i,E_j)
\ee
which is upper-triangular with 1's on the main diagonal
(the reason why $\cb$ is called ``triangular''); see \textsc{appendix C} for explicit expressions.
 Let us draw the acyclic quiver (with relations) of the algebra $\cb=\mathrm{End}(\ce)$ for the full strong exceptional sequence \eqref{jjjqawe}
 \bigskip

 \be\label{uuuqw12}
 \begin{gathered}
 \xymatrix{\co_{\ell_1}(-1) \ar[r] & \pi^*\co[1]\ar@<0.4ex>[dd]\ar[dd]\ar@<-0.4ex>[dd]\\
 \vdots && \pi^*\co(2)[1]\ar@{..>}@<0.4ex>[ul]\ar@{..>}[ul]\ar@{..>}@<-0.4ex>[ul] \ar@/_4pc/@{..>}@<0.2ex>[ull]\ar@/_4pc/@{..>}@<-0.2ex>[ull]
 \ar@/^4pc/@{..>}@<0.2ex>[dll]\ar@/^4pc/@{..>}@<-0.2ex>[dll]\\
 \co_{\ell_k}(-1)\ar[uur] & \pi^*\ct(-1)[1]\ar@{..>}[uul]\ar@{..>}[l]\ar@<0.4ex>[ur]\ar[ur]\ar@<-0.4ex>[ur]
 }
 \end{gathered}
 \ee
 \bigskip
 
\noindent  where (as always) dashed arrows stand for minimal relations in the opposite direction.

\begin{fact}\label{fact2} Let $Y$ be a smooth hypersurface of degree $d$ in the weighted projective space $\bP(w_0,w_1,w_2,w_3)$
satisfying the relation
\be\label{jjz125m}
w\equiv\sum_i w_i=d+1,
\ee
i.e.\! $Y$ is (in particular) a index-1 Fano surface.
Let $(E_1,\cdots,E_r)$ be a full strong exceptional sequence in $D^b\mathsf{coh}\,Y$
 whose last object $E_r\equiv \cl[s]$  is a \emph{shift of a line bundle} $\cl$.
 Then the full triangle sub-category
 \be
 \sD_\ca\equiv \big\langle E_1,E_2,\cdots, E_{r-1}\big\rangle \subsetneq \sD_Y
 \ee 
is fractional Calabi-Yau of dimension $a/b=2(2d-w)/d$. I.e.
\be
S^d\cong \Sigma^{2(2d-w)}\quad\text{in }\sD_\ca.
\ee
Moreover one has
\be
\sD_\ca\cong D^b\mathsf{mod}\,\ca\quad\text{where }\ca=\mathrm{End}(\ce^*),\quad \ce^*=\bigoplus_{i=1}^{r-1} E_i.
\ee
\end{fact}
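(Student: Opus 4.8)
The plan is to exploit the relationship between the derived category of a weighted projective hypersurface and its graded singularity category, which is governed by Serre duality on $Y$ together with the Gorenstein/Calabi–Yau structure coming from $w = d+1$. First I would record the key numerical input: since $Y$ is a degree-$d$ hypersurface in $\bP(w_0,\dots,w_3)$ with $\sum_i w_i = d+1$, the canonical sheaf is $\omega_Y \cong \co_Y(d - w) = \co_Y(-1)$, so $Y$ is index-$1$ Fano, and more importantly the graded coordinate ring $R = \bigoplus_m H^0(Y,\co_Y(m))$ is Gorenstein with Gorenstein parameter $a_R = w - d = 1$. The shift functor on the graded singularity category $\underline{\mathsf{mod}}\, R$ relates the grading shift $(1)$ to the syzygy/shift functor $\Sigma$, and the fractional Calabi–Yau property is a standard consequence (this is essentially the Kajiura–Saito–Takahashi picture for invertible polynomials / hypersurface singularities): one gets $S^d \cong \Sigma^{2(2d-w)}$ on the relevant triangulated subcategory.

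Next I would set up the semiorthogonal decomposition. By Orlov's theorem relating $D^b\mathsf{coh}\,Y$ to the graded singularity category $\underline{\mathsf{mod}}^{\Z} R$, and using that the last object of the full strong exceptional sequence is a \emph{shift of a single line bundle} $E_r = \cl[s]$, the sub-semiorthogonal-complement $\langle E_1,\dots,E_{r-1}\rangle$ is equivalent (up to shift) to the graded singularity category $D_{sg}^{\Z}(R)$ — removing one line bundle from $D^b\mathsf{coh}\,Y$ lands one precisely in (a twist of) the category of graded matrix factorizations of the defining polynomial. On that category the Serre functor and the shift functor are both explicitly computable: $S$ acts by $(-)\otimes\omega_Y[\dim Y] = (a_R)[\text{shift}]$ in graded-singularity terms, while $\Sigma$ is the syzygy shift, and comparing the two on generators over $d$ steps yields the claimed isomorphism $S^d \cong \Sigma^{2(2d-w)}$. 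Then the tilting statement $\sD_\ca \cong D^b\mathsf{mod}\,\ca$ with $\ca = \mathrm{End}(\ce^*)$, $\ce^* = \bigoplus_{i=1}^{r-1} E_i$, is immediate: $\{E_1,\dots,E_{r-1}\}$ is a strongly exceptional sequence (it is a subsequence of a strongly exceptional one, hence remains strongly exceptional) which is full in $\sD_\ca$ by construction, so $\ce^*$ is a tilting object with triangular endomorphism algebra, and tilting theory (\cite{tiltingX}) gives the derived equivalence.

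The main obstacle I anticipate is making the fractional Calabi–Yau computation clean and self-contained: one must carefully track how the \emph{grading shift} $(1)$ on $D_{sg}^{\Z}(R)$, the internal \emph{degree} $d$ of the defining polynomial, and the triangulated \emph{shift} $[1]$ interact — the relation is something like $\Sigma^2 \cong (1)\circ(\text{period-}d\text{ operation})$ in a way that depends on all $w_i$ and $d$, not just on $w-d=1$. Concretely, the period-$d$ statement "going around $d$ grading shifts equals $2(2d-w)$ homological shifts" needs the precise form of Serre duality for $D_{sg}^{\Z}$ of a Gorenstein ring with parameter $a_R$, which I would either cite from the standard references or verify by a short calculation on $\co_Y$-type generators. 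Everything else — the Gorenstein parameter bookkeeping, the semiorthogonal decomposition, the fullness of the truncated sequence, and the invocation of tilting theory — is routine once that core identity is pinned down. (Alternatively, and perhaps more cheaply, one can deduce $S^d\cong\Sigma^{2(2d-w)}$ directly on $\sD_\ca$ by computing the action of $S$ and $\Sigma$ on the classes $[E_i]$ in $K_0$ from the explicit Stokes matrix of \textsc{appendix C}, checking the operator identity at the level of the Grothendieck group and then lifting it using that $\sD_\ca$ is generated by an exceptional sequence — but the graded-singularity argument is conceptually the more transparent route and generalizes better.)
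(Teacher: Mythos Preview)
Your approach is sound but takes a genuinely different route from the paper. The paper's proof is two lines: normalize by shifting and twisting so that $E_r=\co_Y$, then invoke \textbf{Corollary 4.2} of Kuznetsov \cite{kuz} directly. You instead go through Orlov's theorem to identify $\langle E_1,\dots,E_{r-1}\rangle$ with the graded singularity category $D_{sg}^{\Z}(R)$ (equivalently, graded matrix factorizations of the defining polynomial), and then read off the fractional Calabi--Yau dimension from the known $(\text{grading shift})\leftrightarrow(\text{homological shift})$ relation there. This works because $w-d=1$ means Orlov's semiorthogonal decomposition has exactly one line bundle to remove, and the left orthogonal to $\co_Y$ is unique, so your identification is correct. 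Your route is more explicit and, pleasantly, is precisely the Landau--Ginzburg / mirror-symmetry picture the paper sketches informally in the ``Physicist's viewpoint'' subsection immediately following the Fact; the paper's citation of Kuznetsov is shorter but more of a black box.

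One caution: your parenthetical alternative --- checking $S^d\cong\Sigma^{2(2d-w)}$ on $K_0$ via the Stokes matrix and then ``lifting'' --- does not actually close. An identity in $K_0$ only gives \emph{numerically} Calabi--Yau, and the paper is careful elsewhere (e.g.\ the discussion of the Markoff quiver) to distinguish this from genuine fractional Calabi--Yau; being generated by an exceptional sequence does not by itself let you promote a $K_0$-identity of autoequivalences to an isomorphism of functors. So keep that as a sanity check, not as a proof.
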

\begin{proof} Applying to the sequence the shift $\Sigma^{-s}$ and twisting it by the line bundle $\co_Y\cl^{-1}$ we may assume $E_r=\co_Y\equiv \pi^*\co(1)$.
Then, in view of eqn.\eqref{jjz125m}, the statement is a special case of \textbf{Corollary 4.2} in \cite{kuz}.
\end{proof}

Let $X$ be a del Pezzo surface which is also a smooth hypersurface in some $\bP(w_0,w_1,w,_2,w_3)$
(then automatically $w=d+1$). We apply \textbf{Fact \ref{fact2}} to the full strong exceptional sequence in eqn.\eqref{jjjqawe}.
 The quiver $\mathring{Q}$ of the algebra
$\ca=\mathrm{End}(\ce^*)$ may be obtained from the quiver of the algebra $\cb=\mathrm{End}(\ce^*\oplus\cl[s])$, eqn.\eqref{uuuqw12},
 by deleting in \eqref{uuuqw12}
the rightmost node associated to the shifted line bundle $\cl[1]\equiv\pi^*\co(2)[1]$.  
 
 \be\label{uuuqw13}
\mathring{Q} = \begin{gathered}
 \xymatrix{\bullet_1\ar[rr] && \omega\ar@<0.4ex>[dd]\ar[dd]\ar@<-0.4ex>[dd]\\
 \vdots && \\
 \bullet_k\ar[uurr] && \alpha\ar@{..>}[uull]\ar@{..>}[ll]
 }
 \end{gathered}
 \ee
$\ca$ is then a finite-dimensional triangular algebra of global-dimension 2
whose derived category $\sD_\ca$ is fractional CY of the dimension in eqn.\eqref{fffrqa}. We give a name to the algebras so constructed.

\begin{defn} A \emph{del Pezzo algebra of type} $E_k$ ($k=6,7,8$) is a global-dimension 2 algebra $\ca_k$ of the form 
\be
\ca_k\cong\mathrm{End}_{\sD_{X_k}}\!\!\left(\bigoplus_{i=1}^{k+2}E_i\right),\qquad \left[\begin{aligned} &X_k\ \text{a smooth del Pezzo}\\ 
&\text{surface of degree }9-k,\end{aligned}\right.
\ee
where $\{E_1,E_2,\cdots, E_{k+2},\co_{X_k}\}$ is a full strong exceptional sequence in $\sD_{X_k}$.
del Pezzo algebras $\ca_k$ depend on continuous parameters, namely the complex moduli of $X_k$. Two del Pezzo algebras associated to the same surface $X_k$ are derived equivalent.
\end{defn}

The del Pezzo cluster categories 
\be
\sC_{E_k}= \big(D^b\mathsf{mod}\,\ca_k\big/(S^{-1}\Sigma^2)^\Z\big)_\text{tr.hull}\quad k=6,7,8,
\ee
 describe (respectively) 
the Minahan-Nemesjansky SCFT of type $E_6$, $E_7$ and $E_8$ whose existence here we deduce from the structure of the derived category of sheaves on del Pezzo surfaces.
The quivers of these theories are given by (the mutation class of) the completion of the quiver $Q$ \eqref{uuuqw13} obtained from $\mathring{Q}$ by making solid the dashed lines; $Q$ is equipped with the superpotential
\be\label{superW}
W_\mathrm{lin}=\sum_a \rho_a r_a
\ee
where $\rho_a$ is the arrow which replaces the $a$-th dashed arrow and $r_a$ the associated minimal relation.\footnote{\ Cfr.\! \textbf{Theorem 6.12} of \cite{comke}.}
We have recovered the well known BPS-quiver description of these SCFT \cite{Alim:2011kw,Cecotti:2013sza}.

\begin{rem} Had we started with the more widely used full strongly exceptional sequence
\be
\Big\{\co_{\ell_1}(-1),\co_{\ell_2}(-1),\cdots,
\co_{\ell_k}(-1), \pi^*\co[1], \pi^*\co(1)[1],\pi^*\co(2)[1]\Big\},
\ee
instead of the one in eqn.\eqref{jjjqawe} we would have ended with the quiver
 \be\label{uuuqw1345}
 \mathring{Q}^\prime=\begin{gathered}
 \xymatrix{\bullet_1\ar[rr] && \omega\ar@<0.4ex>[dd]\ar[dd]\ar@<-0.4ex>[dd]\\
 \vdots && \\
 \bullet_k\ar[uurr] && \alpha\ar@<0.2ex>@{..>}[uull]
 \ar@<-0.2ex>@{..>}[uull]\ar@<-0.2ex>@{..>}[ll]
 \ar@<0.2ex>@{..>}[ll]
 }
 \end{gathered}
 \ee 
 whose completion $Q^\prime$ is equivalent to $Q$ by the elementary mutation at the node $\omega$.
\end{rem}

\begin{rem} The algebra $\ca_k=\mathrm{End}(\ce^*_k)$ makes sense for all $k\leq 8$, except that $\sD_{A_k}$ is not fractional Calabi-Yau for $k\leq 5$. The ``minimal deviation'' from being fractional CY is obtained for $k=5$, i.e.\! $X$  is the blow-up of $\bP^2$ at 5 generic points;
$X$ is no longer a hypersurface, but it is still a complete intersection of two quadrics \cite{dolgachev}. Thus formally $\Delta=2$, while the would-be flavor group is $E_5\equiv SO(10)$.
This has a physical meaning: the quiver \eqref{uuuqw13}
with $k=5$ belongs to the mutation class of $SU(2)$ SQCD with $N_f=5$, a QFT affected by Landau poles, so not defined as a QFT in its own right. However, as discussed on page \pageref{curious}, $SU(2)$ SQCD with $N_f=5$ makes sense as a low-energy effective theory up to some cut-off (Ringel's `curious fact', see footnote \ref{curious}).
 The presence of Landau poles obviously spoils the CY property; a part for that, $\Delta=2$ and $F=SO(10)$ are the correct answers for this formal QFT. 
\end{rem}

\subsection{Physicist's viewpoint}

The categorical constructions used in the previous subsection were originally introduced in the context of (homological) mirror symmetry, i.e.\! they are a rephrasing of the 4d/2d correspondence of \cite{CNV}. It is just the relation between the BPS-brane categories of the gauged linear $\sigma$-model with target the del Pezzo surface $X$ and   
the Landau-Ginzburg (2,2) model with quasi-homogeneous superpotential the equation $W(X_0,X_1,X_2,X_3)=0$ of $X$ in the weighted projective space,
which flows in the IR to a (2,2) SCFT with $\hat c=2(d-1)/d$. Since $X$ is Fano, the $\sigma$-model is not conformal, and flowing it to the IR we loose chiral primaries, which is the physical rationale for deleting the node of the quiver which gets ``massive''. More or less by definition, the IR fixed point has Calabi-Yau dimension $\hat c$.

\section{Review of \cite{unpublished} (in rank 1)}

We start by reviewing the small part of the unpublished work \cite{unpublished} which refers to rank-1 theories (for a related discussion see \cite{Cecotti:2013sza}). To distinguish the several rank-1 QFTs we either refer to them by the number of the corresponding entry in table 1 of Argyres \emph{et al.} \cite{Argyres3}  or by the pair  $(\Delta, F)$ where $\Delta$ is the dimension of their Coulomb branch and $F$ the maximal flavor symmetry group;
an asymptotically-free theory will be written $(a\!f,F)$.

\subsection{The 16 rank-1 2-acyclic QFT}\label{4d2d}

We start by considering the 2-CY categories with the properties required in our \textbf{Criterion/Definition} which are Amiot cluster categories, i.e.\! such that  $G=S\Sigma^{-2}$ in eqn.\eqref{criterion}. They are in particular 2-acyclic i.e.\! have the BPS-quiver property.

We saw in \S.2 that the 2d quantum monodromy $H$ of an algebra $\ca$ with $\mathrm{gl.dim}\,\ca\leq2$ which satisfies conditions \textit{a)},\textit{b)} of \textbf{Criterion/Definition},
such that $\sD_\ca$ is fractional Calabi-Yau has special spectral properties. We recall the definitions.
Let $P_i$ ($i=1,\dots, \nu$) be the indecomposable projective modules of $\ca$. The matrix $\dim\sD_\ca(P_i,P_j)$ has determinant $\pm1$, so its inverse exists and we have \cite{CV92}
\be
S^{-1}_{ji}= \dim\sD_\ca(P_i,P_j),\qquad B=S-S^t,\qquad H=(S^{-1})^tS.
\ee
Under the conditions of the \textbf{Criterion/Definition}, 
$B$, $H$ are $\Z$-equivalent to
\be\label{numcon}
B= m\begin{pmatrix} 0 & 1\\
-1 & 0\end{pmatrix}\bigoplus \boldsymbol{0},\ \ m\in\Z_{\geq1}\qquad H=\overline{H}\bigoplus\boldsymbol{1},\ \  \overline{H}\in SL(2,\Z)\ \text{and torsion}.
\ee
We shall say that an algebra $\ca$
with $\mathrm{gl.dim}\,\ca\leq 2$
is \emph{numerically CY of rank-1} iff its matrix $\dim\sD_\ca(P_i,P_j)$
satisfies \eqref{numcon}. This implies that for certain integers $a$, $b$ we have $[S^b X]=[\Sigma^a X]$ for all $X\in\sD_\ca$, but in general the equality may be not lifted from the Grothendieck group to the derived category, so the
notion of numerical CY is weaker than fractional CY.

A numerically CY algebra $\ca$
with $\mathrm{rank}\,B=2$
has a well-defined semi-simple
Kodaira type, i.e.\! the conjugacy class of $\overline{H}$ in $SL(2,\Z)$.

As a first step we can look for triangular algebras which are numerically CY. This means we start with an integral upper-triangular matrix $S$ with 1's on the diagonal, and solve the
numerical conditions above seen
as Diophantine equations in the entries of $S$. These are precisely the same Diophantine equations as in the classification program of 2d (2,2) SCFT \cite{CV92}. Modulo some subtlety we shall dwell momentarily, if a solution $S$ to these equations correspond to an actual (2,2) SCFT, then an actual fractional CY also exists in the form of its brane category just as it did in the del Pezzo case of section 3 (seen from the physics side).
\medskip

Ref.\!\!\cite{unpublished} aims to 
solve the Diophantine conditions by the interplay of the 4d and 2d physics.
Let us describe the results relevant for rank-1.
Given a (possibly empty) set of positive integers
$A=\{a_1,a_2,\cdots,a_f\}$ and 
an integer $q$
we define the quiver $Q(A;q)$ to be
\be\label{Qaqquivers}
\begin{gathered}
\begin{xy} 0;<1pt,0pt>:<0pt,-1pt>:: 
(200,0) *+{\omega} ="0",
(200,150) *+{\alpha} ="1",
(150,75) *+{f} ="2",
(125,75) *+{\cdots} ="3",
(75,75) *+{2} ="4",
(0,75) *+{1} ="5",
"0", {\ar|*+{\scriptstyle q}"1"},
"2", {\ar|*+{\scriptstyle a_f}"0"},
"4", {\ar|*+{\scriptstyle a_2}"0"},
"5", {\ar|*+{\scriptstyle a_1}"0"},
"1", {\ar|*+{\scriptstyle a_f}"2"},
"1", {\ar|*+{\scriptstyle a_2}"4"},
"1", {\ar|*+{\scriptstyle a_1}"5"},
\end{xy}
\end{gathered}
\ee
where the symbol $-\,a_j\!\to$ stands for $a_j$ directed arrows between the correspondent  pair of nodes ($a_j$ negative means $|a_j|$ arrows in the opposite direction). The rank of its exchange matrix $B_{ij}$ is 2. Then, if the quiver describes a $\cn=2$ QFT, the rank of its flavor group is
\be
\mathrm{rank}\,F=f.
\ee
The type $(p,\bar q)$ of the quiver $Q(A;q)$ is defined to be
\be
p=\sum_{i=1}^f a_i^2,\qquad 0\leq \bar q=\begin{cases} p-q &\text{if }q<0\\
q & \text{if }p-q<0\\
\min\{q,p-q\} &\text{otherwise}.
\end{cases}.
\ee 
We have the mutation equivalence
\be
Q(A,q)\sim Q(A,p-q),
\ee
so, without loss, we may assume $q\equiv \bar q$.

\begin{table}
\begin{center}
\begin{tabular}{c|cccccc}\hline\hline
$(p,\bar q)$ & (0,1) & (1,0)  & (2,1) & (0,2)  & (1,2) & (2,2)\\
$\cf_{(p,q)}$ & $II$ & $III$ & $IV$ &
 &  & 
\\
$(\Delta,F)$ & $(\tfrac{6}{5},-)$ & $(\tfrac{4}{3},SU(2))$ & $(\tfrac{3}{2},SU(3))$& $(a\!f,-)$ & $(a\!f,SO(2))$& $(a\!f,SO(4))$\\
$\#$ & 28 & 27 & 26 & - & - &-\\
$a/b$ & 1/3 & 2/4 & 2/3 &
\\\hline
$(p,\bar q)$  & (3,2) & (4,2) & (6,3) & (7,3) & (8,3) &\\
$\cf_{(p,q)}$ &  & $I_0^*$ & $IV^*$ &
$III^*$ & $II^*$ & \\
$(\Delta,F)$  & $(a\!f,SO(6))$ &
$(2, F)$ & $(3,F)$ & $(4,F)$ & $(6,F)$\\
$\#$ & - & 23,24,25 & 19,20 & 12,13 & 1,2,3\\
$a/b$ & & 2/2 & 2/3 & 2/4 & 2/6
\\\hline\hline
\end{tabular}
\caption{\label{allowedtypes} The values of $(p,\bar q)$ such that the quiver $Q(A,q)$ satisfies the requirements of 4d/2d correspondence. For a SCFT $\Delta$ is the conformal dimension. $\#$ is the entry number
of the (mass-deformed) SCFT in table 1 of ref.\!\cite{Argyres3}. For a SCFT $a/b$ is the fractional CY dimension.}
\end{center}
\end{table}

To each $Q(A,q)$ quiver we associate some triangular algebra $\ca\equiv \ca(A,q)$.
If $A=\varnothing$, $Q$ is the acyclic $q$-Kronecker quiver (unique in its mutation class). In this case $\ca$ is just the hereditary algebra $\C Q$ (global dimension 1).
If $A\neq\varnothing$ we have choices. A first possibility is to replace the $q$ vertical arrows by dashed ones; the remaining solid arrows form an acyclic quiver $\mathring{Q}^{(1)}$. The dashed arrows are taken to represent (generic) minimal relations generating an admissible ideal $I^{(1)}$ in the path algebra $\C\mathring {Q}^{(1)}$.
The algebra $\ca^{(1)}= \C\mathring Q^{(1)}/I^{(1)}$ has global dimension $\leq 2$.
A second possibility is to dash the arrows starting at the node $\alpha$; we get a different quiver $\mathring{Q}^{(2)}$, ideal $I^{(2)}$ and algebra $\ca^{(2)}= \C\mathring Q^{(2)}/I^{(2)}$. Note that the matrices $S$ for $\ca^{(1)}$, $\ca^{(2)}$ are $\Z$-equivalent, so the spectral conditions does not distinguish between the two.

One shows that the Diophantine conditions are satisfied if and only if the type $(p,q)$ of $Q(A,q)$ is as in table \ref{allowedtypes} where for completeness we reinstated the four rank-1 asymptotically-free theories which satisfy weaker spectral conditions.\footnote{\ The spectral conditions of \cite{CNV} state that the
2d quantum monodromy $H$ should have spectral radius 1 and, in the $a\!f$ case \cite{CV92}, a unipotent part consistent with the $SL_2$-orbit theorem of Hodge theory \cite{sl2}. Equivalently the conjugacy classof $\overline{H}$ is in the Kodaira list.} 
Moreover, all algebras of given type $(p,q)$ have the same Kodaira type $\cf_{(p,q)}$ see table. Note that the Euler number of the Kodaira fiber is
\be
e(\cf_{(p,q)})=p+2.
\ee
We say that a quiver $Q(A,q)$ whose type is in the table is \emph{admissible}.
In the table we list some of the properties of the putative QFT described by a given quiver. $F$ stands for some unspecified flavor group which depends on the specific set $A$ not just on the type of the $Q(A,q)$ quiver. Note that  an integer $p\leq 3$ can be written in a unique way as a sum of squares, $p=1+\cdots+1$, so for $p\leq 3$ the type determines uniquely the flavor group which is written in the table. Note that all rank-1 models with
$\Delta <2$ or asymptotically-free are
 covered by table \ref{allowedtypes}. 
 \medskip

In the case $A=\{1^f\}$ the numerical CY algebras we got are well known:
\begin{itemize}
\item for (0,1) and (1,0) hereditary of Dynkin type $A_2$ and $A_3$;
\item for (2,1) a tilted-algebra of $D_4$ type;
\item for $(p,q)=(N,2)$
$\ca^{(1)}$ is a \emph{canonical} algebra of type $\{2,2,\cdots,2\}$ ($N$ 2's) and $\ca^{(2)}$ a \emph{squid} algebra of the same type (the two being derived equivalent);
\item for $(p,q)=(p,3)$ $\ca^{(2)}$ is a \emph{del Pezzo} algebra of type $E_p$
(section 3).
\end{itemize}
In particular, in all these cases there exists and ideal $I$
such that the algebra $\C\mathring{Q}/I$ is actually fractional Calabi-Yau of the given dimension, not just numerically so.

\subparagraph{Subtleties for $A\neq\{1^f\}$.} The case with $A\neq\{1^f\}$
is subtler in many respects.
The first example is $Q(\{2\},2)$ i.e.\! the Markoff quiver
which is well known to correspond to $SU(2)$ $\cn=2^*$ \cite{Alim:2011kw}. \emph{A priori} it is not clear that one may find an ideal $I$ so that the algebra $\C\mathring{Q}/I$ is fractional CY and not just numerically CY.
The simplest choice of relations
\be
\begin{gathered}\xymatrix{\bullet\ar@<0.4ex>[rr]^{a_1}\ar@<-0.4ex>[rr]_{a_2}&&\bullet\ar@<0.4ex>[rr]^{b_1}\ar@<-0.4ex>[rr]_{b_2}&&\bullet\ar@<0.8ex>@{..>}@/^1.3pc/[llll]\ar@<-0.8ex>@{..>}@/_1.3pc/[llll]}\end{gathered}\qquad b_1a_1=b_2a_2=0,
\ee 
which yields
a gentle algebra, is certainly \emph{not} fractional CY. 
On the other hand a good 2d (2,2) theory associated to the Markoff quiver exists, namely the Landau-Ginzburg model with superpotential the Weierstrass function $\wp(X)$
(with identifications, $X\sim X+1$,
$X\sim X+\tau$). Correspondingly there exists a 4d $\cn=2$ SCFT,
namely $SU(2)$ $\cn=2^*$.\footnote{\ Note that the coupling spaces of both 4d and 2d models (apart for a mass scale) are the moduli of elliptic curves.}
In fact it is known that there exists a
$W$ on the Markoff quiver which produces a nice 2-CY cluster category\footnote{\ With the further subtlety that the path algebra of the quiver should be replaced by the completion of the path algebra with respect the $\mathfrak{m}$-topology, where $\mathfrak{m}$ is the arrow ideal.}. The superpotential $W_\text{lin}$ in eqn.\eqref{superW}, linear in the $\rho_a$, is a \emph{singular} limit of the good one which contains also terms of higher order in the $\rho_a$'s. This phenomenon should be compared with the special-geometric viewpoint that we shall review in .... The geometry for $\cn=2^*$ is a desingularization of a singular limit of the geometry for 
$SU(2)$ with $N_f=4$. The algebra with $W_\text{lin}$ may be seen as the singular limit, while adding the higher term to $W$ the desingularization process.

The physical reason why the model is subtle, is the existence of hypermultiplets which are everywhere light on the Coulomb branch; switching the mass deformation off, this means that we have a Higgs branch fibered over the generic point in the Coulomb branch (an enhaced Coulomb branch in the terminology of \cite{Argyres6}).
We shall give some more detail on this topic below.
\medskip

The story should remain true for all admissible quivers $Q(A,q)$ with
$A=\{2,1^{f-4}\}$;    
we call the associated SCFT \emph{Argyres-Wittig models} since they were first described in ref.\!\cite{wittig}. This is quite natural, from the geometrical side, since the situation is locally the same as for the $\cn=2^*$ model.
In fact, the SCFT are known to exist.
\medskip

There remains a last admissible
quiver $Q(\{2,2\},3)$.
It corresponds to entry 3 in table 1 of \cite{Argyres3}. However this entry is shaded in color in \cite{Argyres3} since its effective existence is doubtful.
From our categorical viewpoint, its existence also looks problematic, in the sense, that the numerical CY algebra is not expected to be truly fractional CY, and the desingularization which worked in the Argyres-Wittig cases is hardly sufficient to regularize the 2-CY category.

The main physical reason to doubt its existence \cite{Argyres3} is that no meaningful RG flow seems to originate from this would-be QFT.
In the present context, there is an element with the same flavor. All admissible quivers $Q(A,q)$ -- except $Q(\{2,2\},3)$ -- have the property that if we delete a node $\neq \alpha,\beta$ we either get another
admissible $Q(A,q)$ quiver or the quiver
of a low-energy effective theory with a cut-off such as $SU(2)$ coupled to five $\boldsymbol{2}$'s, or to one $\boldsymbol{3}$ and one $\boldsymbol{2}$. In all cases the spectral radius of the 2d monodromy remains $1$.
This is no true for $Q(\{2,2\},3)$; indeed $Q(\{2\},3)$ has spectral radius $>1$, so is not expected to correspond to a QFT, not even a formal one. With these \emph{caveats} we keep $Q(\{2,2\},3)$ in our tables.
\medskip

The reader may wonder how general is our ansatz $Q(A,q)$ for the quiver. We carried over extensive searches for quivers with the right spectral properties, an found (of course) lots of them; at closer inspection they all turned out to be mutation equivalent to one in the form $Q(A,q)$.
We believe that this is indeed true in general.
The correspondence with singular fiber configurations of elliptic surfaces provides  further evidence for this expectation.  

\subsubsection{Comparison with
rational elliptic surfaces} The quivers
$Q(A,q)$  which satisfy the above spectral requirements are easily seen in one-to-one correspondence with the allowed configurations of Kodaira singular fibers in a rational elliptic surface with section\footnote{\ Tables of allowed fibers configurations for rational elliptic surfaces may be found in refs.\!\cite{bookMW,per,mira}.} subject to two conditions: \textit{i)} there is precisely one fiber with additive reduction (the fiber at $\infty$) -- of the conjugate type $\cf^*$
with respect to the SCFT -- all other singular fibers being multiplicative (i.e.\! semi-stable); \textit{ii)} the poles of Kodaira's functional invariant $\mathscr{J}\!(z)$ have orders which are perfect squares: this is essentially the condition called `Dirac quantization of charge' in ref.\!\!\cite{Argyres3}. 
The map (quiver data) $\leftrightarrow$ (Kodaira
fiber configuration) is
\be\label{mmzaqw}
A\ \longleftrightarrow\ \{\cf^*; I_{a_1^2},I_{a_2^2},\cdots, I_{a_f^2}, I_1^2\},
\ee
where $\cf^*$ is the unique additive (non semi-stable) Kodaira fiber with Euler characteristic $e(\cf^*)=10-p=12-e(\cf_{(p,q)})$. In particular the allowed types $(p,\bar q)$ are in one-to-one correspondence with the
un-stable Kodaira fibers $\cf^*$ which may appear in a \emph{rational} elliptic surface 
\be
p = 10-e(\cf^*),\qquad
q=b(\cf)\equiv\begin{cases} 3 & e(\cf^*)<6\\
2 & e(\cf^*)=6+b\\
1 & e(\cf^*)>6 \ \text{and }r=0,
\end{cases}
\ee
where $r$ is the order of pole of $\mathscr{J}\!(z)$ at infinity.
For SCFT $r=0$, and we limit to this case.
The `Dirac quantization' constraint is automatically satisfied by this class of 2-acyclic quivers.
Indeed it is the basic ingredient which led to the ansatz $Q(A,q)$ for the quiver in the first place \cite{unpublished}.

From the point of view of the elliptic surface, the 
models with $A\neq\{1^f\}$ are obtained by making $a_i^2$ singular fibers $I_1$ to coalesce to form 
singular fibers of type $I_{a_i^2}$.
The corresponding algebras are then singular limits at the boundary in the complex moduli space; geometrically, the singularities may be resolved by changing the birational model.
Morally, this reflects the process of 
regularizing the completed algebra
by adding higher order terms in the superpotential $W$. 

The relation with the geometry of the rational elliptic surfaces may be described directly, without reference to the physical considerations of \cite{Caorsi:2018ahl}, as we are going to show.

\subsubsection{Relation with the derived category of rational elliptic surfaces}

There is a strange ``duality'' between the algebras with $A=\{1^f\}$ of conjugate semi-simple Kodaira type, $\cf$ and $\cf^*$. Their numbers $\nu$ of quiver nodes, fractional CY dimensions $\hat c$, and $b$-coefficients satisfy
\be
\nu(\cf)+\nu(\cf^*)=12,\quad \hat c(\cf)+\hat c(\cf^*)=2,\quad b(\cf)+b(\cf^*)=4.
\ee
 One feels that the corresponding derived categories $\sD_\cf$ and $\sD_{\cf^\vee}$ cry to be paired up in a deeper structure. The feeling is correct.

Let $Y$ be a (smooth) rational elliptic surface with a section. It may be seen as $\bP^2$ blown-up in 9 points \cite{miranda}, one of the exceptional $(-1)$ lines playing the role of the base of the fibration. 
For $0\leq k\leq 8$ we consider the following full strong exceptional sequence in $D^b\mathsf{coh}\,Y$
\begin{multline}\label{jjjqawe999}
\big\{E_1^{(k)},E_2^{(k)}\cdots, E_{12}^{(k)}\big\}=\\
=\Big\{\co_{\ell_1}(-1),\cdots,
\co_{\ell_k}(-1), \pi^*\co[1], \pi^*\ct(-1)[1],\pi^*\co(2)[1],
\co_{\ell_{k+1}}[1], \cdots, \co_{\ell_{9}}[1]\Big\}
\end{multline}
Let $\ca^{(k)}\equiv\mathrm{End}(\ce^{(k)})$ be the endo-algebra of the tilting object
\be
\ce^{(k)}=\bigoplus_{i=1}^{12}E_i^{(k)}.
\ee
whose quiver is shown in figure \ref{endoo}. For $k=6,7,8$, the two complementary full subquivers
over the nodes $\{E_1^{(k)},\cdots, E_{k+2}^{(k)}\}$ and, respectively,
$\{E_{k+3}^{(k)},\dots, E_{12}^{(k)}\}$, are the quivers of the $A=\{1\}$ algebras of complementary Kodaira types $\cf$ and $\cf^\vee$ with 
$\cf=IV^*, III^*$ and $II^*$ respectively.
We consider the two full triangulated subcatgories
\begin{align}
\mathscr{A}_k &=\big\langle E_1^{(k)},\cdots,E_{k+2}^{(k)}\big\rangle\subset \sD^b\mathsf{coh}\,Y,  && \mathscr{A}_k \cong D^b\mathsf{mod}\,\ca(\{1^k\},3)
\\
 \mathscr{B}_k&=\big\langle E_{k+3}^{(k)},\dots, E_{12}^{(k)}\big\rangle\subset \sD^b\mathsf{coh}\,Y, && \mathscr{B}_k \cong D^b\mathsf{mod}\,\C Q_k 
\end{align}
(with $Q_6=D_4$, $Q_7=A_3$, $Q_8=A_2$).
$\mathscr{A}_k$, $\mathscr{B}_k$  yield a semiorthogonal decomposition of $D^b\mathsf{coh}\,Y$
 \be
 D^b\mathsf{coh}\,Y=\big\langle \mathscr{A}_k,\mathscr{B}_k\big\rangle,
 \ee
 so, in a sense, the derived category $D^b\mathsf{coh}\, Y$ is obtained by gluing the two fractional Calabi-Yau categories of complementary dimensions.
 
 Geometrically, this correspond to the operation in \cite{Caorsi:2018ahl} of gluing together the special geometries of two complementary rank-1 SCFT to get a compact geometry $Y$, easier to study than the open geometry of a single SCFT.
 
 We expect that $D^b\mathsf{coh}\,Y$ 
to have a semiorthogonal decomposition into two  derived categories of coherent sheaves on weighted projective lines of tubular type $\{2,2,2,2\}$. 

\begin{figure}
\begin{equation*}
 \begin{gathered}
 \xymatrix{E^{(k)}_1 \ar[rr] && E^{(k)}_{k+1}\ar@<0.4ex>[dd]\ar[dd]\ar@<-0.4ex>[dd]&& E^{(k)}_{k+4}\ar@{..>}@<0.2ex>[ll]\ar@{..>}@<-0.2ex>[ll]\ar@{..>}@/^2.4pc/[ddll]\\
 \vdots &&& E^{(k)}_{k+3}\ar@{..>}@<0.4ex>[ul]\ar@{..>}[ul]\ar@{..>}@<-0.4ex>[ul] \ar@/_4.5pc/@{..>}@<0.2ex>[ulll]\ar@/_4.5pc/@{..>}@<-0.2ex>[ulll]
 \ar@/^4.5pc/@{..>}@<0.2ex>[dlll]\ar@/^4.5pc/@{..>}@<-0.2ex>[dlll]\ar[ur]\ar[dr]&\vdots\\
E^{(k)}_k\ar[uurr] && E^{(k)}_{k+2}\ar@{..>}[uull]\ar@{..>}[ll]\ar@<0.4ex>[ur]\ar[ur]\ar@<-0.4ex>[ur]&&E^{(k)}_{12}\ar@{..>}[ll]\ar@<0.2ex>@{..>}@/_2.4pc/[uull]\ar@<-0.2ex>@{..>}@/_2.4pc/[uull]
 }
 \end{gathered}
 \end{equation*}
 \medskip

 \caption{\ The quiver of the endo-algebras $\ca^{(k)}$ derived equivalent to the category of coherent sheaves on a rational elliptic surface with section. }\label{endoo}
 \end{figure}
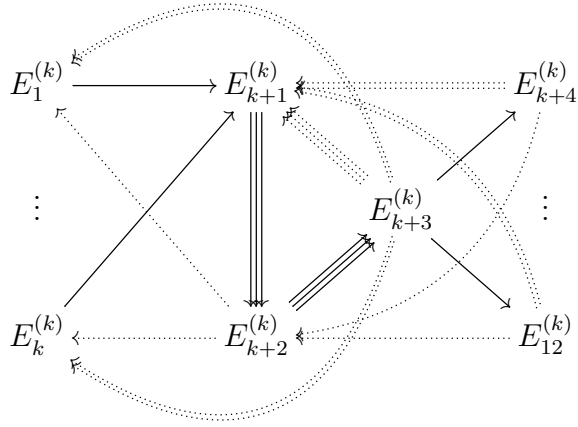

\subsubsection{Flavor symmetry}\label{asqxz1} How we read the flavor group $F$ from
the set $A$?  
If $A=\{1,1,\dots,1\}\equiv \{1^f\}$
and of Kodaira semi-simple type $F$
is the group associated to the corresponding Kodaira fiber, table \ref{ffffhad}. Let us give a graphical rule which extends to the asymptotically-free case.  
$F$
 is the simply-laced Lie group whose Dynkin diagram is a star with 3-arms of lenghts
$(2,p-q,q)$ (counting the valency 3 node in the arm length, so an arm of lenght 1 is no arm at all).
If the length of an arm is negative, the Dynkin graph does not exist, which means that the flavor group is not semi-simple, hence is the Abelian group $U(1)^f$.
If an arm has lenght 0, the central node gets deleted and we remain with the disconnected Dynkin graph $A_1\oplus A_{p-q-1}$ so the flavor group is $SU(2)\times SU(p-q)$, see tables \ref{allowedtypes} for $p\leq 3$, while for the other
 four models with $A=\{1^f\}$
one gets the groups in table \ref{ffffhad}.
The $A=\{1^f\}$ flavor groups are the largest possible  for the given dimension $\Delta$. This correspond to the fact that the $A\neq\{1^f\}$ cases are obtained as singular specializations of the $A=\{1^f\}$ of the same type; then the mass parameters gets specialized to a sublocus, and since the rank of the flavor group is the dimension of mass parameter space, this means that 
$\mathrm{rank}\,F$ gets reduced.

For $A\neq\{1^f\}$ the rule of \cite{unpublished} yields
\begin{equation*}
\begin{tabular}{c|ccccc}\hline\hline
$p$ & 4 & 6 & 7 & 8 & 8\\
$A$ & $\{2\}$ & $\{2,1^2\}$ & $\{2,1^3\}$ &
$\{2,1^4\}$ & $\{2^2\}$\\ 
$(\Delta,F)$ & $(2,Sp(2))$ & $(3,Sp(4)\times U(1))$
& $(4,Sp(6)\times Sp(2))$ & $(6,Sp(10))$ & $(6,Sp(4))$\\\hline\hline
\end{tabular}
\end{equation*}
which (of course) agree with the tables of \cite{Argyres3} under the correspondence \eqref{mmzaqw}.
We now motivate this claims from the categorical viewpoint.

\subsection{Properties of triangular QFT}

We sketch the properties of the triangular QFT in the categorical language.
To make the story shorter, we consider only the superconformal case and focus on rank-1, that is, on the theories with quivers of the form $Q(A,q)$.
We write $\sD_\ca\equiv D^b\mathsf{mod}\,\ca$ for the correspondent bounded derived category. 
$\sD_\ca$ has a Serre functor $S$, i.e.\! an auto-equivalence such that\footnote{\ Throughout the paper $D$ stands for duality of complex vector spaces.}
\be
D\sD_\ca(X,Y)=\sD_\ca(Y,SX),\qquad X,Y\in\sD_A
\ee
Modulo the subtlety
for $A\neq\{1^f\}$, $\sD_A$ is fractional CY of dimension $a/b$, see table \ref{allowedtypes}.
The UV description of a triangular QFT is given by the cluster category $\sC_\ca$, cfr.\! eqn.\eqref{kkaszqwe}.
By construction, $\sC_\ca$ is CY of dimension  2, i.e.\! $S\cong \Sigma^2$ in $\sC_\ca$. $\sC_\ca$ has cluster-tilting objects, e.g.\! $\ca$ seen as a module over itself. 
The 2-CY category $\sC_\ca$
is \emph{symmetric} if it also 
2-periodic i.e.\!
$\Sigma^2\cong\mathrm{Id}$.
The IR description is given by the root category $\sR_\ca$, eqn.\eqref{kkaszqwe}.
By construction $\sR_\ca$ is 2-periodic, i.e.\! $\Sigma^2\simeq \mathrm{Id}$.
%
%
The Coulomb dimension $\Delta$ is expressed in terms of the orders of the 2d and 4d quantum monodromies
\be\label{dimensiond}
\Delta=\frac{o(H)}{o(\bM)}.
\ee

Since $\Sigma^{2o(\bM)}\cong S^{o(\bM)}\cong \mathrm{Id}$ in $\sC_\ca$, and given that the
 $2o(\bM)$-periodic orbit category
\be
\mathscr{M}_\ca= \left(\sD_\ca/(\Sigma^{2o(\bM)})^\Z\right)_\text{tr.hull}.
\ee
  enjoys a universality property between all $2o(\bM)$-periodic quotient categories of $\sD_\ca$, we get a diagram of projection functors which defines the
  \textit{IR/UV correspondence}
  \be\label{jjaqwe9j}
  \begin{gathered}
  \xymatrix{&& \mathscr{M}_\ca\ar@/_0.7pc/[dll]\ar@/^0.7pc/[drr]\\
  \sR_\ca \ar@{<-->}[rrrr]^{\text{IR/UV correspondence}} &&&& \sC_\ca}
  \end{gathered}\qquad
  \begin{gathered}
 \sR_\ca \cong \mathscr{M}_\ca/\Z_{o(\bM)}\\
 \sC_\ca \cong \mathscr{M}_\ca/\Z_{o(H)}.
  \end{gathered}
  \ee
Pulling back (periodic) Euler characteristics to $\mathscr{M}_\ca$, we get the relations
\be
\langle X,Y\rangle_{\sC_\ca}=o(H)\,\langle X, Y\rangle_{\mathscr{M}_\ca},\qquad
\langle X,Y\rangle_{\sR_\ca}= o(\bM)\,\langle X,Y\rangle_{\mathscr{M}_\ca},
\ee
which, in view of \eqref{dimensiond} yields
\be
\langle X,Y\rangle_{\sC_\ca}=\Delta\,
\langle X,Y\rangle_{\sR_\ca}
\ee
In particular when $o(\bM)=1$,
$\Delta=o(H)\in\bN$ and the cluster category $\sC_\ca$ is \textit{2-periodic}. In this case the IR/UV correspondence reduces to a triangle functor
\be\label{2per}
\sR_\ca\to\sC_\ca\equiv \sR_\ca/(S)^{\Z_\Delta}.
\ee

$o(\bM)$ is defined for all 4d $\cn=2$ SCFT \cite{CNV}, not just for the triangular ones; it is equal to the lcm of the orders of the Coulomb dimensions $\Delta_i$ in $\bQ/\bZ$ \cite{CNV}. Iff all dimensions $\Delta_i$ are integers, $o(\bM)=1$.
In rank-1 we have a single dimension $\Delta$ which may take only 7 values \cite{{Argyres9} ,Caorsi:2018zsq}
\be
\Delta=\tfrac{6}{5},\ \tfrac{4}{3},\ \tfrac{3}{2},\ 2,\ 3,\ 4,\ 6,
\ee
as it follows from the Kodaira types
of numerically CY of rank-1 (table \ref{ffffhad}).
If $\Delta<2$ the SCFT should be an Argyres-Douglas model of type $\mathfrak{g}\in ADE$, whose UV (IR) categories are the cluster (root) Dynkin categories of the same type $\mathfrak{g}$. All other rank-1 SCFT have symmetric 2-CY categories (not necessarily cluster for non triangular models).

\subsubsection{The generic Higgs branch} The first invariant of a $\cn=2$ SCFT is the quaternionic dimension $h$ of the Higgs branch at a generic point along the Coulomb branch.
When $h>0$ one says there is an \emph{enhanced Coulomb branch}\cite{Argyres6}.

The theory of the generic Higgs branch is a subtle and beautiful topic. 
Suffice here to say that for a triangular theory it is controlled by the Bongartz equation \cite{bongartz,Cecotti:2015qha}.
We identify the set $A$ with the vector $\boldsymbol{a}=(a_1,a_2\cdots, a_f)\in \Z^f$.
Then $2h$ is the number of solutions
$\boldsymbol{x}=(x_1,\cdots, x_f)$ to the quadratic equation
\be\label{higgsas}
q^2 \,\boldsymbol{x}\cdot\boldsymbol{x}+(2-q)(\boldsymbol{a}\cdot\boldsymbol{x})^2=q^2
\ee
whose entries are integers.
This yields $h=0$ for all models with
$A=\{1,\dots,1\}$ while for the Argyres-Wittig SCFT half the number of solutions is\footnote{\ The explicit solutions are exhibit in table \ref{solbon}.}
\be\label{rreqw}
\begin{tabular}{c|ccccc}\hline\hline
($A$, $q$) & ($\{2\}$, 2) & ($\{2,1^2\}$, 3) &
 ($\{2,1^3\}$,3) & ($\{2,1^4\}$,3) & ($\{2^2\}$,3)\\
 $h$ & 1 & 2 & 3 & 5 & 2\\\hline\hline
\end{tabular}
\ee
The flavor group of an Argyres-Wittig model has the form
$Sp(2h)\times F^\prime$ with $Sp(2h)$ acting in the natural way on the generic Higgs branch.

%
%
%
%

\subsection{$K_0(\sC_\ca)$, $K_0(\sR_\ca)$ for triangular QFT} \label{jjaszq}
To substantiate our physical picture we have to compute the Grothendieck group of the Amiot cluster category $\sC_\ca$ with its appropriate Euler form and check that they yield the expected flavor symmetry. For the cluster category defined by the quiver $Q(A,q)$, the group $K_0(\sC_\ca)$ is the Abelian group generated by the classes of the projective $\ca$-modules $[P_\alpha]$,
$[P_\omega]$, and $[P_i]$ ($i=1,\dots,f$)
subjected to the two relations
\be
\gcd(a_i,q)\,[P_\alpha]=\gcd(a_i,q)\,[P_\omega],\qquad
q[P_\alpha]+\sum_i a_i [P_i]=0.
\ee
Assume $A\neq\{2\},\{2,2\}$: then $\gcd(a_i,q)=1$, and  $K_0(\sC_\ca)\simeq \Z^f$ is \emph{freely} generated by $[P_\alpha],
[P_1],\cdots, [P_{f-1}]$.

\subsubsection{The flavor weight lattice}
It is convenient to write the above lattice in
the form ($A\neq\{2\},\{2,2\}$)
\be\label{lattgam}
\Gamma_\ca=\left\{\,\sum_{i=1}^f w_i[P_i]\ \bigg|\ w_i\in\frac{a_i}{q}\Z\ \ \text{and}\ \  q\,w_i/a_i\equiv q\,w_j/a_j\bmod q\,\right\}
\ee 
For $\Delta\geq 2$ the Euler quadratic form on $\sC_\ca$ is
\be\label{quaf}
\Delta\cdot \sum_i w_i^2
\ee
(see also \textsc{appendix \ref{catfal}}).
A part for the overall factor $\Delta$, the Grothendieck group \eqref{lattgam} equipped with the Euler quadratic form \eqref{quaf} is the weight lattice of the flavor groups listed in \S.\,\ref{asqxz1}
with their canonical Weyl-invariant inner product.

The overall factor in Euler's form, eqn.\eqref{quaf}, has a simple interpretation. For the SCFT $A=\{1^f\}$, without subtleties, $\Delta$ is one-half the central charge
of the superconformal flavor current algebra, $\Delta=\kappa_F/2$; in other words $\Delta$ is the overall
normalization of the flavor-current
two point function, so that for $a_i=1$
\eqref{quaf} is just the physical  normalization of the flavor weigth quadratic form. 
Inverting the argument, from the Euler form of the cluster category we may read not just the flavor symmetry group $F$, but also its conformal central charge $\kappa_F$ as well as interesting selection rules on the representations of $F$ which may appear (see eqn.\eqref{grorel} below for a typical example). 

For the Argyres-Wittig models, $A=\{2,1^{f-1}\}$,
one has a similar formula, whose meaning is less clear. We write $F=Sp(2h)\times F^\prime$ (cfr.\! eqn.\eqref{rreqw}) and consider the two central charges
$\kappa_{Sp}$ and $\kappa_{F^\prime}$. One has \cite{Argyres6}
\be
2\,\Delta=\kappa_{Sp}+h,\qquad 2\,\Delta=\kappa_{F^\prime}.
\ee

\subsubsection{The flavor root lattice}\label{flaroot}
Let us consider the Grotendieck group of the IR category, $K_0(\sR_\ca)$. In the triangular case, this group contains also the electro-magnetic charges, so to compare with $K_0(\sC_\ca)$ we consider the  sublattice of purely flavor charges namely 
the sublattice $\Lambda_\ca$ of the $S$-invariant classes $[SX]=[X]$,  $X\in\sR_\ca$. 

\subparagraph{The simply-laced case.} We consider first the case $A=\{1^f\}$ which yields a simply-laced $F$ of rank $f$.
In this case $\Lambda_\ca\subset K_0(\sR_\ca)$ corresponds to classes of the form
\be
\sum_i (x_i+y) [S_i] +y\big([S_\alpha]+[S_\omega]\big)
\quad x_i,y\in\Z, \qquad q\,y+\sum_ix_i=0
\ee
where $S_\bullet$ are the simple modules of $\ca$.
We write $\Lambda_\ca$ in the form
\be\label{za12l}
\Lambda_\ca= \left\{\boldsymbol{x}\equiv (x_1,\cdots, x_f)\in\Z^f\;\bigg|\; \sum_{i=1}^f x_i=0\bmod q\right\}
\ee
with a natural pairing given by the Euler form of $\mathsf{mod}\,\ca$ (symmetric when restricted to $\Lambda_\ca$)
\be
\Big\langle \boldsymbol{x}, \boldsymbol{y}\Big\rangle_{\mathsf{mod}\,\ca}= \sum_i x_i\, y_i-\frac{q-2}{q^2}\,\sum_i x_i \cdot\sum_j y_j
\ee
This symmetric form is \emph{integral} in lattices of the form \eqref{za12l} for all $(f,q)$. For $q=2$ or $q$ odd it is also \emph{even.} It is \textit{positive-definite} for 
\be
q^2> f(q-2).
\ee
Suppose $f\geq q$. The elements
\be
\begin{aligned}
\alpha_i&=(0,\cdots,0,\overset{i\text{-th}}{1},-1,0,\cdots,0), &\phantom{mm}&\text{for }i=1,\dots,f-1\\ \alpha_f&=(1,\cdots,1,0,\cdots, 0)
&&\text{with $q$ 1's}
\end{aligned}
\ee
are of square-length 2 and span $\Lambda_\ca$. Hence, under the condition that $q$ is either 2 or odd and
$(q-2)f< q^2\leq f^2$, by Witt's theorem \cite{witt}
$\Lambda_\ca$ is the root lattice of a simply-laced Lie algebra $\mathfrak{f}$ and
$\langle-,-\rangle_{\textsf{mod}\,\ca}$ is its canonical quadratic form.
The conditions are satisfied by all
the allowed types $(p,\bar q)\equiv (f,q)$
listed in table \ref{allowedtypes}; from the
simply-laced Lie algebra $\mathfrak{f}$ we read the flavor Lie group $F$ of the SCFT described by the quiver $Q(\{1^f\},q)$.
With respect to the pairing
$\langle-,-\rangle_{\textsf{mod}\,\ca}$ the two lattices $\Gamma_\ca$ and $\Lambda_\ca$ are each other dual.

\subparagraph{The non-simply-laced case.} We assume $A=\{2,1^{f-1}\}$ with $f>1$. This is the case of the three Argyres-Wittig models $(3,Sp(4)\times U(1))$, $(4,Sp(6)\times SU(2))$ and $(6,Sp(10))$.
 Now $\Lambda_\ca\subset K_0(\sR_\ca)$ is given by the classes of the form
\be
(x_1+2y)[S_1]+\sum_{i\geq2} \big(x_i+y\big) [S_i] +y\big([S_\alpha]+[S_\omega]\big),
\quad x_i,y\in\Z, \quad q\,y+2x_1+\sum_{i\geq2}x_i=0
\ee
i.e.\! the lattice $\Lambda_\ca=\{\boldsymbol{x}=(x_1,\cdots, x_f)\in \Z^f\;|\;(2x_1+\sum_{i\geq2}x_i=0\bmod q\}$ with pairing
\be\label{jaswe}
\Big\langle \boldsymbol{x}, \boldsymbol{y}\Big\rangle_{\mathsf{mod}\,\ca}= \sum_i x_i\, y_i-\frac{q-2}{q^2}\,\Big(2x_1+\sum_{i\geq2} x_i\Big)\Big(2y_1+\sum_{j\geq2} y_j\Big).
\ee

\begin{table}
\begin{footnotesize}
\begin{equation*}
\begin{array}{c|cccc}\hline\hline
(\text{length})^2=1&\pm(e_1+e_{i_1}), &&&
\pm(e_1+e_2+e_3+e_4+e_5)\\
(\text{length})^2=2&\pm(e_{i_1}-e_{i_2}),
& \pm(e_{i_1}+e_{i_2}+e_{i_3}), & \pm(2e_1+e_{i_1}+e_{i_2}),&
\pm(2e_1+e_2+e_3+e_4+e_5)\\\hline\hline
\end{array}
\end{equation*}
\end{footnotesize}
\caption{\label{solbon} Elements of square-length 1 and 2 in the lattice $\Lambda_\ca$ for $A=\{2,1^{f-1}\}$. The $e_i$'s are the elements of the canonical basis of $\Z^f$. The indices $i_k=2,\dots, f$  are \emph{all distinct.} The last column is present only for the Argyres-Wittig model with $\Delta=6$. The first row yields also the list of integral solutions to eqn.\eqref{higgsas}.}
\end{table}

This quadratic form is still integral, symmetric, and positive-definite for  the allowed types $(p,q)$. However it is not \emph{even}: now $\Lambda_\ca$ contains vectors of square-length 1: see table \ref{solbon}. There are exactly $2h$ of them in correspondence with the hypermultiplets spanning the generic Higgs branch.
The elements of square-length 2 in $\Lambda_\ca$ are the short roots of the flavor group, while
twice the vectors associated to
the generic Higgs branch are the long roots of $Sp(2h)$.
Thus for $(A,q)=(\{2,1^2\},3)$ we have $4$ short and $4$ long roots,
so $F=Sp(4)\times\text{(Abelian)}$; for 
$(A,q)=(\{2,1^3\},3)$ we have $12+2$ short and $6$ long roots,
so $F=Sp(6)\times SU(2)$;
and for $(A,q)=(\{2,1^4\},3)$ we have $40$ short and $10$ long roots,
so $F=Sp(10)$.

\subparagraph{The last two cases.}
We return to the 2 cases we have
omitted
\be
(A,q)=(\{2\},2),\qquad (A,q)=(\{2,2\},3).
\ee   
The first model is $SU(2)$ $\cn=2^*$ and its physics is well-understood.
The Grothedieck group of $\cn=2^*$ is generated by $[S_\alpha]$, $[S_\omega]$ and $[S_1]$ subjected to the relations
\be
2[S_\alpha]=2[S_\omega]=2[S_1],
\ee
so that $K_0(\sC_{\cn=2^*})$ is isomorphic to $\Z\otimes\Z_2\otimes\Z_2$ generated by
\be\label{llllc23}
[S_\alpha]+[S_\omega]+[S_1], \qquad [S_\alpha]-[S_1],\qquad [S_\omega]-[S_1].
\ee
This is the correct 't Hooft group in the UV, see \textsc{appendix \ref{mwtorsion}}.
Indeed for a gauge theory with matter the (maximal) UV group is \cite{Caorsi:2017bnp}
\be
\Gamma_\text{flav}\oplus \pi_1(G_\text{eff})\oplus \pi_1(G_\text{eff})^\vee,
\ee
where $G_\text{eff}$ is the quotient of the gauge group $G$ which acts effectively on the local fields. In the $\cn=2^*$ model $G_\text{eff}=SU(2)/\Z_2$ since the local fields are in the adjoint representation. Then $\pi_1(G_\text{eff})=\Z_2$ and we get full agreement between $K_0(\sC_{\cn=2^*})$ and the expected UV group.
In the IR $\Lambda_{\cn=2^*}\cong\Z$
generated by $[S_\alpha]+[S_1]+[S_\omega]$.\medskip

For the quiver $(A,q)=(\{2^2\},3)$
the lattice and pairing is
$\Lambda_\ca$ is 
$$
\Lambda_\ca=\Big\{(x_1+2y)[P_1]+(x_2+2y)[P_2]+y([P_\alpha]+[P_\omega])\;\Big|\; x_i,y\in\Z,\ \ 3y+2x_1+2x_2=0\Big\}
$$
with pairing
\be
\big\langle(x_1,x_2),(y_1,y_2)\big\rangle =
x_1y_1+x_2y_2-\frac{4}{9}(x_1+x_2)(y_1+y_2).
\ee
There are 4 vectors of lenght 1, 
$\pm(2,1)$ and $\pm(1,2)$
and 4 length 2 $\pm(1,-1)$ and $\pm(3,3)$.
$Sp(4)$ as expected.
In $K_0(\sC)$  we have the relations
\be
3[S_\alpha]=2[S_1]+2[S_2], \quad 3[S_\omega]=2[S_1]+2[S_2], \quad 2[S_\alpha]=2[S_\omega],
\ee
so $K_0(\sC)$ is free of rank 2
\be
K_0(\sC_{(\{2^2\},3)})\cong \Z\big([S_\alpha]- [S_1]\big)\oplus \Z\big([S_\alpha]-[S_2]\big).
\ee
In this basis the natural pairing reads
\be
\Big\langle \big([S_\alpha]- [S_a]\big),
\big([S_\alpha]- [S_b]\big)\Big\rangle_\sC= \Delta\, \delta_{ab},\quad \Delta=6,\ a,b,=1,2,
\ee
i.e.\! the $Sp(4)$ weight lattice rescaled by $\Delta$, as expected.
 
\section{The 15 missing SCFT}

In total, the 4d/2d correspondence of \S.\ref{4d2d} produces
12 rank-1 SCFTs and 4 rank-1 $a\!f$ theories. The list of $a\!f$ theories is complete, but in table 1 of \cite{Argyres3}
Argyres \emph{et al.}\! list 27 rank-1 SCFTs (the table has 28 entries, but $SU(2)$ $\cn=2^*$ is listed twice, since that theory may be thought of in two different ways\footnote{\ Related issues are discussed in \textsc{appendix \ref{mwtorsion}}.}).
Thus the $Q(A,q)$ family of quivers describes all rank-1 QFT except 15 SCFT.

To substantiate our claim that classifying the appropriate class of UV categories is equivalent to classifying all $\cn=2$ QFTs,  we have to exhibit, in addition to the family of cluster categories $\sC_\ca$ associated to the $Q(A,q)$ quivers,
other 15 2-CY categories with the  properties required to describe rank-1 SCFTs. Where do we find them?

The clue comes from the classification of  base-changes between rational elliptic surfaces
satisfying the geometric requirements called ``UV and SW completeness'' obtained in \cite{Caorsi:2018ahl} using the tables of \cite{cover}.
All $\Z_k$ gaugings\footnote{\ In ref.\!\cite{Argyres7} it is explained physically why the discrete groups which may be gauged while preserving $\cn=2$ supersymmetry have the form $\Z_k$. From the `arithmetic' perspective of \cite{Caorsi:2018ahl} this follows from the fact that the only consistent base-changes are cyclic, see \cite{cover}. } of a rank-1 SCFT  which preserve $\cn=2$ \textsc{susy} correspond to base-changes of elliptic surfaces (seen as elliptic curves over the field $\C(z)$ of rational functions) but the inverse statement is \emph{false.} There are geometrically sensible base-changes, even symplectic ones (see \S.\ref{reviewgauging}), which are \emph{not} discrete gaugings in the physical sense.
Mathematically these base-changes share most properties of actual gaugings: we dub them \emph{false-gaugings}.
\medskip

We review  base-changes in \S.\,\ref{reviewgauging} below. The result of the analysis is that \emph{all} 15 missing SCFT are either discrete gaugings or false-gaugings: 10 gaugings and 5 false-gaugings. They also exhaust the list of admissible base-changes.
%

\subsection{Review of base-change/discrete gauging}\label{reviewgauging}


A rational elliptic surface $\ce$ is, in particular, a holomorphic fibration $\ce\to \bP^1$ whose generic fiber is a smooth elliptic curve. The exceptional fibers were classified by Kodaira \cite{kod1,kod2}: they are in one-to-one correspondence with the quasi-unipotent conjugacy classes of $SL(2,\Z)$. The configurations of exceptional Kodaira fibers allowed in a rational elliptic surface (with section) are listed in \cite{bookMW,per,mira}. To fully fix $\ce$, in addition to the fiber configuration, we have to specify a rational function $\mathscr{J}\!(z)$ (the functional invariant \cite{kod1,kod2,miranda}) which satisfies certain properties in relation to the fiber configuration, see \cite{Caorsi:2018ahl} and the references therein for details.
   
The (total) special geometry of a rank-1
$\cn=2$ theory is given by $\ce\setminus \cf^\vee$, where $\cf^\vee$ is the fiber over $\infty\in\bP^1$. UV completeness requires the curve $\cf^\vee$ to be
of \emph{un-stable} type. SW completeness requires, in addition,
that there is no fiber over $\bP^1\setminus\infty$ of types $II$, $III$ and $IV$. We write a fiber configuration
as $\{\cf^\vee;F_{z_1},\cdots, F_{z_s}\}$ where the first fiber is always the one at infinity. As already mentioned, the fiber $\cf^\vee$ encodes the Coulomb dimension $\Delta$
and for $\Delta=2$ also the $\beta$-function coefficient $b$:
\be
\begin{tabular}{c|ccccccc}\hline\hline
$\cf^\vee$ & $II^*$ & $III^*$ & $IV^*$ & $I_b^*$ & $IV$ & $III$ & $II$\\
$\Delta$ & $\tfrac{6}{5}$ & $\tfrac{4}{3}$ &
$\tfrac{3}{2}$ & $2$ & $3$ & $4$ & $6$\\\hline\hline
\end{tabular}\quad\qquad b=\begin{cases} b &\text{for }\cf^\vee=I^*_b\\
0 &\text{otherwise.}
\end{cases}
\ee

A base-change is given by a commutative diagram of the form
\be\label{diagramcov}
\begin{gathered}
\xymatrix{\ce^\prime\ar[d]\ar[r]^{\phi_\circ} & \ce\ar[d]\\
\bP^1\ar[r]^\phi &\bP^1}
\end{gathered}
\ee
with $\phi$ a rational map of degree $n>1$, so that the elliptic fibration $\ce^\prime$ is the pull-back
$\phi^*(\ce)$. A base-change is \emph{symplectic}\footnote{\ For an example of a non-symplectic base change see \textsc{appendix \ref{exnonsimpletic}.}}
 iff the (log-)symplectic structure $\Omega^\prime$ of the pair $(\ce^\prime,\cf^{\vee\,\prime})$ is the pull-back of the one for $(\ce,\cf^\vee)$, i.e.\! 
\be
\Omega^\prime=\phi_\circ^{\,\ast}\Omega.
\ee 
This amounts to saying that the SW differential on the covering special geometry, $\ce^\prime\setminus \cf^{\vee\,\prime}$, is the pull-back of the one on the base $\ce\setminus \cf^\vee$. 
 
One shows \cite{cover} that
$\phi$ must be a cyclic map of order $n$.
UV completeness requires $\infty$ to be a branching point for $\phi$, and we write $\phi\colon z\mapsto z^n$. The coordinate $z$ on $\bP^1$ is then identified with the Coulomb branch coordinate $u$; this yields $u= (u^\prime)^n$, so that -- if the covering is symplectic -- the Coulomb branch dimensions are related by
\be\label{dfffr}
\Delta=n\cdot \Delta^\prime.
\ee
Since the distinction SCFT/$a\!f$ is preserved by the covering, and given that for all $a\!f$ theories $\Delta=2$, we see from \eqref{dfffr} that there is no non-trivial covering for $a\!f$
models, and we restrict to SCFT in the rest of the discussion. In terms of fiber configurations there are 19 allowed coverings between UV/SW complete rational elliptic surfaces consistent with eqn.\eqref{dfffr}. 4 of them are physically interpreted as special limits of another one, so we remain with the 15 coverings listed in table \ref{missing}.
They precisely match the 15 `missing' rank-1 SCFT.

\begin{table}
\begin{center}
\begin{minipage}{400pt}
\begin{tabular}{ccccccc}\hline\hline
$\#$ & $\Delta$ & $F$ & fibers $\ce$ & fibers $\ce^\prime$ & Galois & gauging?\\\hline
4 & 6 & $F_4$ & $\{II;I_0^*,I_1^4\}$ & $\{IV;I_1^8\}$ & $\Z_2$ & $\checkmark$\\
5 & 6 & $Sp(6)$ & $\{II;I_1^*,I_1^3\}$ & $\{IV; I_2, I_1^6\}$ & $\Z_2$\\
6 & 6 & $SU(2)$ & $\{II;I_1^*,I_3\}$ & $\{IV; I_2,I_3^2\}$ & $\Z_2$\\
7 & 6 & $Sp(4)$ & $\{II;I_2^*,I_1^2\}$ & $\{IV; I_4, I_1^4\}$& $\Z_2$ & $\checkmark$\\
8 & 6 & $SU(2)$ & $\{II;I_3^*,I_1\}$ & $\{IV; I_6, I_1^2\}$& $\Z_2$\\
9 & 6 & $SU(2)$ & $\{II;IV^*,I_2\}$ & $\{I_0^*;I_2^3\}$& $\Z_3$ & $\checkmark$\\
10 & 6 & $G_2$ & $\{II;IV^*,I_1^2\}$ & $\{I _0^*; I_1^6\}$& $\Z_3$ & $\checkmark$\\
11 & 6 & $SU(2)$ & $\{II;III^*,I_1\}$ & $\{IV^*; I_1^4\}$& $\Z_4$ & $\checkmark$\\\hline
14 & 4 & $Spin(7)$ & $\{III;I_0^*,I_1^3\}$ & $\{I_0^*, I_1^6\}$& $\Z_2$ & $\checkmark$\\
15 & 4 & $SU(2)\times SU(2)$\footnote{\ The discrete gauged realization has the symmetry $SU(2)\times U(1)$ \cite{Argyres3,Argyres5}.} & $\{III;I_1^*,I_1^2\}$ & $\{I_0^*; I_2, I_1^4\}$& $\Z_2$\\
16 & 4 & $SU(2)$ & $\{III;I_1^*,I_2\}$ &$\{I_ 0^*; I_2^3\}$& $\Z_2$ & $\checkmark$\footnote{\ The two $\ce^\prime$ in entries 16 and 17 both correspond to $SU(2)$ $\cn=2^*$.}\\
17 & 4 & $SU(2)$ & $\{III;I_2^*,I_1\}$ & $\{I_0^*; I_4, I_1^2\}$& $\Z_2$ &$\checkmark$\\
18 & 4 & $SU(2)$ & $\{III;IV^*,I_1\}$ &$\{III^*;I_1^3\}$& $\Z_3$ & $\checkmark$\\\hline
21 & 3 & $SU(3)$ & $\{IV;I^*_0,I_1^2\}$ & $\{IV^*; I_1^4\}$& $\Z_2$ & $\checkmark$\\
22 & 3 & $U(1)$ & $\{IV;I^*_1,I_1\}$ & $\{IV^*; I_2, I_1^2\}$& $\Z_2$\\\hline\hline
\end{tabular}
\end{minipage}
\caption{\label{missing}The 15 rank 1 SCFT which are not described by a $Q(A;q)$ quiver. The first column is the number or the corresponding entry in table 1 of ref.\!\cite{Argyres3}: we shall often identificate a SCFT by this entry number.
The second column gives the Coulomb branch dimension and the third one the flavor group. The fourth column is the configuration of Kodaira exceptional fibers in the corresponding rational elliptic surface $\ce$, the first entry being the fiber at infinity. The fourth column yields the fiber configuration for the rational elliptic surface $\ce^\prime$ covering $\ce$ under change of the base field (following ref.\cite{cover}). The cover
$\ce^\prime\to \ce$ is branched only over the first two exceptional fibers in the configuration. The last column follows from comparison with table 1 of ref.\!\cite{Argyres7} .}
\end{center}
\end{table}

However not all interesting  coverings $\ce^\prime\to\ce$ correspond to discrete gaugings in the sense of refs.\!\!\cite{Argyres7,discretegaugings1,discretegaugings2}. The true gaugings are distinguish by a check-mark $\checkmark$ in the last column of the table.  
By inspection, we note that the unchecked base-changes are precisely the ones whose covering surface $\ce^\prime$ has a fiber configuration which does not satisfy the ``Dirac quantization of charge'' of ref.\!\cite{Argyres3,Argyres4}, equivalently are not described by a 
$Q(A,q)$ quiver. The categorical description should (in particular) clarify the subtle  distinction between \emph{true} and \emph{false} gaugings.
\smallskip

Going through the table of base-changes, we see that for all the 15 `missing' SCFT
\be\label{kkkas1}
\mathrm{rank}\,F=\#\big(\text{exceptional fibers of $\ce$ over $\bP^1\setminus\{0,\infty\}$}\big).
\ee 
This formula is not valid for the 16 QFT described by a $Q(A,q)$ quiver. In that case
\be\label{kkkas2}
\mathrm{rank}\,Q(A,q)=2+\mathrm{rank}\,F=\#\big(\text{exceptional fibers of $\ce$ over $\bP^1\setminus\{\infty\}$}\big),
\ee
as it is clear from the (quiver) $\leftrightarrow$ (fiber configuration) correspondence \eqref{mmzaqw}.

The difference between \eqref{kkkas1} and 
\eqref{kkkas2} reflects the different structure of the Grothendieck group of the UV categories for the two classes of $\cn=2$ theories. This is hardly a surprise:
we already predicted these same  expressions on general grounds, see eqn.\eqref{pppqwer}.  

\section{Warm-up: discrete gaugings of $SU(2)$ with $N_f=4$}\label{za10k}

To orient ourselves it is convenient to start by working out in detail some explicit example in a context where both the physics and the mathematics are well understood.
The perfect set-up
  are the discrete gauging of $SU(2)$ with $N_f=4$.
Three SCFT in the list of \cite{Argyres3} belong to this class:
\be\label{sun4gau}
\begin{tabular}{ll}
$\bullet$\ the ungauged $(2,Spin(8))$ theory &(entry 23)\\
$\bullet$\ a $\Z_2$ gauging produces 
the SCFT $(4,Spin(7))$ &(entry 14)\\
$\bullet$\  a $\Z_3$ gauging produces 
the SCFT $(6,G_2)$ &(entry 10).
\end{tabular}
\ee
Our fist goal is to construct explicitly the three 2-CY categories describing these SCFT in UV. 
We begin by reviewing the categories associated to the SCFT $(2,Spin(8))$.

\subsection{IR category for $SU(2)$ with $N_f=4$}

The BPS objects of 4d $\cn=2$ SQCD with gauge group $SU(2)$ are described by
categories associated to $\mathbb{P}^1$ and its generalizations (the weighted projective lines).
The relation between $SU(2)$ SQCD and weighted projective lines is mirror symmetry. To illustrate the point, we work out the case of $SU(2)$ with $N_f=0,1,2$
referring to the literature \cite{Cecotti:2012va,Cecotti:2015hca,Caorsi:2017bnp} for the general case.  The canonical SW geometry
of $SU(2)$ with $N_f=0,1,2$ corresponds,\footnote{\ We fine-tuned the masses to a convenient value, for simplicity.} respectively, to the curves \cite{Tachikawa:2011yr}
\be\label{ppppq12234}
W_0\equiv p^2-e^x-e^{-x}=0,\quad
W_1\equiv p^2-e^{2x}-e^{-x}=0,\quad
W_2\equiv p^2-e^{2x}-e^{-2x}=0,
\ee
($x\sim x+2\pi i$) with SW differential $\lambda =p\,dx$. The 4d/2d correspondence \cite{CNV} associates to each 4d QFT the 2d (2,2) Landau-Ginzburg model with superpotential $W_{N_f}$ which is easily seen to be the mirror of the $\sigma$-model with target the weighted projective lines 
$\bP(1,1)$, $\bP(2,1)$ and $\bP(2,2)$, respectively. The relation continues to hold for $N_f>2$: the general case involves a weighted projective line $\bX$ of
type $(2,2,\cdots,2)$ with $N_f$ 2's.
The most convenient language to describe the triangle categories associated to $\cn=2$ $SU(2)$ SYM coupled to fundamental matter
is that of coherent sheaves over weighted projective lines\footnote{\ For details and complete references see \textsc{appendix \ref{Acoh}}.}. This formalism has been reviewed in \cite{Cecotti:2012va} and \cite{Cecotti:2015hca}; we shall adhere to the conventions of this last paper.

$SU(2)$ with $N_f=4$ is associated to 
a weighted projective line $\bX$ of tubular type $(2,2,2,2)$. $\mathsf{coh}\,\bX$, the 
Abelian category of coherent sheaves on $\bX$, is hereditary with tilting objects (see \textsc{appendix \ref{Acoh}}).
We write $\tau$ for the auto-equivalence of 
$\mathsf{coh}\,\bX$ given by the tensor product with the dualizing sheaf $\co(\vec \omega)$. Serre duality then reads
\be\label{AR}
\mathrm{Ext}^1(X,Y)\cong D \mathrm{Hom}(Y,\tau X),\qquad X,Y\in \mathsf{coh}\,\bX.
\ee
 For $\bX$ of type $(2,2,2,2)$ one has $\tau^2=\mathrm{Id}$. The fact that $\tau$ has finite order ($\equiv$ the dualizing sheaf is torsion) is equivalent to the vanishing of the Yang-Mills $\beta$-function \cite{Cecotti:2012va}.

The bounded derived category $\sD_\bX\equiv D^b\mathsf{coh}\,\bX$ is just the repetitive category of $\mathsf{coh}\,\bX$ and \eqref{AR} extends to $\sD_\bX$ in the form
\be
D\,\mathrm{Hom}(X,Y)\cong\mathrm{Hom}(Y,\tau \Sigma X)\qquad X,Y\in \sD_\bX,
\ee  
i.e.\! the Serre functor is $S\equiv\tau\Sigma$.
On $\sD_\bX$ we have the (non-unique) telescopic auto-equivalences $T$ and $L$, satisfying the $\cb_3$ braiding relation
(see \textsc{appendix \ref{Acoh}})
\be\label{braid}
TLT=LTL,
\ee
which physically generate the $SL(2,\Z)$ duality group of $SU(2)$ with $N_f=4$ \cite{Cecotti:2015hca}.
 For the telescopic functors we adhere to the conventions of \cite{Cecotti:2015hca}:
in particular $T$ is realized as the twist by the line bundle $\co(\vec x_3)$.
 On the electro-magnetic charges $T$ and $L$ act as the $SL(2,\Z)$ matrices \cite{Cecotti:2015hca}
  \be\label{modtras}
 T \leadsto \begin{bmatrix} 1 & 1\\
 0 & 1\end{bmatrix},\qquad 
 L \leadsto \begin{bmatrix} 1 & 0\\
 -1 & 1\end{bmatrix}.
 \ee 

\subsection{Gaugeable auto-equivalences}

Consider an elliptic curve $E$. Depending on the value of its modulus $\boldsymbol{\tau}$, the group $\mathrm{Aut}(E)$ may be $\Z_2$ (for generic $\boldsymbol{\tau}$), $\Z_4$ (for
$\boldsymbol{\tau}^2=-1$) or $\Z_6$ (for $\boldsymbol{\tau}^3=-1$). 
These groups are generated, respectively, by $-1$,
by $\zeta$ with $\zeta^2=-1$, and by $\xi$ with $\xi^3=-1$.
Since the conformal manifold of $SU(2)$ with $N_f=4$ is isomorphic to the moduli space of elliptic curves, the associated triangle category $\sD_\bX$
must have auto-equivalences corresponding to $-1, \zeta,\xi\in \mathrm{Aut}(E)$. We write them as $M_2$, $M_4$ and $M_6$, respectively; they should satisfy the relations
\be\label{lllasqw}
(M_4)^2=M_2,\qquad (M_6)^3=M_2.
\ee 
This physical argument predicts that no autoequivalence $M$ with $M^n=M_2$ should exist for $0<n\neq1, 2,3$. Of course, the prediction is mathematically correct. $M_2$ corresponds to $-1\in\mathrm{Aut}(E)$, that is, it generates the Weyl group of the physical $SU(2)$ gauge group: $M_2$ acts as $-1$ on the electro-magnetic charges and as $+1$ on the flavor ones. $\zeta$ and $\xi$ then correspond to the additional gauging of a discrete $\Z_2$ (resp.\! $\Z_3$) symmetry. The fact that no other $M^n=M_2$ autoequivalence exists, rules out discrete gauging by other discrete groups.

$M_2$ is identified with the auto-equivalence $\tau^{-1}\Sigma$ \cite{Cecotti:2015hca}. The simplest solutions to \eqref{lllasqw} are\footnote{\ Solutions conjugated in $\mathrm{Aut}(\sD_\bX)$ are equivalent. $M_2$ is central in $\mathrm{Aut}(\sD_\bX)$.}
\be\label{jjjas}
\begin{aligned}
M_2&=\tau^{-1}\Sigma,\\
M_4&= TLT,\\
M_6&=TL.
\end{aligned}
\ee
They satisfy
\eqref{lllasqw} since for all tubular weighted projective
lines of type
$\neq(3,3,3)$ one has
\be\label{relrel}
(TLT)^2=(TL)^3=\tau^{-3}\Sigma,
\ee   
the first equality being a trivial consequence of the braid relation \eqref{braid} and the second one well known (cfr.\!\!\! \textsc{appendix}). For $\bX$ of type $(2,2,2,2)$ $\tau^2=\mathrm{Id}$, and eqn.\eqref{relrel} becomes equivalent to eqn.\eqref{lllasqw}.

%
%
%
%

\subsection{UV categories for the gauged models}

The orbit categories
\be
\mathscr{C}_d \overset{\text{def}}{=} \sD_\bX\big/(M_d)^\Z,\qquad d=2,4,6
\ee
are defined to have the same objects as $\sD_\bX$ and morphism spaces
\be\label{morro}
\sC_d(X,Y)=\bigoplus_{k\in\Z} \mathrm{Hom}_\sD(X, M_d^k Y).
\ee
Since $\mathsf{coh}\,\bX$ is hereditary, while the auto-equivalences $M_d$ satisfy the condition of Keller's theorem \cite{kellermainthm2}\footnote{\ See also \textsc{appendix}, \textbf{Theorem \ref{kellermainthm}}.}  the categories
$\sC_d$ are triangulated and the canonical  functors
\be
\sD_\bX\xrightarrow{\ \pi_d\ } \mathscr{C}_d,
\ee
are exact. In facts, the functor $\pi_d$ factorizes through the cluster category\footnote{\ This also follows from the universal property of the cluster category (\textsc{Appendix}, \textbf{Proposition \ref{uniprop}}).} 
$\mathscr{C}$ of $\bX$
\be\label{factfact}
\begin{gathered}
\xymatrix{\sD_\bX\ar[r]_{\pi_2\phantom{mmn}}\ar@/^2pc/[rr]^{\pi_d} & \mathscr{C}\equiv \mathscr{C}_2\ar[r]_{\phantom{mn}g_{d/2}}& \mathscr{C}_d}
\end{gathered}
\ee
$\sC\equiv\sC_2$ is the UV triangle category for 
$SU(2)$ with $N_f=4$ \cite{Cecotti:2015hca}. 
In $\mathscr{C}_d$ we have
\be
\sC_d(X,Y)\cong \sC_d((M_d)^{d/2} X, Y)= \sC_d(\tau^{-1}\Sigma X, Y)\cong D\sC_d(Y,\Sigma^2X),
\ee
so that the $\mathscr{C}_d$ is a (Hom-finite) $2$-CY triangle category.  Moreover, $\Sigma^2=(\tau^{-1}\Sigma)^2\cong\mathrm{Id}$, so the 2-CY category $\sC(d)$ is \emph{symmetric},
as we predicted on physical grounds for a QFT with $\Delta\in\bN$ (eqn.\eqref{2per}).

The exact surjective functors 
\be
g_{d/2}\colon \mathscr{C}\to \mathscr{C}_d,
\ee 
defined by diagram \eqref{factfact} give the quotient with respect  to a subgroup $\Z_{d/2}$ of the automorphism group of $\sC$, the UV category of $SU(2)$ with $N_f=4$.
It is then natural to identify the 2-CY quotient $\sC_d$ with the 2-CY category giving the UV description of the BPS sector for the SCFTs
obtained by gauging the discrete symmetry $\Z_{d/2}\subset SL(2,\Z)\times U(1)_R$ of $SU(2)$ $N_f=4$ i.e.\!  entries 14 and 10 of the table in \cite{Argyres3}.
To show that this identification is correct
we have to check a few facts. In particular
that $\sC_d$ has cluster-tilting objects
with the appropriate properties to reproduce the correct physics in the IR.
\medskip

The cluster category $\sC$ is best seen \cite{clucan} as the category whose objects are the coherent sheaves $X\in\mathsf{coh}\,\bX$ with morphism spaces\footnote{\ We reserve the notations $\mathrm{Hom}(X,Y)$ and $\mathrm{End}(X)\equiv \mathrm{Hom}(X,X)$ for morphism spaces in the category of coherent sheaves.}
\be\label{kkaswe}
\sC(X,Y)\cong \mathrm{Hom}(X,Y)\oplus 
D\mathrm{Hom}(Y,X),\quad X,Y\in\mathsf{coh}\,\bX.
\ee 
We write $S_d$ for the auto-equivalence of $\sC$ induced by
$M_d$ where $d=4,6$. One has
\be
(S_d)^{d/2}=\mathrm{Id}\qquad \text{in }\sC.
\ee
An indecomposable sheaf $X\in\mathsf{coh}\,\bX$ has a well-defined slope $\mu(X)\in \bP^1(\bQ)$ equal to the ratio of the electric and magnetic charge of the associated BPS object.
The telescopic functors $T$, $L$ act on $\mu$ as the modular transformations
\eqref{modtras}. 
Since $\mu(M_2X)=\mu(X)$ the slope is well-defined in $\sC$. One has
\be\label{unbranchedZ}
\mu(S_4X)=-\frac{1}{\mu(X)},\qquad 
\mu(S_6X)=\frac{1}{1-\mu(X)}.
\ee
Since these M\"obius transformations have no fixed points in $\bP^1(\bQ)$, $S_d$ acts freely on the indecomposable sheaves.
This is the crucial fact.
We define the pull-up functor\footnote{\ This is the usual pull-up functor of covering theory, see \textsc{appendix \ref{cov-tech}} for its basic  properties.}
\be
\iota_d\colon \sC_d\to \sC,\qquad \iota_d\colon X\mapsto \bigoplus_{k=0}^{d/2-1}S_d^kX.
\ee 
 Eqn.\eqref{morro} reduces to
 \be\label{kkkaz120}
 \sC_d(X,Y)\cong \sC(X,\iota_dY).
 \ee
 
 \subsubsection{Cluster-tilting and unbranched Galois covers}\label{aaa1098}
Eqn.\eqref{kkkaz120} implies that
\begin{equation}\label{p0as2}
\text{$T\in\sC_d$ is cluster-tilting}\ \Leftrightarrow\ 
\text{$\iota_d T\in\sC$ is cluster-tilting}
\ \Leftrightarrow\ 
\text{the \underline{sheaf} $\iota_dT$ is tilting}.
\end{equation}
Thus the cluster-tilting objects in $\sC_d$ are just the tilting objects of $\mathsf{coh}\,\bX$ in the range of $\iota_d$, i.e.\! the tilting sheaves fixed by $S_d$. They have the form
\be\label{jjjzas}
\iota_d\!\left(\bigoplus_{i=1}^\ell T_i \right)=\bigoplus_{k=0}^{d/2-1}\bigoplus_{i=1}^\ell S_d^k T_i,
\ee
with the $T_i$ indecomposables pair-wise non-isomorphic in $\sC_d$. In particular $\ell d=12$. 

From \eqref{p0as2} we see that to a  cluster-tilting object $T\in\sC_d$ we may associate three finite-dimensional basic algebras
\be
J_d\equiv \sC_d(T,T),\quad J\equiv\sC(\iota_dT,\iota_dT),\quad A\equiv\mathrm{End}(\iota_d T).
\ee
From eqn.\eqref{kkaswe} we see that $J$ is the \emph{trivial extension algebra} of $A$, $J=A\ltimes DA$.

Let us suppose (for the moment) that a cluster-tilting object $T\in \sC_d$ exists and
consider the quiver $Q$ of $J$.
The nodes of $Q$ are in one-to-one correspondence with the indecomposable summands $S^k_dT_i$ in 
eqn.\eqref{jjjzas}. Since $S_d$ is an auto-equivalence of $\sC$ acting freely on the indecomposables, $S_d$ generates a $\Z_{d/2}$ automorphism group of $Q$ acting freely on the nodes and preserving the relations. Under these conditions there exists a Galois cover of algebras (in the sense of Gabriel \cite{gab,galoiscover}; for a nice survey see \cite{pena}\,\footnote{\ Covering techniques in Representation Theory 
have been reviewed and applied to $\cn=2$ QFT in \cite{Cecotti:2015qha}. See in particular \S.2.4.1 where the covering functors are discussed in detail.})
\be
F\colon J\to J/\Z_{d/2}.
\ee 
The crucial point is that this is a very special kind of Galois cover: from eqn.\eqref{unbranchedZ} we see that it is an \emph{unbranched} cover\footnote{\ See  \textsc{appendix \ref{cov-fffunc}} for definitions and properties.}: by this we mean that it acts freely on the indecomposables.
Eqn.\eqref{kkkaz120} implies
\be
J_d=J/\Z_{d/2}.
\ee

$\iota_dT$ is a cluster-tilting object of $\sC$ with the special form\eqref{jjjzas}. In $\sC$ there are standard tilting objects $R$ so that the quiver of $\sC(R,R)$
is a 2-acyclic quiver with superpotential in the mutation class of $Q(\{1^4\},2)$, and $J$
is the corresponding Jacobian algebra.

\subparagraph{The IR picture.}
The BPS states\footnote{\ More precisely: the BPS states stable in a BPS chamber covered by the given quiver \cite{Alim:2011kw}.} of $SU(2)$ $N_f=4$
are given by the stable objects in the category $D^b\mathsf{mod}\,J$. 
Therefore, if there exists a tilting-object $T\in\sC_d$ such that $\iota_d T$ is standard, the
category 
\be
\sD_d\equiv D^b\mathsf{mod}\,J_d= D^b\mathsf{mod}\, J/\Z_{d/2}
\ee 
has the correct properties to describe the BPS states of a $\cn=2$ QFT obtained by gauging a $\Z_{d/2}$ symmetry of $SU(2)$ with $N_f=4$ whose BPS states are described by the derived Jacobian category $D^b\mathsf{mod}\,J$.
This follows from the fact that the Galois covering is unbranched. In this case,
the two covering functors,\footnote{\ We collect some useful properties of the covering functors in \textsc{appendix \ref{cov-fffunc}}.} push-down $F_\lambda$ and pull-up $F^\lambda$
\cite{galoiscover,pena}
\be\label{covfunct}
 \xymatrix{\mathsf{mod}\,J\ar@<0.6ex>[rr]^{F_\lambda} && \mathsf{mod}\,J_d\ar@<0.6ex>[ll]^{F^\lambda}}
 \ee
  set-up a correspondence between the BPS spectrum of the gauged and un-gauged theories which is the physically correct one provided the central charge (stability function) is equivariant
\be\label{zaq12m}
Z(S_dX)= e^{2\pi i/d}\,Z(X),\qquad X\in\mathsf{coh}\,X
\ee
and one uses the $Z_k$-covariant version of the Bridgeland stability condition with respect to the full  subcategory $\boldsymbol{t}$ of objects with
$0\leq \arg Z(X) < \pi/d$. This is equivalent to define a module of $X\in \mathsf{mod}\,J_d$ to be \emph{stable} iff it is the push-down of a module $Y\in\mathsf{mod}\,J$ which is stable with respect to a stability function $Z$ satisfying \eqref{zaq12m}.
The lift $Y\in\mathsf{mod}\,J$ is well-defined up to the $\Z_{d/2}$ action;
  there is an unique element of the orbit with $0\leq \arg Z(Y) < \pi/d$. Fixing this lift amounts to fixing the $\Z_{d/2}$-gauge; each object of $\sC_d$ is represented a unique object in $\boldsymbol{t}$. The Dirac pairing \emph{in the chosen gauge} $\boldsymbol{t}$ is the antisymmetric part of their Euler form $\chi(-,-)$ in $\mathsf{mod}\,J$ restricted to $\boldsymbol{t}$. Everything is well defined since the Galois cover is unbranched. Needless to say, this precisely agrees with the physical definition,
as discussed after eqn.\eqref{pppqwer}.

In terms of the 1d theory on the world-line, we get precisely the set-up predicted in section 2:
$\mathscr{L}$ is the 1d Lagrangian for the ungauged model, with $A(\mathscr{L})=J$, $\bG=\Z_{d/2}$ is the group to be gauged in 4d as well as in 1d, and $A(\mathscr{L}/\bG)=J_d$; to simplify the analysis of the FI terms, it is more convenient to work with the covering Lagrangian $\mathscr{L}$. 
\medskip

We stress that eqn.\eqref{zaq12m} reflects the physical fact that the double-cover of the gauged $\Z_{d/2}$ symmetry is a combination of a  $\Z_d\subset U(1)_R$ symmetry and a $\Z_d\subset SL(2,\Z)$ duality \cite{Argyres7}.

\subparagraph{Comparing with the Seiberg-Witten geometry.} The above discussion mimics what happens on the Seiberg-Witten geometry. The $\Z_{d/2}$ gauge symmetry acts on the Coulomb branch coordinate $u$ of $SU(2)$ with $N_f=4$
as $u\mapsto e^{4\pi i/d}\,u$. Away from the branching points $u=0,\infty$,
the local physics at vacuum $u$ is identical for the gauged and un-gauged physics; the fibers $\ce_{u^{d/2}}$ and $\ce^\prime_u$ are isomorphic elliptic curves, and the SW differentials $\lambda$, $\lambda^\prime$, restricted to these curves, are identified.
Thus the curves in these two fibers which are calibrated by $e^{-i\theta}\lambda|_{\ce_{u^{d/2}}}$ and respectively  $e^{-i\theta}\lambda^\prime|_{\ce^\prime_u}$, are also identified, that is, the BPS spectrum at a generic point in the Coulomb branch is identical for the gauged and ungauged theories. Stated, differently a BPS state, represented by a curve in the SW geometry, is stable in the gauged theory (at a generic vacuum) iff its pull-back to the covering ($\equiv$ ungauged) SW geometry is stable. This is the same condition stated before in terms of the unbranched Galois cover of algebras $J\to J_d$. \bigskip

We conclude:
\begin{itemize}
\item[1.] the interpretation of $\sC_d$ as the UV BPS category of a $\cn=2$ QFT obtained by gauging a $\Z_{d/2}$
of $SU(2)$ with $N_f=4$ requires
the existence of a standard tilting object
$T\in\sC_d$;
\item[2.] if a standard cluster-tilting object $T\in \sC_d$ exists, the endo-quiver $Q_T$ of its pull-up $\iota_dT$ is an element of the mutation class of $SU(2)$ $N_f=4$
($\equiv$ the class of triangulation quivers of the sphere with 4 ordinary punctures \cite{Cecotti:2011rv,fomin}), having an automorphism group at least $\Z_{d/2}$,
with the property that folding $Q_T$ along its $\Z_{d/2}$ symmetry produces an unbranched Galois cover between the IR BPS categories $\mathsf{mod}\,J$ and $\mathsf{mod}\,J_d$.
\end{itemize}

\subparagraph{Explicit cluster-tilting sheafs.} The mutation class of the sphere with 4 punctures contains just 4 quivers.\footnote{\ The mutation class of a quiver may be computed using Keller's Java applet \cite{applet}. The number of quivers in a class alway means their number up to source/sink equivalence.} Two of them have no automorphism. The other two are the $\Z_2$ symmetric quiver
\be\label{sppquiver2}
\begin{gathered}
\xymatrix{&& 4\ar[rd]\ar[rrd]\\
2\ar[rru] & 3\ar[ru]&& 5\ar[dl] & 6\ar[dll]\\
&& 1\ar[lu]\ar[llu]}
\end{gathered}
\ee
whose symmetry acts on the node labels as
$i\mapsto i+3\bmod 6$, and the subtle quiver
\be\label{z6qui2}
\begin{gathered}
\begin{xy} 0;<1pt,0pt>:<0pt,-1pt>:: 
(200,75) *+{5} ="0",
(150,150) *+{4} ="1",
(50,0) *+{1} ="2",
(0,75) *+{2} ="3",
(50,150) *+{3} ="4",
(150,0) *+{6} ="5",
"1", {\ar"0"},
"2", {\ar"0"},
"0", {\ar"4"},
"0", {\ar"5"},
"1", {\ar"3"},
"4", {\ar"1"},
"5", {\ar"1"},
"2", {\ar"3"},
"4", {\ar"2"},
"5", {\ar"2"},
"3", {\ar"4"},
"3", {\ar"5"},
\end{xy}
\end{gathered}
\ee
which has an obvious $\Z_6$ symmetry, acting freely, generated by $i\mapsto i+1\!\bmod 6$. However this is not the full story: this quiver has many other free automorphisms. Indeed the permutation $\sigma_i$ of the $i$ and $i+3$ nodes is a non-free automomorphism for all $i$'s. Conjugating the obvious $\Z_6$ by these non-free automorphisms, we get additional freely acting $\Z_6$'s. For instance conjugating with $\sigma_2\sigma_3$ the $\Z_3$ action $i\mapsto i+2\bmod 6$ get replaced by the inverse of the node permutation $(1,2,6)(5,3,4)$.

In \textsc{appendix \ref{ctc4}} it is shown explicitly that the sheaf
\be\label{firstclaim}
T=\co\oplus \co(\vec x_1)\oplus \cs_{2,1}
\ee
is a standard cluster-tilting object in $\sC_4$
such that the endo-quiver of $\iota_4T$ in $\sC$
is \eqref{sppquiver2} with the obvious $\Z_2$ symmetry. 
In \textsc{appendix \ref{q11zx}}
we show that the algebra $\mathrm{End}_{\sC_4}(T)$ corresponds to the non-2-acyclic quiver with the quartic superpotential
\be
\xymatrix{\bullet \ar@/^1pc/[r]^a &\bullet\ar@/^1pc/[l]^b\ar@/^1pc/[r]^c & \bullet\ar@/^1pc/[l]^d}
\qquad \cw = \mu_1 (ab)^2+\mu_2 (cd)^2+\varrho\, acdb
\ee
$\mu_1$, $\mu_2$, $\varrho$ being generic complex coefficients. The Jacobian algebra is finite-dimensional.
\medskip

In \textsc{appendix \ref{ctc6}} it is shown  that
\be\label{kkkaqwhhn}
T=\cs_{1,0}\oplus \cs_{2,1}
\ee
 is a standard tilting object in $\sC_6$
such that the endoquiver of $\iota_6T$
is \eqref{z6qui2} with the $\Z_3$ symmetry to be gauged corresponding to $i\to i+2\bmod6$.

This completes the identification of the UV and IR categories for the SCFT theories 
$(4,Spin(7))$ and $(6,G_2)$.
\medskip

\subsubsection{Grothendieck groups}
The Grothendieck group for the cluster category $\sC$ of $SU(2)$ with $N_f=4$
was discussed in \S.\ref{jjaszq}\footnote{\ See also \S.2.9.1 of \cite{Caorsi:2017bnp}.}.
We rewrite it in the sheaf notation. The group $K_0(\sC)$ is generated by the five classes $[\co]$, $[\cs_{a,0}]$, $a=1,2,3,4$,
subjected to the relation
\be\label{grorel}
-2[\co]=\sum_{a=1}^4[\cs_{a,0}].
\ee
Writing a class in $K_0(\sC)$ as 
$\sum_a w_a [\cs_{a,0}]$ yields the isomorphism
\be
K_0(\sC)\cong \Big\{(w_1,w_2,w_3,w_4)\in \Big(\tfrac{1}{2}\Z\Big)^2\;\Big|\; w_a=w_b\mod 1\Big\}\equiv \Gamma_{\text{weights},\,\mathfrak{spin}(8)} 
\ee
and the normalized Euler pairing is exactly the Cartan one on the weigth lattice of $\mathfrak{spin}(8)$ (which is valued in $\tfrac{1}{2}\Z$), i.e.\! $(w,w^\prime)_\mathfrak{f}=\sum_a w_a w_a^\prime\in\Z$.

The Grothendieck group of the IR category\footnote{\ The root category $\sR_\bX$ has the same Grothedieck group as $\sD_\bX$.} $K_0(\sD_\bX)$,
 is the free Abelian group over the six classes
$[\co]$, $[\cs_{a,0}]$ and $[\co(\vec c)]$.
The electric and magnetic charges $e$, $m$ are 
$$
e([\co])=0, \ m([\co])=1,\quad
e([\cs_{a,0}])=1,\ m([\cs_{a,0}])=1,\quad
e([\co(\vec c)])=2,\ m([\co(\vec c)])=1.
$$
The UV group $K_0(\sC_\bX)$ is obtained from the IR group $K_0(\sD_\bX)$ by imposing the relation
$[\co]\simeq [\co(\vec c)]$ and the one
in eqn.\eqref{grorel} \cite{groK}. Now the eletro-magnetic charges $e$, $m$ are well defined only mod 2, i.e.\! 
\be
(e,m)\in\Z_2\times \Z_2\cong Z(Spin(8)).
\ee
 Eqn.\eqref{grorel} says that states of even (odd) magnetic charge
are in tensor (spinor) representations of the flavor group $Spin(8)$, and that dyons of even/odd electric charge have opposite $Spin(8)$ chirality, as known from a direct physical analysis \cite{SW1}. They also say that the $PSL(2,\Z)$ action generated by the telescopic functors, $T$ and $L$, acts on the Grothendieck group $\equiv$ flavor weights by $\mathfrak{S}_3$ triality. 

The Grothendieck groups of $\sC_4$ and $\sC_6$ are obtained by taking the quotient of the $\mathfrak{spin}(8)$ weight lattice by a subgroup $\Z_2\subset\mathfrak{S}_3$ and, respectively, $\Z_3\subset \mathfrak{S}_3$; it is well known that these quotients  produce the weight lattices of $\mathfrak{spin}(7)$ and $G_2$, respectively. This is a manifestation of the universal relation between the Grothendieck group of the UV category and the flavor group of the QFT we claimed in section \ref{ccartoon}.  

\subparagraph{The Grothendieck groups $K_0(\sC_d)$.} We illustrate in detail $K_0(\sC_4)$ the case of $K_0(\sC_6)$ being similar. In \textsc{appendix \ref{kkkaqwe}} we show that in $K_0(\sC)$
\be
[S_4\co]=-[\cs_{3,0}],\qquad [S_4\cs_{a,0}]=[\cs_{a,0}]\ \text{for }\ a=1,2,4.
\ee
Then relation \eqref{grorel} becomes in $K_0(\sC_4)$
\be
-[\co]=[\cs_{3,0}]=\sum_{a\neq3}[\cs_{a,0}].
\ee
Thus $K_0(\sC_4)\simeq \Z^3$ is the free Abelian group
generated by the three classes $[\cs_{1,0}]$, $[\cs_{2,0}]$ and $[\cs_{4,0}]$,  isomorphic to the $\mathfrak{spin}(7)$ weight lattice. The \emph{normalized}
Euler form \cite{Caorsi:2017bnp} is the Cartan one on the $\mathfrak{spin}(7)$  weight lattice. 
For $K_0(\sC_6)$ the same argument gives $F=G_2$ and again $\kappa_F=4$.
Also the generic Higgs branch dimension remains $h=0$ since the covering functors \eqref{covfunct} guarantee that no `pure flavor' indecomposable is generated by the gauging.

\subsubsection{The \underline{un}-gaugeable $\Z_6$ symmetry} \label{obstrr}
It remains to explain the `unexpected' $\Z_6$ symmetry of the quiver \eqref{z6qui2}. We specialize the weighted projective line $\bX$ by fixing its exceptional points to be $\{1,-1,0,\infty\}$.
Then $\mathsf{coh}\,\bX$ and $\sC$ have a order 2 auto-equivalence $\Pi$ permuting the first two points which commutes with the telescopic functors and hence with $S_6$ and (obviously) $\tau$. Then
$H=\tau\Pi S_6$ is an order 6 equivalence of 
$\sC$ with $S_6^{-1}=H^2$. The tilting object in eqn.\eqref{kkkaqwhhn} has the form
\be
\iota_6 T= \bigoplus_{i=0}^5 H^kT
\ee
so the endo-quiver of $T$ must have a freely acting $\Z_6$ symmetry which is what we found in \eqref{z6qui2}. The symmetry is not unique since $\Pi$ corresponds to an element of the flavor Weyl group $\mathrm{Weyl}(D_4)$, and we may replace it by any element in its conjugacy class.

$H$ lifts to the autoequivalence $\tau \Pi TL$ of $\sD_\bX$.
The orbit category
\be\label{kzqw99i}
\sD_\bX/(\tau \Pi TL)^\Z
\ee
is triangulated and 2-periodic,
but it is \underline{not} 2-CY since
$\tau^{-1}\Sigma$ is not a power of
$\tau\Pi TL$. 
One has
\be
(\tau \Pi TL)^6=\tau^{-2} \Sigma^2\quad\text{in }\sD_\bX, 
\ee
so the category \eqref{kzqw99i} is fractional Calabi-Yau of dimension
$4/2$ rather than $2$.
This suggests that the $\Z_6$ symmetry of $SU(2)$ $N_f=4$ cannot be consistently gauged while preserving $\cn=2$ \textsc{susy}.

The last statement is well known to physicists \cite{Argyres7}: $\Pi$ generates a finite subgroup of the flavor group $Spin(8)$, and the gauging of such a subgroup 
is not consistent with $\cn=2$ \textsc{susy} \cite{Argyres7}. 

%
%

\section{General facts about discrete gaugings}

\subsection{Discrete gauging of $\cn=2$ theories with $h\geq1$}\label{sub:h1}

We stress a fundamental difference between gauging a discrete subgroup of the duality group  of a $\cn=2$ SCFT having a pure Coulomb branch ($h=0$) or an enhanced Coulomb branch ($h\geq1$). 
In the warm-up examples of section 5 the discrete gaugings turned out to be, in the categorical language, just unbranched Galois covers with Galois group the discrete gauge group $\bG$. This was so because $\bG$ was a group of auto-equivalences of the UV category $\sC$
with the nice property of acting freely on \emph{all} indecomposable objects, see eqn.\eqref{unbranchedZ} for $SU(2)$ with four flavors. This is the general pattern of discrete gaugings for all triangular QFT with pure Coulomb branches,
i.e.\! $h=0$. Indeed, when $h=0$ the electro-magnetic charge $(e,m)$ is non-zero for all BPS states (a state with $(e,m)=0$ would be everywhere light on the Coulomb branch hence part of the generic Higgs branch).
Thus, in a triangular model with $h=0$,  all BPS states have a well-defined slope $\mu$
\be
\mu\equiv \text{\textsf{(electric charge)}}\big/\text{\textsf{(magnetic charge)}}\in \bP^1(\bQ),
\ee
on which the rank-1 electro-magnetic duality acts projectively
\be
\mu\mapsto \frac{a \mu+b}{c\mu+d},\quad \qquad \begin{bmatrix} a & b\\
c & d\end{bmatrix}\in PSL(2,\Z).
\ee
At the categorical level this reflects into the fact that the discrete gauge group $\bG$ acts freely on the indecomposable objects of the UV category $\sC$, as we saw in eqn.\eqref{unbranchedZ} for $SU(2)$ with four flavors. This leads to the pattern we observed in the previous section:

\begin{quote}\it A consistent \underline{true} gauging of a finite Abelian symmetry $\bG$ of a $\cn=2$ triangular theory with $h=0$ corresponds to an automorphism of a 2-acyclic quiver in its mutation class which acts freely on the nodes.  
\end{quote}

On the contrary, if $h>0$, the objects in the generic Higgs branch have electro-magnetic charges $(e,m)=(0,0)$ which \underline{do not} define any point in $\bP^1(\bQ)$. The flavor symmetry acting effectively on the generic Higgs branch is $Sp(2h)$, whose Lie algebra has no outer automorphism,
while $U(1)_R$ acts trivially on the Higgs branch scalars. Therefore the discrete group $\bG$ leaves the Higgs objects  invariant: the nice property of the $h=0$ case that the $\bG$-action is free on the indecomposables is lost. Then

\begin{quote}\it A consistent gauging of a finite Abelian symmetry $\bG$ of a $\cn=2$ triangular theory with $h\geq 1$ does not \emph{necessarily} correspond to an automorphism of a 2-acyclic quiver in its mutation class which acts freely on the nodes.  
\end{quote}

In other words: for $h\geq1$ the existence of a freely acting quiver automorphism is a \emph{sufficient} condition for the existence of the given gauging (provided it is not obstructed for a different reason: cfr.\!  \S.\ref{obstrr}) but not at all \emph{necessary.}

\subsection{$h=0$: Gaugeable symmetries}

Let us focus on the $h=0$ case.
From the IR viewpoint (in a given BPS chamber)
the original theory is described by the module category $\mathsf{mod}\,J$ of the Jacobian algebra $J$,
and the physical system has a symmetry group contained in the automorphisms $\mathrm{Aut}(J)$ of $J$. We wish to gauge a finite subgroup $\bG\subset \mathrm{Aut}(J)$. 
The gauged theory \textit{if it exists}
should correspond (in the IR and in the corresponding chamber) to the module category $\mathsf{mod}\,J_\text{ga}$
of another Jacobian algebra by the 1d argument of section \ref{ccartoon}. 
It is clear that `discrete gauging' should correspond to some functor between these two categories
\be
F_\lambda\colon \mathsf{mod}\,J\to \mathsf{mod}\,J_\text{ga}.
\ee
The natural guess, which revealed to be correct in the explicit examples of the previous section, is
that $F_\lambda$ is the push-down of some covering functor $F\colon J\to J_\text{ga}$ whose Galois group is $\bG$.

However, this is mathematically surprising because it is quite uncommon\footnote{\ We thank B. Keller for pointing out this to us.} that a finite-dimensional 
Jacobian algebra is covered by another algebra which is again Jacobian and finite-dimensional\footnote{\ See however \cite{Cecotti:2015qha}.}.
Dually, from the physics point of view it  would be quite surprising if we were able to gauge all or `most' finite subgroups of $\mathrm{Aut}(J)$. Typically the gauging of a subgroup $\bG\subset \mathrm{Aut}(J)$ is obstructed either by 't Hooft anomalies (so the gauged QFT does not exists at all) or because the gauging breaks $\cn=2$ \textsc{susy} (so its BPS category does not make any sense). Existence of non-trivial consistent discrete gauging preserving $\cn=2$ \textsc{susy} is an unexpected miracle that only recently was discovered to happen. 

Is it possible that the mathematically uncommon feature and the physically  rare `miracle' are two faces of the same coin?
Indeed they are: the discrete gauging of a subgroup $\bG\subset \mathrm{Aut}(J)$ is possible precisely if there is a Galois quotient $J\to J/\bG$ which is unbranched in the sense of definition \ref{unbranched}. Note that being unbranched is much stronger a condition than being admissible \cite{gab,galoiscover,pena}: in the second case $\bG$ is required to act freely on the simple modules of $\mathsf{mod}\,J$ while to be unbranched $\bG$ should act freely on all \emph{indecomposables.} 
In the 1d physical language, the Galois cover being unbranched is equivalent to the statement that
\textit{the gauging \emph{(i.e.\! the push-down functor $F_\lambda$)} preserves the single-trace chiral operators.} 
We conclude:
\begin{cri} Whenever the orbit category of the UV 2-CY category $\sC$,
$\sC/\bG$, is not triangulated, or not 2-CY, or not
Hom-finite, or has no cluster-tilting object $T$, in
the corresponding QFT the gauging of the discrete symmetry $\bG$ (while preserving $\cn=2$ supersymmetry) \textsc{is obstructed}. An unobstructed discrete gauging requires the pull-up $F^\lambda T$ of the cluster-tilting object $T\in\sC/\bG$ to be a cluster-tilting object for $\sC$ such that the Jacobian algebra $\mathrm{End}_\sC(F^\lambda T)$ is an \emph{unbranched} Galois cover of the
Jacobian algebra $\mathrm{End}_{\sC/\bG}(T)$.\end{cri}

In the above statement there are several clauses. They are \emph{not} all on the same footing. Hom-finite 2-CY is a UV `sacred' condition, and cannot be violated without paying the price of getting physical non-sense. The existence of a cluster-tilting object is \emph{not} a condition for physical sense, it is merely a criterion for the construction to be interpretable as a straight-forward discrete gauging. More general constructions exist, see section 8.
\medskip   

Let us see the criterion at work.
Physics says that
there are (at least) three necessary criterions in order for a discrete gauging to be consistent:
\textit{i)} the gauge group $\bG$ should acts trough a non-trivial finite group of $S$-duality transformations; \textit{ii)}
the $S$-duality rotations should be accompanied by a $U(1)_R$ rotation.
\textcolor{red}{Since the finite group $\bG$ has a faithful one-dimensional representation, $\bG\simeq \Z_n$ for some $n$;}
\textit{iii)} $\bG$ can act on the flavor Lie algebra only through outer automorphisms,
actions of the flavor Weyl group being obstructed.

Condition \textit{i)} is akin to saying    
that $\bG$ acts on the slope $\mu$ without fixed points; if the indecomposables have a well defined slope (as in section \ref{za10k}) this implies that $\bG$ acts freely on the indecomposables, namely $J\to J/\bG$ is
an unbranched Galois cover of bounded linear categories. Given the usual relations between the slope and the stability function ($\equiv$ the \textsc{susy} central charge) this may be consistent only if the action of $\bG$ on the slope and on $Z$ is equivariant, which is precisely \textit{ii)}.
Finally, the generator $\theta$ of the gauge group $\bG\cong\Z_n$ will act on the flavor root latice $(K_0(\sC)/\text{(torsion)})^\vee$ by an element $\theta\in\mathrm{Weyl}(F)\rtimes \mathrm{Aut}(\textsf{Dyn})$ where $\mathrm{Aut}(\textsf{Dyn})$ are the automorphisms of the Dynkin graph.  $\theta$ permutes the direct summands of the covering cluster-tilting object $F^\lambda T$, i.e.\! the simple roots of $F$. Then it is a non-trivial automorphism of the Dynkin graph, as we saw in the previous examples.

\section{The other 8 discrete gaugings}

The RT description of the discrete gaugings of $SU(2)$ with $N_f=4$ was rather straightforward because we have a very explicit description of the cluster category for the ungauged model as the orbit category of a well-studied derived  hereditary category.
We have a similar explicit description for the cluster categories of the Argyres-Douglas (AD) models (the SCFTs with $\Delta<2$).
The analysis of discrete gaugings of  AD models is then similar to the one in the previous section. In facts, their possible gaugings were already worked out in the math literature \cite{finite1,finite2,finite3} for different purposes.
On the contrary, for the ungauged models
with $\Delta>2$ (the wild case) we do not have such a simple description of the cluster category.  

\subsection{Discrete gaugings of Argyres-Douglas models}
\subsubsection{$\Z_2$ gauging of $(\tfrac{3}{2},SU(3))$}\label{zzza12}\label{muu1}

According to table \ref{missing} the $\Z_2$ gauging of the SCFT $(\tfrac{3}{2},SU(3))$ produces $(3,SU(3))$ (entry 21). In the quiver mutation class of $(\tfrac{3}{2},SU(3))$ there are two kinds of suggestive quivers: the orientations of the $D_4$ Dynkin graph
and the 4-cyclic quiver with $\cw$ equal to the cycle (a.k.a.\! the `square tensor product' $A_2\square A_2$ of two $A_2$ quivers \cite{kellerper,CNV}). 
\be\label{d4quiv}
\begin{gathered}\xymatrix{& b\ar[d]\\
a\ar[r]& \bullet & c\ar[l]}\\
D_4\ \text{Dynkin}\end{gathered}
\qquad\qquad 
\begin{gathered}\xymatrix{1\ar[d] & 4\ar[l]\\
2\ar[r] & 3\ar[u]}\\
\text{4-cyclic}\end{gathered}
\ee
The category $\ch$ of finite-dimensional moduli of the path algebra $\C D_4$ of the Dynkin quiver is hereditary with an AR translation\footnote{\ For the explicit action of $\tau$ see \textsc{appendix \ref{d4ex}}.} $\tau$. $\tau$ extends to an auto-equivalence of the triangle category $D^b\ch$ with \cite{picard,Caorsi:2016ebt}
\be\label{d444}
\tau^{-3}=\Sigma.
\ee
Hence
\be\label{gaugeable}
(\tau^{-1})^4=\tau^{-1}\tau^{-3}=\tau^{-1}\Sigma,
\ee
and $\tau^{-1}$ generates a $\Z_4$ symmetry of the cluster category $\sC_{D_4}$. There is a related (but different) $\Z_4$ auto-equivalence $\tilde\tau$ which just rotates the 4-cyclic quiver by $\pi/2$.\footnote{\ $\tilde\tau$ is the AR translation in the Abelian category of nilpotent modules of the path-algebra of the cyclic quiver. As an auto-equivalence of $\mathsf{nil}\,\C(\text{4-cycle})$ its satisfies $\tilde\tau^4=\mathrm{Id}$.} In the cluster category $\tilde\tau\cong\theta\tau$, where $\theta\in\mathfrak{S}_3$ is an order 2 automorphism of the Dynkin quiver which we choose to be the interchange of the first two nodes $a\leftrightarrow b$ in the first quiver of \eqref{d4quiv}\footnote{\ For details see \S.6.2.1 of \cite{Caorsi:2016ebt}.}
In particular, $\tau^2\cong \tilde\tau^2$ in the cluster category.

The UV category of the $\Z_2$-gauged SCFT $(3,SU(3))$ is then the orbit category
\be
\sC_{3,SU(3)}=D^b\ch/(\tau^2)^\Z,
\ee
which is triangulated and 2-CY. It is also \emph{symmetric} since $\mathrm{Id}\simeq (\tau^2)^{-3}=\Sigma^2$; this is consistent with the fact that the Coulomb dimension of the gauged theory is an integer. More precisely, the quantum monodromy $\Sigma^2$ is the \emph{cube} of an operator, $\tau^{-2}$, which acts as the identity in the UV category of the gauged model. This happens iff all Coulomb dimensions are not only integral but multiples of 3 \cite{CNV}, as indeed is the case.

Again the universality property of the cluster category $\sC_{D_4}$ yields a triangle gauging functor
relating the UV categories of the un-gauged and gauged SCFTs
\be
g\colon \sC_{D_4}\to \sC_{3,SU(3)}.
\ee

Repeating the arguments of section 5, if $T\in \sC_{3,SU(3)}$ is a cluster-tilting object, then $T\oplus\tau^{-2} T$ is cluster-tilting in $\sC_{D_4}$. Thus $T$ is the direct sum of exactly 2 indecomposables
of $\sC_{D_4}$ in agreement with $\mathrm{rank}\,F=2$. Moreover, the
endo-quiver $\mathrm{End}_{\sC_{D_4}}(T\oplus \tau^{-2}T)$ should be a quiver in the $D_4$-mutation class wich has a cyclic $\Z_2$-symmetry acting freely on the nodes. The only quiver with this property is the 4-cyclic one \eqref{d4quiv}.

Since the $\Z_2$ gauged model has $\mathrm{rank}\,F=2$ we need a cluster-tilting object $T$ with exactly two indecomposable summands. Consider the object
\be
T= \text{\begin{scriptsize}$\begin{bmatrix}&0&\\1& 0&0\end{bmatrix}$\end{scriptsize}}\oplus \theta\tau^{-1} \text{\begin{scriptsize}$\begin{bmatrix}&0&\\1& 0&0\end{bmatrix}$\end{scriptsize}}= \text{\begin{scriptsize}$\begin{bmatrix}&0&\\1& 0&0\end{bmatrix}$\end{scriptsize}}\oplus\text{\begin{scriptsize}$\begin{bmatrix}&0&\\1& 1&1\end{bmatrix}$\end{scriptsize}} \in \sC_{3,SU(3)}
\ee 
where the indecomposables are represented by their dimensions.
The corresponding $\tau^2$-invariant object in $\sC_{D_4}$ is\footnote{\ $S_1$ is the simple module of $\C D_4$
with support on node 1 of the Dynkin quiver \eqref{d4quiv}.}
\be
T\oplus \tau^{-2}T\equiv \bigoplus_{k=0}^3 (\theta\tau^{-1})^k\, S_1=
\text{\begin{scriptsize}$\begin{bmatrix}&0&\\1& 0&0\end{bmatrix}$\end{scriptsize}}\oplus \text{\begin{scriptsize}$\begin{bmatrix}&0&\\1& 1&1\end{bmatrix}$\end{scriptsize}}\oplus \text{\begin{scriptsize}$\begin{bmatrix}&0&\\1& 1&0\end{bmatrix}$\end{scriptsize}}\oplus \text{\begin{scriptsize}$\begin{bmatrix}&1&\\0& 1&0\end{bmatrix}$\end{scriptsize}}[-1]
\ee
is cluster-tilting with endo-quiver the 4-cycle with a $\Z_4$-symmetry acting cyclically on the nodes generated by the permutation $(1234)$,
while the $\Z_2$-symmetry to be gauged is generated by $(13)(24)$. Indeed,
\be
\sC_{D_4}\big(S_1,(\theta\tau^{-1})^k S_1\big)= \begin{cases} \C & k=0\mod 4\\
0 & \text{otherwise.}
\end{cases}
\ee

Therefore also the object
$T\in\sC_4$ is cluster-tilting and its endo-quiver with superpotential $\cw$ is the Galois quotient of the 4-cyclic one by the free $\Z_2$-action
\be
\begin{gathered}
\xymatrix{\bullet\ar@/^1.2pc/[rr]^a&& \bullet\ar@/^1.2pc/[ll]^b}
\end{gathered}\qquad\quad \cw=(ab)^2.
\ee

\subparagraph{Grothendieck group.}
$K_0(\sC_{D_4})$ is described in ref.\cite{groK} and reviewed in \textsc{appendix \ref{d4ex}}. There we show that
$K_0(\sC_{D_4})\cong \Z[S_1]\oplus \Z[\Sigma S_2]$ and in this basis the matrix of the natural quadratic form is
\be
\Delta\, C^{-1},\qquad  C= \text{the $A_2$ Cartan matrix}
\ee
in agreement with the physical expectation.
Using formulae from \textsc{appendix \ref{d4ex}} we see
\be
[\tau^2 S_1]=[S_1],\qquad [\tau^2 \Sigma S_2]=[\Sigma S_2],
\ee
so the gauging quotient preserve the Grothedieck group. Then 
$K_0(\sC_{3,SU(2)})\equiv K_0(\sC_{D_4})$ is the $SU(3)$ lattice with the same quadratic form as in the ungauged model. Thus, in this case, the gauging leaves unchanged the flavor symmetry $SU(3)$.

\subsubsection{$\Z_4$ gauging of $(\tfrac{3}{2},SU(2))$} \label{muu2}

This $\Z_4$ gauging produces $(6,SU(2))$
(entry 11).
In the set-up of the previous subsection, note that the $\Z_4$ symmetry generated
by $\theta\tau$ is gaugeable since $(\theta\tau)^{-4}=\tau^{-1}\Sigma$, while
$\tilde\tau\cong\theta\tau$ generates the $\Z_4$ automorphism of the 4-cyclic quiver. Then the proper UV 2-CY category is
\be
D^b\ch/(\theta\tau)^\Z.
\ee 
Again it is symmetric since $(\theta\tau)^{-6}=\Sigma^2$ and
$(\theta\tau)^{-1}$ is a $\tfrac{1}{6}$-monodromy acting as 1, consistently with the fact that $\Delta=6$.
  
An example of cluster-tilting object is
\be
T\cong \text{\begin{scriptsize}$\begin{bmatrix}&0&\\1& 0&0\end{bmatrix}$\end{scriptsize}}.
\ee
Now $[\theta\tau S_1]=[S_2]$ and $[\theta\tau S_2]=[S_1]$ and $K_0(\sC)\cong\Z[S_3]$. The flavor group is $SU(2)$.

\begin{rem}\label{flagau} $\theta$ acts as an element of the flavor Weyl group of $D_4$ AD. This does not contradict the statement that we cannot gauge a discrete subgroup of flavor while preserving $\cn=2$ \textsc{susy}. In facts $\tau$ acts on the $SU(3)$ flavor weights $w$ as the outer automorphism $w\mapsto -w$, so that the above $\Z_4$ symmetry acts on the flavor by outer automorphisms.
\end{rem}

\subsubsection{$\Z_3$ gauging of $(\tfrac{4}{3},SU(2))$}\label{jjjza12n}

The situation is similar to the one in \S.\ref{zzza12}.
The $\Z_3$ gauging of the SCFT $(\tfrac{4}{3},SU(2))$ produces $(4,SU(2))$ (entry 18). In the mutation class there are the orientations of the $D_3$ Dynkin graph
and the 3-cyclic quiver with $\cw$ the cycle. 
\be
\begin{gathered}\xymatrix{
\bullet& \bullet\ar[l]\ar[r] & \bullet}\\
\text{Dynkin}\end{gathered}
\qquad\qquad 
\begin{gathered}\xymatrix{1\ar[d] \\
2\ar[r] & 3\ar[ul]}\\
\text{3-cyclic}\end{gathered}
\ee
The category $\ch$ of finite-dimensional moduli of the path algebra of the Dynkin quiver is hereditary, with an AR translation $\tau$. $\tau$ extends to an autoequivalence of the triangle category $D^b\ch$ with \cite{picard}
\be
\tau^{-2}=\theta\Sigma,
\ee
where $\theta$ is the order 2 autoequivalence induced by the exchange of the two end nodes, which acts as the non-trivial element of the flavor Weyl group $\mathrm{Weyl}(F)$. One has
\be
(\theta\tau^{-1})^3=\tau^{-1}\theta \tau^{-2}=\tau^{-1}\Sigma,
\ee
and the $\Z_3$ symmetry is gaugeable.
The UV category 
\be
D^b\ch/(\theta\tau^{-1})^\Z
\ee
is triangular, 2-CY and symmetric, since
$(\theta\tau^{-1})^4=\Sigma^2$ with a $\tfrac{1}{4}$-monodromy acting as 1, consistently with $\Delta=4$.
An example of a cluster-tilting object is
\be
T\cong \text{\begin{scriptsize}$\begin{bmatrix}1& 0&0\end{bmatrix}$\end{scriptsize}}.
\ee
Remark \ref{flagau} applies also to this $\Z_3$ gauging.

\begin{rem} The similarity between the last two examples becomes evident if we note that $A_3$ may be seen as $D_3$.
The above formulae yield a $\Z_r$ gauging of all Argyres-Douglas models of type $D_r$. More generally, we may gauge a subgroup $\Z_d$ for all divisors $d\mid r$.
\end{rem}

\subsubsection{No other AD gauging}

Let us show that the above list of gaugings is complete, i.e.\! that there is no other discrete gauging of the rank-1 SCFT with $\Delta<2$. (This also follows from the theorems in refs.\!\cite{finite1,finite2,finite3}, but we prefer to perform a very explicit analysis to illustrate the physical point). 
If $\Gamma$ is a Dynkin graph, we write $D(\Gamma)$ for the bounded derived category of the modules of the path algebra of any orientation\footnote{\ $D(\Gamma)$ is independent of the orientation up to equivalence.} of $\Gamma$, and $\mathrm{Aut}(D(\Gamma))$ for the group of its auto-equivalences.

By the previous discussion, a non-trivial gauging of the Argyres-Douglas model of type $\Gamma\in ADE$ corresponds to $\varrho\in \mathrm{Aut}(D(\Gamma))$
which satisfies two conditions: 
\be\label{conddz}
\begin{aligned}
&1) &&\varrho^n=\tau^{-1}\Sigma,\qquad
n> 1\\
&2) && \text{The orbit category $\sC_\varrho\equiv D(\Gamma)/(\varrho)^\Z$ contains a cluster-tilting object.}
\end{aligned}
\ee

\subparagraph{AD model of type $A_2$.} $\mathrm{Aut}(D(A_2))$ is the Abelian group generated by
$\tau$ and $\Sigma$ with the unique relation $\tau^{-3}=\Sigma^2$.
Then $\Sigma=(\tau\Sigma)^3$, $\tau=(\tau\Sigma)^{-2}$ and $\mathrm{Aut}(D(A_2))$ is the infinite cyclic group $\Z$ generated by $\tau\Sigma$. $\tau^{-1}\Sigma=(\tau\Sigma)^5$, and the only solution to 1) is $\tau\Sigma$ with $n=5$. This corresponds to the $\Z_5$ $S$-duality group of the AD model of type $A_2$ which cyclically permutes the 5 indecomposable objects of its cluster category $\sC$.
Now let $T\in\sC_\varrho$ be any non-zero object\footnote{\ We write $\sC\equiv \sC_{\tau^{-1}\Sigma}$ for the cluster category of the ungauged model.}
\be
\sC_\varrho(T,T[1])\cong\sC_\varrho(T,\tau T)\cong \bigoplus_{k=0}^4 \sC(T,(\tau\Sigma)^{k-2}T)=\sC(T,T)\oplus\cdots \neq 0,
\ee
and condition 2) cannot be satisfied.
In other words, the $\Z_5$ $S$-symmetry of the $A_2$ Ad model is not gaugeable.

\subparagraph{AD model of type $A_3$.}
$\mathrm{Aut}(D(A_3))\cong \Z\times \Z/2\Z$ is the Abelian group generated by $\tau$ and the involution $\theta$. One has
$\Sigma=\theta\tau^{-2}$ and $\tau^{-1}\Sigma=\tau^{-3}\theta$.
The only solution to 1) is the one
associated to the gauging already considered in \S.\ref{jjjza12n}. 

\subparagraph{AD model of type $D_4$.}
$\mathrm{Aut}(D(D_4))\cong \Z\times \mathfrak{S}_3$ where the cyclic group is generated by $\tau$ and $\mathfrak{S}_3$ is the triality automorphism of the Dynkin graph. $\Sigma=\tau^{-3}$ and $\tau^{-1}\Sigma=\tau^{-4}$. Up to conjugacy, there are four solutions to condition 1)
\be\label{solrho}
\tau^{-2},\qquad\theta\tau^{-1},\qquad \tau^{-1}\qquad\theta\tau^{-2},
\ee
where $\theta$ is an element of order 2 in 
$\mathfrak{S}_3$.
The first two solutions correspond to the gaugins in \S.\ref{muu1} and \ref{muu2}.
The orbit category of the third one has no non-zero rigid object since
\be
\sC_{\tau^{-1}}(X,X[1])\cong \sC(X,X)\oplus\cdots
\ee
The fourth solution would correspond to gauging a $\Z_2$ discrete symmetry acting on the flavor by inner automorphisms, 
which is inconsistent on physical grounds. The corresponding RT statement is that $\sC_{\theta\tau^{-2}}$ has no cluster-tilting objects. Let us check that the statement is indeed correct. If $T_1\oplus T_2$ is cluster tilting for $\sC_{\theta\tau^{-2}}$, then 
$\tilde T\equiv T_1\oplus T_2\oplus \theta\tau^2 T_1\oplus \theta\tau^2 T_2$ is cluster-tilting for $\sC$.
Rigid indecomposables of $D(D_4)$
form four orbits
under the subgroup of $\mathrm{Aut}(D(D_4))$ generated by $\tau$ and $\Sigma$ generates by the four simples.
The indecomposables in the orbits of the two $\theta$-invariant simples are rigid in
$\sC_{\theta\tau^2}$ iff they are in $\sC_{\tau^2}$, i.e.\! only if they belong to the $\theta$-invariant peripheral node. The indecomposables in the orbits of the other two peripheral simples are not rigid
We have\footnote{\ We choose 
$\theta\colon\text{\begin{scriptsize}$\begin{bmatrix}&0&\\1& 0&0\end{bmatrix}$\end{scriptsize}} \mapsto \text{\begin{scriptsize}$\begin{bmatrix}&0&\\0& 0&1\end{bmatrix}$\end{scriptsize}} $.}
\begin{equation*}
\sC_{\theta\tau^{-2}}\!\left(\text{\begin{tiny}$\begin{bmatrix}&0&\\1& 0&0\end{bmatrix}$\end{tiny}},\text{\begin{tiny}$\begin{bmatrix}&0&\\1& 0&0\end{bmatrix}$\end{tiny}}[1]\right)\cong \sC\!\left(\text{\begin{tiny}$\begin{bmatrix}&0&\\1& 0&0\end{bmatrix}$\end{tiny}},\theta\tau^{-1}\text{\begin{tiny}$\begin{bmatrix}&0&\\1& 0&0\end{bmatrix}$\end{tiny}}\right)\cong
\sC\!\left(\text{\begin{tiny}$\begin{bmatrix}&0&\\1& 0&0\end{bmatrix}$\end{tiny}},\text{\begin{tiny}$\begin{bmatrix}&1&\\1& 1&0\end{bmatrix}$\end{tiny}}\right)\cong\C
\end{equation*}
so the rigid indecomposables of the cluster category $\sC$ which are rigid in $\sC_{\theta\tau^{-2}}$ are
\be
\text{\begin{tiny}$\begin{bmatrix}&1&\\0& 0&0\end{bmatrix}$\end{tiny}}
\xrightarrow{\ \tau^{-1}\ }
\text{\begin{tiny}$\begin{bmatrix}&0&\\1& 1&1\end{bmatrix}$\end{tiny}}
\xrightarrow{\ \tau^{-1}\ }
\text{\begin{tiny}$\begin{bmatrix}&1&\\0& 1&0\end{bmatrix}$\end{tiny}}
\xrightarrow{\ \tau^{-1}\ }
\text{\begin{tiny}$\begin{bmatrix}&1&\\0& 0&0\end{bmatrix}$\end{tiny}}[1]
\ee
which are isomorphic in pairs as objects of $\sC_{\theta\tau^{-2}}$. Since $X\oplus \tau^{-1}X$ is never rigid, we find precisely two maximal rigid objects which are indecomposables, i.e.\! $S_2$ and $\tau^{-1}S_2$. Now the rigid obejct
$S_2\oplus \theta\tau^{-2}S_2$
is not cluster tilting in $\sC$ so the fourth gauging does not exists.

We conclude that for rank-1 $\Delta<2$ the `discrete gauging' in the sense of RT exactly reproduce the physically allowed ones.

\subsection{The remaining 5 discrete gaugings}

For the remaining cases we have no simple explicit realization of the cluster category for the ungauged theory.\footnote{\ The cluster category of $(3,E_6)$ is the Amiot category of a del Pezzo algebra $\ca_6$. In principle, one should be able to describe the gauging in terms of sheaves on the del Pezzo surface.} We should argue in an \emph{ad hoc} manner. 

\subsubsection{$\Z_2$ gauging of $(3,E_6)$}\label{e6mn}

The $\Z_2$ gauging of the SCFT $(3,E_6)$ produces $(6,F_4)$ (entry 4). 
In the mutation class of the quiver $Q(1^6;3)$ there is a unique
$\Z_2$ symmetric one analogue to \eqref{sppquiver2} 
\be\label{sppquiver3}
\begin{gathered}
\xymatrix{&&& 5\ar[rd]\ar[rrd]\ar[rrrd]\\
2 \ar[rrru] &3\ar[rru] & 4\ar[ru]&& 6\ar[dl] & 7\ar[dll]& 8 \ar[dlll]\\
&&& 1\ar[lu]\ar[llu]\ar[lllu]}
\end{gathered}
\ee
with involution $i\mapsto i+4\bmod8$. The superpotential $\cw$ is invariant under the involution. This action induces a $\Z_2$ symmetry of the cluster category $\sC$ generated by an auto-equivalence\footnote{\ Identifying $E_6$ Minahan-Nemeshanski with the SCFT of type $D_2(Spin(8))$, 
in the notations \S.6.3.1 of \cite{Caorsi:2016ebt}, we have $M=1\otimes \tau$.} $M$ with $M^2=\text{Id}$.  Hence
in cluster category $\sC$ of $(3,E_6)$ there is a  cluster-tilting object $T=\oplus_{i=1}^8T_i$, unique up to auto-equivalence, such that $T_{i+4}\cong MT_i$, whose endo-quiver is \eqref{sppquiver3}. Endowing the quiver with its symmetric superpotential $\cw$, the quiver automorphism $i\mapsto i+4\bmod8$
extends to an automorphism of the Jacobian algebra. The orbit is still a Jacobian algebra, and we may define the `orbit' 2-CY category. 

In \cite{Caorsi:2017bnp} it was shown that the $S$-duality group of $E_6$ MN consists of the semi-direct product of the flavor Weyl group
with the $\Z_2$ outer automorphism of the $E_6$ Lie algebra. Physically we expect that a finite subgroup of the $S$-duality group which acts on the flavor via outer automorphisms should be gaugeable while preserving $\cn=2$ \textsc{susy} as was the case in $SU(2)$ $N_f=4$. Then the subgroup generated by $S$ should be gaugeable, as we found above by the symmetries of the quiver. This argument also shows that no other gauging of $E_6$ MN is possible.

The symmetry of the gauged model is $F_4$ which is the result of folding the $E_6$ Dynkin diagram along its $\Z_2$ automorphism. This is the physically expected result \cite{Argyres7}.

\subsubsection{$\Z_3$ gauging of $SU(2)$ $\cn=2^*$}\label{jjasqwe8878}

This gauging produces $(6,SU(2))$ (entry 9). 

Since $SU(2)$ $\cn=2^*$ is a class $\cs[A_1]$ theory, there is in principle an `explicit' description of its cluster category $\sC$
in terms of ideal triangulations of a torus with a puncture. Gaugings of class $\cs[A_1]$ models will be discussed in Part II of this paper, where the present models will be revisited. Here we work at a more naive level, which does not requires a knowledge of the details of $\sC$.

There is a unique quiver in the mutation class, the Markoff one, with a $\Z_3$
automorphism acting freely on the nodes
\be\label{markov}
\begin{gathered}
\xymatrix{\bullet_3\ar@<0.4ex>[dd]\ar@<-0.4ex>[dd]\\
&& \bullet_2 \ar@<0.4ex>[ull]\ar@<-0.4ex>[ull]\\
\bullet_1 \ar@<0.4ex>[urr]\ar@<-0.4ex>[urr]}
\end{gathered}
\ee 
Then  the UV category $\sC$ of the ungauged SCFT, $SU(2)$ $\cn=2^*$,
has an autoequivalence $M$ with $M^3\cong\mathrm{Id}$.
Of course this $\Z_3$ group is the obvious finite subgroup of the $S$-duality group $SL(2,\Z)$, and the situation is a close analogue of the $\Z_3$ gauging of $SU(2)$ $N_f=4$ (section 5).
Since the flavor Lie algebra of the parent model,
$\mathfrak{su}(2)$ has no non-trivial outer automorphism, the discrete gauge $\Z_3$ should act trivially on flavor and the gauged model should have the same flavor group as the original model. This is consistent with the action of $\Z_3$ on the Markoff quiver since the flavor charge is represented by the dimension vector $(2,2,2)$ which is invariant under cyclic rotations.

However  now  the ungauged theory has $h>0$,
so there are further subtleties in the categorical approach. These subtleties do not spoil the existence of the gauging, but make the categorical description more involved.
The ``quotient'' category by $M$ is then the UV category for the SCFT in entry 9.

\subsubsection{$\Z_2$ gaugings of $SU(2)$ $\cn=2^*$}

There are two subtly different such gaugings: entries 16 and 17. Both have $F=SU(2)$. 
Thus the $\Z_2$ symmetry to be gauged acts trivially on flavor; this is consistent with $\mathfrak{su}(2)$ not having non-trivial
outer automorphisms.

The Markoff quiver
\eqref{markov} has no $\Z_2$ automorphism. This is not a problem according the discussion in \S.\ref{sub:h1}, since in this case $h=1$.
For $h\geq1$ the only condition is that the  discrete gauge group $\bG$ should be an unobstructed subgroup of the homological  $S$-duality group \cite{Caorsi:2016ebt,Caorsi:2017bnp} equal to the group of auto-equivalences $\mathrm{Aut}(\sC)$ of the UV category $\sC$ of the parent theory (modulo its subgroup acting trivially on physical observables). 

For $SU(2)$ $\cn=2^*$, $\mathrm{Aut}(\sC)/\text{(phy.triv.\!)}$ is a $\Z_2$ extension of the modular group $PSL(2,\Z)$ \cite{Caorsi:2016ebt,Caorsi:2017bnp}. As in the previous examples, a finite subgroup of  $\mathrm{Aut}(\sC)/\text{(phy.triv.\!)}$ should be gaugeable iff it acts on the flavor charges only through outer automorphisms.
This, in particular, applies to the $\Z_2$ subgroup generated by $S\in PSL(2,\Z)$ \cite{SW0}.
In the definition of the $S$-duality group
 $\mathrm{Aut}(\sC)/\text{(phy.triv.\!)}$
the automorphisms of the quiver and the automorphisms of the underlying graph which reverse all arrows enter on the same footing \cite{Caorsi:2017bnp}. The Markoff quiver has a $\Z_2$ morphism of the second kind, corresponding to an elementary mutation at one node, which induces a  freely acting $\Z_2$ automorphism  of the complementary full subquiver of the mutated node. This is the categorical symmetry to be gauged.
 
Let us make it a bit more explicit. The Markoff quiver \eqref{markov} is the Gabriel quiver of the algebra
$\mathrm{End}_\sC(T_1\oplus T_2\oplus T_3)$,
where $T_1\oplus T_2\oplus T_3\in\sC$ is a certain cluster-tilting object and the $T_i$'s are indecomposable. By the Iyama-Yoshino theorem \cite{IH}\footnote{\ See \textsc{appendix}, \textbf{Proposition \ref{IYpro}}.} there are precisely 2 indecomposables $T_3, T_3^*\in \sC$
such that
\be
T_1\oplus T_2\oplus T_3,\quad\text{and}\quad T_1\oplus T_2\oplus T_3^*
\ee  
are cluster-tilting. Mutation at node $\bullet_3$
is the involution corresponding to the interchange
$T_3\leftrightarrow T_3^*$: the quiver of $\mathrm{End}_\sC(T_1\oplus T_2\oplus T_3^*)$ is just the mutation of the quiver \eqref{markov} at the node $\bullet_3$.
The mutated quiver has the same form as the original one up to the permutation $\bullet_1\leftrightarrow \bullet_2$ of nodes. Thus the operation
\be
\sigma\colon \{T_1,T_2,T_3\}\leftrightarrow \{T_2,T_1,T_3^*\}
\ee 
is an involution which
 leaves invariant the quiver $Q$ and its superpotential $\cw$. Therefore it leaves invariant the Ginzburg dg algebra $\Gamma(Q,\cw)$,
  and induces a $\Z_2$ automorphism of the categories one constructs out of it, in particular the  cluster category $\sC$ \cite{Caorsi:2017bnp}. Since $\sigma$ acts trivially on the flavor charge, it should be identified with the unique $\Z_2$ with these properties, that is, with the gaugeable duality $S\in PSL(2,\Z)$.
\medskip

Alternatively, we may see $SU(2)$ $\cn=2^*$ as the class $\cs[A_1]$ theory \cite{classS1,classS2} with Gaiotto surface $C$ the once-punctured torus. In this context, the
$S$-duality group arises as the mapping class group of $C$, which acts on the ideal triangulation in the obvious way. This leads to an explicit action of the modular group $GL(2,\Z)$ on $\sC$ as described in the language of ideal triangulations (for details see  \cite{Caorsi:2017bnp} and references therein).
\medskip

Is the $\Z_2$ gauging of $\cn=2^*$ unique? 

Two inequivalent gaugings will correspond to two  order-2 elements of $\mathrm{Aut}(\sC)/\text{(phy.triv.\!)}$,
$S$ and $S^\prime$. $S(S^\prime)^{-1}$ would be a order-2 autoequivalence acting trivially on both the electro-magnetic and the flavor charges.
Is there such a order-2 auto-equivalence?
To answer this question, we first recall that the class $\cs[A_1]$ theory we are considering actually corresponds to 
$SU(2)$ $\cn=2^*$ plus a free hypermultiplet. Hence the IR physics contains two hypers with zero electric-magnetic charge (cfr.\! \S.4.8 of \cite{Alim:2011kw}). In the IR the automorphism $b$ of the Markoff quiver which fixes all nodes and flips all arrows in the pairs $\rightrightarrows$ interchanges these two hypermultiplets.
Thus using $S$ and $b S$ yields us two subtly different gauged theories both with
$\Delta=4$ and $F=SU(2)$. 
This perfectly matches the subtle difference between the SCFT in entries 16 and 17.

\begin{rem} The rational elliptic surfaces $(\ce,\cf^\vee)$ for entries 16 and 17 are obtained one from the other by a Kodaira quadratic transformation which leaves invariant the fiber at $\cf^\vee$. In particular they have the same functional invariant $\mathscr{J}\!(z)$ and hence the same space of deformations. So they look `almost' the same. Their respective parent ungauged models are the two different versions of $\cn=2^*$.
\end{rem}

\subsubsection{The last one: $\Z_2$
gauging of $(3,Sp(6))$}

It remains only one gauging to discuss:
the one of $(3,Sp(6))$ producing $(6,Sp(4))$. We know no simple description of the cluster category $\sC$ of the parent theory 
besides its general abstract definition in terms of the 
perfect $\mathfrak{Per}\,\Gamma$ and bounded derived $D^b\Gamma$ categories of the Ginzburg dg algebra $\Gamma$ associated to its quiver with superpotential (see \cite{Caorsi:2017bnp} for a review); in fact, since we do not know the higher order terms in the superpotential,
even the general definition is not complete.
Then instead of a precise argument, we shall present circumstantial evidence for the existence (and uniqueness) of the $\Z_2$ gauging from two categorical viewpoints.

\subparagraph{Comparison with $SU(2)$ $\cn=2^*$.} In the mutation class, we have the quiver
\be\label{markov23}
\begin{gathered}
\xymatrix{1\ar[rr]&&2\ar[ddll]\ar@<0.4ex>[dd]\ar@<-0.4ex>[dd]\\
&&&& \bullet \ar@<0.4ex>[ull]\ar@<-0.4ex>[ull]\\
3\ar[rr]&&4 \ar[uull]\ar@<0.4ex>[urr]\ar@<-0.4ex>[urr]}
\end{gathered}
\ee
which looks like `an extension' of the Markoff one by adding the nodes 1 and 2.
Mutation at the $\bullet$ node (the one associated to $\Z_2$ gauging of $\cn=2^*$) induces a
graph automorphism which is a free
$\Z_2$ automorphism of the complementary subquiver, $i\mapsto i+2\bmod4$ much as in the Markoff case. 
At the level of indecomposable summands of the corresponding
cluster-tilting object we have
\be
\sigma(T_\bullet)=T_\bullet^*,\qquad \sigma(T_j)=T_{j+2}.
\ee
As in the $\cn=2^*$ case this shows that we have a gaugeable $\Z_2$ subgroup of the $S$-duality group.
Since $\mathfrak{sp}(4)$ does not have non-trivial outer automorphisms, the non-Abelian factor in the
original flavor symmetry, $Sp(4)\times U(1)$ should be preserved.  The Abelian part is lost since it is odd under $\sigma$.

Alternatively, we may argue as in the following subsection. 

\subsection{Computer search of consistent gaugings} There exists a purely combinatorial algorithm \cite{Caorsi:2017bnp} to compute the $S$-duality group and its action on the flavor charges for all $\cn=2$ QFT with the BPS-property. This algorithm may be effectively implemented on a computer provided its quiver $Q$ has not too many nodes.
The algorithm may be used to search
for all gaugeable discrete subgroups of
the $S$-duality group. For the models in the previous subsection, the algorithm returns the discrete gaugings we already know. The algorithm may be run for the quiver \eqref{markov23} with the result that the $S$-duality group contains an order 2 element acting on the flavor charges as described above.

\subsection{No more gaugings}

The computer search for $\mathrm{Aut}(\sC)/\text{(phy.triv.\!)}$ shows that there are no gaugings of $E_7$ and $E_8$ Minahan-Nemeshanski. This is already obvious from the dimension formula \eqref{dfffr} since any such gauging will have $\Delta\geq 8$ which is impossible in rank-1 \cite{Caorsi:2018zsq}.

\section{The 5 false-gaugings}

It remains to discuss the 5 base-changes of elliptic surfaces that we dubbed false-gaugings, i.e.\! the rows in table \ref{missing} without the symbol $\checkmark$. 
As we have already observed,
they are precisely the symplectic base-changes of rational elliptic surfaces $\ce$ with fiber configurations satisfying all physical conditions, including the `Dirac quantization of charge', which pull back to elliptic surfaces $\ce^\prime$ which satisfy all requirements except the `Dirac quantization'. Indeed, in all 5 instances the covering surface $\ce^\prime$ has a fiber configuration of the form
\be
\{\cf^\vee; I_{a_1}, I_{a_2}^{m}\}\ \  \text{with } \frac{a_1}{a_2}\in\bQ\ \text{square-free}, \ \  a_1,\, n\mid m,\ \ a_1+m a_2=12-e(F_\infty),
\ee   
$n$ being the degree of the cover.
Thus these models have a perfectly good Seiberg-Witten geometry with a cover which is a nice SW geometry which we do
not consider as a definition of a distinct $\cn=2$ SCFT for tricky  reasons. In table \ref{covvve} we collect some data on the corresponding covers.
In the last column we indicate that two entries, 6 and 8, are under scrutiny by the authors of \cite{Argyres3} since there is no evidence they exist. Here we see that they are precisely the ones with covers having 3-torsion, which is impossible
in rank-1 2-CY. This does not prove that the SCFT $\#6$ and $\#8$ do not exist, but provides further evidence that they are ``strange''.  

\begin{table}
\begin{center}
\begin{tabular}{cccc}\hline\hline
$\#$ & cover fibers & cover Mordell-Weil & existence?\\\hline
5 & $\{IV;I_2,I_1^6\}$ & $A_5^\vee$ &\\
6 & $\{IV;I_2,I_3^2\}$ & $\langle 1/6\rangle\oplus \Z/3\Z$ & in question\\
8 & $\{IV;I_6,I_1^2\}$ & $A_1^\vee\oplus \Z/3\Z$ & in question\\
15 & $\{I_0^*;I_2,I_1^4\}$ & $A_1^\vee\oplus A_1^\vee\oplus A_1^\vee$ & \\
22 & $\{IV^*;I_2,I_1^2\}$ & $\langle 1/6\rangle$ & \\\hline\hline
\end{tabular}
\caption{\label{covvve}Covering rational elliptic surfaces for the 5 rank-1 false-gaugings. $\#$ is model number in table 1 of \cite{Argyres3}. The third column yields the Mordell-Weil group of the covering surfaces (\textbf{notation}: $\mathfrak{g}^\vee$ stands for the weight lattice of the simply-laced, simple Lie algebra $\mathfrak{g}$; $\langle 1/6\rangle$ stands for $\Z$ endowed with the quadratic form $\tfrac{1}{6}x^2$). In the last column we indicate the SCFT whose existence is put in question in ref.\!\!\cite{Argyres3}.}
\end{center}
\end{table}

The simplest interpretation of these covering geometries is to start from a   
generic configuration $\{\cf^\vee;I_1^{12-e(\cf^\vee)}\}$ which corresponds to the quiver $Q(1^{10-e(\cf^\vee)},q)$; its  functional invariant $\cj\!(z)$ is a rational function with $12-e(\cf^\vee)$ simple poles in $\C$ satisfying the Kodaira conditions on zeros and ones
\cite{miranda}. Varying the position of the poles
is equivalent to changing the mass parameters which take value in the Cartan subalgebra of the flavor group $F$.
By suitable fine-tuning of the mass-deformation we may force a group of $b$ poles of $\cj(z)$ to coalesce to form a pole of higher order $b$. An order $b$ pole corresponds to a fiber of type $I_b$. Physically, we fine-tune the masses so that $b$ different hypermultiplet species get massless on the same locus in the Coulomb branch. By such
an operation we may obtain any fiber configuration of type $I_b$ fibers (keeping fixed the fiber at infinity $\cf^\vee$) provided the configuration is present in the tables of allowed fiber configurations 
 \cite{per,mira}.  In this process, the $b$ particle species loose their distinct identity, and this amounts to quotient out the statistic group $\mathfrak{S}_b\equiv\mathrm{Weyl}(A_{b-1})$.
 The visible Weyl group is then the commutant of  $\mathrm{Weyl}(A_{b-1})$ in the Weyl group of the generic surface $\mathrm{Weyl}(E_{10-e(\cf^\vee)})$.  

\subparagraph{Seiberg-Witten viewpoint.}
Let us consider false-gauging from the SW perspective. A BPS particle of the (mass-deformed) SCFT is a (family of) calibrated curve(s) $\gamma$ inside the fiber $\ce_u$ over our Coulomb vacuum $u$, which we assume to be generic. Its pull-back $\phi_\ast^{\ \ast}\gamma$, is a disjoint union of calibrated curves in the covering fibers
$\{\ce^\prime_v\;|\;\phi(v)=u\}$. As in the case of a true gauging, the BPS spectrum becomes a purely local computation at covering vacuum $v$. From this local viewpoint, it is just the BPS spectrum of the generic $E_{10-e(\cf^\vee)}$ theory in a peculiar very fine-tuned limit. Then one expects that, roughly speaking, also their 1d Lagrangian
has the form $\mathscr{L}_\text{fi.tun.}/\bG$ where
$\bG$ is the covering group and $\mathscr{L}_\text{fi.tun.}$ is a fine-tuned version of the 1d Lagrangian for the generic $E_{10-e(\cf^\vee)}$ model. (The fine-tuning may correspond to a singular limit, in which case we need to introduce higher order interactions to regularize it).

At this naive level,
one may model the difference between \emph{true} and \emph{false} gaugings as follow. In false gaugings
we freeze  $b-1$ mass deformations by putting $b$ poles together, and (roughly speaking) this kills
an $A_{b-1}$ sublattice of the flavor lattice.
To produce this effect, the generator $\varrho_\text{false}$ of the false discrete  gauge group $\bG_\text{false}\cong\Z_n$ should be twisted with respect  to the generator  $\varrho_\text{true}$ of a true discrete  gauge group $\bG_\text{true}\cong\Z_n$ by an order-$n$ element $\theta$ of the Weyl group of $E_{10-e(\cf^\vee)}$
\be\label{jjjz0021}
\varrho_\text{false}=\theta\,\varrho_\text{true},\qquad \theta^n=1.
\ee 
Then a natural candidate for the UV category of a false gauging is the twisted-orbit category
\be
\sC_\theta= \Big(\sD_\ca\big/(\theta\varrho)^\Z\Big)_\text{tr.hull}
\ee
which, while still Hom-finite and 2-CY, has no cluster-tilting objects any longer. Correspondingly we don't have the unbranched Galois cover of Jacobian algebra $J\to J_d$ we had in the case of a true gauging. Of course, this should be expected, since that structure precisely realized the physical concept of what a discrete gauging is, which is not the case here, where the SCFT, while covered by SW geometries, are not discrete gauging in any physical sense.

Thus in the false case the most we may hope for is to have non-zero rigid objects in $\sC_\theta$. As already stated, we take the conservative stand that the 2-CY shuld have a non-zero rigid object (even if we haven't a strong argument for this requirement). There are plenty of 2-CY categories without non-trivial rigid objects.

 Without loss of generality, we focus on \emph{maximal} rigid objects. When cluster-tilting objects exist, all maximal rigid objects are cluster-tilting \cite{1004.5475}. 

For rank-1 and $\bG_\text{false}$ non-trivial  we have a simple relation (cfr.\! eqn.\eqref{kkkas1})
\be\label{zzllq9}
\mathrm{rank}\, F\Big|_\text{false-gauged theory}= \#\!\left(\begin{array}{l}\text{indecomposable direct summands}\\ \text{of basic maximally rigid objects}\end{array}\right)\!.
\ee
More precisely, if $T_\text{max}$ is a maximally rigid object, we have
\be
\text{Flavor group root lattice}\cong K_0(\mathsf{add}\,T_\text{max}).
\ee
Since (in rank-1) all false-gaugings have $\Delta\in\bN$, their UV category $\sC_\theta$ is symmetric, and the Weyl-invariant pairing is just
\be
M_{ij}=\dim\sC_\theta(T_i,T_j)+\dim\sC_\theta(T_j,T_i),\qquad T_\text{max}=\bigoplus_i T_i.
\ee

\subsection{Entries 15 and 22}

The elliptic surfaces $\ce$ of entries 15 and 22 of \cite{Argyres3} have covering elliptic surfaces $\ce^\prime$ which may be seen as fine-tuned limits of surfaces associated to SCFT whose UV category is a pretty explicit orbit category $D^b\ch/(S^{-1}\Sigma^2)^\Z$ with $\ch$ hereditary.
Thus the UV categories for these models can be constructed explicitly as orbit categories of the ungauged one $D^b\ch/(S^{-1}\Sigma^2)^\Z$. 

Such a false-gauging orbit category has the general form
$\sC_\varrho$ in eqn.\eqref{conddz}
\be\label{lllaqwe}
\sC_\varrho= D^b\ch\big/(\varrho)^\Z,
\ee
where $\varrho$ is an autoequivalence of $D^b\ch$ satisfying condition 1) of \eqref{conddz} but -- contrary to a true  gauging -- condition 2) is violated. 
More precisely, an orbit category of the form \eqref{lllaqwe} describes a false-gauging iff:\begin{itemize}
\item[1)] $\varrho^n=\tau^{-1}\Sigma$ for some $n>1$;
\item[2)] $\sC_\varrho$ contains \emph{non-zero} maximal rigid objects. A false-gauging is a true-gauging iff the maximal rigid object is cluster-tilting.
\end{itemize}

This is a special case of the problem studied in the appendix of \cite{finite2} and in ref.\!\cite{finite3}.

\subparagraph{Entry 22.}
According to the table of base-changes for elliptic surfaces, this model corresponds to a $\Z_2$ pseudo-gauging of an Argyres-Douglas model of type $D_4$ producing a SCFT with an Abelian flavor
group  of dimension 1, $F=U(1)$. Eqn.\eqref{solrho} gives the list of solutions to condition 1) for $D(D_4)\equiv D^b\mathsf{mod}\,\C D_4$; we conclude that the pseudo-gauging producing the entry 22 SCFT corresponds to the orbit category
\be\label{kkaz1003}
\sC_{\theta\tau^{-2}}\equiv D(D_4)\big/( \theta\tau^{-2})^\Z.
\ee
We already know that this 2-CY category has no cluster-tilting object.
The argument following eqn.\eqref{solrho}
shows that $S_\odot$ (the simple module with support at the central node) is basic maximal rigid for $\sC_{\theta\tau^{-2}}$. The maximal rigid object has a single indecomposable summand, as expected since the flavor group $F\equiv U(1)$ has rank 1. 

This result can also be read from the mathematical literature. Indeed, this is the category (L2) on page 430 of \cite{finite3}
which is a 2-CY category with non-zero maximal rigid objects which are not cluster-tilting. For further details we refer to \cite{finite3}.

As predicted in our general discussion -- cfr.\! eqn.\eqref{jjjz0021} --
the false-gauging differs from the corresponding true-gauging producing the SCFT with $\Delta=3$
and $F=SU(3)$ (entry 21) by a \emph{twisting} of the auto-equivalence $\varrho$ by an order-2 element $\theta$ of the flavor Weyl group of the ungauged model. This is the general pattern we find in examples.

\subparagraph{Entry 15.}
This SCFT is a $\Z_2$ false-gauging of $SU(2)$ with $N_f=4$ producing a SCFT with a rank-2 flavor symmetry, $SU(2)\times SU(2)$ or $SU(2)\times U(1)$.
As in the previous case, we twist the true 
$\Z_2$ gauging by an order-2 element
$\Pi\in\mathfrak{S}_3\subset \mathrm{Weyl}(Spin(8))$, so that we get the 2-CY 
category
\be
\sC_{\Pi TLT}\equiv D^b\mathsf{coh}\,\bX\big/(\Pi TLT)^\Z,
\ee
which satisfies condition 1). To fix conventions, we take $\Pi$ to be the
permutation of the first two special points $1\leftrightarrow 2$ in the weighted projective line $\bX$. The computations in \textsc{appendix \ref{kkkaqwe}} show that the following sheaves
\be\label{kkaqwe1}
T\cong\co\oplus\co(\vec x_4),\qquad
T^\prime\cong\co\oplus \cs_{4,1},
\ee
are examples of basic maximally rigid non cluster-tilting.
They have two direct summands, in agreement with the rank of the flavor group, cfr.\! eqn.\eqref{zzllq9}.

Let us see how the maximal rigid objects
work at the level of flavor symmetry.
From eqn.\eqref{zzllq9} we see that the fact that maximal rigid objects $T_\text{max}$ have ``too few'' direct summands to be cluster-tilting reflects the fact that the flavor group $F$ has a smaller rank; this reduction of rank being produced by the freezing of mass-deformations of the cover surface. In an intuitive language: the twist by the Weyl element $\theta$ breaks the flavor group to one of a smaller rank: if the twist models the coalescing $b$ fibers of type $I_1$ to make one of type $I_b$,
the `rank deficit', i.e.\! the number of missing direct summands in the pull-up
$F^\lambda T_\text{max}$ should be $b-1$.
The net effect of $\theta$ is to delete $b-1$ nodes from the Dynkin graph of the flavor group of the associated true-gauging $F_\text{true}$.
 \medskip

For instance, 
 the true gauging (before twisting by $\Pi$)
 associated to entry 15 produces a SCFT with $F=Spin(7)$ 
(i.e.\! entry 14). The twisting reflects the fiber collision $I_1+I_1\leadsto I_2$, so $b-1=1$. After the twisting we should get a flavor symmetry whose Dynkin graph is obtained from the $Spin(7)$ one by deleting a node.
Indeed, we get
$F^\prime=SU(2)\times SU(2)$ which corresponds to the Dynkin graph reduction
\be
\xymatrix{\bullet\ar@{-}[r]& \circ \ar@{=>}[r]& \bullet}\qquad \text{\begin{LARGE}$\leadsto$\end{LARGE}}\qquad \xymatrix{\bullet && \bullet}.
\ee

\subsection{The remaining 3 false-gaugings}

In the remaining 3 cases the 2-CY categories are not very explicit, and we shall 
be less detailed. From a conservative mathematical viewpoint, one may say that we check some necessary conditions for the existence of the categories, rather than proving their existence. In particular, the troublesome entries 6 and 8 pass the text, but the purported categories may well not exist.  

The first two SCFT, entry 5 and 6, may be seen as $\Z_2$ false-gaugings of Minahan-Nemeshanski $E_6$. Note that the parent MN $E_6$ model should be much more fine-tuned in entry 6 than in entry 5. In entry 5 only two of the eight poles of $\mathscr{J}\!(z)$ are made to collide (and the other six being set in a $\Z_2$ symmetric position); instead in entry 6 the remaining six poles coalesce in two triplets. The fine-tuning reduces the number of free parameter, hence the rank of the flavor group.
The third model, entry 8, may be either interpreted as a pseudo-gauging of $E_6$ MN or of
the $Sp(6)$, $\Delta=3$ SCFT.
Note that entry 6 should be seen as a variant of entry 8 and not as an independent SCFT. 
\medskip

As in the previous two model, these three gaugings should correspond to twisting the genuine gauging producing the $\Delta=6$ theory with $F=F_4$ by an order-2 element $\theta$ of the flavor Weyl group $\mathrm{Weyl}(E_6)$ of the parent theory. Since the $\Z_2$-symmetry to be gauged should be a subgroup of the homological $S$-duality $\bS$ acting on the flavor by outer automorphisms, and given that in this case
\be
\bS=\Z_2\ltimes \mathrm{Weyl}(E_6),
\ee
with $\Z_2$ the outer automorphism of the $E_6$ graph, we conclude:
\begin{quote}\it The allowed gaugings and pseudo-gauging of $E_6$ MN should correspond to the conjugacy classes of non-trivial involutions  
\be\label{ooop}
\theta\in\Z_2\ltimes \mathrm{Weyl}(E_6)
\ee 
such that $\theta\not\in \mathrm{Weyl}(E_6)$.
Let $\theta$ be such an involution. 
The centralizer $C_\theta$ of $\theta$ in $ \mathrm{Weyl}(E_6)$ is expected to have the form
\be\label{ecxcvg}
C_\theta=\prod_j \mathfrak{S}_{b_j} \times \mathrm{Weyl}(F).
\ee
\end{quote}
In eqn.\eqref{ecxcvg} the product of symmetric groups
corresponds to the indistinguishability of the fibers $I_1$ of the covering surface $\ce^\prime$ after being coalesced into groups 
of $b_j$ to form Kodaira fibers of type $I_{b_j}$ (fine-tuning). In table \ref{missing2} we list the 5 SCFT whose elliptic surfaces have symplectic double covers $\ce^\prime$ with fiber configurations of the form $\{IV;I_{a_1}, I^m_{a_2}\}$. In the last column we write the predicted centralizer $C_\theta$ according to \eqref{ecxcvg}.

\begin{table}
\begin{center}
\begin{tabular}{cccclccl}\hline\hline
$\#$ & $\Delta$ & $F$ & fibers $\ce$ & fibers $\ce^\prime$ & Galois & true? & $C_\theta$\\\hline
4 & 6 & $F_4$ & $\{II;I_0^*,I_1^4\}$ & $\{IV;I_1^8\}$ & $\Z_2$ & $\checkmark$ &
$\mathrm{Weyl}(F_4)$\\
5 & 6 & $Sp(6)$ & $\{II;I_1^*,I_1^3\}$ & $\{IV; I_2, I_1^6\}$ & $\Z_2$ && $\mathfrak{S}_2\times \mathrm{Weyl}(C_3)$\\
6 & 6 & $SU(2)$ & $\{II;I_1^*,I_3\}$ & $\{IV; I_2,I_3^2\}$ & $\Z_2$&& $(\mathfrak{S}_2\ltimes \mathfrak{S}_3^2)\times\mathrm{Weyl}(A_1)$\\
7 & 6 & $Sp(4)$ & $\{II;I_2^*,I_1^2\}$ & $\{IV; I_4, I_1^4\}$& $\Z_2$ & $\checkmark$
& $\mathfrak{S}_4\times \mathrm{Weyl}(C_2)$\\
8 & 6 & $SU(2)$ & $\{II;I_3^*,I_1\}$ & $\{IV; I_6, I_1^2\}$& $\Z_2$&&
$\mathfrak{S}_6\times \mathrm{Weyl}(A_1)$\\\hline\hline
\end{tabular}
\caption{\label{missing2} The five symplectic coverings with covering surface $\ce^\prime$ which is a fine-tuned limit of the $\Delta=3$ $F=E_6$ Minahan-Nemeshanski. In the last column $C_\theta$ is the expected centralizer of the pseudo-gauged $\Z_2$. A $\checkmark$ means that the false-gauging is actually a gauging.}
\end{center}
\end{table}

Using \textsc{mathematica} we compute the conjugacy classes of involutions $\theta\in\Z_2\ltimes \mathrm{Weyl}(E_6)$
with the required properties and their centralizers $C_\theta$. We found 4
such conjugacy classes with $C_\theta$:
\be
\mathrm{Weyl}(F_4),\quad \mathfrak{S}_2\times \mathrm{Weyl}(C_3),\quad
\mathfrak{S}_4\times \mathrm{Weyl}(C_2),
\quad
\mathfrak{S}_6\times \mathrm{Weyl}(A_1) 
\ee
in perfect agreement with the list of SCFT if one sees 6 and 8 as different realizations of same SCFT (in analogy with the two realizations of $\cn=2^*$). As already mentioned, these two models belong to the ugly class, and may not exist as QFT.

The flavor groups of the SCFT arising from pseudo-gaugings work as expected for a subfactor. For instance, for 
entry 5 we have to delete a single node from the Dynkin graph  
\be
\xymatrix{\circ\ar@{-}[r]&\bullet \ar@{=>}[r]& \bullet\ar@{-}[r]&\bullet}
\qquad \text{\begin{LARGE}$\leadsto$\end{LARGE}}\qquad
\xymatrix{\bullet \ar@{=>}[r]& \bullet\ar@{-}[r]&\bullet}.
\ee

\subsection{No more false-gaugings}

Since $E_7$ and $E_8$ have no gaugings, we have nothing to twist by suitable elements of the flavor Weyl group. Indeed, the $E_7$ and $E_8$ Lie algebras have no outer-automorphisms. Thus we conclude that there are no other false-gaugings besides those we have already listed.    Of course the absence of such (false)-gaugings already follows from the bound $\Delta\leq 6$ on the dimension of a rank-1 SCFT.

\section{Conclusions}

We have listed all 2-CY categories satisfying our \textbf{Criterion/Definition} on page \pageref{cride} and constructed quite explicitly most of them. We have verified that they are in one-to-one correspondence with the entries of table 1 of ref.\!\cite{Argyres3} (plus the four categories associated with asymptotically-free rank-1 QFTs). The correspondence is quite precise; when an entry is \emph{ugly} from the point of view advocated by the authors of \cite{Argyres1,Argyres2,Argyres3,Argyres4,Argyres5,Argyres6,Argyres7} looks also ugly from the categorical perspective, and for a parallel reason.
We have also checked the `functorial' correspondence with the classification of rational elliptic surfaces advocated in 
\cite{Caorsi:2018ahl}. 

%

In a companion paper we put the RT methods at work for the classification of $\cn=2$ SCFT in higher rank $k$.

\section*{Acknowledments}

We thank Michele Del Zotto for sharing with us his deep insights which helped so much to lay the foundations of the RT approach to $\cn=2$ classification. We are grateful to Bernhard Keller, Dirk Kussin and Mario Martone for interesting discussions and clarifications about their respective works.

\appendix

\part*{Technical appendices}

\section{`Reading between the lines' vs.\! Mordell-Weil torsion}
\label{mwtorsion}

\subparagraph{Line-operators and 't Hooft groups.} We recall that the line-operator \emph{classes} in a 4d Lagrangian QFT with semi-simple gauge group $G$ are labelled by elements of the Abelian group $\boldsymbol{Z}\oplus \boldsymbol{Z}^\vee$,
where $\boldsymbol{Z}$ is the center of the universal cover $\widetilde{G}$ of $G=\widetilde{G}/H$, $H\subset \boldsymbol{Z}$ (reading between the 4d  lines of gauge theories \cite{betweenthelines}). For simplicity we assume $G$ to be simple.
In this case $\boldsymbol{Z}$ is $k$-torsion, and we have a canonical skew-symmetric pairing
\be
\wedge^2(\boldsymbol{Z}\oplus \boldsymbol{Z}^\vee)\to \Z_k
\ee
which we call the Weil pairing. Two lines are mutually local iff their pairing vanishes.
If $k$ is a prime, $\boldsymbol{Z}\oplus \boldsymbol{Z}^\vee$ is a vector space $V$ over the field $\bF_k$ with $k$ elements, and the Weil pairing is a non-degenerate symplectic form. A maximal set of mutually local lines is then a Lagrangian subspace of $V$. Let $G=\widetilde{G}/H$
and $\boldsymbol{K}=\boldsymbol{Z}/H$ so that $\boldsymbol{K}$ is Abelian group of non-trivial gauge transformations.
Gauge invariance requires the set of allowed classes to be invariant under
$z\mapsto z+(\boldsymbol{K}\oplus 0)$ and locality 
 $\langle z, (\boldsymbol{K}\oplus 0)\rangle=0$.
 Therefore, we may define the line operator classes as elements of 
 \be\label{thoft}
 \boldsymbol{Z}/\boldsymbol{K}\oplus
 (\boldsymbol{Z}/\boldsymbol{K})^\vee\cong
 \boldsymbol{H}\oplus \boldsymbol{H}^\vee.
 \ee
 The group $\boldsymbol{H}$ should act trivially on the local fields, hence $\boldsymbol{H}\subset \boldsymbol{I}$, with $\boldsymbol{I}\subset \boldsymbol{Z}$ the isotropy group of the local fields in $\widetilde{G}$.
 We call the group in \eqref{thoft} the 't Hooft group: he introduced it to classify the phases of 4d non-Abelian gauge theory \cite{tHooft:1977nqb,tHooft:1979rtg}. 
In a Lagrangian theory this group is a
subgroup of $\boldsymbol{I}\oplus\boldsymbol{I}^\vee$. 
The group $\boldsymbol{I}\oplus\boldsymbol{I}^\vee$ is the largest 't Hooft group consistent with a given local Lagrangian. It is the group of line operators which may be defined for the given Lagrangian. The maximal set of line operator which we may define is a isotropic subspace under the Weil pairing $\langle-,-\rangle$; mathematically, we may always work on a `finite cover of our QFT' (not a QFT\,!!)
where the correlation functions are multivalued in which the group is the full $\boldsymbol{I}\oplus\boldsymbol{I}^\vee$; this is what 't Hooft did. 

The $S$-duality group acts on the 't Hooft group by electro-magnetic duality. In a theory obtained by gauging a discrete subgroup of $S$-dualities of a Lagrangian theory, the 't Hooft group is the appropriate quotient of the 't Hooft group of the parent theory.

\subparagraph{Torsion in the Grothendieck group $K_0(\sC)$ of a cluster category.}
If the $\cn=2$ theory is Lagrangian (and hence BPS-quiver \cite{Alim:2011kw}) we alway get \cite{Caorsi:2017bnp}
\be
K_0(\sC)=\text{(flavor weight lattice)}\oplus \boldsymbol{I}\oplus \boldsymbol{I}^\vee,
\ee 
so that we get the maximal 't Hooft group.
In other words, the cluster category always contains all possible line operators. 
We have a natural skew-symmetric form
$\wedge^2(\boldsymbol{I}\oplus\boldsymbol{I}^\vee)$ which in rank-1
takes values in $\Z_2$. It is identified with the 't Hooft form of the Lagrangian QFT.
E.g.\! in pure $SU(2)$
the UV category is the cluster category of coherent sheaves on $\bP^1$, $\sC=\sC(\bP^1)$ and $K_0(\sC)$ is
the group $\Z_2^{\oplus2}$ generated by the class $[\co]$ of the structure sheaf together with the class $[\cs_0]$ of the skyscraper at 0, subjected to the relations
$2[\co]=2[\cs_0]=0$. The pairing is just the Euler pairing in $\mathsf{coh}\,\bP^1$ taken mod 2
\be
\chi(\co,\cs_0)=1,\qquad \chi(\cs_0,\co)=-1.
\ee

\subparagraph{Mordell-Weil torsion in rational elliptic surfaces.}
On the basis of the correspondence suggested in the introduction, one also expects that the torsion part of the Mordell-Weil group $\mathsf{MW}(\ce)$ matches the torsion part of the Grothendieck group  $K_0(\sC)$.
Comparing table 1 of ref.\!\!\cite{Argyres3}
with the table of Mordell-Weil torsion \cite{bookMW}, we see that of the 28 fiber  configurations corresponding to SCFT only 4 have non trivial $\mathsf{MW}$ torsion, namely
\be
\begin{tabular}{c|cccc}\hline\hline
$\#$ & 24 & 25 & 16 & 17\\
fiber conf. & $\{I_0^*;I_2^3\}$ & $\{I_0^*;I_4,I_1^2\}$ & $\{III;I_2,I_1^*\}$ & $\{III;I_1,I_2^*\}$\\
$\mathsf{MW}$ torsion & $(\Z/2\Z)^{\oplus 2}$ & $\Z/2\Z$ &
$\Z/2\Z$ & $\Z/2\Z$\\\hline\hline 
\end{tabular}
\ee
If we add to the list the asymptotically-free models not covered in \cite{Argyres3} we get two additional fiber configurations with non-trivial $\mathsf{MW}$ torsion groups
\be\label{xxxx10m}
\{I_2^*;I_2^2\}:\ (\Z/2\Z)^{\oplus2},\qquad\{I_4^*;I_1^2\}:\ \Z/2\Z.
\ee
This fact has a simple interpretation. 

\subparagraph{Entry 24:} this SCFT is $SU(2)$ $\cn=2^*$ whose 't Hooft group is
$\Z_2^{\oplus 2}$  (for the gauge group $SU(2)/\Z_2\equiv PSU(2)$); the torsion part of the Grothedieck group of the cluster category is also $\Z_2^{\oplus2}$, see eqn.\eqref{llllc23}. Therefore, in this case, we have a perfect match of the three torsion groups: 't Hooft, Grothendieck, and Mordell-Weil.

Below we see $\Z_2^{\oplus2}$ as the vector space $\bF_2^2$ over the field with 2 elements $\bF_2$.
't Hooft electric/magnetic fluxes are elements of this space $(e,m)\in\bF^2$.

\subparagraph{Entries 16 and 17:} these are the two subtly different $\Z_2$ discrete gaugings of $SU(2)$ $\cn=2^*$. Gauging a subgroup of the $S$-duality group identifies 't Hooft's electric and magnetic fluxes, i.e.
\be
\begin{pmatrix}e \\ m\end{pmatrix} \sim
\begin{pmatrix} 0 & -1\\ 1 & 0\end{pmatrix}
\begin{pmatrix} e\\ m\end{pmatrix},\qquad \begin{pmatrix} e\\ m\end{pmatrix}\in\bF^2_2,
\ee
so that $e=m\in \bF_2$ and the 't Hooft group reduces to a single copy of $\Z_2$,
in full agreement with the Mordell-Weil torsion. Again we have full agreement of the three torsion groups.

Note that the same argument applied to entries 9 and 10, i.e.\! $\Z_3$ discrete gaugings of $\cn=2^*$ yields
\be
\begin{pmatrix}e \\ m\end{pmatrix} =
\begin{pmatrix} 1 & -1\\ 1 & 0\end{pmatrix}
\begin{pmatrix} e\\ m\end{pmatrix}\quad\Rightarrow\quad 0=e=m\in \bF_2,
\ee
and no Mordell-Weil torsion is expected, again in full agreement with the tables of \cite{bookMW}.

\subparagraph{Entry 25.} 
This is the second version of the SW geometry of $SU(2)$ $\cn=2^*$. The Mordell-Weil torsion $\Z/2\Z$ is a subgroup of the maximal 't Hooft group; the natural interpretation is that in this geometry one may realize only some line-operator.
$\Z/2\Z$ is the group of a maximal set of mutually-local line-operators, so one gets all the lines present in the QFT (as contrasted with a multivalued cover).  

\subparagraph{The $a\!f$ model $\{I_2^*;I_2^2\}$.} At its face value this may seem to be a zero-parameter specialization of $SU(2)$ with $N_f=2$, in fact it is pure $SU(2)$ which is double-covered by $SU(2)$ with $N_f=2$ \cite{galoiscover}. This can be seen in three different ways. The two SW curves are \eqref{ppppq12234}
\be
p^2=e^x-e^{-x},\qquad p^2=e^{2x}-e^{-2x},
\ee
and $2x\to x$ transforms one into the other.
At the quiver level: we have the unramified Galois $\Z_2$ cover \cite{Cecotti:2015qha}
\be
Q\equiv\begin{gathered}
\xymatrix{1\ar[r]\ar[d] & 2\\
4 &3\ar[u]\ar[l]}\end{gathered}\quad 
\xrightarrow{\quad Q\big/(i\,\sim\, i+2\bmod2)\quad }\quad \xymatrix{1\equiv 3\ar@/^0.8pc/@<0.1ex>[rr]\ar@/_0.8pc/@<-0.1ex>[rr] && 2\equiv4}.
\ee
In terms of SW geometries: the `usual' form for pure $SU(2)$, corresponding to fiber configuration $\{I_4^*,I_1^2\}$, is the realization of the pure $SU(2)$ monodromy on the twice-punctured Coulomb branch in the form
\be
\overset{\rm dyon}{(L)}\cdot\overset{\rm monopole}{(T^2LT^{-2})}\cdot\overset{\beta\ \text{at }\infty}{(-T^4)}=1,
\ee
where the two factors are the monodromies in $SL(2,\Z)$ (with $T,L$ the matrices in eqn.\eqref{modtras}) of, respectively, the 
massless dyon and the massless monopoles, and the last factor the \emph{inverse} of the monodromy at $\infty$ (weak coupling). Inserting in the above product the identity $-1=-L\cdot L^{-1}$ and suitable parenthesis we get
\be
\overset{I_2^*}{(-L^2)}\cdot \overset{I_2}{(L^{-1}T^2L)}\cdot \overset{I_2}{(T^2)}=1
\ee
where to each element of $SL(2,\Z)$ we associate the Kodaira fiber of its conjugacy class. Thus this is the monodromy relation for the surface we are interested in $\{I_2^*;I_2^2\}$. Now conjugate each matrix
\be
X\mapsto \begin{pmatrix} 1&0\\0&2\end{pmatrix}X\begin{pmatrix} 1&0\\0&2\end{pmatrix}^{\!\!-1}\quad\Rightarrow\qquad T^2\mapsto T,\quad
L\mapsto L^2
\ee
which, in the language of \cite{Argyres3}, is the operation of changing the charge normalization. One gets
\be
\overset{\beta\ \text{at }\infty}{(-L^4)}\cdot \overset{\rm monopole}{(L^{-2}TL^2)}\cdot \overset{\rm dyon}{(T)}=1
\ee
which is the standard presentation of pure $SU(2)$ in the Coulomb branch (by the braid relation \eqref{braid}, $L$, $T$ belong to the same conjugacy class).
Thus the SW geometry $\{I_2^*,I_2^2\}$ is just pure $SU(2)$ with electric charge in an unsual normalization. From \eqref{xxxx10m} we see that in this case the 't Hooft, Grothendieck, and Mordell-Weil torsion groups all agree.
In the usual realization $\{I_4^*,I_1^2\}$ only a local subset of line operators are explicitly realized.

\section{Categories:
notations, definitions, basic facts}\label{adetails}
\subsection{Basic definitions and theorems}

The set of morphisms $X\to Y$ in a category $\sC$ is written $\sC(X,Y)$;
the notation $\mathrm{Hom}(X,Y)$ is reserved to the special case when $\sC$ is Abelian hereditary (see definition below). We write $\mathrm{End}_\sC(X)$ for $\sC(X,X)$ and $\mathrm{End}(X)$ for $\mathrm{Hom}(X,X)$. 
\medskip 

We recall that a category $\sC$ is \emph{$\C$-linear} iff,  for all objects $X,Y\in\sC$\!, the set $\sC(X,Y)$ is a $\C$-space, composition is bilinear, and finite direct sums exist. $\sC$ is 
\emph{Hom-finite} if in addition $\dim \sC(X,Y)<\infty$; in this case $\mathrm{End}_\sC(X)$ is a finite-dimensional associative $\C$-algebra with 1 for all $X$.
$\sC$ is \emph{Krull-Schmidt} if it has split idempotents. Through the paper $D?\equiv \mathrm{Hom}_\C(?,\C)$ stands for duality of $\C$-spaces.

\textbf{Convention:} In this paper we use the term \emph{`category'}  as a synonym for \textit{`\,$\C$-linear, Hom-finite, Krull-Schmidt category'.} 
We may (and do) assume the category to be \emph{connected}.
All categories we use have these properties.
Under these assumptions, all objects in $\sC$ are finite direct sums of indecomposable objects, and the endo-algebra $\mathrm{End}_\sC(X)$ of all indecomposable $X$ is local.

We write $\mathsf{add}\,X$ for \textit{the additive closure of $X$ in $\sC$} namely the full subcategory of $\sC$ whose objects are direct sums of summands of $X$.
An object $X\in\sC$ is said to be \emph{basic} if its indecomposable direct summands are pairwise non-isomorphic; this is the same as saying that its endo-algebra $\mathrm{End}_\sC(X)$ is basic.
Now (see e.g.\! \cite{galoiscover})

\begin{thm}[Gabriel]\label{thmgab} Let $A$ be a finite-dimensional basic $\C$-algebra.
Then there exists a quiver $Q_A$, unique up to isomorphism, such that $A$ is isomorphic to $\C Q_A/I$, where $I$ is an ideal of the path algebra $\C Q_A$ contained in the square of the ideal generated by all arrows of $Q_A$.
There is a bijection $i\mapsto S_i$ between the nodes of $Q_A$ and the iso-classes of simple (right) $A$-modules. The number of arrows between nodes $i$ and $j$ is $\dim\mathrm{Ext}^1(S_j,S_i)$.
\end{thm}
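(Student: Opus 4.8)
The statement to prove is Gabriel's theorem (Theorem \ref{thmgab}): every finite-dimensional basic $\C$-algebra $A$ is isomorphic to $\C Q_A / I$ for a unique quiver $Q_A$, with $I \subset (\text{arrow ideal})^2$, together with the bijection between nodes and simple modules and the identification of arrows with $\mathrm{Ext}^1$ dimensions.

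The plan is to proceed in the standard way. First I would construct the quiver $Q_A$ directly from the algebra. Since $A$ is finite-dimensional, let $J = \mathrm{rad}(A)$ be its Jacobson radical, which is nilpotent; then $A/J$ is semisimple, and since $A$ is basic, $A/J \cong \C^n$ (a product of copies of $\C$, one for each isomorphism class of simple module, using that $\C$ is algebraically closed). Choose a complete set of primitive orthogonal idempotents $e_1, \dots, e_n$ with $1 = \sum e_i$; these correspond bijectively to the simple modules $S_i = e_i A / e_i J$. Now define the quiver $Q_A$ to have vertex set $\{1, \dots, n\}$ and to have $\dim_\C e_i (J/J^2) e_j$ arrows from $i$ to $j$ — equivalently, by a short computation with the radical, $\dim \mathrm{Ext}^1_A(S_i, S_j)$ arrows. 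The identification $e_i(J/J^2)e_j \cong \mathrm{Ext}^1(S_i,S_j)^*$ (up to a transpose convention) comes from the standard interpretation of $\mathrm{Ext}^1$ between simples via the minimal projective presentation of $S_i$: the arrows into/out of $i$ count the generators of $\mathrm{rad}(P_i)$.

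Second, I would build the algebra homomorphism $\varphi \colon \C Q_A \to A$. Send the vertex idempotent $e_i \in \C Q_A$ to $e_i \in A$; for each arrow $\alpha \colon i \to j$, lift a chosen basis of $e_j(J/J^2)e_i$ to elements of $e_j J e_i$ and send $\alpha$ to such a lift. This extends uniquely to an algebra map since $\C Q_A$ is the tensor/path algebra freely generated by vertices and arrows. The key claim is that $\varphi$ is surjective: this follows because $\varphi$ hits $A/J$ (the idempotents) and hits $J/J^2$ (the arrows by construction), and then Nakayama's lemma applied to the filtration $A \supset J \supset J^2 \supset \cdots$ (which terminates since $J$ is nilpotent) shows the image is all of $A$. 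Let $I = \ker \varphi$. Then $A \cong \C Q_A / I$. To see $I \subset (\text{arrow ideal})^2 =: R^2$: the composite $\C Q_A / R^2 \to A/J^2$ is an isomorphism in degrees $0$ and $1$ by construction (degree $0$: the idempotents are independent; degree $1$: the arrows map to a basis of $J/J^2$), hence $\ker\varphi$ contains no element of $R^0 \oplus R^1$ modulo $R^2$, i.e. $I \subseteq R^2$.

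Third, uniqueness of $Q_A$: if $A \cong \C Q/I$ with $I \subset R^2$, then the vertices of $Q$ are forced to be the primitive idempotents (hence $n$ is determined), and the number of arrows $i \to j$ equals $\dim_\C e_j(J/J^2)e_i$, which is an invariant of $A$; so $Q \cong Q_A$. The bijection of nodes with simple modules and the $\mathrm{Ext}^1$ count are then exactly the facts established in the construction. The main obstacle, and the only genuinely non-formal part, is the surjectivity of $\varphi$ together with the $\mathrm{Ext}^1$ identification — i.e. correctly handling the filtration by powers of the radical and the lifting of a basis of $J/J^2$; everything else is bookkeeping. I would also note explicitly the (harmless) transpose ambiguity in "arrows $i\to j$ versus $j\to i$" which is a matter of left/right module conventions, and state that the excerpt's convention is $\#\{\text{arrows } i \to j\} = \dim\mathrm{Ext}^1(S_j,S_i)$.
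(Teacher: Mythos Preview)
Your proof sketch is the standard textbook argument and is correct. Note, however, that the paper does not prove this statement at all: it is quoted as a classical result of Gabriel with a reference to \cite{galoiscover}, and no proof is given in the paper itself. So there is nothing to compare against; your write-up simply supplies the well-known proof that the paper cites.
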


Given a basic object $X\in\sC$ its endo-quiver is the quiver of its endo-algebra,
$Q_{\mathrm{End}_\sC(X)}$.

\subparagraph{Abelian and hereditary categories.} The category $\sC$ is \emph{Abelian} if all morphisms have kernels and cokernels. For an Abelian category we define the spaces $\mathrm{Ext}^k(X,Y)$, $k\in\Z$.
$\mathrm{Ext}^0(X,Y)\equiv\sC(X,Y)$, while
$\mathrm{Ext}^k(X,Y)=0$ for $k<0$.
An object $P$ (resp.\! $I$) is said to be
\emph{projective} (resp.\! \textit{injective})
iff $\mathrm{Ext}^1(P,X)=0$ (resp.\! $\mathrm{Ext}^1(X,I)=0$) for all $X$.

An Abelian category $\sC$ is said to be
\emph{hereditary} iff $\mathrm{Ext}^k(X,Y)=0$ for all $k>1$ and $X,Y\in\sC$. A finite-dimensional algebra is said to be hereditary iff the Abelian category of its finite-dimensional modules is hereditary. For a basic algebra as in \textbf{Theorem \ref{thmgab}} this happens iff the ideal $I\equiv0$.

An object in a hereditary category is said to be \emph{rigid} if $\mathrm{Ext}^1(X,X)=0$ and \textit{maximally rigid} if there is no $Y\not\in\mathsf{add}\,X$ so that
$X\oplus Y$ is rigid. A \emph{tilting object} $T$ in a hereditary category $\ch$ is a basic rigid object such that $\mathrm{Ext}^k(T,X)=0$ for all integers $k\geq 0$ implies $X=0$.
Let $T=\sum_i T_i\in\ch$ be tilting, and $A=\mathrm{End}_\ch(T)$ its endo-algebra.  The indecomposable summands $T_i$ are identified with the indecomposable projectives $P_i$ of the module category
$\mathsf{mod}\,A$ (and thus are in bijection with the nodes of the quiver $Q_A$). One has the derived equivalence
\be
D^b\ch\cong D^b\mathsf{mod}\,A.
\ee

A hereditary category $\ch$ has \textit{Serre duality} if there is an autoequivalence $\tau$ such that
\be\label{serre}
\mathrm{Ext}^1(X,Y)= D\mathrm{Hom}(Y, \tau X).
\ee
Since $\tau$ is an autoequivalence, a category with Serre duality has no non-zero projective and injective. Eqn.\eqref{serre}
is valid also in the module category of a hereditary algebra, with $\tau$ the AR translation, which however is not defined on the projectives, while $\tau^{-1}$ is not defined on the injectives.

\begin{thm}[Happel \cite{lenzing1}] A hereditary category with a tilting object is either the module category of a hereditary algebra,  $\C Q$, or the coherent sheaves $\mathsf{coh}\,\bX$ over a weighted projective line $\bX$
(see {\rm \S.\ref{Acoh}.}).
\end{thm}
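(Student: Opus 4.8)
The statement to prove is Happel's theorem: a hereditary category $\ch$ (connected, $\C$-linear, Hom-finite, Krull-Schmidt) with a tilting object is equivalent either to $\mathsf{mod}\,\C Q$ for an acyclic quiver $Q$, or to $\mathsf{coh}\,\bX$ for a weighted projective line $\bX$.

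The plan is to split into two cases according to whether $\ch$ has nonzero projective objects. First I would handle the case where $\ch$ has a nonzero projective object $P$. In a Hom-finite hereditary Krull-Schmidt category with a tilting object, one can show that the indecomposable projectives are finite in number, and that every object has a (two-term) projective resolution; the subcategory generated by the projectives is then equivalent to $\mathsf{mod}\,A$ where $A = \mathrm{End}_\ch(\bigoplus P_i)$ is a finite-dimensional algebra, hereditary because $\ch$ is, hence $A \cong \C Q$ for an acyclic quiver $Q$ by Gabriel's theorem (\textbf{Theorem \ref{thmgab}} with $I = 0$; acyclicity is forced by finite-dimensionality of $A$ together with $\mathrm{gl.dim}\,A \le 1$). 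A standard argument then shows $\ch \cong \mathsf{mod}\,\C Q$: the tilting object in $\ch$ transports to a tilting object over $\C Q$, and one checks the induced functor is an equivalence on the level of abelian categories, not just derived categories.

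The second and harder case is when $\ch$ has no nonzero projectives (equivalently, by the hereditary dual argument, no nonzero injectives). Then $\ch$ has Serre duality in the sense of equation \eqref{serre}, with Serre functor $S = \tau[1]$ on $D^b\ch$. Here one invokes the structure theory of hereditary categories with Serre duality and a tilting object: the derived category $D^b\ch$ has a tilting object whose endomorphism algebra is a \emph{canonical} (or squid) algebra, and by the known classification of such algebras — together with the fact that derived equivalence to $\mathsf{coh}\,\bX$ is detected by the shape of the tilting quiver and its relations — one concludes $D^b\ch \cong D^b\mathsf{coh}\,\bX$ for a weighted projective line $\bX$ of some weight type $(p_1,\dots,p_t)$. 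Finally, since both $\ch$ and $\mathsf{coh}\,\bX$ are the hearts of their respective standard $t$-structures, and a derived equivalence matching tilting objects preserves this heart structure, one upgrades the derived equivalence to an equivalence of abelian categories $\ch \cong \mathsf{coh}\,\bX$.

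The main obstacle is the no-projectives case: proving that $\ch$ must be \emph{equivalent} (not merely derived equivalent) to $\mathsf{coh}\,\bX$, and in particular ruling out exotic hereditary categories with Serre duality that have no tilting object of squid type. This is exactly where Happel's original argument (and the Lenzing–de la Peña refinements) does real work, using the AR-quiver combinatorics — the shape of the $\tau$-orbits and the existence of the tilting object force the AR-components to assemble into the tube-plus-$\Z A_\infty^\infty$ pattern characteristic of $\mathsf{coh}\,\bX$. I would cite \cite{lenzing1} for this structural input rather than reproduce it; the remaining steps (transporting tilting objects, identifying hearts) are formal. An alternative streamlined route, if one is willing to assume the classification of self-injective or canonical algebras, is to reduce the no-projectives case directly to: $\mathrm{End}_{D^b\ch}(T)$ is a canonical algebra for a suitable tilting $T$, hence $D^b\ch \cong D^b(\mathsf{coh}\,\bX)$, then pass to hearts — but the AR-theoretic verification that such a $T$ exists is itself the crux.
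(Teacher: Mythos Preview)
The paper does not prove this theorem at all: it is stated with attribution to Happel (via the reference \cite{lenzing1}) and used as background input, with no argument supplied. So there is nothing in the paper to compare your proposal against.

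That said, your sketch is a reasonable outline of the classical argument. A couple of small caveats. In the projectives case, you should be more careful: having \emph{a} nonzero projective in a hereditary category with a tilting object does not immediately give you enough projectives or finitely many indecomposable projectives; one typically argues instead that the tilting object itself is a projective generator (or reduces to a situation where the endomorphism algebra of the tilting object is hereditary and then applies Gabriel). In the no-projectives case you correctly identify the real content as the AR-theoretic structure result, and you are right to cite \cite{lenzing1} (or Happel's original paper, or Happel--Reiten) rather than attempt it from scratch. One point to tighten: a derived equivalence $D^b\ch\cong D^b\mathsf{coh}\,\bX$ does \emph{not} in general send the standard heart to the standard heart, so ``passing to hearts'' is not automatic; the actual argument constructs the equivalence at the abelian level via the torsion pair determined by the tilting object, or uses that both categories are determined up to equivalence by their AR-quivers in this setting.
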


\subparagraph{Triangle categories.} 
A category $\sC$ is \emph{triangulated} if it has a suspension autequivalence $\Sigma$ (the `shift') and a set of distinguished triangles
\be\label{tria1}
A\to B\to C\to \Sigma A
\ee
satisfying a certain set of axioms, see e.g.\! ref.\!\cite{triangulated}.
In particular, if \eqref{tria1} is a distinguished triangle so is its rotation
$B\to C\to \Sigma A\to \Sigma B$. We also write $X[n]$ for $\Sigma^n X$. 
A typical example of triangulated category is the bounded derived category $D^b(\ca)$
of an Abelian category $\ca$ \cite{triangulated}.
If $\ca$ is in addition hereditary,
$D^b(\ca)$ is the \emph{repetitive category} of $\ca$ \cite{lenzing1} i.e.\! the category $\sD$ whose indecomposables have the form $X[n]$,
$X\in\ca$, $n\in\Z$ and morphism spaces
$\sD(X[i],Y[j])\cong \mathrm{Ext}^{j-i}(X,Y)$. The AR translation $\tau$ extends to an autoequivalence of $\sD$ so that
\be\label{llxcw}
\sD(X,Y[1])\cong D \sD(Y, \tau X).
\ee

\begin{defn} A triangle category $\sC$ is \textit{$n$-Calabi-Yau} ($n$-CY) if there is bifunctorial isomorphisms \cite{CYKeller}
\be
D\sC(X,Y)\cong \sC(Y,X[n]).
\ee
\end{defn}

We are mainly interested in the case $n=2$.
In this case we have the symmetry $\sC(X,Y[1])\cong D\sC(Y,X[1])$.

\begin{defn} Let $\sC$ be 2-CY. An object $X\in\sC$ is \emph{rigid} if $\sC(X,X[1])=0$, and \textit{maximally rigid} iff it is rigid and $X\oplus Y$ rigid implies $Y\in\mathsf{add}\,X$. An object
$T\in \sC$ is \textit{cluster-tilting} iff it is basic, rigid $\sC(T,T[1])=0$, and
\be
\mathsf{add}\,T=\big\{X\in\sC\;\big|\; \sC(X,T[1])=0\big\}.
\ee 
\end{defn}

A cluster-tilting object is maximally symmetric, but the converse is often false see \cite{bikr} for counter-examples.
There are 2-CY categories without cluster-tilting objects, and even 2-CY categories without any non-zero rigid object. Such categories are also relevant for our purposes.
\medskip

We list some basic results:

\begin{pro}[BIRS \cite{BIRS}] $\sC$ a triangulated 2-CY category. \begin{itemize}
\item[(a)] Let $T$ be a cluster-tilting object. Then for all $X\in\sC$ there exist triangles
$T_1\to T_0\to X$ and $X\to T^\prime_0\to T^\prime_1$ with $T_i$, $T^\prime_i$ in $\mathsf{add}\,T$;
\item[(b)] Let $T$ maximal rigid. Then for all \emph{rigid} $X\in\sC$ (a) holds. 
\end{itemize}
\end{pro}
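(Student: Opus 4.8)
The plan is to prove both parts with a single construction — replace $X$ by the mapping cone of an $\mathsf{add}\,T$-approximation — and to obtain the ``second'' triangle for free by passing to the opposite category, which is again triangulated and $2$-CY (now with shift $\Sigma^{-1}$) and in which the notions cluster-tilting, maximal rigid and rigid are all self-dual. First I would record that $\mathsf{add}\,T$-approximations exist: since $\sC$ is Hom-finite, $\sC(T,X)$ is a finite-dimensional $\mathrm{End}_\sC(T)$-module, and a finite generating family $g_1,\dots,g_n$ assembles into a right $\mathsf{add}\,T$-approximation $f=(g_i)\colon T_0:=T^{\oplus n}\to X$ (every map $T'\to X$ with $T'\in\mathsf{add}\,T$ factors through $f$); dually $\sC(X,T)$ gives a left approximation $X\to T^0$.

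For part (a), complete $f\colon T_0\to X$ to a triangle $T_0\xrightarrow{f}X\xrightarrow{g}Y\xrightarrow{h}T_0[1]$. Applying $\sC(T,-)$: the approximation property makes $\sC(T,f)$ surjective, and $\sC(T,T_0[1])=0$ by rigidity of $T$, so $\sC(T,Y)=0$. The $2$-CY isomorphism then gives $\sC(Y[-1],T[1])\cong D\,\sC(T,Y)=0$; since $T$ is cluster-tilting, $\mathsf{add}\,T=\{Z:\sC(Z,T[1])=0\}$, hence $Y[-1]\in\mathsf{add}\,T$. Rotating the triangle to $Y[-1]\to T_0\to X\to Y$ yields $T_1\to T_0\to X\to T_1[1]$ with $T_1:=Y[-1]$, and the other triangle follows by running the same argument in $\sC^{\mathrm{op}}$ (equivalently, starting from a left approximation $X\to T^0$).

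For part (b) the same triangle works, but with $T$ only maximal rigid one must upgrade $\sC(Y[-1],T[1])=0$ to $Y[-1]\in\mathsf{add}\,T$ by showing $T\oplus Y[-1]$ is rigid. Three of the four required vanishings — $\sC(T,T[1])=0$, $\sC(Y[-1],T[1])=0$, and $\sC(T,Y)\cong D\,\sC(Y[-1],T[1])=0$ — are already in hand; the remaining one is $\sC(Y,Y[1])=0$, and this is where the hypothesis that $X$ is rigid enters. Applying $\sC(Y,-)$ and $\sC(-,X[1])$ to suitable shifts of the triangle, and using $\sC(Y,T_0[2])\cong D\,\sC(T_0,Y)=0$ together with $\sC(X,X[1])=0$, one gets surjections exhibiting $\sC(Y,Y[1])$ as a quotient of $\sC(T_0,X)$ through the map $\phi\mapsto (g\circ\phi)[1]\circ h$. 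But any $\phi\colon T_0\to X$ factors as $f\circ\rho$ (approximation property) and $g\circ f=0$ (consecutive arrows in a triangle), so this map is identically zero; hence $\sC(Y,Y[1])=0$, maximality of $T$ forces $Y[-1]\in\mathsf{add}\,T$, and one concludes as in part (a), again using $\sC^{\mathrm{op}}$ for the second triangle.

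The routine ingredients are the repeated use of rigidity, Serre ($2$-CY) duality, and the approximation property; the one step that needs care is the homological chase in (b) — one must line up the long exact sequences and shifts precisely so that $\sC(Y,Y[1])$ appears simultaneously as a surjective image of $\sC(T_0,X)$ and as the image of $\phi\mapsto g\circ\phi$, which vanishes. I expect that bookkeeping to be the main (though modest) obstacle.
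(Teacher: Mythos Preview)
The paper does not give its own proof of this proposition: it is simply quoted from \cite{BIRS} and used as a black box. Your argument is correct and is essentially the standard BIRS proof --- take a right $\mathsf{add}\,T$-approximation, complete to a triangle, and use the approximation property together with rigidity and $2$-CY duality to force the cone into $\mathsf{add}\,T[1]$; for (b) the extra homological chase showing $\sC(Y,Y[1])=0$ via the surjection from $\sC(T_0,X)$ and the factorization $\phi=f\circ\rho$ is exactly the trick used in the literature. The only cosmetic point: in your list of ``three of the four required vanishings'' you obtain $\sC(T,Y)=0$ directly from the long exact sequence, and $\sC(Y[-1],T[1])=0$ is then its Serre dual, not the other way round --- but since each implies the other this does not affect the argument.
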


\begin{pro}[Iyama-Yoshino \cite{IH}]\label{IYpro} Let the 2-CY category $\sC$ have cluster-tilting objects.
1) all such objects have the same number $n$ of indecomposable summands. 2) all maximal rigid objects are cluster-tilting. 3) let $\hat T\equiv \bigoplus_{i=1}^{n-1}T_i$ to be an almost-maximal rigid object.
Then there exist precisely \emph{two} indecomposables, $T_n$ and $T_n^\prime$
such that the objects
\be\label{kkkaqw12}
T=\hat T\oplus T_n,\qquad T^\prime=\hat T\oplus T_n^\prime,
\ee 
are cluster-tilting. Replacing the object $T$ by the object $T^\prime$ is called \emph{mutation at $T_n$.} 
\end{pro}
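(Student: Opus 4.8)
The engine of all three statements is the \emph{exchange triangle}, supplemented by $2$-CY duality; I sketch the plan following \cite{IH}.

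Begin with part (3). Fix a cluster-tilting object written as $T=\hat T\oplus T_n$ with $T_n$ indecomposable and $\hat T$ the given almost-maximal rigid object. Since $\sC$ is Hom-finite and Krull--Schmidt, $\mathsf{add}\,\hat T$ is functorially finite, so $T_n$ has a minimal right $\mathsf{add}\,\hat T$-approximation $f\colon B\to T_n$; complete it to a triangle $T_n^*\to B\xrightarrow{f}T_n\to T_n^*[1]$. The plan is to prove: (i) $T':=\hat T\oplus T_n^*$ is cluster-tilting; (ii) $T_n^*$ is indecomposable and $T_n^*\not\cong T_n$; (iii) any indecomposable $X$ with $\hat T\oplus X$ cluster-tilting is $T_n$ or $T_n^*$. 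For (i), the vanishing $\sC(\hat T,\hat T[1])=0$ is given; $\sC(\hat T,T_n^*[1])=0$ and --- via the $2$-CY isomorphism $\sC(T_n^*,\hat T[1])\cong D\,\sC(\hat T,T_n^*[1])$ --- also $\sC(T_n^*,\hat T[1])=0$ follow by applying $\sC(\hat T,-)$ to the exchange triangle, using that $f$ is an $\mathsf{add}\,\hat T$-approximation together with $\sC(\hat T,T[1])=0$; the self-extension $\sC(T_n^*,T_n^*[1])=0$ is the delicate point, extracted from the octahedral axiom after lifting a self-map of $T_n^*$ along $f$ and invoking rigidity of $T$. The defining condition $\mathsf{add}\,T'=\{X:\sC(X,T'[1])=0\}$ is then obtained by resolving an arbitrary such $X$ through $\mathsf{add}\,T$ with the BIRS approximation triangles above and pushing the vanishing down; this is where functorial finiteness of $\mathsf{add}\,T'$ genuinely enters. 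For (ii), minimality of $f$ forces $B$ to have no summand in $\mathsf{add}\,T_n$, which makes $T_n^*$ indecomposable and distinct from $T_n$. For (iii), given such an $X$ one writes a triangle $\hat T_1\to\hat T_0\to X$ with $\hat T_i\in\mathsf{add}\,\hat T$, compares it with the exchange triangle via a morphism of triangles, and concludes that either $X\in\mathsf{add}\,T_n$ or not, leaving exactly the two complements.

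For part (1), fix one cluster-tilting $T$ with indecomposable summands $T_1,\dots,T_n$ and, using the BIRS triangles $T_1^X\to T_0^X\to X$ with $T_i^X\in\mathsf{add}\,T$, define the index $\mathrm{ind}_T(X)=[T_0^X]-[T_1^X]\in K_0(\mathsf{add}\,T)\cong\Z^n$. The plan is to show that for any cluster-tilting $T'=\bigoplus_j T_j'$ the classes $\mathrm{ind}_T(T_j')$ form a $\Z$-basis of $K_0(\mathsf{add}\,T)$: additivity of the index along exchange triangles (up to sign), combined with part (3), shows that a single mutation changes this family by an elementary transformation, while injectivity of the index on rigid objects modulo $\mathsf{add}\,T$ pins it down; it then follows that $T'$ has exactly $n$ indecomposable summands. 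For part (2), let $M$ be maximal rigid and $X$ with $\sC(X,M[1])=0$; by $2$-CY duality also $\sC(M,X[1])=0$. Using a cluster-tilting object $T$ (available by hypothesis) and the BIRS approximations of the rigid object $X$ on both sides, an octahedral computation parallel to the one in part (3) yields $\sC(X,X[1])=0$, so $M\oplus X$ is rigid and hence $X\in\mathsf{add}\,M$ by maximality; thus $\mathsf{add}\,M=\{X:\sC(X,M[1])=0\}$ and $M$ is cluster-tilting. The hard part will be the two octahedral computations --- the vanishing $\sC(T_n^*,T_n^*[1])=0$ in part (3) and the self-extension vanishing in part (2) --- which require carefully chaining several distinguished triangles and invoking rigidity at each stage; the approximation clause $\mathsf{add}\,T'=\{X:\sC(X,T'[1])=0\}$ and the index-basis argument for part (1) are the other genuinely non-formal ingredients, with the rest being bookkeeping with $2$-CY duality and minimal approximations.
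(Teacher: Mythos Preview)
The paper does not give its own proof of this proposition: it is stated as a result quoted from Iyama--Yoshino \cite{IH} in the appendix on background definitions, and the text immediately moves on to the next quoted result without argument. Your sketch follows the standard approach of that reference (exchange triangles from minimal $\mathsf{add}\,\hat T$-approximations for part (3), the index/$K_0$ argument for part (1), and the approximation-plus-octahedron reduction for part (2)), so there is nothing in the paper to compare it against beyond the citation itself.
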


\begin{thm}[Keller-Reiten \cite{KR}] Let $T$ be a cluster-tilting object of the 2-CY category $\sC$, and $A\equiv \mathrm{End}_\sC(T)$ its endo-algebra. The functor
\be
F_T\colon \sC\to\mathsf{mod}\,A,\qquad X\mapsto \sC(T,X)
\ee
in an equivalence of categories
\be
\sC/(\mathsf{add}\,T[1]) \cong \mathsf{mod}\,A,
\ee
where $(\mathsf{add}\,T[1])$ is the ideal of morphisms factoring through objects in 
$\mathsf{add}\,T[1]$.
\end{thm}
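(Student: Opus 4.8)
Write $A=\mathrm{End}_\sC(T)$, $F_T=\sC(T,-)$, and $\sI=(\mathsf{add}\,T[1])$ for the ideal of morphisms factoring through $\mathsf{add}\,T[1]$. The plan is to produce the induced functor $\bar F_T\colon\sC/\sI\to\mathsf{mod}\,A$ and check it is well defined, faithful, full, and dense, using only that $\sC$ is Hom-finite, $2$-CY, that $T$ is cluster-tilting (so $\sC(T,T[1])=0$), and the approximation triangles from the cited Proposition (BIRS). First I would record two preliminary facts. (i) $F_T$ takes values in finite-dimensional right $A$-modules (with $A$ acting by precomposition), it is homological since it is representable, and it kills $\sI$: a morphism through $T'\in\mathsf{add}\,T[1]$ is sent to one through $\sC(T,T')=0$ by rigidity; hence $F_T$ descends to $\bar F_T$. (ii) The restriction $\sC(T,-)|_{\mathsf{add}\,T}$ is an additive equivalence onto $\mathsf{proj}\,A$, the category of finitely generated projective right $A$-modules — this is the Yoneda-type statement built into $A=\mathrm{End}_\sC(T)$ — and in particular it is fully faithful.

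Next I would exhibit projective presentations. For $X\in\sC$ pick, by BIRS(a), a triangle $T_1\to T_0\xrightarrow{\alpha}X\xrightarrow{\gamma}T_1[1]$ with $T_0,T_1\in\mathsf{add}\,T$. Applying the homological functor $F_T$ and using $\sC(T,T_1[1])=0$ gives an exact sequence $F_T(T_1)\to F_T(T_0)\to F_T(X)\to 0$, i.e. a projective presentation of $F_T(X)$; in particular $F_T(X)$ is finitely generated, so $\bar F_T$ indeed lands in $\mathsf{mod}\,A$. Faithfulness is then quick: if $F_T(f)=0$ for $f\colon X\to Y$, then $\sC(T,f\alpha)=F_T(f)\circ F_T(\alpha)=0$, and since $T_0\in\mathsf{add}\,T$ and $\sC(T,-)|_{\mathsf{add}\,T}$ is faithful we get $f\alpha=0$; thus $f$ factors through $\gamma$, i.e. $f=g\gamma$ with $g\colon T_1[1]\to Y$, and $T_1[1]\in\mathsf{add}\,T[1]$ shows $f\in\sI$, so $\bar F_T$ is faithful.

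For fullness, take $\phi\colon F_T(X)\to F_T(Y)$ and approximation triangles $T_1\xrightarrow{u}T_0\xrightarrow{\alpha}X\to T_1[1]$ and $S_1\xrightarrow{v}S_0\xrightarrow{\beta}Y\to S_1[1]$. Because $F_T(T_0)$ is projective and $F_T(\beta)$ is epi, $\phi\circ F_T(\alpha)$ lifts to $\psi_0\colon F_T(T_0)\to F_T(S_0)$; the composite $F_T(u)$ followed by $\psi_0$ then lands in the image of $F_T(v)$, so it lifts to $\psi_1\colon F_T(T_1)\to F_T(S_1)$. Via the equivalence $\mathsf{add}\,T\simeq\mathsf{proj}\,A$ these come from genuine morphisms $t_0\colon T_0\to S_0$, $t_1\colon T_1\to S_1$ with $v t_1=t_0 u$; by the triangle axiom the commuting left square extends to a morphism of triangles, in particular to $h\colon X\to Y$. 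Finally $F_T(h)\circ F_T(\alpha)=F_T(\beta)\circ\psi_0=\phi\circ F_T(\alpha)$, and $F_T(\alpha)$ is epi, whence $F_T(h)=\phi$. Density follows by the same device run backwards: given $M\in\mathsf{mod}\,A$ with presentation $P_1\xrightarrow{p}P_0\to M\to 0$, transport $p$ to a map $\hat p\colon T_1\to T_0$ in $\mathsf{add}\,T$, cone it to a triangle $T_1\xrightarrow{\hat p}T_0\to X\to T_1[1]$, and conclude as above that $F_T(X)=\mathrm{coker}\,F_T(\hat p)\cong\mathrm{coker}\,p\cong M$.

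The main obstacle is the fullness step: one must chain two lifts along projective presentations and then invoke the \emph{non-functorial} completion of a commuting square to a morphism of triangles, and I would be careful to conclude $F_T(h)=\phi$ exactly (not merely up to the ideal) by exploiting that $F_T(\alpha)$ is an epimorphism and that $F_T$ already annihilates $\sI$. The only input beyond Hom-finiteness, the $2$-CY property, and rigidity is the existence of the approximation triangles, which is precisely the quoted Proposition of \cite{BIRS}; everything else is triangulated-category bookkeeping and the elementary equivalence $\mathsf{add}\,T\simeq\mathsf{proj}\,A$.
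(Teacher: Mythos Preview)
The paper does not give its own proof of this statement: it is quoted as a theorem of Keller--Reiten with a citation to \cite{KR}, placed in the appendix among reviewed background results, immediately after the BIRS approximation proposition. So there is nothing to compare against on the paper's side.

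That said, your argument is correct and is essentially the standard proof. The ingredients you use are exactly the right ones: rigidity $\sC(T,T[1])=0$ to see that $F_T$ kills the ideal $(\mathsf{add}\,T[1])$ and that the long exact sequence truncates; the Yoneda equivalence $\mathsf{add}\,T\simeq\mathsf{proj}\,A$; and the BIRS approximation triangles to produce projective presentations. Your faithfulness, fullness, and density arguments are the canonical ones. The one point worth a remark is in the fullness step: when you lift $\psi_0\circ F_T(u)$ through $F_T(v)$ you implicitly use that $\mathrm{im}\,F_T(v)=\ker F_T(\beta)$, which does follow from the long exact sequence of the triangle $S_1\to S_0\to Y\to S_1[1]$ applied to $\sC(T,-)$; and the passage from $F_T(v)\psi_1=\psi_0F_T(u)$ to $vt_1=t_0u$ uses faithfulness of $F_T$ on $\mathsf{add}\,T$, which you have. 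Nothing is missing.
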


If $Q_A$ is finite, we say that the pair $(\sC,T)$ is a \textit{2-CY realization of the quiver $Q_A$.} Note that the nodes of $Q_A$ are in bijection with the summands of $T$. If the quiver $Q_A$ has
no loop at the $n$-th node, it is related to the quiver of $Q_{A^\prime}$ of the mutated cluster-tilting object $T\leadsto T^\prime$ by the \textit{elementary quiver mutation at node $n$} in the sense of Fomin-Zelevinski \cite{quivermutation}. 

On the modules of the algebra
$A\equiv Q_A/I_A$ we have an
important antisymmetric form:
\begin{pro}[Palu \cite{palu}] Let $\sC$ be 2-CY with cluster-tilting object $T$ and $A$ its endo-algebra.
The antisymmetric form
\be\label{Dform}
\begin{split}
\langle X, Y\rangle_D= &\dim \mathrm{Hom}_{\mathsf{mod}A}(X,Y)-\dim \mathrm{Ext}^1_{\mathsf{mod}A}(X,Y)-\\
&-\dim \mathrm{Hom}_{\mathsf{mod}A}(Y,X)+
\mathrm{Ext}^1_{\mathsf{mod}A}(Y,X)
\end{split}
\ee
is well-defined on the Grothendieck group $K_0(\mathsf{mod}A)$.
\end{pro}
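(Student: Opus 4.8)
The claim to prove is Palu's proposition: for a $2$-CY triangulated category $\sC$ with cluster-tilting object $T$ and $A=\mathrm{End}_\sC(T)$, the bilinear form $\langle X,Y\rangle_D$ defined by the alternating sum of $\dim\mathrm{Hom}$ and $\dim\mathrm{Ext}^1$ terms descends to a well-defined antisymmetric form on $K_0(\mathsf{mod}\,A)$.

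\medskip

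\noindent\textbf{Plan.} The form is manifestly antisymmetric by construction, so the content is that it is \emph{well-defined} on $K_0$, i.e.\ additive on short exact sequences in each variable. The strategy is to rewrite $\langle -,-\rangle_D$ intrinsically in $\sC$, where the $2$-CY property makes additivity transparent, and then transport the statement back to $\mathsf{mod}\,A$ via the Keller--Reiten equivalence $F_T\colon \sC/(\mathsf{add}\,T[1])\xrightarrow{\sim}\mathsf{mod}\,A$. Concretely: first I would recall that for $X\in\mathsf{mod}\,A$ one may lift to an object $\tilde X\in\sC$ with $F_T\tilde X\cong X$, using the BIRS triangles $T_1\to T_0\to \tilde X\to T_1[1]$ with $T_i\in\mathsf{add}\,T$. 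Second, I would express $\dim\mathrm{Hom}_{\mathsf{mod}\,A}(X,Y)$ and $\dim\mathrm{Ext}^1_{\mathsf{mod}\,A}(X,Y)$ in terms of $\sC(\tilde X,\tilde Y)$, $\sC(\tilde X,\tilde Y[1])$ and correction terms coming from morphisms factoring through $\mathsf{add}\,T$ and $\mathsf{add}\,T[1]$; this is the standard computation identifying $\mathrm{Ext}^1_{\mathsf{mod}\,A}$ with a subquotient of $\sC(\tilde X,\tilde Y[1])$ (Keller--Reiten / Palu). Third, upon forming the antisymmetric combination, the correction terms involving $\mathsf{add}\,T$-factorizations should cancel in pairs — this is where the $2$-CY duality $D\sC(\tilde X,\tilde Y[1])\cong\sC(\tilde Y,\tilde X[1])$ and the equality $\dim\sC(\tilde X,T[1])=0$ type relations do the work — leaving
\be
\langle X,Y\rangle_D=\dim\sC(\tilde X,\tilde Y)-\dim\sC(\tilde X,\tilde Y[1])-\dim\sC(\tilde Y,\tilde X)+\dim\sC(\tilde Y,\tilde X[1]),
\ee
which by $2$-CY symmetry simplifies further (the $\mathrm{Hom}$ and $\mathrm{Ext}^1$ terms pair up). Finally, the right-hand side is visibly additive on triangles in $\sC$ in each variable (Euler-characteristic-type argument: a triangle $\tilde X'\to\tilde X\to\tilde X''\to\tilde X'[1]$ induces long exact sequences in $\sC(-,\tilde Y)$ and $\sC(\tilde Y,-)$, and alternating sums of dimensions along a long exact sequence vanish), and short exact sequences of $A$-modules lift to triangles in $\sC$ up to summands in $\mathsf{add}\,T$ which contribute trivially. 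Hence $\langle -,-\rangle_D$ factors through $K_0(\mathsf{mod}\,A)$.

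\medskip

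\noindent\textbf{Main obstacle.} The delicate step is the bookkeeping in the second and third points: the naive counts $\dim\mathrm{Hom}_{\mathsf{mod}\,A}$ and $\dim\mathrm{Ext}^1_{\mathsf{mod}\,A}$ differ from the ambient $\sC$-counts by terms measuring maps that factor through $\mathsf{add}\,T$ or $\mathsf{add}\,T[1]$, and one must check carefully — using that $T$ is cluster-tilting, i.e.\ $\mathsf{add}\,T=\{Z:\sC(Z,T[1])=0\}=\{Z:\sC(T,Z[1])=0\}$ — that these correction terms enter the antisymmetrized expression symmetrically and cancel. Getting the exactness of the relevant long exact sequences right at the boundary (the $T_0,T_1$ in the BIRS presentation, and whether one needs projective resolutions of length $\le 2$ over $A$, which holds since $\mathrm{gl.dim}\,A\le 3$ in the $2$-CY setting or by the explicit triangle) is where a careful proof spends its effort; everything else is formal. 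I would lean on \cite{palu} and \cite{KR} for the precise form of these cancellations rather than re-deriving them, and present the additivity conclusion as the substantive new observation.
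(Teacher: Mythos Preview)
The paper does not give its own proof of this proposition: it is stated as a cited result of Palu \cite{palu} in the background appendix, with no argument supplied. So there is no in-paper proof to compare against.

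That said, your outline is essentially Palu's original argument. The key mechanism you identify --- lifting $A$-modules to $\sC$ via the Keller--Reiten equivalence, expressing the $\mathrm{Hom}$/$\mathrm{Ext}^1$ terms in $\mathsf{mod}\,A$ via $\sC$-morphism spaces plus ``factoring through $\mathsf{add}\,T[1]$'' corrections, and observing that the $2$-CY duality forces these corrections to cancel upon antisymmetrization --- is exactly right. One small caution: you should not invoke a global bound on $\mathrm{gl.dim}\,A$ (it need not be finite for general Jacobian algebras); the argument works because the BIRS triangle $T_1\to T_0\to\tilde X$ furnishes an $\mathsf{add}\,T$-presentation regardless, and the long-exact-sequence alternating sum uses only Hom-finiteness and the triangle axioms, not any homological dimension bound. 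With that adjustment your sketch is sound.
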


For the physical applications a crucial fact is the \textit{Calabi-Yau reduction:}
\begin{thm}[Iyama-Yoshino \cite{IH}] Let $D$ be a  non-zero rigid object in a 2-CY category $\sC$, and consider the full subcategory
\be
{}^\perp D[1]\overset{\text{def}}=\big\{X\in\sC\;\big|\; \sC(X,D[1])=0\big\}.
\ee 
The factor category
\be
\sU_D= {}^\perp D[1]/\mathsf{add}\,D
\ee
 is triangulated 2-CY . The cluster-tilting objects in $\sU_D$ are in one-one correspondence with the cluster-tilting objects of $\sC$ which have $D$ as a summand.
\end{thm}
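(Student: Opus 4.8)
This is the \emph{Calabi--Yau reduction} of Iyama--Yoshino, and I would follow the strategy of \cite{IH}, which breaks into three logically separate parts: (1) endow $\sU_D$ with a triangulated structure; (2) check it is $2$-CY; (3) establish the cluster-tilting bijection. A useful preliminary is that, since $\sC$ is $2$-CY and $D$ is rigid, the defining subcategory $\mathcal{Z}:={}^\perp D[1]$ is ``self-dual'' under the shift: $\sC(X,D[1])\cong D\,\sC(D,X[1])$, so $X\in\mathcal{Z}$ iff $\sC(D,X[1])=0$; in particular $\sC(\mathcal{Z},\mathcal{D}[1])=\sC(\mathcal{D},\mathcal{Z}[1])=0$, where $\mathcal{D}:=\mathsf{add}\,D$. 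One also notes that, $\sC$ being Hom-finite Krull--Schmidt, every object has left and right $\mathcal{D}$-approximations, so $\mathcal{D}$ is functorially finite.

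\emph{Triangulated structure.} For $X\in\mathcal{Z}$ choose a triangle $X\xrightarrow{\alpha}D^X\to X\langle 1\rangle\to X[1]$ with $\alpha$ a left $\mathcal{D}$-approximation. First check $X\langle 1\rangle\in\mathcal{Z}$: applying $\sC(-,D[1])$ to the triangle and using that $\sC(D^X,D)\to\sC(X,D)$ is surjective (approximation property) together with $\sC(X,D[1])=0$ kills $\sC(X\langle1\rangle,D[1])$. Then show $X\langle1\rangle$ is independent of the choices up to isomorphism in $\sU_D=\mathcal{Z}/\mathcal{D}$, so that $\langle1\rangle$ is a well-defined autoequivalence of $\sU_D$ (its quasi-inverse $\langle-1\rangle$ comes dually from right $\mathcal{D}$-approximations). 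Declare a candidate triangle $X\to Y\to Z\to X\langle1\rangle$ in $\sU_D$ to be \emph{distinguished} when it is isomorphic, as a diagram in $\sU_D$, to one obtained from a genuine $\sC$-triangle $X\to Y\to Z'\to X[1]$ with $X,Y\in\mathcal{Z}$ by completing to the approximation triangle of $X$. Then verify TR1--TR4 for $(\sU_D,\langle1\rangle)$; the octahedral axiom in particular requires a simultaneous use of an octahedron in $\sC$ and several approximation triangles, chosen so that the resulting objects again lie in $\mathcal{Z}$ up to summands in $\mathcal{D}$.

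\emph{$2$-CY property and the cluster-tilting correspondence.} For $X,Y\in\mathcal{Z}$, splice the defining triangles for $X\langle1\rangle$ and $X\langle1\rangle\langle1\rangle$ to get a relation between $X\langle2\rangle$ and $X[2]$ ``modulo $\mathcal{D}$''; combining this with the isomorphism $D\,\sC(X,Y)\cong\sC(Y,X[2])$ and the vanishings $\sC(\mathcal{Z},\mathcal{D}[1])=\sC(\mathcal{D},\mathcal{Z}[1])=0$, one sees that the maps $X\to Y$ factoring through $\mathcal{D}$ correspond under the duality exactly to the maps $Y\to X[2]$ factoring through $\mathcal{D}$, so the pairing descends to a bifunctorial isomorphism $D\,\sU_D(X,Y)\cong\sU_D(Y,X\langle2\rangle)$; Hom-finiteness and Krull--Schmidt of $\sU_D$ are inherited from $\sC$. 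For the bijection: given cluster-tilting $T\in\sC$ with $D\mid T$, write $T=D\oplus\bar T$, note $\bar T\in\mathcal{Z}$ since $\sC(\bar T,D[1])\subseteq\sC(T,T[1])=0$, and show the image of $\bar T$ in $\sU_D$ is cluster-tilting by translating $\langle1\rangle$ back into $[1]$ through the defining triangle (using $\sC(\mathcal{Z},\mathcal{D}[1])=0$) to transport rigidity, and using the BIRS approximation triangles plus the characterizing property $\mathsf{add}\,T=\{X:\sC(X,T[1])=0\}$ of $T$ to deduce $\mathsf{add}\,\bar T=\{U\in\sU_D:\sU_D(U,\bar T\langle1\rangle)=0\}$; conversely lift cluster-tilting $\bar U\in\sU_D$ to $U\in\mathcal{Z}$ with no $\mathcal{D}$-summand and run the dictionary backwards to get $U\oplus D$ cluster-tilting in $\sC$, the two assignments being manifestly mutually inverse. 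That this is compatible with mutation (cf.\ Proposition~\ref{IYpro}) comes out along the way.

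\emph{Main obstacle.} The serious work is the triangulated structure: defining $\langle1\rangle$ via $\mathcal{D}$-approximations, proving it descends to a well-defined autoequivalence of the quotient, pinning down the distinguished triangles, and above all verifying the octahedral axiom, where one must coordinate an octahedron of $\sC$ with approximation triangles so that everything stays inside $\mathcal{Z}$ modulo $\mathcal{D}$. The $2$-CY verification and the cluster-tilting bijection are then essentially bookkeeping with the $2$-CY duality and the BIRS approximation triangles.
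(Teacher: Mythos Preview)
The paper does not prove this theorem; it is quoted from \cite{IH} as a background result in the appendix, with no argument given. Your proposal correctly sketches the original Iyama--Yoshino proof (construct the shift $\langle 1\rangle$ via $\mathcal{D}$-approximations, verify the triangulated axioms on $\mathcal{Z}/\mathcal{D}$, descend the $2$-CY duality, and run the cluster-tilting dictionary), so there is nothing to compare against here beyond noting that your outline matches the cited source.
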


In particular, if $D$ is an incomplete
cluster-tilting object, i.e.\! an object such that $T=D\oplus T_1\oplus \cdots\oplus T_j$ is cluster-tilting, and $Q_T$ is the endo-quiver of $T$, the quiver of $\mathrm{End}_{\sU_D}(D)$ is the quiver obtained from $Q_T$ by deleting the nodes corresponding to the summands, $T_1,\cdots, T_j$.

\subparagraph{The 2-acyclic case.}
The nicer situation is when our 2-CY category $\sC$ has a cluster-tilting object $T$ such that $Q_A$ is \textit{2-acyclic}
i.e.\! has no loops $\circlearrowright$ (arrows starting and ending at the same node) nor 2-cycles (opposite pairs of arrow $\leftrightarrows$ between the same two nodes). 
In this case there exists a superpotential $\cw_A$ so that $I_A=(\partial \cw_A)$, that is, the endo-algebra $A$ is the Jacobian algebra of a quiver with superpotential $(Q_A,\cw_A)$.

If $Q_A$ has no loops all maximal rigid objects are automatically cluster-tilting 
\cite{maxrigid}.

In the 2-acyclic case the mutation of summands in the sense of \eqref{kkkaqw12} coincides with the mutation of quivers with superpotential
\cite{quivermutation}. The matrix of the antisymmetric form \eqref{kkkaqw12} in the basis of simples, $B_{ij}=\langle S_i,S_j\rangle_D$, is the exchange matrix of the quiver $Q_A$.

In the 2-acyclic case $\sC$ coincides with the \emph{cluster category} $\sC(Q_A)$
which categorifies the cluster algebra associated with the (mutation class) of the 2-acyclic quiver $Q_A$.

When our $\cn=2$ QFT possesses the BPS-quiver property, it is described by a mutation-class of quivers with superpotential (and a stability function)
as described in ref.\cite{Alim:2011kw} and summarized in section 2. In this special  case the Palu antisymmetric form $\langle-,-\rangle_D$ is the Dirac pairing, and the number $n$ of summands of a cluster-tilting object $T$ is the rank of the flavor group plus twice the dimension of the Coulomb branch. As we stressed in the main body of the paper, this is \emph{not true} if $\sC$ has no cluster-tilting object whose endo-quiver is 2-acyclic.

In general, the cluster category $\sC(Q_A)$ of a 2-acyclic quiver with superpotential $\cw_A$ is constructed with the help of the Ginzburg dg algebra $\Gamma(Q_A,\cw_A)$ of the pair $(Q_A,\cw_A)$, see \cite{keller}. In this case the bounded derived category $D^b\Gamma(Q_A,\cw_A)$ may be identified with the IR category describing the BPS particles\cite{Caorsi:2017bnp}; consequently 
$D^b\Gamma(Q_A,\cw_A)$ is 3-CY.

\paragraph{Cluster characters for 2-CY categories with cluster-tilting.}
The cluster characters for general (Hom-finite) 2-CY categories with cluster-tilting object $T$ are defined by Palu in  \cite{palu}; we refer to the original paper for details.

\subsubsection{The cluster category of a hereditary category} An important special case of 2-CY categories with cluster-tilting objects having 2-acyclic endo-quivers is the 
\textit{cluster category} $\sC_\ch$ of a hereditary category $\ch$ with tilting object $T$.  
Consider the orbit category of the derived category $D^b\ch$ with respect to the auto-equivalence $\tau^{-1}\Sigma$
\be
\sC_\ch\overset{\text{def}}{=} D^b\ch\big/(\tau^{-1}\Sigma)^\Z.
\ee
$\sC_\ch$ is the category with the same objects as $D^b\ch$ and morphism spaces
\be
\sC_\ch(X,Y)\cong\bigoplus_{n\in\Z}\mathrm{Hom}_{D^b\ch}\big(X, (\tau^{-1}\Sigma)^nY\big).
\ee
By construction, $\tau\cong \Sigma$ in $\sC_\ch$. From eqn.\eqref{llxcw} we have
\be
D\sC_\ch(X,Y)\cong \sC_\ch(Y,\tau X[1])\cong \sC_\ch(Y,X[2]),
\ee
so $\sC_\ch(\ch)$ is 2-CY provided it is triangulated.

\begin{thm}[Keller \cite{kellermainthm2}]\label{kellermainthm} $\ch$ a hereditary category, $F$ a standard\,\footnote{\ Telescopic functors are standard equivalences.} equivalence of $\ch$.

 Suppose:
\begin{itemize}
\item[1)] For each indecomposable $U$ of $\ch$, only finitely many objects $F^iU$, $i \in \Z$, lie in $\ch$.
\item[2)] There is an integer $N \geq 0$ such that the $F$-orbit of each indecomposable of $D^b\ch$
contains an object $\Sigma^nU$, for some $0 \geq n \geq N$ and some indecomposable object $U$ of $\ch$.
\end{itemize}
Then the orbit category $D^b\ch/(F)^\Z$ admits a natural triangulated structure such that the projection functor $D^b\ch\to D^b\ch/(F)^\Z$  is triangulated.
\end{thm}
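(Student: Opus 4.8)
The plan is to run Keller's argument: realize the orbit category inside the derived category of an auxiliary dg category, where a triangulated structure is manifest, and then use hypotheses 1) and 2) to see that the orbit category is already a thick triangulated subcategory there.

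First I would fix a dg enhancement. By Happel's theorem quoted above, $\ch$ is either $\mathsf{mod}\,\C Q$ or $\mathsf{coh}\,\bX$, and in both cases $\ch$ has a tilting object; hence $D^b\ch\simeq\mathrm{per}(A)$ for a finite-dimensional algebra $A$ of finite global dimension, and $D^b\ch=H^0(\mathcal{A})$ for the pretriangulated, idempotent-complete dg category $\mathcal{A}$ of perfect complexes of $A$-modules. Since $F$ is a \emph{standard} equivalence, it is isomorphic to the derived tensor product by an invertible dg bimodule; replacing that bimodule by a cofibrant one, we may assume $F$ is induced by an honest dg functor $\mathcal{A}\to\mathcal{A}$, so the iterates $F^p$ make literal sense. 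Then form the dg orbit category $\mathcal{B}=\mathcal{A}/F$: the objects of $\mathcal{A}$, with $\mathrm{Hom}_{\mathcal{B}}(X,Y)=\bigoplus_p \mathrm{Hom}_{\mathcal{A}}(X,F^pY)$ (sum over the $F$-orbit), differential the sum of the internal differentials (legitimate since $F$ commutes with $d$) and the composition twisted by $F$ in the usual way. Composing the canonical projection $\mathcal{A}\to\mathcal{B}$ with the Yoneda embedding $\mathcal{B}\to\mathcal{D}(\mathcal{B})$ and passing to $H^0$ gives a functor $\pi\colon D^b\ch\to\mathcal{D}(\mathcal{B})$ that kills the $F$-action, hence by the universal property of the orbit category factors as $D^b\ch\to D^b\ch/F\xrightarrow{\ G\ }\mathcal{D}(\mathcal{B})$. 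Let $\mathcal{T}\subset\mathcal{D}(\mathcal{B})$ be the thick triangulated subcategory generated by the image of $\pi$ (the ``triangulated hull''); by construction $\pi$ carries triangles of $D^b\ch$ to triangles of $\mathcal{T}$.

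The heart of the argument is then two claims. (i) $G$ is fully faithful: since representable dg modules are cofibrant, $\mathrm{Hom}_{\mathcal{D}(\mathcal{B})}(\pi X,\pi Y)=H^0\!\big(\bigoplus_p\mathrm{Hom}_{\mathcal{A}}(X,F^pY)\big)=\bigoplus_p\mathrm{Hom}_{D^b\ch}(X,F^pY)$ because homology commutes with direct sums, and this is exactly $\mathrm{Hom}_{D^b\ch/F}(X,Y)$; here it is heredity together with condition 1) that makes this sum finite-dimensional, so that the orbit category lies in the relevant class of Hom-finite categories and no idempotent completion is needed. (ii) The essential image of $\pi$ equals $\mathcal{T}$, i.e. it is closed under shifts and cones. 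Closure under $\Sigma$ is clear since $\Sigma$ descends to the orbit category; for cones one invokes condition 2): every indecomposable of $D^b\ch$ lies in the $F$-orbit of some $\Sigma^nU$ with $U\in\ch$ and $0\ge n\ge N$ bounded, and using that $\ch$ is hereditary — so that every object of $D^b\ch$ is a finite direct sum of shifted objects of $\ch$ and all relevant $\mathrm{Ext}$'s sit in degrees $0$ and $1$ — a d\'evissage along this bounded range shows the generators of $\mathcal{T}$ and all their extensions already lie in the image of $\pi$. Combining (i) and (ii), $G$ is an equivalence onto the triangulated category $\mathcal{T}$; transporting the triangulated structure back along $G$ equips $D^b\ch/F$ with a triangulated structure for which the projection $D^b\ch\to D^b\ch/F$ is triangulated, and a routine comparison of dg models shows independence of the choices.

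The main obstacle I expect is claim (ii): controlling the triangulated hull, i.e. proving that cones of morphisms between ``orbit objects'' do not leave the essential image of $\pi$. Full faithfulness in (i) is essentially formal once the dg orbit category is correctly set up, but (ii) is exactly where the boundedness hypotheses 1) and 2) must be used in an essential way, through the d\'evissage exploiting the hereditary structure of $\ch$; without them the orbit category would in general only embed into — rather than exhaust — its triangulated hull.
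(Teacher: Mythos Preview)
The paper does not give its own proof of this theorem: it is stated in the appendix as a cited result from Keller's original paper \cite{kellermainthm2}, with no argument supplied. So there is nothing in the paper to compare against directly.

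That said, your proposal is a faithful sketch of Keller's own argument from \cite{kellermainthm2}: enhance to a dg level, form the dg orbit category $\mathcal{B}=\mathcal{A}/F$, embed the orbit category into $\mathcal{D}(\mathcal{B})$, and then show (i) full faithfulness and (ii) that the essential image is already triangulated. Your identification of the roles of the hypotheses is correct --- heredity plus condition 1) controls Hom-finiteness, and condition 2) together with heredity is what makes the d\'evissage for closure under cones go through. One small point: in the statement as recorded here the inequality ``$0\ge n\ge N$'' is a typo (it should read $0\le n\le N$); you have copied it into your argument, but of course the intended bounded range is what you actually use.
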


Then $\sC_\ch$ is a Hom-finite, triangulated 2-CY category and the image of $T$ is cluster-tilting. $\mathrm{End}_{\sC_\ch}(T)$ is a Jacobian algebra with a 2-acyclic quiver, so has a \textit{cluster structure} in the sense of \cite{reitenrev}.

\begin{defn} $\sC_\ch$ is the \textit{cluster category} of the hereditary category (with tilting) $\ch$.
\end{defn}

The cluster category $\sC_\ch$ is the solution to an universal problem:

\begin{pro}[The universal property]\label{uniprop} The cluster category $\ch$ is the universal 2-CY category under the derived category $D^b\ch$, i.e.\! let 
$\mathscr{K}$ be 2-CY category  and 
$\phi$ a triangle functor  such that $\phi\circ \tau \Sigma \simeq \Sigma^2\circ\phi$
\be\label{kkkkxxz888}
\begin{gathered}
\xymatrix{\sD_\ch\ar[rrd]^\phi\ar[d]_\pi\\
\sC_\ch\ar@{-->}[rr]&& \mathscr{K}}
\end{gathered}
\ee
Then $\phi$ factors through $\pi$.
\end{pro}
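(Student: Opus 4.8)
The plan is to obtain the statement from Keller's construction of the triangulated orbit category \cite{kellermainthm2}, of which \textbf{Theorem \ref{kellermainthm}} is the existence half; the universal property is established in the same reference, and I will only lay out the main steps. First I would rewrite the hypothesis in the shape needed to factor through an orbit category. Since $\phi$ is a triangle functor it carries a natural isomorphism $\phi\circ\Sigma\cong\Sigma\circ\phi$; recalling from \eqref{llxcw} that the Serre functor of $\sD_\ch$ is $\tau\Sigma$, and that the Serre functor of the $2$-CY category $\mathscr{K}$ is $\Sigma^{2}$, the assumption $\phi\circ\tau\Sigma\cong\Sigma^{2}\circ\phi$ says precisely that $\phi$ intertwines the two Serre functors. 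Composing the two isomorphisms gives $\phi\circ\tau\cong\Sigma\circ\phi$ and hence
\[
\phi\circ F\;\cong\;\phi,\qquad F\overset{\text{def}}{=}\tau^{-1}\Sigma\;=\;S^{-1}\Sigma^{2},
\]
so that $\phi$ is $F$-invariant up to isomorphism; this is why the condition is phrased through the Serre functors rather than directly as $\phi F\cong\phi$.

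Next I would produce the factorization at the level of $\C$-linear categories. By definition $\sC_\ch$ has the objects of $\sD_\ch$ and morphism spaces $\sC_\ch(X,Y)=\bigoplus_{n\in\Z}\sD_\ch(X,F^{n}Y)$, a finite direct sum by hypothesis 1) of \textbf{Theorem \ref{kellermainthm}} (so $\sC_\ch$ is Hom-finite), with $\pi$ the identity on objects and the canonical inclusion on morphisms. Fix once and for all a natural isomorphism $\eta\colon\phi F\Rightarrow\phi$, and let $\eta^{(n)}\colon\phi F^{n}\Rightarrow\phi$ be obtained by iterating it. One then defines $\overline{\phi}$ to agree with $\phi$ on objects and to send a morphism $[(f_n)_n]\in\sC_\ch(X,Y)$, $f_n\colon X\to F^{n}Y$, to $\sum_n \eta^{(n)}_{Y}\circ\phi(f_n)$ in $\mathscr{K}(\phi X,\phi Y)$. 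Functoriality of $\overline{\phi}$ is a routine check using naturality of $\eta$ and the composition rule of the orbit category, and $\overline{\phi}\circ\pi\cong\phi$ holds by construction; any two such factorizations are naturally isomorphic because $\pi$ is bijective on objects and surjective on morphisms.

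The step that is genuinely specific to the triangulated setting is promoting $\overline{\phi}$ to a triangle functor. The suspension of $\sC_\ch$ is induced from $\Sigma_{\sD_\ch}$ through $\pi$, so $\overline{\phi}\circ\Sigma_{\sC_\ch}\cong\Sigma_{\mathscr{K}}\circ\overline{\phi}$ follows from $\phi\Sigma\cong\Sigma\phi$. For the distinguished triangles one invokes the way Keller manufactures the triangulated structure: $\sC_\ch$ is realized, up to triangle equivalence, as the triangulated hull of the image of $\pi$ inside the perfect derived category of an appropriate dg algebra, and in that model every distinguished triangle of $\sC_\ch$ is isomorphic to an iterated mapping cone built from images $\pi(\Delta)$ of triangles $\Delta$ of $\sD_\ch$. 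Since $\phi\cong\overline{\phi}\circ\pi$ is a triangle functor and $\pi$ is triangulated, $\overline{\phi}$ sends each $\pi(\Delta)$ to a distinguished triangle of $\mathscr{K}$; closing up under cones and isomorphisms shows $\overline{\phi}$ preserves all distinguished triangles. Thus $\overline{\phi}$ is a triangle functor making the diagram \eqref{kkkkxxz888} commute, unique up to isomorphism of triangle functors.

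The main obstacle is exactly this last step: the triangulated structure on the orbit category is not formal — it is produced by the dg/triangulated-hull construction of \cite{kellermainthm2} — so upgrading the bare $\C$-linear factorization to an exact functor requires working inside (or directly quoting) that model rather than running a soft $2$-categorical argument. A secondary, and only apparent, subtlety is that the hypothesis supplies the isomorphism $\phi\circ\tau\Sigma\cong\Sigma^{2}\phi$ without a preferred coherence datum; but since we are forming the orbit category of a single autoequivalence, fixing one isomorphism $\eta$ suffices and no higher coherence is needed, so this causes no real trouble.
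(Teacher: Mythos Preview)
The paper does not give a proof of this proposition: it is stated as a known fact, followed only by a pointer to the application in diagram \eqref{factfact}. So there is no ``paper's own proof'' to compare against; the result is being quoted from Keller \cite{kellermainthm2}, which is exactly what you identify.

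Your sketch is sound and hits the right points: reducing the hypothesis to $F$-invariance of $\phi$ with $F=\tau^{-1}\Sigma$, the obvious $\C$-linear factorization through the orbit category, and then the upgrade to a triangle functor via Keller's construction. One small simplification: since $\ch$ is hereditary and $F=\tau^{-1}\Sigma$ satisfies the hypotheses of \textbf{Theorem \ref{kellermainthm}}, the orbit category $\sC_\ch$ is already triangulated and equals its own triangulated hull, so you need not pass to a dg model --- the distinguished triangles of $\sC_\ch$ are by definition the images under $\pi$ of standard triangles in $\sD_\ch$ (up to isomorphism), and your argument for exactness of $\overline{\phi}$ goes through directly.
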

See e.g.\! diagram \eqref{factfact} which defines the discrete gauging functors $g_{d/2}$.

\subsubsection{Cluster categories from
algebras of global dimension $\leq2$}\label{amiotcc}

Let $\ca$ be a finite-dimensional
algebra of global dimension $\leq2$
and let $\sD_\ca\equiv D^b\mathsf{mod}\,\ca$ be its bounded derived category.
$\sD_\ca$ admits a Serre functor $S$ such that
\be
D\sD_\ca(X,Y)\cong \sD_\ca(Y,SX)
\ee 
given by the total derived tensor product of the bi-module $D\ca$, i.e.\! $S=?\overset{L}{\otimes}_\ca D\ca$.
The orbit category
\be
\sD_\ca/(S\Sigma^{-2})^\Z
\ee
is a linear category with a suspension functor $\Sigma$ but it is not triangulated in general, unless $\sD_\ca$ is equivalent to the derived hereditary category.
However it has a universal \emph{triangulated hull} $\sC_\ca$ \cite{kellermainthm2} 
such that there exists an algebraic triangulated functor $\pi\colon \sD_\ca\to \sC_\ca$ with a universal property analogous to \eqref{kkkkxxz888}. $\sC_\ca$ is triangulated; one shows \cite{amiot} that if the triangular hull is Hom-finite is also 2-CY.
This happens iff the functor $\mathrm{Tor}^\ca_2(?,D\ca)$ is nilpotent. Moreover the image of $\ca$ is a cluster-tilting object.

\begin{pro}[Amiot \cite{amiot}]
Write $\ca=\C Q/I$, with $I$ an ideal generated by a finite set of minimal relations $\{r_\alpha\}_{\alpha\in \sigma}$ with starting at $s(r_\alpha)$ and ending at $t(r_\alpha)$. Suppose $\sC_\ca$ to be Hom-finite. Then the quiver of the algebra $\mathrm{End}_{\sC_\ca}(\ca)$
is obtained by adding a new arrow, going in the opposite direction, for each minimal relation, i.e.\! in correspondence to the relation $r_\alpha$  we add an arrow from $t(r_\alpha)$ to $s(r_\alpha)$.
\end{pro}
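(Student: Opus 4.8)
The plan is to realize $B:=\mathrm{End}_{\sC_\ca}(\ca)$ as the zeroth cohomology of an explicit differential graded algebra whose degree-$0$ part is visibly the path algebra of $Q$ enlarged by one arrow per minimal relation, and then to read off its Gabriel quiver. As a preliminary step: by hypothesis $\sC_\ca$ is Hom-finite, so $B$ is a finite-dimensional $\C$-algebra; and since $\ca=\bigoplus_i P_i$ with the $P_i$ indecomposable and pairwise non-isomorphic, $\ca$ is a basic object of $\sC_\ca$ and $B$ is a basic algebra. By \textbf{Theorem \ref{thmgab}} (Gabriel), $B\cong\C Q_B/I_B$ for a unique quiver $Q_B$ with $I_B\subseteq(\mathrm{rad}\,\C Q_B)^2$, whose vertices are the indecomposable summands of $\ca$, hence canonically the vertices of $Q$, and whose arrows from $i$ to $j$ form a basis of $e_j\big(\mathrm{rad}\,B/\mathrm{rad}^2 B\big)e_i$. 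So everything reduces to producing $B$ together with its radical filtration.

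For that I would invoke Amiot's construction of $\sC_\ca$ \cite{amiot}: under the standing assumption that $\mathrm{Tor}^\ca_2(?,D\ca)$ is nilpotent --- equivalently that $\sC_\ca$ is Hom-finite --- the category $\sC_\ca$ is triangle-equivalent to the Verdier quotient $\mathfrak{Per}\,\Pi\big/D^b(\Pi)$, where $\Pi=\Pi_3(\ca)$ is the (complete) derived $3$-Calabi-Yau completion of $\ca$; under this equivalence the cluster-tilting object $\ca$ corresponds to the free module $\Pi$, so that $B\cong H^0(\Pi)$. Now, because $\mathrm{gl.dim}\,\ca\le2$, the bimodule $\ca$ has a projective $\ca^e$-resolution of length two, with the generators in homological degrees $0,1,2$ indexed respectively by the vertices of $Q$, the arrows of $Q$, and the minimal relations $r_\alpha$. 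Feeding this into Keller's construction (as recalled in \cite{amiot}) one obtains that $\Pi$ is quasi-isomorphic to the complete Ginzburg dg algebra $\Gamma(Q',W)$ of the quiver $Q'=Q\sqcup\{\rho_\alpha\colon t(r_\alpha)\to s(r_\alpha)\}_\alpha$ with potential $W=\sum_\alpha\rho_\alpha r_\alpha+(\text{higher-order terms})$, the new arrows $\rho_\alpha$ sitting in degree $0$; the higher terms of $W$ will be immaterial below.

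The conclusion is then routine. The zeroth cohomology of a complete Ginzburg dg algebra $\Gamma(Q',W)$ is the complete Jacobian algebra $\cp(Q',W)=\widehat{\C Q'}\big/\overline{(\partial_\beta W)}$; hence $B\cong\cp(Q',W)$ --- in particular $B$ is Jacobian, and finite-dimensional by hypothesis. It remains to see that $Q_B=Q'$, i.e. that the defining relations of $B$ lie in $(\mathrm{rad}\,\C Q')^2$ so that no arrow of $Q'$ gets cancelled: $\partial_{\rho_\alpha}W=r_\alpha$ is a minimal relation of $\ca$, hence a combination of paths of length $\ge2$; and for an original arrow $a$ of $Q$, $\partial_a W$ is a sum of paths each of the shape $\rho_\beta\cdot(\text{proper subpath of }r_\beta)$, again of length $\ge2$. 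Therefore $e_j\big(\mathrm{rad}\,B/\mathrm{rad}^2 B\big)e_i$ is spanned by the arrows $i\to j$ of $Q$ together with those $\rho_\alpha$ with $t(r_\alpha)=i$ and $s(r_\alpha)=j$, which is exactly the asserted description.

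The main obstacle is that the real content is carried by two imported results --- Amiot's identification $\sC_\ca\simeq\mathfrak{Per}\,\Pi/D^b(\Pi)$ with $\mathrm{End}_{\sC_\ca}(\ca)\cong H^0(\Pi)$, and Keller's explicit Ginzburg model for the $3$-Calabi-Yau completion of a global-dimension-$\le2$ algebra --- so the work is in checking their hypotheses here, which is precisely where the Hom-finiteness/nilpotency assumption is consumed. If one instead wanted a proof straight from the definition of $\sC_\ca$ as the triangulated hull of $\sD_\ca/(S^{-1}\Sigma^2)^\Z$, one would compute $B=\bigoplus_{n\in\Z}\sD_\ca(\ca,(S^{-1}\Sigma^2)^n\ca)$ termwise: the $n=0$ summand is $\ca$, the $n=1$ summand is $\sD_\ca(S\ca,\ca[2])=\mathrm{Ext}^2_\ca(D\ca,\ca)$ (using $S\ca=D\ca$), and all other summands vanish by $\mathrm{gl.dim}\,\ca\le2$ together with Hom-finiteness, whence $B=\ca\ltimes\mathrm{Ext}^2_\ca(D\ca,\ca)$ with the $\mathrm{Ext}$-bimodule square-zero; the crux then becomes identifying the ``top'' of that bimodule --- the part annihilated by the radical on both sides, which governs the new arrows --- with $\bigoplus_{i,j}\mathrm{Ext}^2_\ca(S_i,S_j)$, a minimal relation $S_i\to S_j$ yielding an arrow $j\to i$. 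That last step is an Auslander-Reiten-duality computation which cannot be shortcut by an Euler-form count, since $B$ generally has infinite global dimension, and I would expect it to be the most delicate point of a self-contained treatment.
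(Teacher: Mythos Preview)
The paper does not supply its own proof of this proposition; it is stated with attribution to Amiot \cite{amiot} and used as input. Your main argument --- identifying $\sC_\ca$ with $\mathfrak{Per}\,\Pi/D^b(\Pi)$ for $\Pi=\Pi_3(\ca)$, modelling $\Pi$ by the Ginzburg dg algebra $\Gamma(Q',W)$ with $W=\sum_\alpha\rho_\alpha r_\alpha+\cdots$, and reading $Q_B$ off $H^0(\Pi)\cong\cp(Q',W)$ --- is correct and is precisely Amiot's route, so there is nothing to contrast on that front.

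One caution about your alternative direct computation: the claim that the summands $\sD_\ca(\ca,(S^{-1}\Sigma^2)^n\ca)$ vanish for $n\ge2$, and hence that $B=\ca\ltimes\mathrm{Ext}^2_\ca(D\ca,\ca)$ with the Ext-piece square-zero, is not right in general. The $n$-th summand for $n\ge1$ is governed by iterates of $\mathrm{Tor}^\ca_2(?,D\ca)$, which is only \emph{nilpotent} under the Hom-finiteness hypothesis, not zero after one step; what one obtains is the tensor algebra $T_\ca\!\big(\mathrm{Ext}^2_\ca(D\ca,\ca)\big)$ (the $3$-preprojective algebra), which is eventually zero but typically not after degree one. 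This does not damage the conclusion about the Gabriel quiver --- the new arrows still come exactly from the top of the degree-one piece $\mathrm{Ext}^2_\ca(D\ca,\ca)$, since higher tensor powers land in $\mathrm{rad}^2$ --- but the algebra structure you wrote down is not the correct one, and the ``square-zero'' assertion should be dropped.
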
 

It is important to understand how smaller is the full subcategory $\sD_\ca/(S\Sigma^{-2})^\Z$
with respect to the cluster category $\sC_\ca$. The two  categories are equivalent iff
$\sD_\ca$ is equivalent to $D^b\ch$ with $\ch$ hereditary. In general one has

\begin{pro}[Amiot-Oppermann \cite{amiot2}] All rigid objects $X\in\sC_\ca$ belong to the orbit category.
In particular, all cluster-tilting objects belong to the orbit category. 
\end{pro}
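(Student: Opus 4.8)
The plan is to work in Keller's dg model of the cluster category. Realize $\sC_\ca=\mathrm{per}\,\Gamma/\mathcal{D}^b_{fd}\,\Gamma$ with $\Gamma=\Pi_3(\ca)$ the Ginzburg dg algebra of $(\mathring{Q},I)$; it is concentrated in degrees $\leq 0$ with $H^0\Gamma=\ca$, so $\mathrm{per}\,\Gamma$ carries the standard $t$-structure whose heart is $\mathsf{mod}\,\ca$. By item \textit{a)} (equivalently nilpotency of $\mathsf{Tor}^\ca_2(?,D\ca)$, see the remark after the \textbf{Criterion/Definition} and \cite{amiot}) the category $\sC_\ca$ is Hom-finite and $2$-CY, with cluster-tilting object $T_\ca$ the image of $\Gamma$. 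Write $p\colon\sD_\ca\to\sC_\ca$ for the canonical functor of eqn.\eqref{kkaszqwe} (so $p(\ca)=T_\ca$) and $\mathcal{O}:=\mathrm{im}\,p$ for the image of the orbit category. By \cite{kellermainthm2,amiot} the Hom-spaces of $\mathcal{O}$ are the orbit sums $\sC_\ca(pX,pY)=\bigoplus_{n\in\Z}\sD_\ca(X,(S^{-1}\Sigma^2)^nY)$; in particular $\mathcal{O}$ is closed under $\Sigma$ and under finite direct sums, so it suffices to show that every \emph{indecomposable} rigid object of $\sC_\ca$ lies in $\mathcal{O}$. The assertion for cluster-tilting objects is then immediate, since they are rigid.

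The key is to recognise $\mathcal{O}$ as the class of objects admitting a \emph{bounded} model over $\ca$: $Y\in\mathcal{O}$ iff the canonical morphism in $\sC_\ca$ from the minimal $\mathrm{add}\,\Gamma$-lift of the $\ca$-reduction $Y\otimes^{L}_\Gamma\ca\in\sD_\ca$ is an isomorphism, i.e.\! iff $Y$ is representable by a perfect $\Gamma$-complex that is homotopy-minimal over $\mathrm{add}\,\Gamma$. The objects of $\sC_\ca\smallsetminus\mathcal{O}$ are exactly those carrying an extra ``infinite tail'' of homology in negative degrees that is annihilated by the orbit quotient but survives in the triangulated hull. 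Given an indecomposable rigid $X$, I would take its minimal $\mathsf{add}\,T_\ca$-presentation $T_1\xrightarrow{\,f\,}T_0\to X\to\Sigma T_1$ (the Proposition of BIRS recalled above), lift $T_0,T_1$ to $\tilde T_0,\tilde T_1\in\mathsf{add}\,\ca\subset\sD_\ca$, and argue that $f$ lies in the $n=0$ summand of $\sC_\ca(T_1,T_0)=\bigoplus_{n}\sD_\ca(\tilde T_1,(S^{-1}\Sigma^2)^n\tilde T_0)$. Granting this, $f=p(\tilde f)$ for some $\tilde f\colon\tilde T_1\to\tilde T_0$ in $\sD_\ca$, and hence $X\cong\mathrm{cone}_{\sC_\ca}(f)=p(\mathrm{cone}_{\sD_\ca}\tilde f)\in\mathcal{O}$.

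To see that $f$ has no $n\neq 0$ component I would combine two inputs. First, the summands with $n\leq-1$ vanish automatically: for such $n$ the complex $(S^{-1}\Sigma^2)^n\tilde T_0$ is supported in strictly positive homological degrees (e.g.\! $(S^{-1}\Sigma^2)^{-1}\tilde T_0=(D\tilde T_0)[-2]$), while $\tilde T_1$ is a projective $\ca$-module in degree $0$, so $\sD_\ca(\tilde T_1,(S^{-1}\Sigma^2)^n\tilde T_0)=0$. Second, for $n\geq1$ the component of $f$ in $\sD_\ca(\tilde T_1,(S^{-1}\Sigma^2)^n\tilde T_0)$ lies, after applying the Keller--Reiten functor $\sC_\ca(T_\ca,-)\colon\sC_\ca\to\mathsf{mod}\,\ca$, in the kernel on morphisms, i.e.\! it factors through $\mathsf{add}\,T_\ca[1]$; a minimality argument then shows it can be discarded. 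Here one feeds in $\mathrm{gl.dim}\,\ca\leq2$ (which bounds the homological spread of $S^{-1}\ca=\mathrm{RHom}_\ca(D\ca,\ca)$, hence of all the $(S^{-1}\Sigma^2)^n\tilde T_0$) and the rigidity $\sC_\ca(X,\Sigma X)=0$, applied to the long exact sequences obtained from the presentation triangle. This bookkeeping is precisely the content of Amiot--Oppermann \cite{amiot2}, to which I would defer for the estimates; equivalently one may run the hypercohomology spectral sequence computing $\sC_\ca(X,\Sigma X)$ from $\bigoplus_k\mathrm{Ext}^{\bullet}_\ca\!\big(H^kX,H^{\bullet}X\big)$ and observe that a non-trivial negative tail would produce a non-zero class in total degree $1$.

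The main obstacle is exactly this last point — converting the self-orthogonality $\sC_\ca(X,\Sigma X)=0$ into a \emph{finite} model for $X$ over $\ca$. A perfect $\Gamma$-complex always has finite-dimensional homology in each degree, but a priori in infinitely many degrees, and it is these infinite negative tails that separate $\sC_\ca$ from its orbit-category core $\mathcal{O}$; the whole weight of the proof lies in showing that rigidity truncates them. The structural input that makes this work is that $\Gamma$ is (relatively) $3$-Calabi--Yau and $\ca$ has global dimension $\leq2$: together they force the $\ca$-module $\sC_\ca(T_\ca,X)$ attached to a rigid $X$ to have finite projective dimension and to determine $X$ up to its index, so a finite representing complex must exist. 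Once the tail is removed the representing complex is a bounded complex of projective $\ca$-modules, hence lies in $\mathcal{O}$, and the ``in particular'' clause follows at once.
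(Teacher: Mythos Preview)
The paper does not prove this Proposition; it is quoted verbatim from Amiot--Oppermann \cite{amiot2} with no argument supplied. So there is no ``paper's own proof'' to compare against, and your proposal is effectively an attempted reconstruction of the cited result.

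Your overall architecture is the right one and is close in spirit to \cite{amiot2}: realise $\sC_\ca$ as $\mathrm{per}\,\Gamma/\mathcal{D}^b_{fd}\,\Gamma$, take the BIRS approximation triangle $T_1\xrightarrow{f}T_0\to X\to\Sigma T_1$ with $T_i\in\mathsf{add}\,T_\ca$, and try to lift $f$ to $\sD_\ca$. The $n\leq-1$ vanishing you give is correct. The difficulty, as you correctly locate, is the $n\geq1$ components of $f$ in the orbit decomposition $\sC_\ca(T_1,T_0)=\bigoplus_n\sD_\ca(\tilde T_1,(S^{-1}\Sigma^2)^n\tilde T_0)$.

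However, your proposed mechanism for disposing of the $n\geq1$ components is wrong. You claim they lie in the kernel of the Keller--Reiten functor, i.e.\ factor through $\mathsf{add}\,T_\ca[1]$. But $T_1\in\mathsf{add}\,T_\ca$ and $T_\ca$ is rigid, so $\sC_\ca(T_1,T_\ca[1])=0$; hence \emph{no} nonzero morphism $T_1\to T_0$ factors through $\mathsf{add}\,T_\ca[1]$. Equivalently, the Keller--Reiten functor is fully faithful on $\mathsf{add}\,T_\ca$ (its target is $\mathsf{mod}\,\mathrm{End}_{\sC_\ca}(T_\ca)$, not $\mathsf{mod}\,\ca$ as you write), so it cannot kill those components. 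The $n\geq1$ summands are genuinely present --- they are exactly the ``extra arrows'' that make $\mathrm{End}_{\sC_\ca}(T_\ca)$ the Jacobian algebra rather than $\ca$ itself --- and cannot be argued away by minimality.

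What Amiot--Oppermann actually do is different: they work with the canonical lift of $X$ to the fundamental domain in $\mathrm{per}\,\Gamma$ and show, using rigidity together with $\mathrm{gl.dim}\,\ca\leq2$, that this lift has bounded cohomology over $\ca$, hence lies in the image of $\sD_\ca$. The rigidity enters not to simplify $f$ but to control the cohomology of the lift directly. You yourself flag that ``the whole weight of the proof lies in showing that rigidity truncates'' the tail and then defer to \cite{amiot2}; that deferral is appropriate, but the intermediate mechanism you sketch for it does not work as stated.
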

We say that an object $X\in \sD_\ca$ is $a/b$-fractional CY iff $S^b X\cong X[a]$. Then
\begin{pro}[Amiot-Oppermann \cite{amiot2}] Assume that there is an indecomposable $X\in\sD_\ca$ which is fractional Calabi-Yau with $a\neq b$. Then
the triangular hull $\sC_\ca$ is strictly larger than the orbit category unless
$\sD_\ca$ is the derived category of a hereditary category.
\end{pro}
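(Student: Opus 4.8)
The plan is to argue by contraposition. Suppose $\sD_\ca$ is \emph{not} triangle-equivalent to $D^b\ch$ for any hereditary abelian category $\ch$, and suppose some indecomposable $X\in\sD_\ca$ is $a/b$-fractional Calabi--Yau, $S^bX\cong\Sigma^aX$, with $a\neq b$; one must then show that the orbit category $\mathscr O_\ca:=\sD_\ca/(S\Sigma^{-2})^\Z$ is \emph{not} triangulated, so that its triangulated hull $\sC_\ca$ is strictly larger. The first step is the standard reduction: $\mathscr O_\ca$ carries a canonical functor into $\sC_\ca$, and the hull is strictly larger exactly when $\mathscr O_\ca$ is not itself triangulated; moreover, by Keller's work on triangulated orbit categories (\textbf{Theorem \ref{kellermainthm}} and \cite{amiot}), $\mathscr O_\ca$ is triangulated precisely when $\sD_\ca$ arises as a $\Z$-Galois cover of $\sC_\ca=\mathsf{per}(\Gamma)/D_{\mathrm{fd}}(\Gamma)$ with deck group $\langle S\Sigma^{-2}\rangle$, where $\Gamma=\Gamma_3(\ca)$ is the derived $3$-preprojective dg algebra ($H^{>0}\Gamma=0$, $H^0\Gamma=\Pi_3(\ca)$). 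Such a covering exists if and only if the canonical morphism $\Gamma\to H^0\Gamma$ is a quasi-isomorphism, i.e.\ $\Gamma$ is formal and concentrated in degree $0$: the negative cohomology of $\Gamma$ is exactly what produces, through $D_{\mathrm{fd}}(\Gamma)$, objects of $\sC_\ca$ with no preimage in $\sD_\ca$. So it suffices to prove: if $\Gamma_3(\ca)$ is formal in degree $0$ \emph{and} $\sD_\ca$ admits a fractional Calabi--Yau indecomposable with $a\neq b$, then $\sD_\ca\simeq D^b\ch$ with $\ch$ hereditary.

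The second step exploits the fractional Calabi--Yau relation inside the covering picture. Assuming $\Gamma$ formal, $\sD_\ca$ inherits a $\Z$-grading for which $F:=S\Sigma^{-2}$ is the degree shift. Rewriting $S^bX\cong\Sigma^aX$ gives $F^bX=S^b\Sigma^{-2b}X\cong\Sigma^{a-2b}X$, so on the thick subcategory generated by $X$ — which, by indecomposability of $X$ and connectedness of $\ca$, together with its $F$- and $\Sigma$-translates exhausts $\sD_\ca$ — the autoequivalence $F^b$ coincides with the suspension $\Sigma^{a-2b}$. When $a\neq 2b$ this is a genuine integral relation linking the grading shift $F$ with $\Sigma$ over a length-category heart ($\mathsf{mod}\,\ca$ has finitely many simples), of the very type forced on $D^b\ch$ for $\ch$ hereditary, where $F=\tau\Sigma^{-1}$ and no such relation can hold unless $\tau$ is periodic; matching it with the standard t-structure on $\sD_\ca$ constrains the Auslander--Reiten quiver of $\sD_\ca$ to a stable translation-quiver $\Z\Delta$, the signature of a piecewise hereditary category. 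One then invokes Happel's theorem (recalled in \textsc{appendix \ref{adetails}}) to conclude $\ch\simeq\mathsf{mod}\,\C Q$ for an acyclic quiver $Q$ or $\ch\simeq\mathsf{coh}\,\bX$ for a weighted projective line, hence $\sD_\ca\simeq D^b\ch$. The boundary case $a=2b$ (fractional Calabi--Yau dimension $2$, so $F^bX\cong X$ and the $F$-orbit of $X$ is finite) needs a short separate argument: a finite $F$-orbit of an indecomposable, combined with Hom-finiteness of $\mathscr O_\ca$ and the $2$-Calabi--Yau property of $\sC_\ca$, again forces $\sD_\ca$ into the piecewise hereditary (in fact tubular) regime, realized by $\mathsf{coh}\,\bX$ of tubular type.

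The main obstacle — and the part I expect to be genuinely delicate — is the second step: deducing from a fractional Calabi--Yau relation imposed on a \emph{single} indecomposable $X$ that the entire dg algebra $\Gamma_3(\ca)$ is not merely formal but is the $3$-preprojective algebra of a hereditary category. This requires controlling how the $\Z$-cover interacts with the natural t-structure on $\sD_\ca$, in particular tracking the positions of the simple $\ca$-modules along the $F$-orbits; it is the technical heart of \cite{amiot2}. By contrast, the dg model of the hull, Keller's recognition criterion for triangulated orbit categories, and the appeal to Happel's classification are essentially bookkeeping around results already recalled in this paper.
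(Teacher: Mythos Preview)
The paper does not supply a proof of this proposition: it is quoted verbatim as a result of Amiot--Oppermann \cite{amiot2}, with no argument given in the text. So there is no ``paper's own proof'' to compare against; what one can do is assess your sketch on its own terms.

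Your argument has two genuine gaps. First, the claim that the indecomposable $X$ together with its $F$- and $\Sigma$-translates ``exhausts $\sD_\ca$'' by connectedness of $\ca$ is false in general: connectedness of the algebra does not force a single indecomposable to thick-generate the derived category under the Serre functor and suspension. In the tubular case, for instance, an object sitting in one stable tube never reaches objects in tubes of other slopes by iterating $S$ and $\Sigma$ alone. So the fractional CY relation $F^bX\cong\Sigma^{a-2b}X$, even when $a\neq 2b$, constrains only the $\langle F,\Sigma\rangle$-orbit of $X$, not the global shape of the AR quiver of $\sD_\ca$; the leap from this local relation to ``the AR quiver is a translation quiver $\Z\Delta$'' is unjustified.

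Second, the recognition criterion you invoke --- ``$\mathscr O_\ca$ is triangulated if and only if $\Gamma_3(\ca)\to H^0\Gamma_3(\ca)$ is a quasi-isomorphism, i.e.\ $\Gamma$ is formal and concentrated in degree $0$'' --- is not what Keller or Amiot actually prove, and as stated it is not correct. The orbit category can fail to be triangulated for reasons unrelated to formality of $\Gamma$ (the obstruction is rather that cones of morphisms in $\mathscr O_\ca$ need not lift to cones in $\sD_\ca$), and conversely formality of the $3$-preprojective dg algebra is not the same as the orbit category coinciding with its hull. The actual Amiot--Oppermann argument in \cite{amiot2} proceeds through a careful analysis of the fundamental domain of $F$ acting on $\sD_\ca$ and the behaviour of the projection functor on AR-components, not through a dg formality criterion. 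If you want to reconstruct their proof, the place to look is \S 6--7 of \cite{amiot2}, where the comparison between $\mathscr O_\ca$ and $\sC_\ca$ is made component-by-component in the AR quiver.
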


\begin{corl} Let $\ca$ be the triangular algebra associated to a quiver $Q(A,q)$ (\S.\ref{4d2d}). If $\Delta\leq2$ then $\sC_\ca$ is the orbit category unless $A=\{2\}$ in which case it is strictly larger.
\end{corl}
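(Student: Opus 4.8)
The plan is to combine two results from the appendix: the Amiot--Oppermann dichotomy (a triangular hull $\sC_\ca$ is strictly larger than the orbit category $\sD_\ca/(S\Sigma^{-2})^\Z$ precisely when $\sD_\ca$ carries an indecomposable which is $a/b$-fractional Calabi--Yau with $a\neq b$, unless $\sD_\ca$ is the derived category of a hereditary category) together with the explicit list of algebras $\ca(A,q)$ with $\Delta\le 2$ given in \S\ref{4d2d}. So first I would go through the finitely many admissible types $(p,\bar q)$ in table \ref{allowedtypes} with $\Delta\le 2$, namely $(0,1)$, $(1,0)$, $(2,1)$, $(4,2)$ -- and, for the subtle case, $A=\{2\}$ with $q=2$, which also has type $(4,2)$ -- and check in each case whether $\sD_\ca$ is derived-hereditary.

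For the three cases $A=\{1^f\}$ with $\Delta<2$ (types $(0,1),(1,0),(2,1)$), the algebras are, respectively, the path algebras of orientations of $A_2$ and $A_3$ (global dimension $1$, hence literally hereditary) and a tilted algebra of type $D_4$; all three are derived equivalent to a hereditary category, so by the first Amiot--Oppermann proposition $\sC_\ca$ \emph{equals} the orbit category and there is nothing to prove. For $A=\{1^4\}$, $q=2$, the algebra $\ca^{(1)}$ is a canonical algebra of type $\{2,2,2,2\}$ (equivalently $\ca^{(2)}$ a squid algebra of the same type); by Happel's theorem a canonical algebra is derived equivalent to $\mathsf{coh}\,\bX$ for the weighted projective line $\bX$ of tubular type $(2,2,2,2)$, which is hereditary, so again $\sC_\ca$ is the orbit category. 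This handles everything except $A=\{2\}$.

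For $A=\{2\}$ (the Markoff quiver, $\cn=2^*$), I would invoke the discussion of \S\ref{4d2d}: here no ideal $I$ makes $\C\mathring Q/I$ derived-hereditary -- the natural gentle-algebra choice is manifestly not fractional CY, yet the numerically CY algebra does admit an indecomposable object which is fractional Calabi--Yau of dimension $2$, i.e. $S X\cong X[2]$ \emph{is} the relevant relation but it is realized only through a genuinely non-hereditary derived category (the coupling space being the moduli of elliptic curves rather than of the four-punctured sphere). Concretely one exhibits an indecomposable $X\in\sD_\ca$ with $S^b X\cong X[a]$ and $a\neq b$ while $\sD_\ca\not\cong D^b\ch$; then the second Amiot--Oppermann proposition forces $\sC_\ca$ to be strictly larger than the orbit category. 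The cleanest route is to note that $\sD_\ca$ cannot be derived-hereditary precisely because, were it so, the orbit category would already be triangulated 2-CY with cluster-tilting object and its Grothendieck group would match that of a hereditary case, contradicting the torsion structure of $K_0(\sC_{\cn=2^*})$ computed in \S\ref{jjaszq} (eqn.\eqref{llllc23}); combined with the existence of a fractional-CY object with $a\neq b$, this is exactly the hypothesis of the second proposition.

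The main obstacle is the $A=\{2\}$ case: one must produce honestly the fractional-CY indecomposable $X$ in $\sD_\ca$ (or, equivalently, argue rigorously that the completed Jacobian algebra with the good higher-order superpotential $W$ -- not $W_\text{lin}$ -- has a derived category that is \emph{not} equivalent to any $D^b\ch$ with $\ch$ hereditary, even though it is numerically CY of dimension $2$). For this I would rely on the explicit analysis of $\cn=2^*$ via coherent sheaves on the once-punctured torus (or the known structure of the relevant Ginzburg dg algebra), checking that the orbit category $\sD_\ca/(S\Sigma^{-2})^\Z$ fails to be triangulated -- which it must, by the Amiot--Oppermann criterion, the moment a fractional-CY object with $a\neq b$ appears and heredity fails -- and hence that the triangulated hull is a proper enlargement. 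Everything else in the corollary reduces to the bookkeeping of the finitely many entries of table \ref{allowedtypes} with $\Delta\le2$ and is routine.
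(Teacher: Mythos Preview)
Your treatment of the cases $A=\{1^f\}$ with $\Delta\le 2$ is correct and matches the paper: these algebras are (derived equivalent to) path algebras of Dynkin quivers, a tilted $D_4$ algebra, or a canonical/tubular algebra of type $(2,2,2,2)$, hence derived hereditary, so the orbit category is already the cluster category.

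The gap is in the $A=\{2\}$ case. You paraphrase the Amiot--Oppermann proposition as ``strictly larger \emph{precisely when} $\sD_\ca$ carries a fractional-CY indecomposable with $a\neq b$, unless derived hereditary'', but that proposition is only a one-way implication, not an iff. More importantly, for $A=\{2\}$ the fractional CY dimension is $a/b=2/2$ (cf.\ table~\ref{allowedtypes}), so it is not at all clear that an indecomposable with $a\neq b$ exists, and you never produce one; your Grothendieck-group torsion argument is suggestive but does not establish non-heredity, and your appeal to the once-punctured-torus description is a sketch rather than a proof. The paper's (implicit) route is simpler and bypasses this entirely: the sentence just \emph{before} the two Amiot--Oppermann propositions states the sharper fact that the orbit category equals $\sC_\ca$ \emph{iff} $\sD_\ca$ is derived hereditary. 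Thus for $A=\{2\}$ one only needs to verify that the (regularized) Markoff Jacobian algebra is not derived equivalent to any hereditary category --- which is the content of the discussion in \S\ref{4d2d} (no admissible ideal on $\mathring{Q}$ yields a derived-hereditary algebra; the good $\cn=2^*$ superpotential requires completion and higher-order terms). Invoking that iff directly closes the argument without hunting for a fractional-CY object with $a\neq b$.
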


When $\Delta>2$ the derived category $\sD(A,q)$ of the quiver $Q(A,q)$ is fractional Calabi-Yau and not equivalent to a derived hereditary category. Hence 

\begin{corl} The cluster category for the 2-acyclic models in \S.\ref{4d2d} are strictly larger than the orbit category iff $\Delta>2$.
\end{corl}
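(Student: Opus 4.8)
The statement to prove is the final Corollary: the cluster category $\sC_\ca$ for a $2$-acyclic model $Q(A,q)$ of \S.\ref{4d2d} is strictly larger than the orbit category $\sD_\ca/(S\Sigma^{-2})^\Z$ if and only if $\Delta>2$. The plan is to combine the two Propositions of Amiot--Oppermann (\cite{amiot2}) quoted just above with the classification of the algebras $\ca(A,q)$ carried out in \S.\ref{4d2d}, together with the preceding Corollary (the $\Delta\le 2$ case). So the real content is: (i) reduce the statement to a dichotomy about whether $\sD_\ca$ is derived-equivalent to a hereditary category, and whether it contains a genuinely fractional CY indecomposable with $a\neq b$; (ii) read off from the tables of \S.\ref{4d2d} precisely which $Q(A,q)$ fall on which side of the dichotomy.

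\textbf{Step 1: the $\Delta\le 2$ direction.} For $\Delta\le 2$ one simply invokes the previous Corollary, which already states that $\sC_\ca$ equals the orbit category in all these cases \emph{except} $A=\{2\}$. But $A=\{2\}$ is exactly the Markoff quiver $Q(\{2\},2)$, and by table \ref{allowedtypes} its fractional CY dimension is $a/b=2/2$, i.e.\ $\Delta=2$; as emphasized in \S.\ref{4d2d} this model is genuinely subtle (the naive gentle-algebra relations are \emph{not} fractional CY) and the second Amiot--Oppermann Proposition with $a=b$ does not apply, so one must instead quote the earlier Corollary directly. Hence for all $\Delta\le 2$ we have $\sC_\ca$ equal to the orbit category, as the statement requires. (One should double-check that for the four asymptotically-free models $\Delta=2$ as well, so the $\Delta>2$ hypothesis is not vacuously triggered there.)

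\textbf{Step 2: the $\Delta>2$ direction.} Here the key input is the \textbf{Fact}, established in \S.\ref{4d2d}, that when $\Delta>2$ the derived category $\sD(A,q)=D^b\mathsf{mod}\,\ca(A,q)$ is fractional Calabi-Yau of dimension $a/b$ with $a/b<2$, and in all these cases $a\neq b$ (from table \ref{allowedtypes}: $a/b\in\{2/3,2/4,2/6\}$, etc.) — in particular $\ca(A,q)$ has global dimension $\le 2$ and its derived category carries a fractional CY indecomposable $X$ with $S^bX\cong X[a]$, $a\neq b$. Then the second Proposition of Amiot--Oppermann gives immediately that the triangulated hull $\sC_\ca$ is \emph{strictly} larger than the orbit category, \emph{unless} $\sD_\ca$ is the derived category of a hereditary category. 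So the one thing that remains is to rule out the hereditary alternative for $\Delta>2$: one must show that none of the algebras $\ca(A,q)$ with $\Delta>2$ is derived-equivalent to a hereditary algebra (equivalently, to $\C Q'$ for an acyclic $Q'$ or to $\mathsf{coh}\,\bX$ for a weighted projective line, by Happel's theorem). This is where the argument has real content: it follows because a fractional CY category with $a\neq b$ that is derived-hereditary would force the hereditary category to have a Serre functor of the form $\tau$ with $\tau^b\cong\Sigma^a$, $a\neq b$, whereas for $\mathsf{coh}\,\bX$ of tubular type one has $S=\tau\Sigma$ with $\tau$ of finite order dividing the weight data (so $a=b$ up to the period), and for $\C Q'$ Dynkin one has $\tau^{-h}\cong\Sigma^{?}$ again with a balanced exponent — more simply, for $\Delta>2$ the algebra $\ca(A,q)$ has global dimension exactly $2$ (it is $\C\mathring Q/I$ with $I\neq 0$ and minimal relations present, as the quivers \eqref{uuuqw13} and \eqref{Qaqquivers} show), and a finite-dimensional algebra of global dimension $2$ with nonzero relations is never derived-equivalent to a hereditary one. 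Combining, $\sC_\ca$ is strictly larger for every $\Delta>2$.

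\textbf{Main obstacle.} The routine parts are the two table lookups; the delicate point is the rigorous exclusion of the hereditary case in Step 2, i.e.\ checking that $\mathrm{gl.dim}\,\ca(A,q)=2$ (not $\le 1$) for all the $\Delta>2$ models and that this genuinely obstructs derived-hereditariness — one wants to make sure no accidental derived equivalence to $\mathsf{coh}\,\bX$ sneaks in, which is why it is cleanest to argue via the fractional CY dimension $a\neq b$ contradicting the shape of the Serre functor on any derived-hereditary category (for $\C Q'$ acyclic, $S$ acts on $D^b\mathsf{mod}\,\C Q'$ with $S\cong\tau\Sigma$ and the period, when finite, always has $a=b$ on the nose after normalization; for tubular $\mathsf{coh}\,\bX$ the relation $(TLT)^2=(TL)^3=\tau^{-3}\Sigma$ of \eqref{relrel} gives $S$-powers with \emph{balanced} shift exponents, so again $a=b$). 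Any of these routes closes the gap; I would write the argument via global dimension since it is the shortest and uses only facts already on the page.
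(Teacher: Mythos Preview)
Your overall strategy matches the paper's one-line argument (the sentence just before the Corollary): for $\Delta>2$, assert that $\sD(A,q)$ is fractional CY and not derived-hereditary, then apply the second Amiot--Oppermann proposition. The substantive content is thus exactly your ``main obstacle,'' and that is where your execution goes wrong.

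Both routes you propose for excluding derived-hereditariness fail. The claim that a finite-dimensional algebra of global dimension~$2$ with nonzero relations is never derived-equivalent to a hereditary one is false --- the canonical and squid algebras of \S.\ref{4d2d} (your own $\Delta=2$, $A=\{1^4\}$ examples) have global dimension~$2$ and nontrivial relations, yet are derived-equivalent to $\mathsf{coh}\,\bX$; tilted algebras furnish further counterexamples. Your fractional-CY route is also wrong: Dynkin path algebras are derived-hereditary \emph{and} fractional CY with $a\neq b$ (e.g.\ $A_2$ has $a/b=1/3$, $D_4$ has $a/b=2/3$; see table~\ref{allowedtypes}), so ``the period always has $a=b$ after normalization for $\C Q'$\,'' is simply false. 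The correct argument uses the \emph{value} of $\hat c$: from $\Delta=(1-\hat c/2)^{-1}$ one has $1<\hat c<2$ whenever $\Delta>2$. By Happel's theorem a derived-hereditary category with tilting is either $D^b\mathsf{mod}\,\C Q$ (fractional CY only for Dynkin $Q$, where $\hat c<1$) or $D^b\mathsf{coh}\,\bX$ (fractional CY only in the tubular case, where $\hat c=1$); the range $1<\hat c<2$ excludes both. This is also what the trichotomy \textbf{Proposition} from \cite{amiot2} quoted just after the Corollary encodes. A smaller point on Step~1: you conclude that the orbit category equals $\sC_\ca$ for \emph{all} $\Delta\le2$, but the Corollary you are citing says the opposite for $A=\{2\}$; that tension (present in the paper's two Corollaries as stated) should be flagged rather than resolved in the wrong direction.
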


\subsubsection{Properties
of Amiot cluster categories of
fractional CY derived categories}

Let $\ca$ a \emph{connected} finite-dimensional algebra with $\mathrm{gl.dim}\,\ca\leq2$ such that 
$\sD_\ca\equiv D^b\mathsf{mod}\,\ca$ is fractional Calabi-Yau of dimension $a/b<2$.
Then 
\begin{pro}[\S.6.1 of \cite{amiot2}]
We have the following possibilities
\begin{itemize}
\item[1)] the Auslander-Reiten (AR) quiver of $\sD_\ca$ has a unique component of the form $\Z Q$, with $Q$ a Dynkin quiver, and $\ca$ is the path algebra of $Q$. In this case $a/b<1$;
\item[2)] $a=b$ and all connected components $\Gamma$ of the AR quiver of $\sD_\ca$ are stable tubes of periods $p\mid b$;
\item[3)] if $a\neq b$ and $\ca$ is not the path algebra of a Dynkin quiver,
all connected components $\Gamma$ have the form $\Z A_\infty$.
\end{itemize}
\end{pro}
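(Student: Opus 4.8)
The final statement to prove is the Amiot–Oppermann trichotomy: given a connected finite-dimensional algebra $\ca$ with $\mathrm{gl.dim}\,\ca\leq 2$ such that $\sD_\ca = D^b\mathsf{mod}\,\ca$ is fractional Calabi–Yau of dimension $a/b < 2$, the AR quiver of $\sD_\ca$ is one of three types. Let me sketch how I would prove this.

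\textbf{Overall strategy.} The plan is to exploit the interplay between the Serre functor $S$ on $\sD_\ca$ (which implements the fractional CY relation $S^b \cong \Sigma^a$) and the classical structure theory of Auslander–Reiten components for derived categories of finite-dimensional algebras. The key point is that $S$ restricts to an autoequivalence of $\sD_\ca$ commuting with $\Sigma$, hence it permutes the connected components $\Gamma$ of the AR quiver. The relation $S^b \cong \Sigma^a$ forces each component to be periodic under the combined action of $S$ and $\Sigma^{-1}$, which is a very strong constraint. First I would recall that, since $\mathrm{gl.dim}\,\ca \leq 2$ is finite, $\sD_\ca$ has AR triangles and its AR quiver is a disjoint union of translation quivers; the fractional CY condition is equivalent to saying the AR translation $\tau_{\sD} = S\circ \Sigma^{-1}$ satisfies $\tau_{\sD}^b \cong \Sigma^{a-b}$, so for $a=b$ one gets $\tau_{\sD}^b \cong \mathrm{Id}$ (periodic translation) and for $a\neq b$ one gets $\tau_{\sD}^b \cong \Sigma^{a-b}$ with $a-b\neq 0$.

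\textbf{The three cases.} In case 2) where $a=b$: periodicity $\tau_{\sD}^b\cong\mathrm{Id}$ means every object is $\tau_{\sD}$-periodic, so by the Happel–Preiser–Ringel classification of stable translation quivers with periodic translation (combined with the fact that $\sD_\ca$ is standard since $\mathrm{gl.dim}\leq 2$, via the Amiot-style dg-description), each component is a stable tube $\Z A_\infty/\langle \tau^p\rangle$ with period $p$ dividing $b$; the tubular-type possibility like $\Z\widetilde{A}$ etc. is excluded because the component is $\tau$-periodic of uniformly bounded period. For cases 1) and 3) with $a\neq b$: here I would invoke the shape theorem for AR components of derived categories — a connected component of the AR quiver of $D^b\mathsf{mod}\,\ca$ is of the form $\Z\Delta$ for $\Delta$ a Dynkin diagram, or $\Z A_\infty$, $\Z A_\infty^\infty$, $\Z D_\infty$, or a tube. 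The fractional CY relation $\tau_{\sD}^b\cong\Sigma^{a-b}$ with $a-b\neq 0$ means $\tau_{\sD}$ has infinite order on each component, which rules out tubes; then one distinguishes: if $\ca$ is hereditary of Dynkin type, the component is $\Z Q$ with $Q$ Dynkin and necessarily $a/b<1$ (since the Coxeter number bound forces $a<b$); otherwise, one shows the only surviving shape compatible with the $\Sigma$-action being "fixed-point-free of infinite order on components" and with $\mathrm{gl.dim}\leq 2$ is $\Z A_\infty$. The exclusion of $\Z A_\infty^\infty$, $\Z D_\infty$, and $\Z\Delta$ for $\Delta$ non-Dynkin would use that $\sD_\ca$ has a tilting object (the algebra $\ca$ itself, as $\mathrm{gl.dim}$ is finite) together with the fractional CY property, appealing to results bounding which hereditary-type pieces can appear.

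\textbf{Main obstacle.} The hard part will be the careful case analysis in 1) versus 3): showing precisely that fractional CY of dimension $a/b<1$ forces $\ca$ itself to be hereditary of Dynkin type (not merely derived equivalent to such), and conversely that $a/b<1$ cannot occur in case 3). This requires controlling how the Serre functor interacts with the canonical $t$-structure and its tilts, and is where one genuinely needs the Amiot–Oppermann machinery (Proposition 6.1 of \cite{amiot2}) rather than soft arguments; I would cite that proposition directly since it is attributed in the excerpt. The verification that the components are \emph{standard} (so that the abstract translation-quiver classification applies) is a second delicate point, handled by the fact that $\mathrm{gl.dim}\,\ca\leq 2$ makes $\sD_\ca$ algebraic and the relevant dg enhancement is available, as in \cite{amiot}.
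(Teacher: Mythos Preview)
The paper does not prove this proposition at all: it is stated with the attribution ``[\S.6.1 of \cite{amiot2}]'' and then used as a black box, with no argument given. There is therefore no proof in the paper against which to compare your attempt.

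Your sketch is a reasonable outline of the standard strategy (Riedtmann/Happel--Preiser--Ringel structure theory for stable translation quivers, combined with the identity $\tau_{\sD}^b\cong\Sigma^{a-b}$), and indeed you yourself concede at the end that the delicate steps --- ruling out $\Z A_\infty^\infty$ and $\Z D_\infty$ in case 3), and showing in case 1) that $\ca$ is literally the path algebra rather than merely derived-equivalent to one --- require the Amiot--Oppermann machinery and that you ``would cite that proposition directly.'' That is exactly what the paper does. So your proposal and the paper are aligned in the sense that both defer the actual proof to \cite{amiot2}; the intermediate heuristics you supply are extra commentary rather than a competing argument.
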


\subsection{Representations of Dynkin quivers and related categories}

Let $\Gamma$ be a quiver obtained by choosing an orientation in a Dynkin graph of type $ADE$. We identity the representations of $\Gamma$ with the module category $\mathsf{mod}_\Gamma$ of the path algebra $\C \Gamma$, which is a hereditary category with finitely-many indecomposables, all rigid, in one-to-one correspondence with the positive roots
$\Delta^+(\Gamma)$ of the underlying Dynkin graph.
The correspondence sends the indecomposable $X\in\mathsf{mod}_\Gamma$ to the root
\be
X\mapsto \sum_{v\in\Gamma} \dim X_v\,\alpha_v.
\ee
The indecomposable projective $P_i$
consists of all paths in $\Gamma$ starting at node $i$; dually the indecomposable injective $I_i$ consists of all paths terminating at $i$. Since the category $\mathsf{mod}_\Gamma$ is hereditary, all modules $X$ have a projective resolution of the form
\be
0\to \bigoplus_{i\in I} P_i \to \bigoplus_{j\in J}P_j \to X \to 0.
\ee

The indecomposables of the derived category $\sD_\Gamma\equiv D^b \mathsf{mod}_\Gamma$  then have the form $X_\beta[n]$, $\beta\in \Delta^+(\Gamma)$, $n\in\Z$. Up to equivalence,
$\sD_\Gamma$ is independent of the chosen orientation. $\sD_\Gamma$ has the triangles
inherited from $\mathsf{mod}_\Gamma$
\be
\bigoplus_{i\in I} P_i \to \bigoplus_{j\in J}P_j \to X \to \bigoplus_{i\in I}\Sigma P_i.
\ee 
The AR translation $\tau$ is then defined by the following triangle
\be\label{ttar}
\tau X\to \bigoplus_{i\in I} I_i \to \bigoplus_{j\in J}I_j\to \Sigma \tau X.
\ee
The cluster category $\sC_\Gamma$ is the orbit category of $\sD_\Gamma$ with respect to $\tau^{-1}\Sigma$. Modulo
isomorphism, its indecomposables are the indecomposables of $\mathsf{mod}_\Gamma$ together with the shifted projectives $P_i[1]$.

\subsection{Coherent sheaves on weighted projective lines}\label{Acoh}

References for this topics are \cite{GL1,lenzing1,lenzing2,lenzingmeltzer,meltzer,dirk,revLINE}. Given a set of  integral weights $\boldsymbol{p}=(p_1,p_2,\dots,p_s)$, $p_i\geq 2$ we define $L(\boldsymbol{p})$ to be the Abelian group
 over the generators
$\vec x_1,\vec x_2,\dots, \vec x_s$
subjected to the relations
\begin{equation}\vec c= p_1\vec x_1=p_2\vec x_2=\cdots=
p_s\vec x_s.\end{equation}
$\vec c$ is called the \emph{canonical} element of 
$L(\boldsymbol{p})$, while the \emph{dual} element is 
\begin{equation}
\vec \omega=(s-2)\vec c-\sum_{a=1}^s \vec x_a\in L(\boldsymbol{p}).\end{equation}
Given the weights $\boldsymbol{p}$ and $s$
distinct points $(\lambda_a:\mu_a)\in\mathbb{P}^1$ we define a ring
graded by $L(\boldsymbol{p})$
\begin{equation}S(\boldsymbol{p})=\bigoplus_{\vec a\in L(\boldsymbol{p})} S_{\vec a}= \C[X_1,X_2,\cdots, X_s,u,v]\Big/\big(X_1^{p_1}-\lambda_1 u -\mu_1 v, \;\cdots,\, X_s^{p_s}-\lambda_s u -\mu_s v\big)\end{equation}
where the degree of $X_a$ is $\vec x_a$
and the degree of $u,v$ is $\vec c$.
The weighted projective line $\mathbb{X}(\boldsymbol{p})$ is defined to be the projective scheme $\mathsf{Proj}\,S(\boldsymbol{p})$. Its Euler characteristic is 
\begin{equation}\label{pppl}\chi(\boldsymbol{p})=2-\sum_{a=1}^s(1-1/p_a).\end{equation}
The Picard group of $\mathbb{X}(\boldsymbol{p})$ (i.e.\! the group of its invertible coherent sheaves $\equiv$ line bundles) is isomorphic to the 
group $L(\boldsymbol{p})$
\begin{equation}\mathsf{Pic}\,\mathbb{X}(\boldsymbol{p})=
\big\{\co(\vec a)\;\big|\; \vec a\in L(\boldsymbol{p})\big\},\end{equation}
i.e.\! all line bundles are obtained from the structure sheaf $\co\equiv \co(0)$ by shifting its degree in $L(\boldsymbol{p})$.
The dualizing sheaf is $\co(\vec\omega)$.
Hence
\begin{equation}\tau\, \co(\vec a)=\co(\vec a+\vec\omega).\end{equation}
One has
\begin{equation}\mathrm{Hom}(\co(\vec a),\co(\vec b))\simeq S_{\vec b-\vec a},\qquad \mathrm{Ext}^1(\co(\vec a),\co(\vec b))\simeq D\,S_{\vec a+\vec\omega-\vec b}.
\end{equation}
Any non--zero morphism between 
line bundles is a monomorphism \cite{lenzing1,lenzing2}. In particular, for all line bundles $L$,
 $\mathrm{End}\,L=\C$. Hence, if
$(\lambda:\mu)\in\mathbb{P}^1$ is \underline{not} one of the special $s$ points $(\lambda_i:\mu_i)$,
we have the exact sequence
\begin{equation}0\to \co\xrightarrow{\lambda u+\mu v}\co(\vec c)\to \cs_{(\lambda:\mu)}\to 0\end{equation}
which defines a coherent sheaf $\cs_{(\lambda:\mu)}$ concentrated at $(\lambda:\mu)\in\mathbb{P}^1$. It is a simple object in the category $\mathsf{coh}\,\mathbb{X}(\boldsymbol{p})$
(the `skyscraper').
At the special points $(\lambda_a:\mu_a)\in\mathbb{P}^1$
the skyscraper is not a simple object but rather it is an indecomposable of length $p_a$. The simple sheaves localized at the $a$--th special point $(\lambda_a:\mu_a)$
are the $\cs_{a,j}$ (where $j\in\Z/p_a\Z$)
defined by the exact sequences
\begin{equation}0\to \co(j\vec x_a)\to \co((j+1)\vec x_a)\to \cs_{a,j}\to 0.\end{equation}
Applying $\tau$ to these sequences we get 
\begin{equation}\tau \cs_{(\lambda;\mu)}=\cs_{(\lambda;\mu)},\qquad \tau\cs_{a,j}=\cs_{a,j-1}.\end{equation}

In conclusion we have\footnote{ The notation in the \textsc{rhs} \cite{lenzing1,lenzing2,ringelbook} stands for two properties: \textit{(i)} all object $X$ of 
$\mathsf{coh}\,\mathbb{X}(\boldsymbol{p})$ has the form $X_+\oplus X_0$ with $X_+\in\ch_+$, $X_0\in\ch_0$, and \textit{(ii)}
$\mathrm{Hom}(\ch_0,\ch_+)=0$.} \cite{lenzing1,lenzing2}
\begin{equation}\mathsf{coh}\,\mathbb{X}(\boldsymbol{p})=\ch_+\vee \ch_0,
\end{equation}
 where $\ch_0$ is the full Abelian subcategory of finite length objects (which is a
 uniserial category) and $\ch_+$
 is the subcategory of \emph{bundles}. 
 Any non--zero morphism from a line bundle $L$ to a bundle $E$ is a monomorphism. For all bundles $E$ we have a filtration\cite{lenzing1,lenzing2}
 \begin{equation}\label{filtration}0=E_0\subset E_1\subset E_2\subset\cdots\subset E_\ell=E,\end{equation}
 with $E_{i+1}/E_i$ line bundles. Then
 we have an additive function $\mathsf{rank}\colon K_0(\mathsf{coh}\,\mathbb{X}(\boldsymbol{p}))\to \Z$, the \emph{rank}, which is $\tau$--invariant, zero on $\ch_0$
 and positive on $\ch_+$. $\mathsf{rank}\,E$ is the length $\ell$ of the filtration \eqref{filtration}; line bundles have
 rank 1. 
  
 We define the additive function \emph{degree}, $\mathsf{deg}\colon K_0(\mathsf{coh}\,\mathbb{X}(\boldsymbol{p}))\to \frac{1}{p}\Z$, by
 \begin{equation}\mathsf{deg}\,\co\!\left(\sum\nolimits_a n_a\vec x_a\right)=\sum_a\frac{n_a}{p_a}.\end{equation}
$\mathsf{deg}$ satisfies the four properties:
 \textit{(i)} the degree is $\tau$ stable;
\textit{(ii)}  $\mathsf{deg}\,\co=0$; \textit{(iii)}  if $\cs$ is a simple of $\tau$--period $q$ one has $\mathsf{deg}\,\cs=1/q$; \textit{(iv)} $\mathsf{deg}\,X>0$ for all non--zero objects in $\ch_0$.

Physically, $\mathsf{rank}$ is the Yang--Mills
magnetic charge while $\mathsf{deg}$
is (a linear combination of) the Yang--Mills electric charge (and matter charges)
normalized so that the $W$ boson has charge $+1$. For the four weighted projective lines $\mathbb{X}_p$ with $\chi(\boldsymbol{p})=0$, eqn.\eqref{pppl}, the Riemann--Roch theorem reduces to the equality \cite{GL1,lenzing1,lenzing2}
\begin{equation}\label{RR}\frac{1}{p}\sum_{j=0}^{p-1}\big\langle [\tau^jX],[Y]\big\rangle_E =\mathsf{rank}\,X\,\mathsf{deg}\,Y-\mathsf{deg}\,X\,\mathsf{rank}\,Y.\end{equation}
The \emph{slope} $\mu(E)$ of a coherent sheaf $E$ is the ratio of its degree and rank\footnote{ By convention, the zero object has all slopes.} 
 \begin{equation}\mu(E)=\mathsf{deg}\,E/\mathsf{rank}\,E.\end{equation}

The hereditary category $\mathsf{coh}\,\bX(\boldsymbol{p})$ has a \emph{canonical} tilting object $T_\text{can}$ whose endomorphism algebra is the Ringel canonical algebra $\Lambda(\boldsymbol{p})$ of type $(\boldsymbol{p})$  \cite{GL1,lenzing1,lenzing2}. $T_\text{can}$ is the direct sum of $n\equiv \sum_i (p_i-1)+2$ line bundles
\begin{equation}\label{bases}\co,\qquad \co(\ell\vec x_i)\ (\text{with }i=1,\dots,s,\ \ell=1,\dots, p_i-1),\qquad \co(\vec c).\end{equation}
By definition of tilting object, $\mathrm{Ext}^1$ vanishes between any pair of sheaves in eqn.\eqref{bases}, while
the only non--zero Hom spaces are
\begin{equation}\dim \mathrm{Hom}(\co,\co(\vec c))=2,\qquad
\dim \mathrm{Hom}(\co(k_i\vec x_i),\co(\ell_i\vec x_i))=1,\quad 0\leq k_i\leq \ell_i\leq p_i,\end{equation}
where, for all $i$, $\co(0\,\vec x_i)\equiv \co$ and $\co(p_i\vec x_i)\equiv \co(\vec c)$.

\subparagraph{Telescopic functors.}

If $\chi(\bX(\boldsymbol{p}))=0$ the derived category $D^b\mathsf{coh}\, \bX(\boldsymbol{p})$ has additional auto-equivalences generated by the telescopic functors $T$ and $L$.

$T$ is simply the functor which shifts the $L(\boldsymbol{p})$ degree of the sheaf by $\vec x_3$\cite{lenzing2,meltzer,dirk}
\begin{equation}
X\longmapsto X(\vec x_3)\equiv T X,\end{equation} where we ordered the weights so that $p_3\equiv p$ is the largest one. 
Explicitly, the action on the generating
sheaves
 $\co$, $\cs_{i,j}$ is given by
\begin{equation}T\co=\co(\vec x_3),\quad T\cs_{3,j}=\cs_{3,j+1}, \quad T\cs_{a,j}=\cs_{a,j}\ \text{for }a\neq 3.\end{equation}
Thus $T$ preserves the $\mathsf{rank}$, while increases the \textsf{degree} by $1/p$ times the $\mathsf{rank}$.
$L$ is defined by the 
triangle 
\begin{equation}\bigoplus_{j=0}^{p-1} \mathrm{Hom}^\bullet(\tau^j \co,X)\otimes \tau^j \co\xrightarrow{\ \mathrm{can_X}\ } X\longrightarrow L X.\end{equation}
One has $\mathsf{deg}\,L X=\mathsf{deg}\,X$ while
$\mathsf{rank}\,L X=\mathsf{rank}\,X-p\,\mathsf{deg}\,X$.
In particular,
\be
\begin{aligned} L\co&=\tau^{-1}\co\equiv\co(-\vec \omega),\\
L\cs_{a,j}&=\co\big(-\vec x_a+(p-1-j)\vec\omega\big)[1]\quad \text{iff }p_a=p.
\end{aligned}\ee

It is easy to see that \cite{lenzingmeltzer,meltzer}
\begin{equation}\label{braidrel2}LTL=TLT.\end{equation}
Useful formulae are \cite{Cecotti:2015hca}
\be\label{useexp}
\begin{aligned}
L\,\co(\vec x_3)&=\cs_{3,0}\\
LTL\,\co&=TLT\,\co=\cs_{3,1}\\
LTL\,\cs_{3,1}&=TLT\,\cs_{3,j}=\tau^{-(j+2)}\co[1].
\end{aligned}
\ee

$T$, $L$ generate the braid group $\cb_3$.
One has $PSL(2,\Z)=\cb_3/Z(\cb_3)$. The center $Z(\cb_3)$ of $\cb_3$ is the infinite cyclic group generated by $(TL)^3$.
One has
\begin{align}(TL)^3(\co)&=TLT\cdot LTL(\co)=
TLT(\cs_{3,1})=\tau^{-3}\co[1],\\
(TL)^3(\cs_{3,j})&=TLT\cdot LTL(\cs_{3,j})=\tau^{-(j+2)} TLT(\co)[1]=\tau^{-(j+2)}\cs_{3,1}[1]=\tau^{-3}\cs_{3,j}[1].\end{align}
So we have the isomorphism of triangle functors
\begin{equation}(TL)^3\simeq \tau^{-3}\Sigma.\end{equation}
If $p\neq3$, $\simeq$ is replaced by $=$.

\section{Computations in the derived category of del Pezzo's}

We recall that an order sequence $\{E_1,\dots, E_r\}$ of objects in a triangle category $\sD$ is \emph{exceptional} iff
\be
\sD(E_i,E_j[k])=0 \quad\forall\; k\ \text{and }i>j,\qquad \sD(E_i,E_i[k])=\begin{cases}\C & k=0\\
0 &\text{otherwise.}
\end{cases}
\ee
An exceptional sequence is \emph{full} if generates $\sD$ as a triangulated category. It is \emph{strongly exceptional} iff, in addition,
\be
\sD(E_i, E_j[k])=0\ \text{for }k\neq0\
\text{and all } i,j.
\ee
We need the following
\begin{lem}[\textbf{Corollary 2.11} of \cite{orlov1}]. Let $\{E_i\}$ be an exceptional sequence of sheaves on a del Pezzo surface $X$, then,
for $i<j$, one has $\mathrm{Ext}^2(E_i,E_j)=0$ and at most one of the two spaces $\mathrm{Hom}(E_i,E_j)$, $\mathrm{Ext}^1(E_i,E_j)$ is non zero.
\end{lem}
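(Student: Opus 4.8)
The statement to prove is the lemma attributed to Orlov (Corollary 2.11 of \cite{orlov1}): for an exceptional sequence $\{E_i\}$ of \emph{sheaves} on a del Pezzo surface $X$ and $i<j$, one has $\mathrm{Ext}^2(E_i,E_j)=0$ and at most one of $\mathrm{Hom}(E_i,E_j)$, $\mathrm{Ext}^1(E_i,E_j)$ is nonzero. The natural route is to combine Serre duality on $X$ with the Euler characteristic and the exceptionality hypothesis. First I would record Serre duality on a surface: $\mathrm{Ext}^2(E_i,E_j)\cong \mathrm{Hom}(E_j,E_i\otimes K_X)^\vee$. Because $X$ is del Pezzo, $-K_X$ is ample, so $E_i\otimes K_X$ is ``more negative'' than $E_i$; the plan is to argue that a nonzero map $E_j\to E_i\otimes K_X$ cannot exist when $j>i$. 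The cleanest way is to twist: a nonzero element of $\mathrm{Hom}(E_j,E_i\otimes K_X)$ would give, after noting that in the exceptional collection the relevant homs in the ``wrong'' direction vanish, a contradiction with $\mathrm{Hom}(E_j,E_i[k])=0$ for all $k$ (the defining property of exceptionality for $i<j$). Here one uses that $E_i\otimes K_X$ still sits inside the triangulated category generated by the collection, but more elementarily one exploits that for an exceptional \emph{pair} of sheaves, all the higher cohomology in one direction is forced to vanish. I would make this precise by the sub-argument below.

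\textbf{Key steps in order.} (1) Apply Serre duality: $\mathrm{Ext}^2(E_i,E_j)\cong\mathrm{Hom}(E_j, E_i\otimes K_X)^\vee$ and $\mathrm{Ext}^2(E_j,E_i)\cong\mathrm{Hom}(E_i,E_j\otimes K_X)^\vee$. (2) Use the exceptionality of the pair $(E_i,E_j)$ with $i<j$, which says $\mathrm{RHom}(E_j,E_i)=0$; combined with the fact that $E_i$, $E_j$ are honest sheaves (so $\mathrm{RHom}$ between them is concentrated in degrees $0,1,2$) this gives $\mathrm{Hom}(E_j,E_i)=\mathrm{Ext}^1(E_j,E_i)=\mathrm{Ext}^2(E_j,E_i)=0$, hence $\mathrm{Hom}(E_i,E_j\otimes K_X)=0$ by Serre duality. (3) Now run a standard del Pezzo vanishing argument for $\mathrm{Ext}^2(E_i,E_j)$: one shows $\mathrm{Hom}(E_j,E_i\otimes K_X)=0$. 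The idea is that $E_i\otimes K_X$ is obtained from $E_i$ by tensoring with the anti-ample line bundle $K_X$; a nonzero map $E_j\to E_i\otimes K_X$ would be destabilizing in a way incompatible with the exceptional collection, or — more robustly — one invokes that $\mathrm{RHom}(E_j, E_i\otimes K_X)$ can be analyzed through the spectral sequence / long exact sequences coming from the Koszul-type resolution of $K_X^{-1}$ as a successive extension of exceptional line bundles on the blow-up description of $X$ (as in \textbf{Fact \ref{fact1}}), reducing everything to vanishing of $\mathrm{RHom}$ between members of the full collection. (4) Finally, for the ``at most one of $\mathrm{Hom}$, $\mathrm{Ext}^1$'' claim: having shown $\mathrm{Ext}^2(E_i,E_j)=0$, compute the Euler form $\chi(E_i,E_j)=\dim\mathrm{Hom}(E_i,E_j)-\dim\mathrm{Ext}^1(E_i,E_j)$ and observe via the argument just given (applied with roles not exchanged) that the two terms cannot both be positive; more directly, one shows that if $\mathrm{Ext}^1(E_i,E_j)\neq0$ then the universal extension produces a new exceptional object replacing $E_i$ whose $\mathrm{Hom}$ to $E_j$ vanishes — the standard mutation argument — and symmetrically. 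This dichotomy is exactly the statement that exceptional sequences on del Pezzo surfaces are ``acyclic'' in the derived sense.

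\textbf{Main obstacle.} The genuinely delicate step is (3), the vanishing $\mathrm{Ext}^2(E_i,E_j)=0$, equivalently $\mathrm{Hom}(E_j,E_i\otimes K_X)=0$. Unlike the $\mathrm{Ext}^2(E_j,E_i)$ vanishing (which falls straight out of exceptionality plus Serre duality), this one uses the del Pezzo condition in an essential way and not merely formal properties of the collection; one really needs to control how $K_X$ interacts with the members of the exceptional sequence. I expect the cleanest self-contained argument is to resolve $\co_X(-K_X)$ (or dually $K_X$) by a complex built from the line bundles $\pi^*\co$, $\pi^*\co(1)$, $\pi^*\co(2)$ and the $\co_{\ell_a}$ appearing in \textbf{Fact \ref{fact1}}, then use that $\mathrm{RHom}(E_j, -)$ applied to each term of this complex vanishes in the relevant degree because $\{E_1,\dots\}$ is a full strongly exceptional collection. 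Everything else in the lemma — Serre duality, the Euler-characteristic bookkeeping, the mutation dichotomy — is routine and I would not spell it out in detail.
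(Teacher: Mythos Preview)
The paper does not prove this lemma at all: it is quoted verbatim as \textbf{Corollary 2.11} of \cite{orlov1} (Kuleshov--Orlov) and used as a black box to compute the Ext-tables needed for \textbf{Fact~\ref{fact1}}. So there is no ``paper's own proof'' to compare your attempt against; the comparison is necessarily with the original Kuleshov--Orlov argument.

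Your skeleton is reasonable in places --- Serre duality is indeed the starting point, and you correctly isolate step~(3), the vanishing $\mathrm{Hom}(E_j,E_i\otimes K_X)=0$, as the only substantive step --- but your proposed attack on (3) has a real gap. You suggest resolving $K_X$ (or $-K_X$) by the specific line bundles of the full strong collection in \textbf{Fact~\ref{fact1}} and then using that $\mathrm{RHom}(E_j,-)$ vanishes on the pieces. This does not work: the lemma is about an \emph{arbitrary} exceptional sequence of sheaves $\{E_i\}$, not about members of that particular collection, so there is no reason $\mathrm{RHom}(E_j,-)$ should vanish on the terms of such a resolution. You are implicitly assuming the $E_i$ already sit inside the standard collection, which is exactly what one is trying to use the lemma to establish later on.

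The Kuleshov--Orlov proof goes by a different mechanism: one first proves that exceptional (more generally, rigid) sheaves on a del Pezzo surface are $\mu$-semistable with respect to the anticanonical polarization $-K_X$. Then a nonzero $\phi\colon E_j\to E_i\otimes K_X$ forces $\mu(E_j)\le\mu(E_i)-K_X^2<\mu(E_i)$, and the Hom/Ext$^1$ dichotomy is obtained by a further semistability/filtration argument (essentially Mukai's lemma for rigid sheaves). If you want a self-contained proof, that is the ingredient you are missing; the del Pezzo hypothesis enters precisely through the semistability of exceptional sheaves and the positivity of $K_X^2$, not through any particular resolution of $K_X$. Your step~(4) mutation sketch is also too vague to stand on its own --- the universal-extension trick shows how to replace one member of the pair, but does not by itself yield the simultaneous nonvanishing contradiction without the stability input.
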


Clearly $\mathrm{Hom}(E_i,E_j)\neq0$ (resp.\! $\mathrm{Ext}^1(E_i,E_j)\neq0$) iff the Euler form $\chi(E_i,E_j)>0$ (resp.\! $\chi(E_i,E_j)<0$). One the other hand $\chi(E_j,E_i)=0$ by definition of exceptional sequence. Hence
\be\label{rrrcq}
\chi(E_i,E_j)=\chi(E_i,E_j)-\chi(E_j,E_i)=-r(E_i)\,c_1(E_j)\cdot K_X+r(E_j)\,c_1(E_i)\cdot K_X,
\ee 
where we used Riemann-Roch.

We apply these results to the standard full (non strongly) exceptional  sequence \cite{orlov1} $\{\co_{\ell_1}(-1),\cdots, \co_{\ell_k}(-1),\pi^*\co, \pi^*\co(1),
\pi^*\co(2)\}$. For $s=0,1,2$ we have 
\be
\chi(\co_{\ell_j}(-1),\pi^*\co(s))=c_1(\co_{\ell_j}(-1))\cdot K_X=-1,
\ee 
hence, in the derived category $\sD_X$
\be
\sD_X\big(\co_{\ell_j}(-1),\pi^*\co(s)[k]\big)=\begin{cases}\C & k=1\\
0 &\text{otherwise.}
\end{cases}
\ee
Since $\sD_X\big(\pi^*\co(s),\co_{\ell_j}(-1)[k]\big)=0$ for all $k$, the sequence
\be\label{llllzzzz1m}
\big\{E_i\big\}:=\big\{\co_{\ell_1}(-1),\cdots, \co_{\ell_k}(-1),\pi^*\co[1], \pi^*\co(1)[1],
\pi^*\co(2)[1]\big\}
\ee
is full and strongly exceptional.
The ``Cartan'' $(k+3)\times (k+3)$ matrix is
\be
S^{-1}_{ij}:=\dim\sD_X(E_i,E_j)=\begin{cases}\delta_{ij} &
1\leq i,j\leq k\\
1 & 1\leq i\leq k\ \text{and } j\geq k+1\\
\delta_{i,j}+3(j-i) & k+1\leq i\leq j\leq k+3\\
0 & \text{otherwise}
\end{cases}
\ee
so that
\be
S_{ij}=\delta_{ij}+\begin{cases} -1 & 1\leq i\leq k \ \text{and }j=k+1,\; k+3
\\
+2 & 1\leq i\leq k \ \text{and } j=k+2\\
3(-1)^{j-i} & k+1\leq i<j\leq k+3\\
0& \text{otherwise.}
\end{cases}
\ee
The  number of solid (resp.\! dashed) arrows in the quiver
with relations of the triangular algebra $\cb:=\mathrm{End}(\oplus_i E_i)$ is
\be
\#\{\xymatrix{i \ar[r]& j}\} =\max\{-S_{ij}, 0\},\qquad 
\#\{\xymatrix{i \ar@{..>}[r]& j}\}  =\max\{S_{ji}-\delta_{ij}, 0\}
\ee
so that is
\bigskip

 \be\label{uuuqw1244}
 \begin{gathered}
 \xymatrix{\bullet_1\ar@/^3.6pc/[rrrd] \ar[rr] && \bullet_{k+1}\ar@<0.4ex>[dd]\ar[dd]\ar@<-0.4ex>[dd]\\
 \vdots &&& \bullet_{k+3}\ar@{..>}@<0.4ex>[ul]\ar@{..>}[ul]\ar@{..>}@<-0.4ex>[ul] 
\\
 \bullet_k\ar@/_3.6pc/[rrru]
 \ar[uurr] && \bullet_{k+2}\ar@<0.2ex>@{..>}[uull]\ar@<-0.2ex>@{..>}[uull]\ar@<0.2ex>@{..>}[ll]
 \ar@<-0.2ex>@{..>}[ll]\ar@<0.4ex>[ur]\ar[ur]\ar@<-0.4ex>[ur]
 }
 \end{gathered}
 \ee
 \bigskip
 
 \noindent
 Erasing the last node $\bullet_{k+3}$ we get the quiver with relations $\mathring{Q}$ of the triangular algebra $\ca=\mathrm{End}(\oplus_{i=1}^{k+2}E_i)$. This gives the quiver in figure \eqref{uuuqw1345}.
 If we use the full strong exceptional  
sequence \eqref{jjjqawe} instead of the \eqref{llllzzzz1m} one, the computation is similar except that
from \eqref{rrrcq} we have an extra factor 2
\be
\dim\sD_X(\co_{\ell_i}(-1), \pi^*\ct(-1)[k])=2\;\dim\sD_X
(\co_{\ell_i}(-1), \pi^*\co(1)[k]).
\ee
The Serre functor acts $S$ acts on the Grothendieck Group $K_0(\sD_\ca)$ by the 2d monodromy matrix $H_\ca$ of the algebra $\ca$ ($\equiv$ minus the Coxeter matrix). If $\tilde S$ is the 
$(k+2)\times (k+2)$ matrix obtained by deleting the last row and column of $S$, we have \cite{CV92}
\be
H_\ca= (\tilde S^t)^{-1}\tilde S.
\ee
One easily checks that $H_\ca$ satisfy the correct equation
for the action in the Grothedieck group $K_0(\sD_\ca)$ of an auto-equivalence $S$ such that $S^d=\Sigma^{2(d-1)}$. Indeed,
the minimal equation of $H_\ca$ is
\be
H_\ca^d=1,
\ee
where $d$ is the degree of the associated del Pezzo as a hypersurface in weighted projective space.

\section{Elements of the categorical theory of flavor}\label{catfal}

We consider a cluster category $\sC_A$ arising as in \S.\ref{amiotcc}: one starts from a triangular algebra $A=\C Q/I$ with nilpotent $\mathrm{Tor}_2^A$. Then
\be
\sC_A= \Big(\sD_A\big/(\tau^{-1}\Sigma)^\Z\Big)_{\text{triangle hull}}\qquad A\ \text{is cluster-tilting in }\sC_A.
\ee
where (as usual) we write $\sD_A$ for $D^b \mathsf{mod}\,A$.

We apply the `cutting technique' of
\cite{groK} (see also \cite{Caorsi:2016ebt}). We say that an object $X\in\sD_A$ is $2q$-periodic iff  $\tau^{2q} X=(\tau^{-1}\Sigma)^mX$ for some integer $m$.  If $X$ is $2q$-periodic,  
the function $\lambda_X\colon \sC_A\to\bQ$
\be
\lambda_X(Y)=\langle Y, X\rangle=
\frac{1}{q}\sum_{k=0}^{2q-1} (-1)^k\, \dim \sC_A(Y, \Sigma^k X)
\ee
is well-defined on $K_0(\sC_A)$.
Note that we have an additional factor $1/q$ in front of the \textsc{rhs} with respect to ref.\!\cite{groK} ; this guarantees that the \textsc{lhs} is independent of the chosen $q$ and moreover $\langle Y,X\rangle$ is symmetric if both $X$ and $Y$ are periodic of possibly different periods.

Suppose that $X,Y$ ($X$ periodic) both belong to the 
orbit subcategory; since the embedding
$\sD_A/(\tau^{-1}\Sigma)^\Z \hookrightarrow \sC_A$ is fully faithfull,
we have
\be\label{zxas}
\langle Y, X\rangle =\frac{1}{q}\sum_{k=0}^{2q-1}(-1)^k \dim\sC_A(Y,\tau^kX)=
\frac{1}{q}\sum_{j\in\Z}
\sum_{k=0}^{2q-1}(-1)^k \dim\sD_A(Y,\tau^{k-j}\Sigma^{j}X).
\ee
By definition of triangular hull
\be\label{kkazqw}
 K_0(\sC_A)\cong K_0\big(\sD_A/(\tau^{-1}\Sigma)^\Z\big)
 \ee
so the restricted formula \eqref{zxas} suffices to compute the quadratic $\bQ$-form on $K_0(\sC_A)$.
If $A$ is derived equivalent to a hereditary category $\ch$, this form (with a different normalization) was computed in \cite{groK}.
We quote their result specialized to the case of interest. For brevity we omit the 4 asymptotically-free cases which are also covered by \cite{groK} . Data to the left (right) of the double bar come from physics (mathematics):

\begin{center} 
\begin{minipage}{400pt}
\begin{tabular}{cc||cccc}
$\Delta$ &$F$ & $\ch$ & $K_0(\sC_A)$ & period &quadratic $\bQ$-form\\\hline
$\tfrac{4}{3}$ & $SU(2)$ & $\mathsf{mod}\,\C A_3$
& $\Z[S_1]$ & 6 & $\langle [S_1],[S_1]\rangle=\tfrac{2}{3}$\\
$\tfrac{3}{2}$ & $SU(3)$ & $\mathsf{mod}\,\C D_4$ & $\Z[S_1]\oplus \Z[\Sigma S_2]$ & $8$ 
& $\left[\begin{array}{r}\langle[S_1],[S_1]\rangle=1\\
\langle[\Sigma S_1],[\Sigma S_1]\rangle=1\\
\langle[S_1],[\Sigma S_2]\rangle=\tfrac{1}{2}\end{array}\right.$\\
$2$ & $Spin(8)$ & $\mathsf{coh}\, \bX_{(2,2,2,2)}$ & $\left[\begin{array}{l}\sum_{a=1}^4w_a [\cs_{a,1}]\\
w_a\in\tfrac{1}{2}\Z\\
w_a=w_b\bmod1\\
\end{array}\right.$  & 2 & $\langle [\cs_{a,1}], [\cs_{b,1}]\rangle=2\,\delta_{ab}$\\
\end{tabular}
\end{minipage}
\end{center}

From this table one property is obvious:
\begin{quote}\it 
$K_0(\sC_A)$ is the weight lattice $\Gamma^\text{wgt}_F$ of $F$
and the matrix of the quadratic $\bQ$-form (in the basis in the table) is  equal to
\be
\Delta\, C^{-1}_F
\ee
where $C_F$ is the Cartan matrix of $F$.
\end{quote}
This strange result has a suggestive interpretation. For the above three 
SCFT one has the relation
\be
\Delta= \kappa_F/2,
\ee
so what we actually find is 
\be
\frac{1}{2} \kappa_F\, C^{-1}_{ab},
\ee
which is the physical natural answer since
$\kappa_F$ is the normalization of the
two point flavor currents. 

\subsection{Generalization to Amiot cluster categories}

Let us generalize this construction to the case that we have a triangular algebra $\ca$ satisfying the 4d/2d condition (\S.\ref{amiotcc}). We have the Serre functor
$S:\sD_\ca\to \sD_\ca$ given by $X\mapsto X\overset{L}{\otimes}_\ca D\ca$.
We write $\sD_\ca$ for the derived category and $\sC_\ca$ for the cluster category of $\ca$. 
We start by recalling some definitions and useful relations.

\subsubsection{Fractional Calabi-Yau categories and quantum monodromies}
\label{fracCY}
We recall that a triangle category with Serre functor $S$ is said to have \emph{fractional Calabi-Yau dimension}
$a/b$, or simply to be $a/b$-CY, iff
$S^b=\Sigma^a$  and the positive integers $a$, $b$ are minimal for this property (the `fraction' $a/b$ should not be reduced\,!!). The image of $a/b$ in $\bQ$ is written $\hat c$ and physicists call it the \textit{$2d$ superconformal central charge.}
If the category is associate to a 4d SCFT we must have
$\hat c<2$ \cite{CNV}.

\begin{defn}[\S.3.3 of \cite{Caorsi:2016ebt}]
The \emph{2d (quantum) monodromy} $H$
is the image of $S$ in the 2-periodic 
\be
\sR_\ca\equiv\Big(\sD_\ca/(\Sigma^2)^\Z\Big)_\text{triangular hull}.
\ee
The \emph{4d (quantum) monodromy} $\bM$ is the image of $S$ in the cluster category $\sC_\ca$. 
\end{defn}

We shall us the notation $o(H)$, $o(\bM)$ for the orders of 
$H$ and $\bM$, respectively.

\begin{lem}[\S.3.3 of \cite{Caorsi:2016ebt}] Let $\sD_\ca$ be fractional $a/b$-CY 
with $\hat c<2$. Then
\be
o(H)=\frac{2b}{\gcd(a,2)},\qquad o(\bM)\mid\frac{2b-a}{\gcd(a,2)},\qquad q\equiv o(\bM).
\ee
\end{lem}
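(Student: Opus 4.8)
The statement to prove is the Lemma relating the orders $o(H)$, $o(\bM)$ of the $2d$ and $4d$ quantum monodromies to the data $a,b$ of a fractional $a/b$-CY derived category $\sD_\ca$ with $\hat c=a/b<2$. The plan is to compute each order directly from the defining relation $S^b\cong\Sigma^a$ by passing to the appropriate orbit/hull category and asking for the smallest power of $S$ that becomes the identity there. First I would treat $o(H)$: in $\sR_\ca$ we have $\Sigma^2\cong\mathrm{Id}$ by construction, so the image $H$ of $S$ satisfies $H^b=\Sigma^a$, and $\Sigma^a$ equals $\mathrm{Id}$ iff $a$ is even, and equals $\Sigma$ (an order-$2$ element, nontrivial since $\sD_\ca$ is not $1$-periodic) iff $a$ is odd. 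A short case analysis on the parity of $a$ then gives $H^{o(H)}\cong\mathrm{Id}$ with $o(H)=2b/\gcd(a,2)$: when $a$ is even, $o(H)=b$ and one checks minimality from the fact that $a/b$ is already in lowest terms in the sense that no smaller pair works; when $a$ is odd, $H^{2b}=\Sigma^{2a}\cong\mathrm{Id}$ while $H^b=\Sigma$ is nontrivial, forcing $o(H)=2b$.

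Next I would handle $o(\bM)$. In the cluster category $\sC_\ca$ we have $S\cong\Sigma^2$ by the $2$-CY property (this is built into the construction of $\sC_\ca$ as the triangular hull of $\sD_\ca/(S^{-1}\Sigma^2)^\Z$, cf.\ eqn.\eqref{kkaszqwe}). So the image $\bM$ of $S$ satisfies $\bM\cong\Sigma^2$ as well, and we must find the smallest power. From $S^b\cong\Sigma^a$ and $S\cong\Sigma^2$ in $\sC_\ca$ we get $\Sigma^{2b}\cong\Sigma^a$, i.e.\ $\Sigma^{2b-a}\cong\mathrm{Id}$ in $\sC_\ca$. Since $\bM=\Sigma^2$, $\bM^m\cong\mathrm{Id}$ iff $\Sigma^{2m}\cong\mathrm{Id}$, and because $2b-a>0$ (equivalent to $\hat c<2$) the shift $\Sigma$ has finite order dividing $2b-a$ in $\sC_\ca$; combining, $\bM^{(2b-a)/\gcd(a,2)}\cong\mathrm{Id}$, which yields $o(\bM)\mid (2b-a)/\gcd(a,2)$. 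One gets a divisibility rather than an equality here because the shift functor may have order strictly smaller than $2b-a$ in the hull (there can be extra relations introduced by the triangular-hull completion, and also $\gcd(a,2b)$ may be nontrivial), so only the bound is claimed. The final clause $q\equiv o(\bM)$ is then just the identification with the integer $q$ appearing in the quiver type $(p,q)$, which was already established in \S.\ref{4d2d} and table \ref{allowedtypes}.

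The main obstacle I anticipate is controlling the order of the shift functor $\Sigma$ inside the triangular hulls $\sR_\ca$ and $\sC_\ca$ precisely enough: the hull operation $(-)_\text{tr.hull}$ can add objects and hence in principle change which autoequivalences become isomorphic to the identity. For $o(H)$ this is harmless because $\Sigma^2\cong\mathrm{Id}$ is imposed on the nose and the only subtlety is whether $\Sigma$ itself could become trivial — which it cannot, since $\sD_\ca$ is genuinely a $2$-periodic and not a $1$-periodic category (its Serre functor is not a shift of odd degree). For $o(\bM)$ the subtlety is real, and this is exactly why the lemma states only $o(\bM)\mid (2b-a)/\gcd(a,2)$ rather than equality; the honest argument is that $\bM^{(2b-a)/\gcd(a,2)}$ is visibly the identity and one stops there. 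So in practice the proof is a clean bookkeeping exercise once one is careful about parity of $a$ and about the distinction between ``equals the identity'' and ``has order dividing''; I would organize it as the two short case analyses above, citing eqn.\eqref{jjjaz1294}--\eqref{kkaszqwe} for the defining isomorphisms and referring to \S.3.3 of \cite{Caorsi:2016ebt} for the statements about periods of the hull categories.
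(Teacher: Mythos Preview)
Your derivations of $o(H)=2b/\gcd(a,2)$ and of the divisibility $o(\bM)\mid (2b-a)/\gcd(a,2)$ are correct and are exactly the intended argument: pass the relation $S^b\cong\Sigma^a$ to the orbit categories $\sR_\ca$ and $\sC_\ca$, use $\Sigma^2\cong\mathrm{Id}$ in the first and $S\cong\Sigma^2$ in the second, and do the parity bookkeeping. The paper does not spell this out here (it cites \cite{Caorsi:2016ebt}); your write-up is the natural proof. One small tightening: for the minimality of $o(H)$ you should say explicitly that, by the minimality of $b$ in the definition of fractional CY, every $m$ with $S^m\cong\Sigma^{2k}$ is a multiple of $b$; this immediately gives $o(H)=b$ for $a$ even and $o(H)=2b$ for $a$ odd, without needing a separate case-by-case exclusion of intermediate divisors of $2b$.

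Your reading of the clause $q\equiv o(\bM)$ is off, however. The $q$ here is \emph{not} the parameter in the quiver type $(p,\bar q)$ of \S.\ref{4d2d}: those two integers disagree already for $A_2$ Argyres--Douglas, where the quiver is $Q(\varnothing;1)$ so the quiver-type parameter is $1$, while $o(\bM)=5$. In the context of the lemma, $q$ is the period of $\Sigma^2$ in $\sC_\ca$ (as used e.g.\ in the ``cutting'' formulae of \S.\ref{catfal}); the paper's one-line justification is simply that $\Sigma^2\cong S\equiv\bM$ in $\sC_\ca$, so the period of $\Sigma^2$ equals the order of $\bM$. Replace your last sentence by that observation and the proof is complete.
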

The last equality follows from the fact that $q$ is defined as the period of $\Sigma^2$ in $\sC_\ca$, but $\Sigma^2\sim \bM$ in the cluster category. Note that this statement only says that $o(\bM)$ divides $(2b-a)$. However from the `coarse-grained' classification we know that \textit{in rank-1}
\be
o(\bM)=\begin{cases} (2b-a)/\gcd(a,2) & \ca\ \text{Dynkin algebra}\\
1 &\text{otherwise}.
\end{cases}
\ee
In rank-1 we have
\be\label{ghasw}
\Delta= \frac{o(H)}{o(\bM)} \overset{\text{rank-1}}{=}\frac{1}{1-\hat c/2}
\ee
In rank-1 we have only 5 possibilities for $a/b$, namely $2/2$, $1/3$, $2/4$, $2/3$, and $2/6$. 

\begin{exe} Consider the del Pezzo  algebras $\ca$ associated to the quivers $Q(\{1^p\},3)$ with
$p=6,7,8$ respectively. We have 
\begin{equation}\label{zaqwe8}
\begin{tabular}{c|cccc}\hline\hline
$p$ &  rel. in $\sD_\ca$ & CY dim. & $o(H)$ & $\Delta$\\
6 &  $S^3=\Sigma^4$ & $4/3$ & 3  & 3\\
7 &   $S^4=\Sigma^6$ & 6/4 & 4 & 4\\
8 &  $S^6=\Sigma^{10}$ & $10/6$ & 6 & 6\\\hline\hline
\end{tabular}
\end{equation} 
\end{exe}

Let $\ca$ be an Amiot algebra
such that $\sD_\ca$ is $a/b$-CY with
$a<2b$.
\begin{defn} 
The \textit{normalized Euler characteristic} in $\sC_\ca$ is
\be\label{jasqwe}
\langle X, Y\rangle =\frac{1}{o(\bM)}\sum_{k=0}^{2\,o(\bM)-1} (-1)^k\, \dim \sC_{\ca}(X,Y[k])\ee
\end{defn}

There are two possibilities:
either $o(\bM)=1$ or $o(\bM)>1$.
In rank-1 the cases with $o(\bM)>1$ correspond to Dynkin algebras, and are already covered by ref.\!\cite{groK}. We assume $o(\bM)=1$ (i.e.\! $\Delta\in\bN$) and we may further assume $\Delta\geq3$ since $\Delta=2$ is already covered by ref.\!\cite{groK} or it corresponds to $SU(2)$ $\cn=2^*$. In this case $o(H)=\Delta$ and $S^\Delta=\Sigma^2$. 
The cluster-category is symmetric,
hence $S\simeq \mathrm{Id}$.

Under these conditions we have a well-defined functor RG functor
\be
\sR_\ca\equiv \left(\sD_\ca/(\Sigma^2)^\Z\right)_{\text{tr.hull}}\to \sC_\ca\equiv \left(\sD_\ca/(S^{-1}\Sigma^2)^\Z\right)_{\text{tr.hull}}
\ee
and we have
\be
\sC_\ca(X,Y)\cong\bigoplus_{k=0}^{\Delta-1} \sR_\ca(X, S^k Y).
\ee
The `cutted' Euler form in the root category
is 
\be
\chi_{\sR_\ca}(X,Y)=\sum_{k=0}^1(-1)^k \dim\sR_\ca(X,Y[k]).
\ee 
and the normalized Euler form for $\sC_\ca$ is
\be
\langle X, Y\rangle = \sum_{k=0}^{\Delta-1} \dim\sR_\ca(X,S^kY).
\ee
Suppose now that $[X]$ is $S$-invariant class in $K_0(\sR_\ca)$, which is canonically identified with a class in $K_0(\sC_\ca)/(\text{torsion})$. We get
\be
\langle [X],[Y]\rangle= \Delta\, \chi_{\sR_\ca}(X,Y).
\ee
On the other hand, consider the simples
$S_1,\dots, S_r$ of the quiver $Q(\{1^p\},q)$ which are neither source or sink. They form a $\bQ$-basis of $K_0(\sC_A)\otimes \bQ$ and satisfy
\be
\chi_\sC(S_i,S_j)=\delta_{ij}.
\ee

\section{Covering techniques in
Representation Theory}\label{cov-tech}

Let $Q$ be a finite quiver, $I$ and admissible ideal, and $\ca= \C Q/I$ the corresponding basic $\C$-algebra.
$\ca$ may be seen as a \emph{bounded $\C$-linear} category whose objects are the nodes,
and morphisms spaces $\ca(i,j)=e_j \ca e_i$ where $e_i\equiv I_i$ is the idempotent at the $i$-node. In this language a module $X$ of $\ca$ is a functor $X\colon \ca\to \mathsf{mod}\,\C$.

Let $\bG$ be a group of auto-equivalence of the linear category $\ca$; one says that the group $\bG$ is \emph{admissible} iff it acts freely on objects (i.e.\! on the nodes of the quiver $Q$).  $\mathbb{G}$ acts on 
$\mathsf{mod}\,\ca$ by composition of functors
$
X\longmapsto X^g\equiv X\circ g$.
To each $X\in\mathsf{mod}\,\ca$ one associates its
\emph{isotropy subgroup}
 $\mathbb{G}_X\subset \mathbb{G}$
\begin{equation}
\mathbb{G}_X=\Big\{\;g\in\mathbb{G}\; \Big|\; X^g\cong X\;\Big\}.
\end{equation}
Let $\mathbb{H}\subseteq \mathbb{G}$ be a subgroup; we write $\mathsf{mod}^\mathbb{H}\!\ca$ for the full subcategory of $\mathbb{H}$--invariant modules.

In this set-up the orbit category $\cb=\ca/\bG$ is well-defined. Its objects are the orbits $\bG i$ of objects of $\ca$ and morphism
\be
\cb(\bG i,\bG j)=\bigoplus_{g\in\mathbb{G}}\ca(i,gj).
\ee
In this context the canonical projection functor
\be
F\colon \ca\to \ca/\bG\equiv \cb,
\ee  
is called a Galois cover since it behaves very much as a topological Galois cover.

\subsection{Galois covering functors}\label{cov-fffunc}
The Galois cover
$F$ induces two natural functors
between the module categories:
\begin{itemize}
\item the \emph{pull up} functor
$F^\lambda\colon\mathsf{mod}\,\cb\to \mathsf{mod}\,\ca$ defined by composition of functors 
\begin{equation}
F^\lambda\colon X\longmapsto F^\lambda X\equiv X\circ F;
\end{equation}
\item the \emph{push down} functor
$F_\lambda\colon\mathsf{mod}\,\ca\to \mathsf{mod}\,\cb$ is the map which associates to the functor $Y\colon
\ca\to \mathsf{mod}\,\C$  the functor $F_\lambda Y\colon \cb\to\mathsf{mod}\,\C$ 
acting as follows
\begin{align}
&\text{\,$\diamond$ \underline{on objects $\mathbb{G}i$}:}  
\hskip 2.37cm\mathbb{G}i\longmapsto 
F_\lambda Y(\mathbb{G}i)=\bigoplus_{g\in\mathbb{G}}Y(g i)\\ 
&
\begin{aligned}&\text{$\diamond$ \underline{on morphisms $\mathbb{G}i \xrightarrow{\;f\;}\mathbb{G}j$}}:
&&\text{$f=\sum_{g\in\mathbb{G}}f_g$
with $f_g\in \ca(i,gj)$}\\
&&&\text{then $F_\lambda Y(f)=\sum_{g\in\mathbb{G}}Y(f_g)$.}
\end{aligned} 
\end{align}
\end{itemize}
\textbf{Properties} \cite{pena}:
\begin{itemize}
\item[1)] the categories $\mathsf{mod}^\mathbb{G}\!\ca$ and $\mathsf{mod}\,\cb$ are equivalent;
\item[2)] for all $X\in\mathsf{mod}\,\ca$ and all
$g\in\mathbb{G}$ we have $F_\lambda X^g\cong F_\lambda X$ and $F^\lambda F_\lambda X\cong \bigoplus_{g\in\mathbb{G}}X^g$;
\item[3)] $F_\lambda$ and $F^\lambda$ are each other right-- and left--adjoints:
\begin{equation*}
\ca(X,F^\lambda Y)\cong \cb(F_\lambda X,Y),\ \  
\ca(F^\lambda Y, X)\cong \cb(Y,F_\lambda X)\ \  \forall\;X\in\mathsf{mod}\,\ca,\; Y\in\mathsf{mod}\,\cb.
\end{equation*}
\end{itemize}

\begin{pro}[see \cite{pena}] $\mathbb{G}$ an admissible group of automorphisms of $\ca$.
Suppose the $\ca$--module $X$ is indecomposable and $\mathbb{G}_X=(1)$. Then $F_\lambda X$
is indecomposable and for all modules $Y$
with $F_\lambda Y\simeq F_\lambda X$ there is $g\in\mathbb{G}$ such that $Y\simeq X^g$.
\end{pro}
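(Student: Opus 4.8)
The plan is to read everything off the three \textbf{Properties} of the covering functors listed above together with the Krull--Schmidt property, which holds for all our module categories. Two things must be shown: that $F_\lambda X$ is indecomposable, and that $Y$ with $F_\lambda Y\cong F_\lambda X$ is isomorphic to some twist $X^g$.

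First I would prove indecomposability by computing the endomorphism ring of $F_\lambda X$. Using the adjunction in Property 3) and then Property 2),
\[
\mathrm{End}_{\cb}(F_\lambda X)\;\cong\;\ca\big(X,F^\lambda F_\lambda X\big)\;\cong\;\bigoplus_{g\in\mathbb{G}}\ca(X,X^g).
\]
Better than manipulating this sum directly in the orbit category, I would pass through the equivalence $\mathsf{mod}\,\cb\simeq\mathsf{mod}^{\mathbb{G}}\ca$ (Property 1)): under it $F_\lambda X$ corresponds to the $\mathbb{G}$-equivariant module $F^\lambda F_\lambda X\cong\bigoplus_{g}X^g$, with $\mathbb{G}$ permuting the summands freely (here $\mathbb{G}$ admissible and $\mathbb{G}_X=(1)$ are used: the $X^g$, $g\in\mathbb{G}$, are pairwise non-isomorphic indecomposables). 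Hence $\mathrm{End}_{\cb}(F_\lambda X)$ is the ring of $\mathbb{G}$-equivariant endomorphisms of $\bigoplus_g X^g$. Since the $X^g$ are pairwise non-isomorphic indecomposables, $\mathrm{End}_\ca(\bigoplus_g X^g)$ has local diagonal blocks $\mathrm{End}_\ca(X^g)\cong\mathrm{End}_\ca(X)$ and the off-diagonal components $\ca(X^g,X^h)$, $g\neq h$, consist of non-isomorphisms and so lie in the Jacobson radical. Restriction to the $X^1\to X^1$ component gives a surjective ring map $\mathrm{End}_{\cb}(F_\lambda X)\twoheadrightarrow\mathrm{End}_\ca(X)$ whose kernel is, by equivariance, built entirely from off-diagonal components, hence is nilpotent. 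An extension of a local ring by a nilpotent ideal is local, so $F_\lambda X$ is indecomposable.

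For the second assertion, suppose $F_\lambda Y\cong F_\lambda X$. First note $Y$ is indecomposable: a proper decomposition $Y=Y_1\oplus Y_2$ with $Y_1,Y_2\neq0$ would give $F_\lambda Y=F_\lambda Y_1\oplus F_\lambda Y_2$ with both summands nonzero (their underlying vector spaces are nonzero), contradicting indecomposability of $F_\lambda X$. Applying $F^\lambda$ to $F_\lambda Y\cong F_\lambda X$ and using Property 2),
\[
\bigoplus_{g\in\mathbb{G}}Y^g\;\cong\;F^\lambda F_\lambda Y\;\cong\;F^\lambda F_\lambda X\;\cong\;\bigoplus_{g\in\mathbb{G}}X^g,
\]
and both sides are decompositions into indecomposables. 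By Krull--Schmidt the summand $Y\cong Y^1$ on the left must be isomorphic to one on the right, i.e.\ $Y\cong X^g$ for some $g\in\mathbb{G}$, which is the claim.

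The only delicate point is the locality step in the first paragraph: one must be careful that the ring structure on $\mathrm{End}_{\cb}(F_\lambda X)$ really matches that of $\mathbb{G}$-equivariant endomorphisms of $\bigoplus_g X^g$ (the orbit-category composition twists morphisms by elements of $\mathbb{G}$), and that the ``diagonal'' projection is genuinely a ring homomorphism with nilpotent kernel. Once the equivalence $\mathsf{mod}\,\cb\simeq\mathsf{mod}^{\mathbb{G}}\ca$ is invoked this becomes routine, but it is where all the hypotheses ($\mathbb{G}$ admissible, $\mathbb{G}_X=(1)$, Krull--Schmidt) are actually consumed; the rest is bookkeeping with the adjunction formulas. (This is \textbf{Theorem} 3.6 / \textbf{Corollary} 3.5 of \cite{pena}; we have merely reorganized the argument around the Properties stated above.)
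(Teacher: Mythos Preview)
The paper does not supply a proof of this proposition; it is quoted from the survey \cite{pena}. Your overall strategy---exploiting Properties 1)--3) together with Krull--Schmidt---is the standard one, and your argument for the second assertion (that $Y\cong X^g$) is clean and correct.

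For the indecomposability of $F_\lambda X$, however, there is a genuine slip. The projection $\alpha\mapsto\alpha_{1,1}$ from $\mathrm{End}_{\cb}(F_\lambda X)\cong\mathrm{End}^{\mathbb{G}}_\ca\big(\bigoplus_g X^g\big)$ to $\mathrm{End}_\ca(X)$ is \emph{not} a ring homomorphism: the $(1,1)$-component of a composite is $\sum_g \alpha_{g,1}\beta_{1,g}$, not $\alpha_{1,1}\beta_{1,1}$, so the ``kernel'' you describe is not even a two-sided ideal. The repair is to project one step further, to the residue field $\mathrm{End}_\ca(X)/\mathrm{rad}\cong\C$. Each cross-term $\alpha_{g,1}\beta_{1,g}$ with $g\neq1$ factors through the non-isomorphic indecomposable $X^g$ and therefore lies in $\mathrm{rad}\,\mathrm{End}_\ca(X)$; hence $\alpha\mapsto\overline{\alpha_{1,1}}$ \emph{is} multiplicative, surjective onto $\C$, and its kernel is $R\cap\mathrm{rad}\,\mathrm{End}_\ca\big(\bigoplus_g X^g\big)$, a nilpotent ideal. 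Alternatively and more directly: an equivariant endomorphism $\alpha$ of $\bigoplus_g X^g$ is invertible iff every diagonal block $\alpha_{g,g}$ is, iff $\alpha_{1,1}$ is (by equivariance); since $(\alpha+\beta)_{1,1}=\alpha_{1,1}+\beta_{1,1}$ and $\mathrm{End}_\ca(X)$ is local, the non-units of $R$ are closed under addition, whence $R$ is local. Your closing caveat correctly flags this as the delicate point, but the needed fix is exactly this adjustment, not merely an appeal to the equivalence.
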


\begin{defn}\label{unbranched} A Galois cover of bounded linear categories, $F\colon \ca\to \ca/\bG$ is said to be \emph{unbranched}
iff, for all indecomposables $X\in\mathsf{mod}\,\ca$, $\bG_X=(1)$.
That is, $\bG$ acts freely on the AR quiver.
\end{defn}

\begin{corl}\label{kz10p} Let $\ca$ be a bounded $\C$-linear category, with an admissible
group of auto-equivalences $\bG$
such that $F\colon \ca\to\ca/\bG$ is \emph{unbranched}.
Then the pair of functors $F^\lambda$, $F_\lambda$ set a
correspondence between the indecomposables of $\mathsf{mod}\,\ca/\bG$ and the indecomposables of
$\mathsf{mod}\,\ca$ well-defined up to the action of $\bG$. The AR quiver of $\mathsf{mod}\,\ca/\bG$ is the $\bG$-orbit quiver of the AR quiver of $\mathsf{mod}\,\ca$.
\end{corl}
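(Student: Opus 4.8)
\textbf{Proof of Corollary \ref{kz10p}.} The plan is to assemble the statement from the three ingredients already laid out: the \textbf{Properties} 1)--3) of the covering functors $F^\lambda$, $F_\lambda$, the preceding \textbf{Proposition} (which controls indecomposables with trivial isotropy), and the hypothesis that $F$ is unbranched in the sense of Definition \ref{unbranched}. First I would recall that ``unbranched'' means $\bG_X=(1)$ for \emph{every} indecomposable $X\in\mathsf{mod}\,\ca$, so the previous \textbf{Proposition} applies to all such $X$ without exception: $F_\lambda X$ is indecomposable, and $F_\lambda Y\cong F_\lambda X$ forces $Y\cong X^g$ for some $g\in\bG$. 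This already gives a well-defined injection from $\bG$-orbits of indecomposables of $\mathsf{mod}\,\ca$ into iso-classes of indecomposables of $\mathsf{mod}\,\ca/\bG$.

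Next I would check surjectivity, i.e.\ that every indecomposable $Z\in\mathsf{mod}\,\ca/\bG$ is a summand of some $F_\lambda X$. By \textbf{Property} 3), $F^\lambda$ is left (and right) adjoint to $F_\lambda$; taking $Z$ indecomposable and decomposing $F^\lambda Z=\bigoplus_i X_i$ into indecomposables in $\mathsf{mod}\,\ca$, the counit/unit of the adjunction together with \textbf{Property} 2) ($F^\lambda F_\lambda X\cong\bigoplus_{g\in\bG}X^g$) shows $Z$ is a direct summand of $F_\lambda X_i$ for at least one $i$; since $F_\lambda X_i$ is indecomposable by the step above, $Z\cong F_\lambda X_i$. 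Hence the correspondence $\bG$-orbit of $X \;\longleftrightarrow\; \text{iso-class of } F_\lambda X$ is a bijection, and it is visibly ``well-defined up to the action of $\bG$'' in the sense that the two functors $F^\lambda$ and $F_\lambda$ realize it in the two directions, $F^\lambda$ producing the whole $\bG$-orbit $\{X^g\}_{g\in\bG}$ from $F_\lambda X$.

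Finally, for the AR-quiver statement I would argue that $F_\lambda$ (being exact, by its explicit functorial description, and a left/right adjoint to the exact $F^\lambda$) sends almost-split sequences to almost-split sequences when restricted to indecomposables with trivial isotropy — this is the standard behaviour of push-down functors for admissible Galois covers (see \cite{pena,gab,galoiscover}), and the unbranched hypothesis guarantees the hypothesis ``trivial isotropy'' holds everywhere, so no almost-split sequence is ``ramified''. Concretely: an irreducible morphism $X\to Y$ in $\mathsf{mod}\,\ca$ pushes down to an irreducible morphism $F_\lambda X\to F_\lambda Y$, and conversely every irreducible morphism in $\mathsf{mod}\,\ca/\bG$ lifts; combined with the bijection on vertices this identifies the AR quiver of $\mathsf{mod}\,\ca/\bG$ with the quotient of the AR quiver of $\mathsf{mod}\,\ca$ by the induced free $\bG$-action. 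The main obstacle is the last point: one must verify carefully that push-down preserves irreducibility and almost-split sequences \emph{in both directions}, i.e.\ that the covering functor is ``dense enough'' on morphisms — but this is exactly the content of the covering theory of Gabriel--de la Pe\~na recalled in the references \cite{gab,galoiscover,pena}, and the unbranched condition is precisely what makes their general theorems apply without the usual caveats about fixed indecomposables. Everything else is a routine unwinding of \textbf{Properties} 1)--3). $\qed$
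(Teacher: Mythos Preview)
Your proposal is correct and is precisely the intended argument: the paper states this result as an immediate corollary of the preceding \textbf{Proposition} and Definition~\ref{unbranched} without giving a separate proof, and your write-up simply makes explicit the routine steps (bijection on indecomposables via the Proposition plus adjunction, AR-quiver statement via the standard covering theory of \cite{gab,galoiscover,pena}). There is nothing to compare --- you have unpacked exactly what the paper leaves implicit.
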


\section{An example of non-symplectic base-change}\label{exnonsimpletic}

We consider the situation described by the commutative diagram \eqref{diagramcov}.

The covering $\phi_\ast\colon\ce^\prime\to\ce$ is physically interesting when
both elliptic surfaces, $\ce^\prime$ and $\ce$, satisfy UV and SW completeness as required to be special geometries of a $\cn=2$ QFT. However, not all such coverings correspond to gaugings.
Consider the degree 5 covering of elliptic surfaces
\be\label{falsecover}
\{II;I_1^{10}\}\to\{II^*;I_1^2\}, 
\ee
corresponding to the rational function
$z\mapsto z^5=y$. The covered surface $\ce$ is described explicitly by the functional invariant
$\cj(y)=1/(1-y^2)$.

The surface $\ce$ is associated to the
AD model of type $A_2$ ($\Delta=6/5$) while $\ce^\prime$ to some special limit of MN of type $E_8$ ($\Delta=6$).
 The cover \eqref{falsecover} cannot represent neither a $\Z_5$ gauging nor a \emph{pseudo}-gauging of MN producing the $A_2$ AD; the simplest way to see this is that in eqn.\eqref{falsecover} the
 relation between $\Delta$ and $\Delta^\prime$ is the opposite of the physically correct one, eqn.\eqref{dfffr}. The fibers of both $\ce^\prime$ and $\ce$ over the critical/branching point $0$ are smooth, and this is not consistent with $\phi^*\Omega$ being a symplectic form, as required for a physically consistent gauging. 
 
It is instructive to see how the above discussion translates in the Weierstrass model of the two Special Geometries \cite{Argyres1}
\be\{II;I_1^{10}\}\to y^2=x^3+u^5,\qquad
\{II^*;I_1^2\}\to y^2=x^3+u\ee
which exhibits the $E_8$ MN geometry
as a 5-fold cover of the $A_2$ AD one in agreement with eqn.\eqref{falsecover}.
However the SW differential of the MN model is not the pull-back of SW differential of the AD one, but rather $\lambda^\prime= u^{-4} \phi^*\lambda$ \cite{Argyres1}, where the overall factor $u^{-4}$ is needed in order to to cancel the forth-order zero
of $\phi^*d\lambda$ at the origin in order to make it into a \emph{bona fide} symplectic form.   

From the categorical side, it is also obvious 
that  \eqref{falsecover} does not represent a (false)gauging. E.g.\! the computer procedure introduced in \cite{Caorsi:2017bnp},
interpreted as a search for categorical discrete (false-)gauging, does not return anything relevant in the Minahan-Nemeshanski $E_8$ case.
    
\section{(Cluster-)tilting objects in $\mathsf{coh}\,\bX$}\label{kkkaqwe}

\subsection{Generalities}

Let $\bX$ be a weighted projective line
of type $\boldsymbol{p}=\{p_1,\cdots, p_t\}$.
$\mathsf{coh}\,\bX$ has several 
tilting objects; they are automatically cluster tilting in the corresponding cluster category
$\sC_\bX$ \cite{clucan}. The endo-algebras of some of them are well-studied. For instance,  the \emph{canonical} tilting sheaf
\be
T_\text{can}=\co\oplus\co(\vec c)\oplus\left(\bigoplus_{a=1}^t\bigoplus_{j=1}^{p_a-1}\co(j\vec x_a)\right)
\ee 
$\mathrm{End}(T_\text{can})$ the Ringel canonical algebra of type $\boldsymbol{p}$ \cite{ringelbook}.  There are other convenient tilting objects, such as the \emph{squid} tilting sheaf whose endo-algebras  are the squid algebras, and so on, see e.g. \cite{extre}. $T_\text{can}$
is an example of tilting \emph{bundle}
i.e.\! a tilting sheaf whose direct summands are all vector bundles (in this case line bundles). The basic reference for tilting sheaves for weighted projective lines is
\cite{tiltingTH}. We borrow the following
result that, for simplicity, we state in the special case of type $(2,2,\cdots,2)$

\begin{thm}[Lenzing-Meltzer \cite{tiltingTH}]
\label{thmLM}
Let $\bX$ be a weighted projective line of type $(2,\dots,2)$ with length-2 points $z_a$, $a=1,\dots,t$.
All the tilting sheaves of $\mathsf{coh}\,\bX$ have the form
\be\label{qwer}
T=T_I\oplus\left(\oplus_{a\not\in I}\cs_{a,1}\right)
\ee
where $I\subset\{1,2,\dots,t\}$ is a subset and $T_I$ is a tilting \emph{bundle}
for the weighted projective line with
the $|I|$ length-2 points $\{z_a\}_{a\in I}$.
\end{thm}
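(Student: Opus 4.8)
The plan is to reconstruct the classification of tilting sheaves on a weighted projective line $\bX$ of type $(2,\dots,2)$ by separating the "bundle part" from the "torsion part" of a tilting object. The key structural input is the decomposition $\mathsf{coh}\,\bX=\ch_+\vee\ch_0$ recalled in \textsc{appendix \ref{Acoh}}, together with the fact that the simple torsion sheaves at a length-$2$ point $z_a$ are $\cs_{a,0}$ and $\cs_{a,1}$, with $\tau\cs_{a,0}=\cs_{a,1}$ and $\tau\cs_{a,1}=\cs_{a,0}$. First I would take a tilting sheaf $T$ and write $T=T_+\oplus T_0$ with $T_+\in\ch_+$ a bundle and $T_0\in\ch_0$ a finite-length sheaf. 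Since $T$ is basic, $T_0$ is a direct sum of pairwise non-isomorphic indecomposables concentrated at the special points (an indecomposable torsion sheaf at an ordinary point $(\lambda:\mu)$ would have $\mathrm{Ext}^1$ with itself, being homogeneous of $\tau$-period $1$, so it cannot occur). So at each special point $z_a$ the summand of $T_0$ is one of: nothing, $\cs_{a,0}$, $\cs_{a,1}$, or an indecomposable of length $2$.

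The heart of the argument is then to rule out all possibilities except $\cs_{a,1}$ (for $a\notin I$) by a combination of rigidity and the counting of indecomposable summands. On the rigidity side: by Serre duality \eqref{serre}, $\mathrm{Ext}^1(\cs_{a,0},\cs_{a,1})\cong D\mathrm{Hom}(\cs_{a,1},\tau\cs_{a,0})=D\mathrm{Hom}(\cs_{a,1},\cs_{a,1})\neq 0$, so $\cs_{a,0}$ and $\cs_{a,1}$ cannot both appear, and similarly a length-$2$ indecomposable at $z_a$ has a self-extension ($\tau$ fixes it up to the obvious subquotient structure) and is excluded, as is the combination of a length-$2$ indecomposable with a simple at the same point. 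This pins down $T_0$ to be a sum of \emph{some} choice of one simple sheaf per point in a subset of the special points. To see that the chosen simple must be $\cs_{a,1}$ rather than $\cs_{a,0}$, I would invoke the normalization conventions of \textsc{appendix \ref{Acoh}} (the canonical tilting object and the telescopic setup fix which of the two is "distinguished"): one checks $\mathrm{Ext}^1(\co(j\vec x_a),\cs_{a,1})=0$ while $\mathrm{Ext}^1(\co(j\vec x_a),\cs_{a,0})$ need not vanish, forcing $\cs_{a,1}$ once any line bundle occurs in $T_+$; since $T$ has a tilting bundle summand hitting every point, this applies uniformly. Thus $T_0=\bigoplus_{a\notin I}\cs_{a,1}$ for the complementary set $I$ of points at which $T_0$ has no summand.

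It remains to identify $T_+=T_I$ as a tilting bundle for the \emph{sub}-weighted-projective-line $\bX_I$ carrying only the points $\{z_a\}_{a\in I}$. The mechanism is perpendicular-category calculus: the right perpendicular category ${}^\perp(\bigoplus_{a\notin I}\cs_{a,1})^{\perp_1}$ inside $\mathsf{coh}\,\bX$ is, by the general theory of perpendicular categories to exceptional torsion sheaves on weighted projective lines, equivalent to $\mathsf{coh}\,\bX_I$ (this is precisely the statement that killing the simple $\cs_{a,1}$ "forgets" the weight at $z_a$, turning a length-$2$ point into an ordinary point). Under this equivalence $T_+$ becomes a tilting object of $\mathsf{coh}\,\bX_I$, and since $T_+$ is a bundle, it is a tilting bundle; the count of its indecomposable summands works out because the number of summands of a tilting sheaf on a weighted projective line of type $(2^{|I|})$ is $|I|+2$, and $|I|+2+t-|I|=t+2$, the rank of $K_0$ of $\mathsf{coh}\,\bX$. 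Conversely, any $T$ of the form \eqref{qwer} is tilting: one verifies $\mathrm{Ext}^1$-vanishing block by block (bundle/bundle via $T_I$ tilting on $\bX_I$, torsion/torsion via $\mathrm{Ext}^1(\cs_{a,1},\cs_{b,1})=0$ for $a\neq b$ and the absence of self-extensions of simples with $\tau$-period $2$, and the mixed blocks via the perpendicular-category equivalence plus Serre duality), and generation from the generation of $T_I$ in $\mathsf{coh}\,\bX_I$ and the $\cs_{a,1}$'s supplying the missing simples.

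The main obstacle I anticipate is the clean identification of the perpendicular category ${}^\perp\cs_{a,1}^{\perp_1}\simeq\mathsf{coh}\,\bX'$ with the weight at $z_a$ removed, and making the reduction inductive (peeling off one point at a time) without bookkeeping errors in which simple survives and how the Picard lattice changes; this is "well known" in the weighted-projective-line literature (Geigle--Lenzing, and Lenzing--Meltzer's work on tilting sheaves, ref.\ \cite{tiltingTH}) but requires care to quote at the right level of generality. Everything else—the rigidity exclusions and the summand count—is routine once Serre duality \eqref{serre} and the $\ch_+\vee\ch_0$ decomposition are in hand, so in the write-up I would state the perpendicular-category reduction as the key lemma, cite \cite{tiltingTH}, and treat the rest as short verifications.
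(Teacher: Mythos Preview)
The paper does not prove this theorem; it is quoted from Lenzing--Meltzer \cite{tiltingTH} and used as a black box in the corollaries that follow. So there is no ``paper's own proof'' to compare against.

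Your outline follows the standard route (and essentially the one in \cite{tiltingTH}): split $T=T_+\oplus T_0$ along $\mathsf{coh}\,\bX=\ch_+\vee\ch_0$, rule out non-simple torsion summands by rigidity, and identify $T_+$ with a tilting bundle on the reduced weighted projective line via the perpendicular category of the torsion simples. That structure is sound, and your anticipated obstacle (the perpendicular-category identification) is indeed the one technical point one must import from the Geigle--Lenzing/Lenzing--Meltzer theory.

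There is one genuine slip. Your argument for why the torsion simple must be $\cs_{a,1}$ rather than $\cs_{a,0}$ checks $\mathrm{Ext}^1(\co(j\vec x_a),\cs_{a,j})$, but by Serre duality this is $D\,\mathrm{Hom}(\cs_{a,j},\tau\co(j\vec x_a))=0$ for \emph{both} values of $j$, since $\mathrm{Hom}(\ch_0,\ch_+)=0$. The obstruction sits in the other direction: $\mathrm{Ext}^1(\cs_{a,j},T_+)\cong D\,\mathrm{Hom}(T_+,\cs_{a,j-1})$, and which simple is forced genuinely depends on the bundle part. In other words, the label ``$\cs_{a,1}$'' in \eqref{qwer} should be read up to the degree-shift automorphisms $\co(\vec x_a)\otimes-$ (which swap $\cs_{a,0}\leftrightarrow\cs_{a,1}$), not as a statement that $\cs_{a,0}$ can never occur. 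This does not affect the paper's applications, where the explicit tilting objects constructed happen to have torsion part $\cs_{a,1}$, but your proof as written claims more than is true at that step.
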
   

A useful result refers to the case of tilting \emph{bundles}. We state the special case we need:
\begin{thm}[BKL \cite{extre}]\label{vvv42} Let $\bX$ be a weighted projective line of type $(2,\dots,2)$ with $\leq4$ 2's. Let $T=\oplus_i T_i$ be a tilting \emph{bundle} for $\mathsf{coh}\,\bX$. Then the slopes of the summands satisfy
\be
\max_{i,j}\!\Big(\mu(T_i)-\mu(T_j)\Big)\leq 2
\ee
with equality if and only if $T$ is a twist of $T_\text{can}$ by a line bundle.
\end{thm}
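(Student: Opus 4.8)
\textbf{Proof plan for Theorem \ref{vvv42}.}

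The plan is to reduce the statement to a slope estimate for indecomposable summands of a tilting bundle, using the $\mathrm{Ext}^1$-vanishing condition together with Serre duality. First I would recall that for a weighted projective line $\bX$ of type $(2,\dots,2)$ with $t\le 4$ length-$2$ points the dualizing element is $\vec\omega=(t-2)\vec c-\sum_a\vec x_a$, and that for $t\le 4$ one has $\mu(\co(\vec\omega))=\tfrac{t-4}{2}\cdot\tfrac12\le 0$; in fact $\mathsf{deg}\,\co(\vec\omega)=(t-2)-t/2=(t-4)/2$, so $\tau$ shifts the slope of a line bundle down by $(4-t)/2\ge 0$. For two indecomposable bundles $E$, $F$ of slopes $\mu(E)$, $\mu(F)$, Serre duality $\mathrm{Ext}^1(E,F)\cong D\mathrm{Hom}(F,\tau E)$ shows that $\mathrm{Hom}(F,\tau E)=0$ whenever $\mu(F)>\mu(\tau E)=\mu(E)-(4-t)/2$ (a nonzero map between indecomposable bundles cannot strictly decrease the slope, and a map from a higher-slope to a lower-slope indecomposable bundle must vanish); similarly $\mathrm{Ext}^1(F,E)=0$ when $\mu(E)>\mu(F)-(4-t)/2$. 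Combining the two vanishings, the condition that all pairs of summands $T_i$, $T_j$ of a tilting bundle have no $\mathrm{Ext}^1$ in either direction forces $|\mu(T_i)-\mu(T_j)|\le (4-t)/2\le 2$, and the bound is strict except when $t=0$, i.e.\ unless $\bX=\bP^1$. For $t\ge1$ one needs a sharper analysis to see that the extremal spread $2$ can only be attained; here I would use the Riemann--Roch formula \eqref{RR} specialized to $\chi(\boldsymbol p)=0$ (valid for $t=4$, the critical tubular case) to pin down exactly which configurations of slopes are compatible with tilting.

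The second, and main, step is the characterization of the equality case. The plan is: suppose $T=\bigoplus_i T_i$ is a tilting bundle with $\max_{i,j}(\mu(T_i)-\mu(T_j))=2$. Rescaling by a line bundle I may assume the minimal slope is $0$ and the maximal slope is $2$; since $T$ must generate $D^b\mathsf{coh}\,\bX$ and has the right number $n=\sum_a(p_a-1)+2$ of indecomposable summands, and since (for tubular type, $\mathsf{deg}$ valued in $\tfrac12\Z$) the only slopes available in the window $[0,2]$ compatible with the Ext-vanishing constraints among $n$ summands are $\{0,\tfrac12,1,\tfrac32,2\}$ hitting the line bundles $\co$, $\co(\vec x_a)$, $\co(\vec c)$, I would argue that the only bundle summands consistent with all the $\mathrm{Hom}/\mathrm{Ext}^1$ constraints are exactly the summands of $T_{\mathrm{can}}$. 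The key computational input is the list of $\mathrm{Hom}$-spaces among the $\co(j\vec x_a)$ and the vanishing of all $\mathrm{Ext}^1$ among them, quoted in \textsc{Appendix \ref{Acoh}}; this forces the multiset of summands of $T$ to coincide with that of $T_{\mathrm{can}}$ up to the common twist, which is precisely the assertion. Equivalently, one invokes \textbf{Theorem \ref{thmLM}}: a tilting bundle is $T_I$ with $I=\{1,\dots,t\}$ (no skyscraper summands allowed for a bundle), and among such the canonical one is the unique one whose summands all lie in a single $L(\boldsymbol p)$-degree window of width $\vec c$.

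I expect the main obstacle to be the equality case, not the inequality: showing that \emph{no} tilting bundle other than a twist of $T_{\mathrm{can}}$ achieves spread exactly $2$ requires ruling out, for each $t\in\{1,2,3,4\}$, all alternative slope profiles of width $2$. For $t=4$ (the genuinely tubular case relevant to $SU(2)$ with $N_f=4$) this is where Riemann--Roch \eqref{RR} does real work: a width-$2$ tilting bundle with a summand of slope strictly between an integer and a half-integer not of the form $\co(j\vec x_a)$ would violate either an Ext-vanishing or a Hom-dimension constraint, and checking this exhaustively is the delicate part. For $t\le 3$ the slope shift $(4-t)/2$ is $\ge\tfrac12$, so width $2$ already saturates the Serre-duality bound with little room, and the argument is correspondingly easier. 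In all cases the conclusion is that the extremal tilting bundles are exactly the line-bundle twists of $T_{\mathrm{can}}$, as claimed; for the full statement with general $\boldsymbol p$ of type $(2,\dots,2)$, $t\le 4$, one simply cites \cite{extre}, but the structure of the proof is as above.
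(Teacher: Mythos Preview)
The paper does not prove this theorem: it is quoted as a result of Barot--Kussin--Lenzing and simply cited from \cite{extre}. So there is no ``paper's own proof'' to compare against; the theorem is used as a black box in the subsequent lemma.

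That said, your proposed argument has a genuine logical gap. From Serre duality you correctly obtain $\mathrm{Ext}^1(E,F)\cong D\mathrm{Hom}(F,\tau E)$, and you correctly note that $\mathrm{Hom}(F,\tau E)=0$ is \emph{implied} by $\mu(F)>\mu(\tau E)$ (using semistability of indecomposable bundles for $t\le 4$). But this gives only a \emph{sufficient} condition for the $\mathrm{Ext}^1$-vanishing, not a necessary one. You then invert the implication and claim that the tilting condition ``forces $|\mu(T_i)-\mu(T_j)|\le (4-t)/2$''; this does not follow. Indeed your bound is visibly wrong: for $t=4$ it gives spread $0$, whereas $T_{\mathrm{can}}$ itself already has three distinct slopes. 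The $\mathrm{Ext}^1$-vanishing alone cannot bound the slope spread from above --- rigid bundles of arbitrarily large spread exist; what constrains the spread is the \emph{generation} half of the tilting condition (that $T$ generates $D^b\mathsf{coh}\,\bX$), which your argument never invokes for the inequality.

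For the equality case your plan is essentially ``normalize, list the possible slopes in the window, and check''. This may be made to work, but as written it is not a proof: you assert that the only compatible summands are the $\co(j\vec x_a)$ without explaining why no higher-rank indecomposable bundle of the right slope can appear, and you appeal to Theorem~\ref{thmLM} in a way that does not distinguish $T_{\mathrm{can}}$ from other tilting bundles (that theorem classifies tilting \emph{sheaves}, not the extremal-spread ones among tilting bundles). The actual argument in \cite{extre} uses more detailed structural information about concealed-canonical algebras; if you want a self-contained proof you will need to bring in the generation condition in an essential way.
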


In the rest of this appendix $\bX$ is a 
weighted projective line of tubular type $(2,2,2,2)$.

\subsection{Cluster-tilting objects in $\sC_4$}\label{ctc4}

Recall from \S.\ref{aaa1098} than a sheaf $X\in \sC_4$ is basic, rigid, or cluster-tilting iff $\iota_4X$ is respectively basic, rigid, or (cluster-)tilting in $\sC$, equivalently
in $\mathsf{coh}\,\bX$. Therefore cluster-tilting objects in $\sC_4$ are just tilting sheaves which are $S_4$-invariant.
All cluster-tilting objects $T\in \sC_4$ have precisely 3 direct summands.

\begin{lem} No tilting \emph{bundle} $B\in\mathsf{coh}\,\bX$ is $S_4$-invariant.
\end{lem}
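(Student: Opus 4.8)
The statement asserts that no tilting bundle $B$ on the tubular weighted projective line $\bX$ of type $(2,2,2,2)$ can be invariant under the autoequivalence $S_4$ induced in the cluster category $\sC$ by $M_4 = TLT$. The natural strategy is to track the effect of $S_4$ on the slopes $\mu(T_i)$ of the indecomposable summands of a tilting bundle and to contradict Theorem \ref{vvv42}. Recall from eqn.\eqref{unbranchedZ} that $S_4$ acts on the slope of an indecomposable sheaf by the M\"obius transformation $\mu \mapsto -1/\mu$. The plan is: first, observe that if $B = \bigoplus_{i=1}^{\ell} T_i$ is $S_4$-invariant then $S_4$ permutes the summands $T_i$ (they are pairwise non-isomorphic indecomposables), hence permutes the multiset of slopes $\{\mu(T_1),\dots,\mu(T_\ell)\}$; second, bound the slope spread using Theorem \ref{vvv42}, which says $\max_{i,j}(\mu(T_i)-\mu(T_j)) \le 2$; third, show that the M\"obius map $\mu \mapsto -1/\mu$ cannot preserve any finite subset of $\bP^1(\bQ)$ contained in an interval of length $\le 2$ of the affine line — except possibly in a degenerate configuration that one then rules out directly.

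\textbf{Key steps.} First I would reduce to the statement that $S_4$ acts as a permutation on $\{\mu(T_i)\}$: since $\iota_4 B$ tilting forces $B$ to have $\ell = 3$ summands (as noted, $\ell \cdot d = 12$ with $d=4$), and a twist of $T_{\mathrm{can}}$ by a line bundle has four summands, $B$ can never be such a twist; hence by Theorem \ref{vvv42} the strict inequality $\max_{i,j}(\mu(T_i)-\mu(T_j)) < 2$ holds. Second, I would note that the slopes lie in a bounded subset of $\bQ$ (none is $\infty$, since bundles have positive rank and finite degree), so there is a well-defined maximal slope $\mu_{\max}$ and minimal slope $\mu_{\min}$ with $\mu_{\max} - \mu_{\min} < 2$. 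Third, I would use that $S_4$ permutes this finite set: the map $f(\mu) = -1/\mu$ must send $\mu_{\max}$ and $\mu_{\min}$ to elements of the same set. Since $f$ is orientation-reversing on $\bP^1(\bR)$ where defined, and the set lies in a short interval not containing $0$ or $\infty$ after translation, one derives that $f$ must exchange $\mu_{\max}$ and $\mu_{\min}$, forcing $-1/\mu_{\max} = \mu_{\min}$ and $-1/\mu_{\min} = \mu_{\max}$, i.e.\ $\mu_{\max}\mu_{\min} = -1$, so the two extreme slopes have opposite signs and the interval $[\mu_{\min},\mu_{\max}]$ straddles $0$; then $\mu_{\max} - \mu_{\min} = \mu_{\max} + 1/\mu_{\max} \ge 2$ by AM--GM (for $\mu_{\max} > 0$), contradicting the strict inequality. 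The case $0$ or $\infty$ appearing as a slope, or $S_4$ fixing an extreme slope, is handled separately: $f$ has no fixed point in $\bQ$ (as $\mu = -1/\mu$ has no rational solution), and $f(0) = \infty \notin \{\mu(T_i)\}$, so neither can occur.

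\textbf{Main obstacle.} The delicate point is the combinatorial bookkeeping of how $S_4$ permutes the three slopes: a priori $S_4$ could act as a $3$-cycle rather than a transposition on the summands, and one must check that in every case the orbit structure forces $\mu_{\max}\mu_{\min} = -1$ (or produces a fixed slope, already excluded). Here one uses that $f = (\mu \mapsto -1/\mu)$ is an involution on $\bP^1(\bQ)$, so a permutation it induces on a finite invariant set is a product of fixed points and transpositions — there are no $3$-cycles — which immediately pins down the pairing of extreme slopes. I would also need the elementary fact that $f$ is monotone decreasing on each of the two intervals $(0,\infty)$ and $(-\infty,0)$, so if the slope set avoided $0$ it would lie entirely in one such interval, on which $f$ is a decreasing involution and hence must swap the endpoints — again giving $\mu_{\max}\mu_{\min} = -1$, impossible for same-sign reals (their product is positive). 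Thus the only surviving possibility is the straddling case, which the AM--GM estimate kills. Once this slope argument is in place, the lemma follows; the rest is routine verification that $\iota_4$ respects the tilting property, which is already recorded in eqn.\eqref{p0as2}.
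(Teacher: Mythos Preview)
Your overall strategy---tracking slopes under $\mu \mapsto -1/\mu$ and confronting the AM--GM bound with Theorem~\ref{vvv42}---is exactly the paper's. The gap is your route to the \emph{strict} inequality. The canonical tilting bundle $T_\text{can}$ for type $(2,2,2,2)$ has $\sum_a(p_a-1)+2 = 6$ indecomposable summands, not four; and the object $B$ in the Lemma is the $S_4$-invariant tilting bundle in $\mathsf{coh}\,\bX$ itself (not a cluster-tilting object of $\sC_4$), which likewise has $6$ summands. So the summand-count comparison does not exclude $B$ from being a twist of $T_\text{can}$, and you cannot conclude $\max_{i,j}\big(\mu(B_i)-\mu(B_j)\big)<2$ a priori. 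Without that strictness, your AM--GM step only yields $\mu_{\max}-\mu_{\min}\geq 2$, and the equality case survives.

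The paper closes that case directly: from the chain
\[
2 \;\le\; \max_i\big|\mu(B_i)+1/\mu(B_i)\big| \;\le\; \max_{i,j}\big(\mu(B_i)-\mu(B_j)\big) \;\le\; 2
\]
every term equals $2$, so each $|\mu(B_i)+1/\mu(B_i)|=2$, i.e.\ $\mu(B_i)\in\{\pm1\}$ for all $i$; the equality clause of Theorem~\ref{vvv42} then forces $B=T_\text{can}(\vec\eta)$, whose summands have \emph{three} distinct slopes---impossible if only the two values $\pm1$ occur. (A smaller point: your claim that $f$ must literally swap $\mu_{\max}$ with $\mu_{\min}$ is false in general---e.g.\ the $f$-invariant set $\{-2,-\tfrac12,\tfrac12,2\}$---though the weaker observation $-1/\mu_{\max}\ge\mu_{\min}$ already suffices for the AM--GM bound you need.)
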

\begin{proof} Suppose
$B=\oplus_iB_i$ is invariant. Then if $B$ has a direct summand of slope $\mu(B_i)$ has also a summand of slope $\mu(S_4B_i)=-1/\mu(B_i)$. Therefore,
\be
2\leq \max_i \big|\mu(B_i)+1/\mu(B_i)\big|\leq \max_{i,j} \big(\mu(B_i)-\mu(B_j)\Big)\leq 2,
\ee
 all inequalities are equalities, $\mu(B_i)=\pm 1$ for all $i$, 
and moreover $B=T_\text{can}(\vec \eta)$,
$\vec\eta\in L$,  by theorem \ref{vvv42}.
But the summands of $T_\text{can}(\vec \eta)$ have three distinct slopes, and the condition $\mu(B_i)=\pm1$ cannot be satisfied.
\end{proof}
Therefore the tilting sheaf $\iota_4T$ should have the form \eqref{qwer} for a proper sub-set $I$.

\begin{lem} We have ($a\neq 3$)
\be\label{oo980}
S_4\,\co=\cs_{3,1}, \qquad S_4\,\co(\vec x_a)\equiv\co(-\vec x_a),\qquad S_4\,\cs_{a,j}=\tau^{-j}\co(\vec x_3-\vec x_a).
\ee
\end{lem}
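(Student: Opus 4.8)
The plan is to compute the action of the auto-equivalence $M_4$ (equivalently of $S_4$, its image in the cluster category $\sC$) on the relevant generating sheaves directly from the explicit realization of $M_4$ as a product of telescopic functors. Recall from eqn.\eqref{jjjas} that $M_4 = TLT$, where $T$ is the twist by $\co(\vec x_3)$ and $L$ is the telescopic functor of \textsc{appendix \ref{Acoh}}. Since $M_4$ and $TLT$ coincide as auto-equivalences of $\sD_\bX$ and $S_4$ is the induced equivalence of $\sC$, it suffices to compute $TLT$ on $\co$, on $\co(\vec x_a)$ for $a \neq 3$, and on $\cs_{a,j}$, and then project to $\sC$ (where $\tau^2 \cong \mathrm{Id}$ and $\Sigma^2 \cong \mathrm{Id}$, so shifts and $\tau$-powers may be reduced modulo $2$).

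First I would compute $S_4\,\co$. From eqn.\eqref{useexp} we have $LTL\,\co = TLT\,\co = \cs_{3,1}$; thus $M_4\,\co \cong \cs_{3,1}$ already in $\sD_\bX$, giving the first formula of \eqref{oo980} verbatim. Next, for $\co(\vec x_a)$ with $a \neq 3$: apply $T$ (shift of $L(\boldsymbol p)$-degree by $\vec x_3$), then $L$, then $T$ again, using the explicit action $L\,\co(\vec\eta) = \tau^{-1}\co(\vec\eta)$-type formulae and the fact that $\mathsf{rank}$ and $\mathsf{deg}$ transform as recorded in \textsc{appendix \ref{Acoh}} ($T$ fixes rank, shifts degree by $\tfrac1p\cdot\mathsf{rank}$; $L$ fixes degree, sends rank to $\mathsf{rank} - p\,\mathsf{deg}$). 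Since $\co(\vec x_a)$ has rank $1$ and degree $1/p_a = 1/2$, one tracks that $M_4\,\co(\vec x_a)$ is again a line bundle of rank $1$ and degree $-1/2$, and identifying the precise element of $\mathsf{Pic}\,\bX \cong L(\boldsymbol p)$ (using $2\vec x_a = \vec c$, $\vec\omega = 2\vec c - \sum_b \vec x_b$, and $\tau^2 = \mathrm{Id}$) pins it down to $\co(-\vec x_a)$ in $\sC$, as claimed. Finally, for the skyscrapers $\cs_{a,j}$ one uses the defining exact sequences $0 \to \co(j\vec x_a) \to \co((j+1)\vec x_a) \to \cs_{a,j} \to 0$: apply the exact functor $M_4$ to this triangle, insert the values just computed for the line bundles at the ends, and read off $M_4\,\cs_{a,j}$ from the resulting triangle. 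Using $S_4\,\cs_{3,j} = \tau^{-j}\cdot(\text{twist of }\cs_{3,1})$-type identities from \eqref{useexp} together with the degree/rank bookkeeping, this yields $S_4\,\cs_{a,j} = \tau^{-j}\co(\vec x_3 - \vec x_a)$ in $\sC$ for $a \neq 3$.

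The main obstacle I anticipate is the careful bookkeeping of the $L(\boldsymbol p)$-degrees and of the $\tau$- and $\Sigma$-shifts: the telescopic functors do not act merely by a degree shift on bundles (the formula for $L$ involves a canonical morphism and a cone), so one must be attentive to when an intermediate object is a bundle versus a shifted sheaf, and to the sign/parity of the shift. The saving grace is that we only need the answers \emph{in $\sC_4$} (equivalently after identifying $\tau^2 \cong \mathrm{Id}$, $\Sigma^2 \cong \mathrm{Id}$, and working modulo the $\Z$-action by $M_4$), which collapses most of the potential ambiguity; all the shifts that appear in $\sD_\bX$ become trivial or halve, and one only needs to land on the correct class in $\mathsf{Pic}\,\bX$. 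Once \eqref{oo980} is established, it feeds directly into the subsequent identification of the $S_4$-invariant tilting sheaves of the form \eqref{qwer} and the verification that \eqref{firstclaim} is cluster-tilting in $\sC_4$.
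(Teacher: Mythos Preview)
Your plan for the first identity is exactly the paper's: both simply quote $TLT\,\co=\cs_{3,1}$ from eqn.\eqref{useexp}.

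For the second identity, however, your proposal has a real gap. Tracking $\mathsf{rank}$ and $\mathsf{deg}$ under $T$ and $L$ tells you only that $TLT\,\co(\vec x_a)$ is a shifted line bundle of a given degree; but for $\bX$ of type $(2,2,2,2)$ the degree map $L(\boldsymbol p)\to\tfrac12\Z$ has an $8$-element kernel, so there are eight line bundles of any given degree. Rank and degree alone cannot ``pin it down to $\co(-\vec x_a)$''; you would need to track the full class in $K_0(\mathsf{coh}\,\bX)$, or better, compute the object itself. The paper does the latter: it applies $L$ to the triangle $\co(\vec x_3)\to\co(\vec x_3+\vec x_a)\to\cs_{a,0}$, using the already-known values $L\co(\vec x_3)=\cs_{3,0}$ and $L\cs_{a,0}=\tau\co(-\vec x_a)[1]$ from \textsc{appendix~\ref{Acoh}}, reads off $L\co(\vec x_3+\vec x_a)=\tau^{-1}\co(-\vec x_3-\vec x_a)[1]$, and then applies the outer $T$.

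For the third identity your triangle idea is also less efficient than the paper's route and has its own issue: knowing the isomorphism classes of the two ends of a triangle does not determine the cone---you would also need to identify the image of the morphism $\co(j\vec x_a)\to\co((j{+}1)\vec x_a)$ under $M_4$, and for $j=1$ you would additionally need $M_4\,\co(\vec c)$, which is not on your list. The paper bypasses all of this: since $T\cs_{a,j}=\cs_{a,j}$ for $a\neq3$ and the appendix gives $L\cs_{a,j}=\tau^{1-j}\co(-\vec x_a)[1]$ directly, one simply computes $TLT\,\cs_{a,j}=TL\,\cs_{a,j}=\tau^{1-j}\co(\vec x_3-\vec x_a)[1]$ in one line and then passes to $\sC$.
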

\begin{proof} The first equality follows from
\eqref{useexp}.
For the second, we start from the triangle
\be
\begin{aligned}
&\phantom{mm} &&\co(\vec x_3)\to \co(\vec x_3+\vec x_a)\to \cs_{a,0} \\
\text{apply }L& && \cs_{3,0}\equiv L\co(\vec x_3)\to L\co(\vec x_3+\vec x_a)\to L\cs_{a,0}\equiv \tau\co(-\vec x_a)[1]\\
\text{rotate, then $\tau$}& &&\tau L\co(\vec x_3+\vec x_a)[-1]\to\co(-\vec x_a)\to \cs_{3,1}, 
\end{aligned}
\ee
so that
\be
L\co(\vec x_3+\vec x_a)= \tau^{-1}\co(-\vec x_3-\vec x_a)[1]\quad a\neq3,
\ee
and
\be\label{kkkasqw}
TLT\,\co(\vec x_a)= TL\,\co(\vec x_3+\vec x_a)=\tau^{-1}T\,\co(-\vec x_3-\vec x_a)[1]=
\tau^{-1}\co(-\vec x_a)[1]\simeq \co(-\vec x_a).
\ee
The third equality follows from ($a\neq3$)
\be
TLT\,\cs_{a,j}=TL\cs_{a,j}=T\,\tau^{1-j}\co(-\vec x_a)[1]=\tau^{1-j}\co(\vec x_3-\vec x_a)[1].
\ee
\end{proof}

\begin{corl} The sheaf
\be\label{thisob}
T=\co\oplus \co(\vec x_1)\oplus \cs_{2,1}
\ee
is cluster-tilting in $\sC_4$.
\end{corl}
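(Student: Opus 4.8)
The plan is to verify the three requirements that make $\iota_4 T$ a tilting sheaf of $\mathsf{coh}\,\bX$ (hence cluster-tilting in $\sC$, hence — being $S_4$-invariant by construction — cluster-tilting in $\sC_4$ via the equivalence \eqref{p0as2}). First I would compute $\iota_4 T = T \oplus S_4 T$ explicitly using the previous lemma, eqn.\eqref{oo980}: since $\mu$-wise $S_4\co = \cs_{3,1}$, $S_4\co(\vec x_1) = \co(-\vec x_1)$, and $S_4\cs_{2,1} = \tau^{-1}\co(\vec x_3 - \vec x_2)$, one gets
\be
\iota_4 T = \co \oplus \co(\vec x_1)\oplus \cs_{2,1}\oplus \cs_{3,1}\oplus \co(-\vec x_1)\oplus \tau^{-1}\co(\vec x_3-\vec x_2).
\ee
This is a sheaf with $6 = 12/2$ indecomposable summands, as required for the pull-up of a $3$-summand cluster-tilting object of $\sC_4$.

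Next I would check rigidity, i.e.\ $\mathrm{Ext}^1$ vanishes between every pair of these six summands. This is the heart of the computation: using $\mathrm{Ext}^1(\co(\vec a),\co(\vec b))\cong D\,S_{\vec a+\vec\omega-\vec b}$ for the four line-bundle summands, and the analogous formulas for $\mathrm{Ext}^1$ involving the simple sheaves $\cs_{2,1}$, $\cs_{3,1}$ (via Serre duality \eqref{AR} and the defining exact sequences $0\to\co(j\vec x_a)\to\co((j{+}1)\vec x_a)\to\cs_{a,j}\to0$), each of the $\binom{6}{2}$ graded pieces is shown to vanish. Here $\vec\omega = 2\vec c - \sum_{a=1}^4\vec x_a$ for type $(2,2,2,2)$, and $\tau^2 = \mathrm{Id}$, so the arithmetic in $L(\boldsymbol p)$ is quite constrained; I expect a tedious but mechanical case-by-case. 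Alternatively, and more cleanly, I would invoke \textbf{Theorem \ref{thmLM}} of Lenzing–Meltzer: take $I = \{1,3\}$ (the two special points surviving), so $T_I = \co \oplus \co(\vec x_1)\oplus \co(\vec x_3)\oplus\dots$ should be a tilting \emph{bundle} for the type-$(2,2)$ projective line on points $\{z_1,z_3\}$, plus the skyscrapers $\cs_{2,1}\oplus\cs_{4,1}$ at the other two; one then only needs to recognize the sheaf in \eqref{thisob} as such a $T_I\oplus\cs_{2,1}$ after reading off $\iota_4 T$, reducing the claim to the classification of tilting bundles on a genus-zero length-two orbifold.

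Finally I would check that $\iota_4 T$ generates $\sD_\bX = D^b\mathsf{coh}\,\bX$ as a triangulated category (the "no summand can be added" condition, equivalently $\mathrm{Ext}^k(\iota_4T, X)=0 \ \forall k \Rightarrow X=0$). Since $\mathsf{coh}\,\bX$ is hereditary, this is the statement that the six classes $[E_i]$ span $K_0(\mathsf{coh}\,\bX)\cong \Z^6$ together with the vanishing of self-$\mathrm{Ext}^1$; I would exhibit the $6\times6$ matrix $\dim\mathrm{Hom}(E_i,E_j)$ and check $\det = \pm1$, which is exactly the condition that $\cb = \mathrm{End}(\iota_4 T)$ has finite global dimension and is tilting. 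The main obstacle is purely bookkeeping: keeping track of the $L(\boldsymbol p)$-degrees of all summands (especially the "off-lattice" summand $\tau^{-1}\co(\vec x_3-\vec x_2)$) and computing all graded Hom/Ext pieces correctly; once the explicit form of $\iota_4 T$ is pinned down, invoking \textbf{Theorem \ref{thmLM}} should shortcut most of it. I would also note, as a consistency check, that the resulting endo-quiver must be \eqref{sppquiver2} with its $\Z_2$ symmetry $i\mapsto i+3$, which the paper defers to \textsc{appendix \ref{q11zx}}; matching that quiver is both a verification and essentially forced once rigidity and generation are established.
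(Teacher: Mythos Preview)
Your proposal is essentially correct and, in its ``alternatively'' clause, coincides with the paper's proof: compute $\iota_4 T$ via the lemma and then invoke the Lenzing--Meltzer \textbf{Theorem \ref{thmLM}} to conclude that it is tilting. The paper does exactly this, writing $\iota_4 T = \cs_{2,1}\oplus\cs_{3,1}\oplus\co(-\vec x_1)\otimes T'_\text{can}$, where $T'_\text{can}=\co\oplus\co(\vec x_1)\oplus\co(\vec x_4)\oplus\co(\vec c)$ is recognized as the canonical tilting bundle for the type-$(2,2)$ line on the points $\{z_1,z_4\}$.

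There is one bookkeeping slip in your proposal: the surviving set is $I=\{1,4\}$, not $\{1,3\}$. The two skyscraper summands are $\cs_{2,1}$ (already in $T$) and $\cs_{3,1}=S_4\co$, so the special points $2$ and $3$ are the ones ``erased''; the fourth line-bundle summand $\tau^{-1}\co(\vec x_3-\vec x_2)$ rewrites as $\co(\vec x_4-\vec x_1)$ in $L(\boldsymbol p)$, which is what makes the bundle part a twist of the canonical $(2,2)$ tilting object on $\{z_1,z_4\}$. Once this identification is made, Theorem \ref{thmLM} applies directly and the direct $\mathrm{Ext}^1$-vanishing / $K_0$-spanning check you outline becomes unnecessary.
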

\begin{proof}
From the lemma
\be
\iota_4 T=T\oplus S_4T=
\cs_{2,1}\oplus \cs_{3,1}\oplus \co(-\vec x_1)\otimes T^\prime_\text{can},
\ee
where $T^\prime_\text{can}\equiv\co\oplus\co(\vec x_1)\oplus \co(\vec x_4)\oplus \co(\vec c)$ is the canonical tilting object for the weighted projective line of type $(2,2)$ obtained from $\bX$ by erasing the second and third special points. By theorem \ref{thmLM} $\iota_4T$ is tilting in $\mathsf{coh}\,\bX$, hence $T$ is cluster-tilting in $\sC_4$.
\end{proof}

\begin{corl} $T$ as in \eqref{thisob}.
The quiver of the concealed-canonical algebra $\mathrm{End}(\iota_4 T)$ is
\be\label{jjzqyi}
\begin{gathered}
\xymatrix{&& \co(\vec x_1)\ar[rd]\ar[rrd]\\
\co\ar[rru]&\co(\vec x_4-\vec x_1)\ar[ru] && \cs_{2,1}\ar@{..>}[dl] & \cs_{3,1}\ar@{..>}[dll]\\
&& \co(-\vec x_1)\ar[ull]\ar[ul]}
\end{gathered}
\ee
where dashed arrows stand for minimal relations. The cluster-category endo-quiver
is the completion of this quiver, i.e.\! the one with all arrows made solid. The $\Z_2$ symmetry generated by $S_4$ corresponds to a rotation by $\pi$ around the center of the figure. 
\end{corl}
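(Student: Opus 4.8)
The plan is to prove the corollary in two stages: first identify the quiver of the concealed-canonical algebra $\mathrm{End}(\iota_4T)$ explicitly, and then read off the claimed symmetry. For the first stage, I would use the formula for $\iota_4T$ obtained in the proof of the previous corollary,
\be
\iota_4 T=\cs_{2,1}\oplus \cs_{3,1}\oplus \co(-\vec x_1)\otimes T^\prime_\text{can},
\ee
where $T^\prime_\text{can}=\co\oplus\co(\vec x_1)\oplus\co(\vec x_4)\oplus\co(\vec c)$ is the canonical tilting bundle for the type-$(2,2)$ weighted projective line obtained by deleting the second and third special points. Twisting everything by the line bundle $\co(-\vec x_1)$, the four bundle summands become $\co(-\vec x_1),\ \co,\ \co(\vec x_4-\vec x_1),\ \co(\vec c-\vec x_1)$; since $\co(\vec c-\vec x_1)\cong\co(\vec x_1)$ (using $\vec c=2\vec x_1$ in type $(2,2,\dots)$), the bundle part is just a twist of $T^\prime_\text{can}$ and its endo-quiver is the standard canonical quiver of type $(2,2)$: a source $\co(-\vec x_1)$, two middle nodes $\co$ and $\co(\vec x_4-\vec x_1)$, and a sink $\co(\vec x_1)$, with one commutativity relation. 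The two torsion summands $\cs_{2,1},\cs_{3,1}$ contribute the remaining two nodes; their Hom/Ext spaces with the bundles are computed from the standard formulas $\mathrm{Hom}(\co(\vec a),\cs_{b,j})\cong\C$ or $0$ and $\mathrm{Ext}^1(\co(\vec a),\cs_{b,j})\cong D\,\mathrm{Hom}(\cs_{b,j},\co(\vec a+\vec\omega))$, plus $\mathrm{Hom}(\cs_{a,i},\cs_{b,j})=0$ for $a\ne b$ and for $a=b$ only when the indices match. Feeding these dimensions into Gabriel's theorem (the number of arrows $i\to j$ is $\dim\mathrm{Ext}^1(S_j,S_i)$, with relations read off from the quadratic part) produces exactly the quiver displayed in \eqref{jjzqyi}: each skyscraper is a sink receiving one arrow from the sink $\co(\vec x_1)$, and the dashed arrows encode the relations that make the composite paths through $\co(\vec x_1)$ to $\cs_{2,1}$ and $\cs_{3,1}$ zero, consistent with $\mathrm{gl.dim}=2$.

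For the second stage I would verify that the cluster-category endo-quiver is the completion of \eqref{jjzqyi} — this is immediate from $\sC(X,Y)\cong\mathrm{Hom}(X,Y)\oplus D\mathrm{Hom}(Y,X)$, which makes every dashed arrow into a genuine solid arrow in the opposite direction, cf.\ the Amiot quiver recipe recalled in \textsc{appendix \ref{amiotcc}}. Then the $\Z_2$ symmetry: since $\iota_4T=T\oplus S_4T$ and $S_4^2\cong\mathrm{Id}$ in $\sC$, the autoequivalence $S_4$ permutes the six indecomposable summands of $\iota_4T$ in three transpositions, hence induces an order-2 automorphism of the endo-quiver of $\iota_4T$. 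Using \eqref{oo980} I would identify the three orbits explicitly: $\co\leftrightarrow\cs_{3,1}$, $\co(\vec x_1)\leftrightarrow\co(-\vec x_1)$, and $\co(\vec x_4-\vec x_1)\leftrightarrow\cs_{2,1}$ (the last because $S_4\cs_{2,1}=\tau^{-1}\co(\vec x_3-\vec x_2)$, which one checks is isomorphic to the appropriate twist of $\co(\vec x_4)$ after applying the overall $\co(-\vec x_1)$-twist and $\tau$-periodicity). Placing the six nodes of \eqref{jjzqyi} symmetrically — the three left-hand nodes paired with the three right-hand nodes — these three transpositions are precisely the rotation by $\pi$ about the center of the figure, and one checks it preserves the arrows and relations, hence descends to the completed quiver.

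The main obstacle I expect is the bookkeeping in the second stage: matching the abstract summands of $S_4T$ to the node labels in the picture requires care with the $L(\boldsymbol p)$-grading, the identities $\vec c=2\vec x_a$ and $\vec\omega=2\vec c-\sum\vec x_a$, and the $\tau$-periodicity $\tau^2=\mathrm{Id}$, all applied after the global $\co(-\vec x_1)$-twist that was used to normalize the bundle part. In particular one must be sure that $S_4\co(\vec x_4-\vec x_1)$ lands on $\cs_{2,1}$ and not on some other skyscraper — this amounts to tracking which special point gets exchanged, and it is here that a sign/index slip is most likely. Everything else (computing the Hom/Ext dimensions, invoking Gabriel and the Amiot recipe, checking $S_4^2\cong\mathrm{Id}$) is routine given the machinery already set up in \textsc{appendices \ref{adetails}} and \ref{Acoh}. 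A clean way to sidestep part of the difficulty is to note that $S_4$ must act on the endo-quiver as \emph{some} free involution compatible with the $\Z_2$-grading of nodes into $T$-summands and $S_4T$-summands; since \eqref{jjzqyi} admits a unique such involution (the $\pi$-rotation), no detailed identification of individual orbits is strictly needed — though I would still include it for concreteness.
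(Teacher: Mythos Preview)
Your approach is correct and matches the paper's (implicit) argument: the corollary is stated without proof there, relying on the decomposition of $\iota_4T$ from the previous corollary together with the $S_4$-action in \eqref{oo980}, exactly as you outline. Two small corrections worth making: the canonical algebra of type $(2,2)$ is hereditary (it is the path algebra of the affine $\widehat A(2,2)$ quiver, cf.\ \eqref{affendoquiver}), so there is no commutativity relation in the bundle part --- all relations in $\mathrm{End}(\iota_4T)$ come from the two skyscrapers, as the dashed arrows in \eqref{jjzqyi} show; and the completed quiver actually admits \emph{two} free involutions (differing by the swap of the two skyscraper nodes), so your uniqueness shortcut is not quite right --- but your explicit orbit computation $\co\leftrightarrow\cs_{3,1}$, $\co(\vec x_1)\leftrightarrow\co(-\vec x_1)$, $\co(\vec x_4-\vec x_1)\leftrightarrow\cs_{2,1}$ is correct and settles the matter directly.
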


This shows all claims related to eqn.\eqref{firstclaim} in the main text.

\subsubsection{The superpotential $\cw$}\label{q11zx}

To get the superpotential for the quiver \eqref{jjzqyi} we consider the ideal triangulation of the sphere with 4 puncture
\be
\begin{gathered}
\xymatrix{\bullet\ar@{-}[d]_2 \ar@{-}[r]^3 & \bullet \ar@{-}[d]_6\ar@{-}[dl]_1\ar@{-}@/^3.5pc/[dl]_4\\
\bullet \ar@{-}[r]^5 & \bullet }
\end{gathered}
\ee
and write its non reduced incidence quiver
with potential
\be
\begin{gathered}
\xymatrix{&& 2\ar@/^1pc/[rrdd]^d\ar@/_0.5pc/[d]_b\\
&& 3\ar[dll]_c\ar@/_0.5pc/[u]_f\\
1\ar@/_1pc/[rrdd]_{\tilde e}\ar@/^1pc/[uurr]^a &&&& 4\ar[ull]_e\ar[lld]_{\tilde a}\\
&& 5\ar[llu]_{\tilde d}\ar@/^0.5pc/[d]^{\tilde b}\\
&& 6\ar@/^0.5pc/[u]^{\tilde f}\ar@/_1pc/[uurr]_{\tilde c}}
\end{gathered}
\qquad \begin{aligned}
\cw=&abc+def+\tilde a\tilde b\tilde c+\tilde d\tilde e\tilde f-\\
&-\lambda\, bf-\lambda \tilde b\tilde f+
\mu_1\, e c\tilde e\tilde c+\mu_2\, ad\tilde a\tilde d
\end{aligned}
\ee
Both the quiver and the potential are invariant under a $\Z_2$ symmetry which acts freely on nodes as $i\mapsto i+3\bmod6$ on the nodes and as $\ell\mapsto \tilde \ell$ on the arrows. $\lambda$, $\mu_a$ are generic coefficients. Integrating away the heavy fields $b$, $f$, $\tilde b$, $\tilde f$ we eliminate the pairs of opposite arrows getting a reduced 2-acyclic quiver of the form \eqref{sppquiver2} with potential
\be
\cw_\text{red}=\lambda^{-1}\,cade+\lambda^{-1}\,\tilde c\tilde a\tilde d\tilde e+\mu_1 ec \tilde e\tilde c+\mu_2\, a d\tilde a\tilde d.
\ee
Taking the quotient with respect to the $\Z_2$ symmetry we get
\be
\xymatrix{2 \ar@/^1pc/[r]^d &1\ar@/^1pc/[l]^a\ar@/^1pc/[r]^e & 3\ar@/^1pc/[l]^c}
\qquad \cw = \mu_1 (ec)^2+\mu_2 (ad)^2+2\lambda^{-1} adec.
\ee

\subsection{Cluster-tilting objects in $\sC_6$}\label{ctc6}

Cluster-tilting sheaves in $\sC_6$ have
just two direct summands $T=T_1\oplus T_2$. From the mutation class of the triangulation quivers for the sphere with 4 punctures, we know that a $\Z_3$ symmetry implies (many)
$\Z_6$ symmetries acting transitively on the nodes. Therefore we may find
cluster-tilting sheaves with $T_2=\tau \Pi T_1$.  $\iota_6T$ must be tilting in $\mathsf{coh}\,\bX$. 

\begin{lem} One has ($a\neq 3$)
\be
\begin{aligned}
S_6\,\co&=\tau\co(\vec x_3), & S_6^2\,\co&=\cs_{3,0},\\
S_6\,\cs_{a,j}&=\tau^j\co(\vec x_3-\vec x_a),
& S_6^2\,\cs_{a,j}&=\tau^j\co(\vec x_a).
\end{aligned}
\ee
\end{lem}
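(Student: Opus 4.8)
The plan is to compute the action of $S_6$ on the generating sheaves $\co$ and $\cs_{a,j}$ ($a\neq 3$) in exactly the same style as the proof of the preceding lemma for $S_4$, using that $S_6$ is the auto-equivalence of $\sC$ induced by the telescopic functor $M_6=TL$, together with the explicit formulae \eqref{useexp} and the defining exact sequences $0\to\co(j\vec x_a)\to\co((j+1)\vec x_a)\to\cs_{a,j}\to 0$. Throughout one works in $\sC$, where for $\bX$ of type $(2,2,2,2)$ one has $\Sigma\cong\tau$ and $\tau^2\cong\mathrm{Id}$, so the shift $[1]$ may be replaced by $\tau$ and exponents of $\tau$ are read modulo $2$; moreover $(TL)^3\cong\tau^{-3}\Sigma\cong\tau^{-2}\cong\mathrm{Id}$, hence $S_6^2\cong S_6^{-1}$. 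Since the Serre functor $\tau\Sigma$ of $\sD_\bX$ commutes with every auto-equivalence, and $\Sigma$ does as well, $\tau$ commutes with $TL$, so $S_6$ commutes with $\tau$; and $T$, the twist by $\co(\vec x_3)$, fixes each $\cs_{a,j}$ with $a\neq 3$, sends $\cs_{3,j}\mapsto\cs_{3,j+1}$, and sends $\co(\vec y)\mapsto\co(\vec y+\vec x_3)$.

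First I would treat $\co$. From \eqref{useexp} one has $L\co=\tau^{-1}\co$, so $S_6\co=T\tau^{-1}\co=\tau^{-1}\co(\vec x_3)=\tau\,\co(\vec x_3)$. Again from \eqref{useexp}, $L\,\co(\vec x_3)=\cs_{3,0}$, so $S_6(\co(\vec x_3))=T\cs_{3,0}=\cs_{3,1}$, and applying $\tau$ gives $S_6^2\co=\tau\,\cs_{3,1}=\cs_{3,0}$ (using $\tau\cs_{3,j}=\cs_{3,j-1}$). This is the first column of the display.

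For the simple sheaves $\cs_{a,j}$ with $a\neq 3$ I would use the appendix formula $L\cs_{a,j}=\co(-\vec x_a+(p-1-j)\vec\omega)[1]$, valid here because $p_a=p=2$; that is, $L\cs_{a,j}=\tau^{1-j}\co(-\vec x_a)[1]$. Then $S_6\cs_{a,j}=T\,\tau^{1-j}\co(-\vec x_a)[1]=\tau^{1-j}\co(\vec x_3-\vec x_a)[1]$, which in $\sC$ equals $\tau^{2-j}\co(\vec x_3-\vec x_a)=\tau^{j}\co(\vec x_3-\vec x_a)$, the claimed value. To obtain $S_6^2\cs_{a,j}=S_6^{-1}\cs_{a,j}$ I would compute $S_6\co(\vec x_a)$ instead: applying the exact functor $L$ to $0\to\co\to\co(\vec x_a)\to\cs_{a,0}\to0$ and using $L\co=\tau^{-1}\co$, $L\cs_{a,0}=\tau\,\co(-\vec x_a)[1]$, one gets a triangle exhibiting $L\co(\vec x_a)$ as the cone of a morphism $\tau\co(-\vec x_a)\to\tau^{-1}\co$. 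Since $\mathrm{Hom}(\co(-\vec x_a),\co)\cong S_{\vec x_a}$ is one-dimensional, this morphism is up to scalar the canonical monomorphism of line bundles — it cannot vanish, for then the cone would be decomposable while $\co(\vec x_a)$ is indecomposable and $L$ an auto-equivalence — so the cone is its sheaf cokernel $\tau(\co/\co(-\vec x_a))=\tau\cs_{a,1}=\cs_{a,0}$. Hence $L\co(\vec x_a)=\cs_{a,0}$ and $S_6\co(\vec x_a)=T\cs_{a,0}=\cs_{a,0}$; twisting by $\tau^j$ and using period $2$, $S_6(\tau^j\co(\vec x_a))=\cs_{a,j}$, so $S_6^2\cs_{a,j}=S_6^{-1}\cs_{a,j}=\tau^j\co(\vec x_a)$.

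The one step that takes a little care is this last triangle chase identifying the cone of the connecting morphism with $\cs_{a,0}$, but it is precisely the maneuver already carried out for $S_4$ just above, so I expect no real obstacle. As a sanity check one would verify $S_6^3\cong\mathrm{Id}$ on the generators: since $S_6\cs_{3,j}=T\tau^{1-j}\co(-\vec x_3)[1]=\tau^{1-j}\co[1]=\tau^j\co$ in $\sC$, one gets $S_6^3\co=S_6\cs_{3,0}=\co$, and similarly the $\cs_{a,j}$ return to themselves after three applications, consistent with $(S_6)^3\cong\mathrm{Id}$.
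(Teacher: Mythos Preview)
Your proof is correct and, for the first three identities, matches the paper's argument line by line: $S_6\co$ via $L\co=\tau^{-1}\co$ and $T$, $S_6^2\co$ via $L\co(\vec x_3)=\cs_{3,0}$, and $S_6\cs_{a,j}$ via the appendix formula $L\cs_{a,j}=\tau^{1-j}\co(-\vec x_a)[1]$.

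For the fourth identity $S_6^2\cs_{a,j}=\tau^j\co(\vec x_a)$ you take a slightly different (but equally valid) path. The paper factorizes $(TL)^2=T\cdot(LTL)$, invokes the braid relation $LTL=TLT$, and then reads off $LTL\,\cs_{a,j}\simeq\tau^{-j}\co(\vec x_3-\vec x_a)$ directly from the preceding $S_4$ lemma \eqref{oo980}; applying $T$ gives $\co(2\vec x_3-\vec x_a)=\co(\vec x_a)$ in one line. You instead use $S_6^3\cong\mathrm{Id}$ to replace $S_6^2$ by $S_6^{-1}$ and compute $S_6\co(\vec x_a)=\cs_{a,0}$ through an explicit triangle chase. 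The paper's route is shorter because it recycles the $S_4$ result wholesale; yours is self-contained and has the bonus of independently confirming $S_6^3\cong\mathrm{Id}$ on the generators.
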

\begin{proof}
First equation: $TL\,\co=\tau T\,\co=\tau\co(\vec x_3)$. For the second
(using \eqref{useexp}):
\be
(TL)^2\co=\tau TL\,\co(\vec x_3)=T\,\cs_{3,1}=\cs_{3,0}.
\ee
Third:
\be
TL\,\cs_{a,j}=T \tau^{1-j}\co(-\vec x_a)[1]=\tau^{1-j}\co(\vec x_3-\vec x_4)\simeq
\tau^{-j}\co(\vec x_3-\vec x_4).
\ee
Forth (using \eqref{oo980}):
\be
(TL)^2\,\cs_{a,j}= T\,LTL\,\cs_{a,j}\simeq
\tau^{-j} T\,\co(\vec x_3-\vec x_a)=\tau^{j}\co(\vec x_a). 
\ee
\end{proof}
\begin{corl} The sheaf
\be
T=\cs_{1,0}\oplus \cs_{2,1}\equiv \cs_{1,0}\oplus \tau \Pi\cs_{1,0}
\ee
is cluster-tilting in $\sC_6$.
\end{corl}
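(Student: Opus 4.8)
The strategy is exactly the one used for the analogous claim in $\sC_4$ (Appendix \ref{ctc4}): translate ``cluster-tilting in $\sC_6$'' into ``tilting sheaf in $\mathsf{coh}\,\bX$ which is $S_6$-invariant'', then exhibit the pull-up $\iota_6 T$ explicitly and recognize it as a tilting object via the Lenzing--Meltzer classification (\textbf{Theorem \ref{thmLM}}). First I would recall from \S\ref{aaa1098} that $X\in\sC_6$ is (cluster-)tilting iff $\iota_6 X$ is tilting in $\sC$, equivalently in $\mathsf{coh}\,\bX$, and that since $\ell d=12$ with $d=6$ a cluster-tilting object of $\sC_6$ has exactly two indecomposable summands. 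So the only thing to check is that $\iota_6 T = \bigoplus_{k=0}^{2} S_6^k T$ is a tilting sheaf on $\bX$, where $T=\cs_{1,0}\oplus\cs_{2,1}$.

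The main computational input is the preceding \textbf{Lemma}, which gives $S_6^k\cs_{a,j}$ for $a\neq 3$. Applying it to the two summands $\cs_{1,0}$ and $\cs_{2,1}=\tau\Pi\,\cs_{1,0}$, I would compute: $S_6\,\cs_{1,0}=\co(\vec x_3-\vec x_1)$, $S_6^2\,\cs_{1,0}=\co(\vec x_1)$, and similarly $S_6\,\cs_{2,1}=\tau\,\co(\vec x_3-\vec x_2)$, $S_6^2\,\cs_{2,1}=\tau\,\co(\vec x_2)$ (with the appropriate powers of $\tau$ kept track of carefully — this is where a sign/index slip is most likely). Collecting the six summands of $\iota_6 T$, I would then argue that, after an overall line-bundle twist (and possibly applying $\tau$, which is an auto-equivalence of order $2$ on a type-$(2,2,2,2)$ line), the bundle summands among them form precisely the canonical tilting bundle $T'_{\mathrm{can}}$ for the sub-weighted-projective-line of type $(2,2)$ obtained by deleting the special points where $\cs$-summands survive, while the remaining summands are the two skyscrapers $\cs_{a,1}$ at those two deleted points. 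This is exactly the shape \eqref{qwer} in \textbf{Theorem \ref{thmLM}} with $|I|=2$, so $\iota_6 T$ is tilting, hence $T$ is cluster-tilting in $\sC_6$.

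The step I expect to be the real obstacle is the bookkeeping: one must verify that the six indecomposable summands of $\iota_6 T$ are pairwise non-isomorphic (so that $\iota_6 T$ is basic), that exactly two of them are finite-length sheaves supported at two distinct special points, and that the four bundle summands are a line-bundle twist of $T'_{\mathrm{can}}$ for the $(2,2)$-line on the complementary pair of special points. The non-isomorphism and the ``it's really $T_{\mathrm{can}}$ up to twist'' claims can be read off from the explicit list once the $\tau$-powers and the $L(\boldsymbol p)$-degrees are computed correctly; as a sanity check one verifies $\mathrm{rank}$ and $\mathrm{deg}$ of each summand and that the slopes of the bundle summands span a closed interval of length $2$, matching \textbf{Theorem \ref{vvv42}}. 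Having established that $\iota_6 T$ is tilting, the remaining assertions in the surrounding text — that the endo-quiver of $\iota_6 T$ is the $\Z_6$-symmetric quiver \eqref{z6qui2} with the gauged $\Z_3$ acting as $i\mapsto i+2\bmod 6$ — follow by computing the Hom-spaces $\sC(S_6^k T_i, S_6^l T_j)$ using \eqref{kkaswe} and Serre duality, exactly as in the $\sC_4$ case; I would relegate that to the same style of explicit $\dim\mathrm{Hom}$ computation already carried out for \eqref{jjzqyi}.
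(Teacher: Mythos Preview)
Your overall strategy is exactly the paper's: pull up to $\iota_6 T$, apply the preceding lemma to list the six summands, factor out an overall line-bundle twist, and invoke \textbf{Theorem \ref{thmLM}} once the bundle part is identified as a tilting bundle for the residual $(2,2)$ line. That part is fine.

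There is one concrete error, though. You expect the four bundle summands to be (a twist of) the \emph{canonical} tilting bundle $T'_{\mathrm{can}}$ for the $(2,2)$ line, as happened in the $\sC_4$ case. They are not. The paper's computation gives, after twisting by $\co(\vec x_1)$,
\[
\co(\vec x_3)\ \oplus\ \co(\vec x_4)\ \oplus\ \co(\vec c)\ \oplus\ \co(\vec x_3+\vec x_4),
\]
which is a \emph{different} tilting bundle for the $(2,2)$ line --- the one whose endo-quiver is the alternating orientation of $\widehat{A}(2,2)$, not the canonical algebra. Your proposed ``sanity check'' via \textbf{Theorem \ref{vvv42}} would in fact \emph{refute} your expectation: the slope range of these four line bundles is $1/2$, not $2$, so by that theorem the bundle cannot be a twist of $T_{\mathrm{can}}$. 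This does not break the argument, since \textbf{Theorem \ref{thmLM}} only requires $T_I$ to be \emph{some} tilting bundle; but you must then verify tilting for this specific object directly (e.g.\ by recognizing it as the standard $\widehat{A}(2,2)$ tilting bundle, as the paper does), rather than by matching it to $T'_{\mathrm{can}}$.
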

\begin{proof} Using the previous lemma
\be\label{jjz10}
\iota_6 T= \co(-\vec x_1)\otimes\bigg[\Big(\co(\vec x_3)\oplus
 \co(\vec x_4)\oplus \co(\vec c)\oplus \co(\vec x_3+\vec x_4)\Big)\oplus \cs_{1,1}\oplus \cs_{2,1}\bigg]
\ee
By theorem \ref{thmLM} $\iota_6 T$
is tilting in $\mathsf{coh}\,\bX$ iff
the bundle in the large round parenthesis
is a tilting bundle for the weighted projective line of type $(2,2)$ over the third and fourth special points. But this is precisely the well-known tilting object whose endo-quiver is the acyclic affine $\widehat{A}(2,2)$ quiver with the alternating 
orientation
\be\label{affendoquiver}
\begin{gathered}
\xymatrix{\co(\vec c) & \co(\vec x_4)\ar[l]\ar[dd]\\
\\
\co(\vec x_3)\ar[uu]\ar[r] & \co(\vec x_3+\vec x_4)}
\end{gathered}
\ee
and the statement follows.
\end{proof}

Extending the affine endo-quiver
\eqref{affendoquiver} 
by the two simple sheaves inside the large bracket, $\cs_{1,1}$, $\cs_{2,1}$,
and taking into
account the overall twist by $\co(-\vec x_1)$ in the \textsc{rhs} of \eqref{jjz10}, we get the quiver for $\mathrm{End}_{\mathsf{coh}\,\bX}(\ct)$ in the form
\be
\begin{gathered}
\xymatrix{&\co(\vec x_1)\ar[dl]\ar[drr] & \co(\vec x_4-\vec x_1)\ar[l]\ar[dd]\ar@{<..}[dll]\ar@{<..}[dr]\\
\cs_{2,1}&&&\cs_{1,0}\\
&\co(\vec x_3-\vec x_1)\ar@{<..}[ul]\ar@{<..}[urr]\ar[uu]\ar[r] & \co(\vec x_3+\vec x_4-\vec x_1)\ar[ull]\ar[ur]}
\end{gathered}
\ee
where dashed arrow stand for relations as always. The quiver of the cluster category  is obtained by making solid the dashed arrows. It has the expected symmetries.
In particular, the $\Z_3$ symmetry generated by $S_6$
correspond to a $2\pi/3$ rotation of the figure. This completes the justification of the claims
about cluster-tilting in $\sC_6$
and symmetries of endo-quivers
made in section 5.

\section{Cluster-tilting in $\sC_{D_4}$
and its gaugings}\label{d4ex}

We identify the indecomposables of the cluster category $\sC_{D_4}$
with the indecomposable (right) modules of the path algebra of the Dynkin quiver \eqref{d4quiv} together with the shifted indecomposable projectives
$P_i[1]$. The indecomposable modules are in one-to-one correspondence with the positive roots of $D_4$ through their dimension vectors, and we denote them
by the corresponding root written in the form of the quiver.

Eqns.\eqref{ttar}
yield the following equalities in the derived category $\sD(D_4)\equiv D^b\mathsf{mod}\,\C D_4$
\be
\begin{aligned}
\tau\text{\begin{scriptsize}$\begin{bmatrix}&0&\\1& 0&0\end{bmatrix}$\end{scriptsize}}&=\text{\begin{scriptsize}$\begin{bmatrix}&0&\\1& 1&0\end{bmatrix}$\end{scriptsize}}[-1], & \tau\text{\begin{scriptsize}$\begin{bmatrix}&0&\\0& 1&0\end{bmatrix}$\end{scriptsize}}&=\text{\begin{scriptsize}$\begin{bmatrix}&1&\\1& 2&1\end{bmatrix}$\end{scriptsize}},
& \tau\text{\begin{scriptsize}$\begin{bmatrix}&0&\\1& 1&0\end{bmatrix}$\end{scriptsize}}
&=\text{\begin{scriptsize}$\begin{bmatrix}&1&\\0& 1&1\end{bmatrix}$\end{scriptsize}},\\
\tau\text{\begin{scriptsize}$\begin{bmatrix}&1&\\1& 1&1\end{bmatrix}$\end{scriptsize}}&=\text{\begin{scriptsize}$\begin{bmatrix}&0&\\0& 1&0\end{bmatrix}$\end{scriptsize}}[-1], & \tau\text{\begin{scriptsize}$\begin{bmatrix}&0&\\1& 1&1\end{bmatrix}$\end{scriptsize}}&=\text{\begin{scriptsize}$\begin{bmatrix}&1&\\0& 0&0\end{bmatrix}$\end{scriptsize}},
& \tau\text{\begin{scriptsize}$\begin{bmatrix}&1&\\1& 2&1\end{bmatrix}$\end{scriptsize}}
&=\text{\begin{scriptsize}$\begin{bmatrix}&1&\\1& 1&1\end{bmatrix}$\end{scriptsize}},
\end{aligned}
\ee
and other 6 obtained by acting on these ones with the automorphism $\mathfrak{S}_3$ of the quiver. One checks that
$\tau^{-3}=\Sigma$.

Consider, say, the orbit category
\be
\sC_2=\sC/\Z_2\equiv \sD(D_4)/(\tau^2)^\Z
\ee
One has
\be
\begin{split}
\sC_2(X,X[1])\cong &\sC(X,X[1])\oplus \sC(X, \tau^2 X[1])\cong
\sC(X,X[1])\oplus \sC(X, \tau^3 X)\cong\\
&\cong
\sC(X,X[1])\oplus \sC(X, \tau^{-1}X)=\mathrm{Hom}(X,\tau^{-1}X)\oplus \mathrm{Hom}(X,\tau X).
\end{split}
\ee
There are 4 orbits of indecomposables of $\sD(D_4)$ under the group generated by the two auto-equivalences $\tau$ and $\Sigma$, i.e.\! the orbits of the 4 simples $S_i$. 
If the image of an indecomposable $X\in\sD(D_4)$ is rigid in $\sD(D_4)$,
the images of all objects in its $\langle\tau,\Sigma\rangle$-orbit are also rigid.
The rigid objects in $\sC_2$ are the ones in the orbit of the peripherical nodes. Of course, elements of the same orbit get identified in pairs in $\sC_2$. Let $X$, $Y$ be rigid. $\sC_2(X,Y[1])=\sC_2(Y,X[1])=0$ requires $X$ and $Y$ to belong to different peripheral orbits.

\end{document}